\tikzset{MyNode/.style={circle, draw, inner sep=2,outer sep=0, fill=gray}}
\tikzset{MyBoldNode/.style={circle, draw, inner sep=3,outer sep=0, fill=red}}
\newtheorem{theorem}{Theorem}[section]
\newtheorem{claim}{Claim}[theorem]
\newtheorem{corollary}[theorem]{Corollary}
\newtheorem{lemma}[theorem]{Lemma}
\newtheorem{observation}[theorem]{Observation}
\newtheorem{proposition}[theorem]{Proposition}
\theoremstyle{definition}
\newtheorem{definition}[theorem]{Definition}
\newenvironment{claimproof}[1][Proof of Claim.]{\noindent {\emph{#1} }}{\hfill$\lrcorner$\medskip}
\let\OLDthebibliography\thebibliography
\renewcommand\thebibliography[1]{
  \OLDthebibliography{#1}
  \setlength{\parskip}{0pt}
  \setlength{\itemsep}{0pt plus 0.3ex}
}
   \def\MR#1{}
\newcommand{\N}{\mathbb{N}}
\newcommand{\T}{\mathcal{T}}
\newcommand{\C}{\mathcal{C}}
\newcommand{\cF}{\mathcal{F}}
\newcommand{\citeReference}[2]{\cite[#1]{#2}}
\newcommand{\td}{\mathrm{td}}
\newcommand{\tw}{\mathrm{tw}}
\newcommand{\dege}{\mathrm{deg}}
\newcommand{\Sol}{\mathtt{Sol}}
\newcommand{\Oh}{\mathcal{O}}
\newcommand{\carvers}{\mathcal{C}}
\newcommand{\pmc}{\Omega}
\newcommand{\cmsotwo}{$\mathsf{CMSO}_2$\xspace}
\newcommand{\cmso}{$\mathsf{CMSO}$\xspace}
\newcommand{\lmset}{\{\!\{}
\newcommand{\rmset}{\}\!\}}
\newcommand{\Aa}{\mathcal{A}}
\newcommand{\lbl}{\mathsf{label}}
\newcommand{\Multi}{\mathsf{Multi}}
\newcommand{\Tp}{\mathsf{Sentences}}
\newcommand{\tp}{\mathrm{tp}}
\newcommand{\wh}[1]{\widehat{#1}}
\newcommand{\state}{\sigma}
\renewcommand{\phi}{\varphi}
\renewcommand{\leq}{\leqslant}
\renewcommand{\geq}{\geqslant}
\title{Sparse induced subgraphs in $P_6$-free graphs
\thanks{
This research is a part of a project that has received funding from the European Research Council (ERC)
under the European Union's Horizon 2020 research and innovation programme
Grant Agreement 714704 (Rose, Marcin) and 948057 (Micha\l{}, Pawe\l{}).
Maria is supported by NSF-EPSRC Grant DMS-2120644 and by AFOSR grant FA9550-22-1-008. Rose is also supported by NSF Grant DMS-2202961.
Marcin is also partially funded by BARC, supported by the VILLUM Foundation grant 16582, and by Polish National Science Centre SONATA BIS-12 grant number 2022/46/E/ST6/00143.}}
\author{
Maria Chudnovsky
\thanks{Department of Mathematics, Princeton University, USA} \and
Rose McCarty
\thanks{Department of Mathematics, Princeton University, USA and Institute of Informatics, University of Warsaw, Poland} \and
Marcin Pilipczuk
\thanks{Institute of Informatics, University of Warsaw, Poland and IT University of Copenhagen, Denmark} \and
Micha\l{} Pilipczuk
\thanks{Institute of Informatics, University of Warsaw, Poland} \and
Pawe\l{} Rz\k{a}\.{z}ewski
\thanks{Warsaw University of Technology, Poland and Institute of Informatics, University of Warsaw, Poland}
}
\date{}
\begin{document}
\begin{titlepage}
\maketitle
\begin{abstract}
We prove that a number of computational problems that ask for the largest sparse induced subgraph satisfying some property definable in \textsf{CMSO}$_2$ logic, most notably \textsc{Feedback Vertex Set},
are polynomial-time solvable in the class of $P_6$-free graphs. This generalizes the work of Grzesik, Klimo\v{s}ov\'{a}, Pilipczuk, and Pilipczuk on the \textsc{Maximum Weight Independent Set} problem in $P_6$-free graphs~[SODA 2019, TALG 2022],
and of Abrishami, Chudnovsky, Pilipczuk, Rz\k{a}\.zewski, and Seymour on problems in $P_5$-free graphs~[SODA~2021].

The key step is a new generalization of the framework of \emph{potential maximal cliques}. We show that instead of listing
a large family of potential maximal cliques, it is sufficient to only list their \emph{carvers}: vertex sets that contain the same vertices
from the sought solution and have similar separation properties.
\end{abstract}
\def\thepage{}
\thispagestyle{empty}

\begin{textblock}{20}(0, 12.3)
\includegraphics[width=40px]{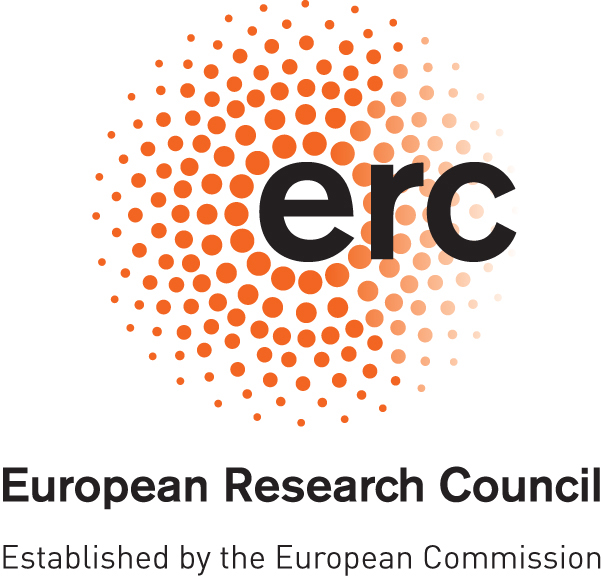}
\end{textblock}
\begin{textblock}{20}(0, 13.1)
\includegraphics[width=40px]{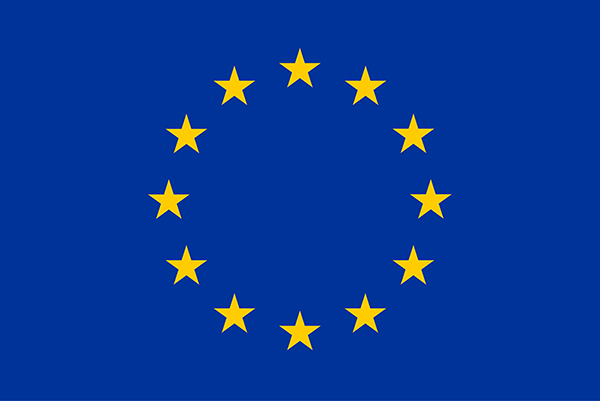}
\end{textblock}

\end{titlepage}

\section{Introduction}
\label{sec:intro}
The landmark work of Bouchitt\'{e} and Todinca~\cite{BouchitteT01} uncovered the pivotal role that \emph{potential maximal cliques}
(PMCs for short) play in tractability of the classic \textsc{Maximum (Weight) Independent Set} problem (\textsc{MIS} or \textsc{MWIS} for short).
The \textsc{MIS} (\textsc{MWIS}) problem asks for a set of pairwise nonadjacent vertices (called an \emph{independent} set or a \emph{stable} set)
in a given graph of maximum possible cardinality (or weight, in the weighted setting, where every vertex is given a positive integral weight).
Without giving a precise definition, a potential maximal clique is a set of vertices of the graph that
can be seen as a ``reasonable'' choice for a bag in a tree decomposition
of the graph, which in turn can be seen as a ``reasonable'' choice of a separating set for a divide-and-conquer algorithm.

Fomin and Villanger~\cite{DBLP:conf/stacs/FominV10} showed that \textsc{MWIS} and its generalizations asking for a maximum induced subgraph of given treewidth are solvable in time polynomial in the size of the graph and \emph{the number of PMCs} of the input graph.
At the time, this result unified a number of earlier tractability results for \textsc{MWIS} and related problems
in various hereditary graph classes, giving an elegant common explanation for tractability.
Later, Fomin, Todinca, and Villanger~\cite{FominTV15} showed that the same result applies to a wide range of combinatorial problems, captured via the following formalism.

For a fixed integer $k$ and a \cmsotwo{} formula
\footnote{\cmsotwo{} stands for monadic second-order logic in graphs with quantification over edge subsets and modular counting predicates. In this logic, one can quantify both over single vertices and edges and over their subsets, check membership and vertex-edge incidence, and apply modular counting predicates with fixed moduli to set variables. See Section~\ref{sec:logic} for a formal introduction of the syntax and semantics of \cmsotwo{}.} $\phi$ with one free vertex set variable, consider the following problem.
Given a graph $G$, find a pair $(\Sol,X)$ maximizing $|X|$ such that $X \subseteq \Sol \subseteq V(G)$, $G[\Sol]$ has treewidth at most $k$, and $\phi(X)$ is satisfied in $G[\Sol]$.
This problem can also be considered in the weighted setting, where vertices of $G$ have positive integral weights
and we look for $(\Sol, X)$ maximizing the weight of $X$.
For fixed $k$ and $\phi$, we denote this weighted problem as $(\tw \leq k,\phi)$-\textsc{MWIS}.
\footnote{Here, \textsc{MWIS} stands for ``maximum weight induced subgraph.''}
Fomin, Todinca, and Villanger showed that $(\tw \leq k,\phi)$-\textsc{MWIS} is solvable in time polynomial
in the size of the graph and the number of its PMCs.
Clearly, \textsc{MWIS} can be expressed as a $(\tw \leq 0, \phi)$-\textsc{MWIS} problem.
Among the many problems captured by this formalism, we mention that \textsc{Feedback Vertex Set} can be expressed as a $(\tw \leq 1, \phi)$-\textsc{MWIS} problem: indeed, the complement of a minimum (weight) feedback vertex set is a maximum (weight) induced forest.

Another application is as follows.
Let $\mathcal{G}$ be a minor-closed graph class that does not contain all planar graphs.
Thanks to the Graph Minor Theorem of Robertson and Seymour~\cite{graphMinors20},
there exists a finite set $\cF$ of graphs such that $G \in \mathcal{G}$ if and only if $G$ does not contain any
graph of $\cF$ as a minor. Consequently, the property of belonging to $\mathcal{G}$ can be expressed in \cmsotwo{}.
Furthermore, as $\mathcal{G}$ does not contain all planar graphs, $\mathcal{G}$ is of bounded treewidth~\cite{graphMinors5}.
Thus the problem of finding a largest induced subgraph that belongs to $\mathcal{G}$
is a special case of $(\tw \leq k, \phi)$-\textsc{MWIS}.

In both applications above we have $\Sol=X$. To see an example where these sets are different, consider the problem of packing the maximum number of vertex-disjoint and pairwise non-adjacent induced cycles.
To see that it is also a special case of $(\tw \leq k, \phi)$-\textsc{MWIS},
let $k=2$ and $\phi$ be the formula enforcing that $G[\Sol]$ is 2-regular (i.e., a collection of cycles) and no two vertices from $X$ are in the same component of $G[\Sol]$.

\medskip
Unfortunately, the results of~\cite{BouchitteT01} and~\cite{FominTV15} do not cover all cases where we expect even the original \textsc{MWIS} problem to be
polynomial-time solvable. A key case arises from excluding an induced path. For a fixed graph $H$, the class of \textit{$H$-free} graphs consists of all graphs that do not contain $H$
as an induced subgraph. For an integer $t$, we denote the path on $t$ vertices by $P_t$.
While the class of $P_4$-free graphs has bounded clique-width,
the class of $P_t$-free graphs does not seem to exhibit any apparent structure for $t \geq 5$.
Still, as observed by Alekseev~\cite{alekseev1982effect,Alekseev03}, \textsc{MWIS} is not known to be \textsf{NP}-hard in $P_t$-free graphs for any fixed $t$.

At first glance, the PMC framework of~\cite{BouchitteT01} does not seem applicable to $P_t$-free graphs for~$t \geq 5$, as even co-bipartite
graphs can have exponentially-many PMCs. (A graph is \textit{co-bipartite} if its complement is bipartite; these graphs are $P_5$-free.)
In 2014, Lokshtanov, Vatshelle, and Villanger~\cite{LokshantovVV14} revisited the framework of Bouchitt\'{e} and Todinca and showed that it is not necessary to use \emph{all} PMCs of the input graph, but only some carefully
selected subfamily of~PMCs.
They also showed that for $P_5$-free graphs, one can efficiently enumerate a suitable family of polynomial size, thus proving
tractability of \textsc{MWIS} in $P_5$-free graphs.
The arguments of~\cite{LokshantovVV14} were then expanded to $P_6$-free graphs by Grzesik et al.~\cite{p6FreeMaxInd22}.
The case of $P_7$-free graphs remains~open.

A general belief is that the \textsc{MWIS} problem is actually tractable in $P_t$-free graphs for any constant $t$.
This belief is supported by the existence of \emph{quasi-polynomial-time} algorithms that work for every $t$~\cite{DBLP:conf/focs/GartlandL20,DBLP:conf/sosa/PilipczukPR21}.
Extending these results, Gartland et al.~\cite{GartlandLPPR21} proved that for every $t$, $k$, and $\phi$, the $(\tw \leq k,\phi)$-\textsc{MWIS} problem is solvable in quasi-polynomial time on $P_t$-free graphs via a relatively simple branching algorithm. Actually, their algorithm solves the $(\dege \leq k,\phi)$-\textsc{MWIS} problem, where instead of a subgraph of bounded treewidth we ask for a subgraph of bounded \emph{degeneracy}.
We remark that $(\tw \leq k,\phi)$-\textsc{MWIS} can be expressed as $(\dege \leq k',\phi')$-\textsc{MWIS}. Indeed, degeneracy is always upper-bounded by treewidth and the property of being of bounded treewidth is expressible by a \cmsotwo{} formula.
On the other hand, the language of $(\dege \leq k,\phi)$-\textsc{MWIS} allows us to capture more problems. One well-known example is \textsc{Vertex Planarization}~\cite{DBLP:conf/soda/JansenLS14,DBLP:journals/dam/Pilipczuk17}, which asks for a maximum (or maximum weight) induced planar subgraph.
Indeed, planar graphs have degeneracy at most 5, but they might have unbounded treewidth, so \textsc{Vertex Planarization} is not a special case of $(\tw \leq k,\phi)$-\textsc{MWIS}.
However, Gartland et al.~\cite{GartlandLPPR21} showed that in (a superclass of) $P_t$-free graphs, treewidth and degeneracy are functionally equivalent.
Consequently, even though $(\dege \leq k,\phi)$-\textsc{MWIS} is more general than $(\tw \leq k,\phi)$-\textsc{MWIS}, in $P_t$-free graphs both formalisms  describe the same family of problems.

\medskip
One of the obstructions towards extending the known polynomial-time algorithms for \textsc{MWIS} beyond $P_5$-free and $P_6$-free graphs is the technical complexity of the method.
The algorithm for $P_5$-free graphs~\cite{LokshantovVV14} is already fairly involved, and the generalization to $P_6$-free graphs~\cite{p6FreeMaxInd22}
resulted in another significant increase in the amount of technical work.
In particular, it is not clear how to apply such an approach to solve $(\tw \leq k,\phi)$-\textsc{MWIS} (or, equivalently, $(\dege \leq k,\phi)$-\textsc{MWIS}).
Furthermore, in a recent note~\cite{GrzesikKPP21}, the authors of~\cite{p6FreeMaxInd22} discuss limitations of applying the method to solving \textsc{MWIS} in graph classes excluding longer paths.

Both algorithms for $P_5$-free~\cite{LokshantovVV14} and for $P_6$-free graphs~\cite{p6FreeMaxInd22} focused on restricting
the family of needed PMCs, but the algorithms still listed the PMCs exactly.
A major twist was made by Abrishami at al.~\cite{containerMethod21}
who showed that instead of determining a PMC exactly, it suffices to find only a \emph{container} for it: a superset
that does not contain any extra vertices from the solution. They also showed that with this container method, the arguments for $P_5$-free graphs
from~\cite{LokshantovVV14} greatly simplify to some elegant structural observations about $P_5$-free graphs.
Moreover, the container method from~\cite{containerMethod21} works with any $(\dege \leq k,\phi)$-\textsc{MWIS} problem. In particular, the authors of~\cite{containerMethod21} showed that \textsc{Feedback Vertex Set} is polynomial-time solvable in $P_5$-free graphs.

While the container method of~\cite{containerMethod21} pushed the boundary of tractability, it does not seem to be easily applicable
to $P_t$-free graphs for $t \geq 6$; in particular, we do not know how to significantly simplify the arguments of~\cite{p6FreeMaxInd22}
using containers.

\subsection{Our contribution}

Our contribution is three-fold.

\paragraph*{Identifying treedepth as the relevant width measure.}
    Previous work on \textsc{MWIS} in $P_5$-free and $P_6$-free graphs~\cite{GrzesikKPP21,LokshantovVV14}
    first fixed a sought solution $I$ (which is an inclusion-wise maximal independent set),
    then observed that it suffices to focus on PMCs that contain at most one vertex from the fixed solution $I$. They also distinguished between
    PMCs that have one vertex in common with $I$, and PMCs that have zero. The former ones turn out to be easy to handle,
    but the latter ones, called ``$I$-free'' or ``$I$-safe,'' are trickier; to tackle them one has to rely on some additional properties stemming from the fact that $I$ is maximal.

    We introduce generalizations of these notions to induced subgraphs of bounded \emph{treedepth},
    a structural notion more restrictive than treewidth.
    It turns out that the correct analog of independent sets are induced subgraphs of bounded treedepth
    with a fixed elimination forest. Maximality corresponds to the inability to extend the subgraph by adding a leaf
    vertex to the elimination forest, while $I$-freeness corresponds to not containing any \emph{leaf} of the fixed
    elimination forest of the sought solution.

    Focusing on treedepth naturally leads us to the $(\td \leq k, \phi)$-\textsc{MWIS} problem, where $G[\Sol]$ is required to have treedepth at most $d$.
    Luckily, in $P_t$-free graphs treedepth is functionally equivalent to treewidth (and thus to degeneracy, too),
    so in this setting the $(\td \leq k, \phi)$-\textsc{MWIS}, $(\tw \leq k, \phi)$-\textsc{MWIS},
    and $(\dege \leq k, \phi)$-\textsc{MWIS} formalisms define the same class of~problems.

    While simple in their form and proofs,
    the above generalizations allow us to adapt many arguments of~\cite{GrzesikKPP21,LokshantovVV14}
    to all $(\td \leq k, \phi)$-\textsc{MWIS} problems.

\paragraph*{Generalizing containers to carvers.}
    We introduce a notion of a \emph{carver} that generalizes containers.
    Our inspiration comes from thinking of a PMC as a ``reasonable'' separation in a divide-and-conquer algorithm.
Instead of determining a PMC exactly, we want to find an ``approximation'' that, on one hand, contains the same vertices from the sought
solution, and, on the other hand, \emph{splits the graph at least as well as the PMC}.
The crux lies in properly defining this latter notion.

Note that a container should satisfy any reasonable definition of ``splitting at least as well.'' Indeed, if $X$ is a set of vertices which contains a PMC $\pmc$,
then each component of $G-X$ is a subset of a component of $G-\pmc$.
However, if we allow that the approximation $X$ of $\pmc$ does not contain some vertices of $\pmc$, then we need to somehow restrict the way the vertices of $\pmc \setminus X$ connect the components of $G-(\pmc \cup X)$.
The first natural idea, to ask that no component of $G-X$ intersects more than one component of $G-\pmc$, turns
out to be not very useful. The actual definition allows $\pmc \setminus X$ to glue up
some components of $G-(\pmc \cup X)$ as long as we can show that
another carver, for a different PMC, will later separate them.

We prove that carvers are sufficient to solve all problems of our interest in $P_t$-free graphs.

\begin{theorem}[informal statement of Theorem~\ref{thm:dp}]\label{thm:informal-dp}
 Any $(\dege \leq k,\phi)$-\textsc{MWIS} problem is solvable on $P_t$-free graphs in time
polynomial in the size of the input graph and the size of the supplied carver family.
\end{theorem}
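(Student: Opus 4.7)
The plan is to design a Bouchitt\'{e}-Todinca-style dynamic programming algorithm in which the role of potential maximal cliques is played by carvers drawn from the supplied family, augmenting its state with Feferman-Vaught-style types in the spirit of Fomin, Todinca, and Villanger~\cite{FominTV15} so as to handle the formula $\phi$ and the bounded-width constraint simultaneously. First I would reduce to the treedepth variant: since treewidth, treedepth, and degeneracy are functionally equivalent in $P_t$-free graphs, and the property ``treedepth at most $d$'' is itself \cmsotwo{}-expressible, it suffices to prove the analogous statement for $(\td \leq d, \phi')$-\textsc{MWIS} with an appropriate constant $d$ and modified formula $\phi'$. Treedepth is the convenient viewpoint because a partial solution can be described by the top-to-bottom chain of at most $d$ ancestors in its elimination forest, and such a chain is short enough to record inside a DP state.

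Next I would lay out the dynamic programming table. A subproblem has the form $(C, W, \tau)$, where $C$ is a connected ``piece'' of $G$ to be processed (encoded, as in the Bouchitt\'{e}-Todinca framework, by a minimal separator together with one of its full sides), $W$ is a list of at most $d$ already-committed ancestors that play the role of a window in the elimination forest of $\Sol \cap C$, and $\tau$ is a \cmsotwo{}-type summarising the behaviour of $\phi$ on what has been assembled so far. To resolve $(C, W, \tau)$, I would iterate over every carver $K$ in the supplied family as a candidate for the topmost ``bag'' of the decomposition of $C$, guess how $K$ meets $\Sol$ and $X$ and where its vertices sit in the elimination forest above $W$, split $C$ into the subpieces induced by the resulting separation, and compose their values by Feferman-Vaught composition for the types. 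Because $|W|$ is a constant and $\tau$ ranges over a finite set depending only on $\phi$ and $d$, the total number of states is polynomial in $|V(G)|$ and the carver family size, and each cell is resolved in polynomial time.

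The main obstacle is correctness, and in particular the direction asserting that the DP value is at least the optimum. Given an optimal solution $(\Sol, X)$ together with its genuine PMC-based decomposition, I would peel off the topmost PMC $\pmc$ and replace it by the carver $K$ whose existence in the supplied family must be guaranteed for the ``shape'' of $\pmc$ relative to $\Sol$. The container aspect of carvers ensures that $K$ contributes exactly the same solution vertices as $\pmc$; the weaker ``splits at least as well'' property must be exploited inductively, with the invariant that any components of $G-(\pmc \cup K)$ which $K$ fails to separate at the current level will be separated by a carver used at the next level of recursion. Combining this exchange argument with the Feferman-Vaught composition for $\phi$ then shows that the DP faithfully simulates an optimal PMC-based derivation, which proves the theorem.
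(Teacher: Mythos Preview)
Your high-level plan---reduce to the treedepth variant, run a Bouchitt\'{e}--Todinca style DP with Feferman--Vaught types, and verify correctness by an exchange argument against a fixed optimal PMC-decomposition---matches the paper's overall arc. However, there are two concrete gaps that would block the proof as stated.

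First, your state encoding does not fit carvers. You index subproblems by a minimal separator plus a full side, and record a window $W$ of ``at most $d$ already-committed ancestors'' forming a vertical chain in the elimination forest. That is adequate for genuine PMCs (whose intersection with a treedepth-$d$ structure $\T$ lies on one vertical path of length $<d$), but a carver $C$ is neither a PMC nor a minimal separator, and the defect bound only guarantees $|C\cap\T|\le k$, with no requirement that $C\cap\T$ be a chain. The paper therefore indexes subproblems directly by pairs (carver $C$, component $D$ of $G-C$) and records in the state the whole treedepth-$d$ structure induced on the ancestors of $C\cap\T$, which has at most $k$ leaves and hence at most $dk$ vertices. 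Your bound of $d$ on $|W|$ is not enough.

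Second, and more seriously, the exchange argument you sketch is exactly the ``natural approach'' the paper explicitly warns does not go through without an additional ingredient. The difficulty is this: carvers for neighbouring bags of the clique tree are not nested. When you sit at node $t$ with carver $C_t$ and recurse into a child $s$ with carver $C_s$, a single component of $G-C_t$ can split across several components of $G-C_s$ and, conversely, vertices of $C_t\setminus C_s$ may glue together pieces that $C_s$ separated. Consequently, the ``best'' extensions the DP stored for the $C_s$-subproblems need not be restrictions of one common global solution, and your composition step (``whatever $K$ fails to separate now will be separated one level down'') can fail because the pieces refuse to glue consistently. The paper resolves this by fixing a carefully designed total quasi-order $\preceq$ on partial solutions (lexicographic in weight, then in vertex sets, then in the depth-by-depth profile of the elimination forest), proving that the optimum is the unique $\preceq$-minimal element in its class, and always breaking ties in the DP according to $\preceq$. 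This guarantees that every cell stores the restriction of one and the same global optimum, so the gluing is automatic. That tie-breaking mechanism is the technical heart of the correctness proof and is missing from your plan.
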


\paragraph*{Finding carvers in $P_6$-free graphs.}
 We showcase the strength of Theorem~\ref{thm:informal-dp} by lifting the approach of Grzesik et al.~\cite{p6FreeMaxInd22} from just \textsc{MWIS} to arbitrary $(\dege \leq k,\phi)$-\textsc{MWIS} problems on $P_6$-free graphs. Formally, we prove the following.

\begin{theorem}\label{thm:main-algo}
For any choice of $k$ and $\phi$, the $(\dege \leq k,\phi)$-\textsc{MWIS} problem is polynomial-time solvable
on $P_6$-free graphs.
\end{theorem}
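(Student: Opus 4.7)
The plan is to reduce Theorem~\ref{thm:main-algo} to Theorem~\ref{thm:informal-dp} (the formal statement $\mathrm{Theorem~\ref{thm:dp}}$) by constructing, in polynomial time on a $P_6$-free input graph, a carver family of polynomial size that is sufficient to capture an optimal solution. Once such a family is at hand, the algorithmic machinery of Theorem~\ref{thm:informal-dp} finishes the job. Since in $P_t$-free graphs treedepth, treewidth, and degeneracy are functionally equivalent, it suffices to work with the $(\td \leq k',\phi')$-\textsc{MWIS} variant for an appropriate $k'$ depending on $k$.

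First, I would set up the guess-and-restrict skeleton in the spirit of~\cite{LokshantovVV14,p6FreeMaxInd22}, but generalized via the treedepth framework announced earlier in the introduction. Instead of fixing a hypothetical maximal independent set $I$ and distinguishing PMCs by $|Q \cap I|$, I would fix a hypothetical optimal solution $\Sol$ together with its elimination forest $F$ of depth at most $k'$, and classify each PMC $Q$ we need to produce a carver for, according to whether $Q$ contains a \emph{leaf} of $F$ or not. PMCs that contain at least one leaf are the "easy" analog of the old "one vertex of $I$" case: a polynomial number of guesses about the leaf (and its ancestors in $F$) lets us root the search, and standard $P_6$-free enumeration tools—neighborhoods of triples, shortest paths between guessed vertices, and dominating induced paths—produce a container, which is trivially a carver.

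The main obstacle, and the place where the bulk of the work goes, is the leaf-free analog of $I$-safe PMCs: PMCs $Q$ disjoint from the leaves of $F$. Here I would lift the strategy of Grzesik et al.~\cite{p6FreeMaxInd22} from independent sets to elimination forests. The key property exploited in~\cite{p6FreeMaxInd22} is that if $Q$ is $I$-safe, then each vertex of $Q$ is adjacent to some vertex of the maximal independent set $I$ in a controlled way, which, combined with $P_6$-freeness, forces $Q$ to lie in the neighborhood of a short induced path and therefore admits a polynomial enumeration of candidates. In the treedepth setting, maximality of $I$ is replaced by the non-extendability of $F$ by an extra leaf; this yields, for each vertex $v \in Q$, a leaf $\ell_v$ of $F$ such that adding $v$ as a child of $\ell_v$ would still give a valid elimination forest, and in particular $v$ sees the entire root-to-$\ell_v$ branch in a prescribed pattern. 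I would use this to replay the $P_6$-free analysis: enumerate a polynomial number of "anchor" tuples (a guessed branch of $F$ plus the endpoints of a short induced path witnessing $P_6$-freeness), and for each anchor produce a set $X$ that is a carver for $Q$—not necessarily a container, since $Q \setminus X$ may be non-empty, but $X$ splits $G$ compatibly with $Q$ in the weaker sense defined by carvers, with the "glued" components being separated by later carvers produced for PMCs higher up in the divide-and-conquer.

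The hard part will be verifying the carver condition in this leaf-free case: showing that the relaxed splitting requirement introduced by carvers is exactly what is preserved by the $P_6$-free enumeration, even when some vertices of $Q$ are dropped. I expect the verification to proceed by an inductive argument along $F$, using that any component of $G-(Q \cup X)$ glued back by $Q \setminus X$ corresponds to a strictly smaller subproblem in which another carver, produced for a PMC separating that component in the solution's decomposition, finishes the separation. Once this is in place, a union bound over all anchor tuples gives a carver family of polynomial size, and plugging it into Theorem~\ref{thm:informal-dp} completes the proof.
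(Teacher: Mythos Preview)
Your high-level architecture is exactly right and matches the paper: reduce $(\dege\le k,\phi)$-\textsc{MWIS} to $(\td\le d,\phi')$-\textsc{MWIS} via the functional equivalence of degeneracy and treedepth on $P_t$-free graphs, construct a polynomial-size carver family on the $P_6$-free input, and then apply Theorem~\ref{thm:dp}. Your treatment of PMCs that contain a depth-$d$ vertex of the elimination forest (your ``leaf'' case) also matches the paper (Lemma~\ref{lem:leafyPMC}): a simple guess of that vertex and its at most $d-1$ ancestors in $\T$ yields a container.

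The genuine gap is in your handling of the $\T$-avoiding (your ``leaf-free'') PMCs. Your description---enumerate ``anchor tuples'' consisting of a guessed branch of the elimination forest plus endpoints of a short induced path, then verify the carver condition by an ``inductive argument along $F$''---is too vague to constitute a proof, and it misses the structural case split that actually drives the construction. The paper does not treat all $\T$-avoiding PMCs uniformly. It first separates them into \emph{not-two-sided} and \emph{two-sided} PMCs. For not-two-sided PMCs, a direct adaptation of~\cite{p6FreeMaxInd22} (replacing $I$-freeness by $\T$-avoidance, a single guessed vertex $p$ by a footprint of size at most $d$, and invoking Lemma~\ref{lem:PMCmaximality} in place of maximality of $I$) produces genuine \emph{containers}; this is Theorem~\ref{thm:not-two-sided}. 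The two-sided PMCs are where carvers, as opposed to containers, are truly needed, and here the construction does \emph{not} proceed by an induction along the elimination forest. Instead, one must first choose a particular clique tree $(T,\beta)$ of the $\T$-aligned chordal completion satisfying a combinatorial property~$(\spadesuit)$ (Lemma~\ref{lem:findingCliqueTree}), orient its edges according to which side of each mixed adhesion is mesh, and then perform a case analysis on the orientations of the two edges $tt_0$, $tt_1$ incident to the bag $\beta(t)$ under consideration (Proposition~\ref{prop:two-sided}). The carver is assembled from $\T$-carvers for the minimal separators $N(D_0)$, $N(D_1)$ (supplied by Proposition~\ref{prop:sepCarving}, which itself rests on a four-way classification of minimal separators into subordinate, non-mesh, mixed, and mesh), possibly augmented via Proposition~\ref{prop:betterMixedCarver} to clarify components of mixed separators, and in one subcase by the already-constructed carver of a neighboring non-two-sided bag. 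None of this structure---the separator taxonomy, the special clique tree, or the orientation-based case analysis---appears in your sketch, and without it there is no evident mechanism by which your ``anchor tuples'' would yield sets satisfying the carver condition for two-sided PMCs.
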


Note that Theorem~\ref{thm:main-algo} in particular implies that {\sc{Feedback Vertex Set}} is polynomial-time solvable on $P_6$-free graphs, which was a well-known open problem~\cite{DBLP:conf/wg/PaesaniPR22,DBLP:journals/siamdm/PaesaniPR22,DBLP:journals/algorithmica/BonamyDFJP19}. Apart from being applicable to a wider class of problems, our carver-based approach also significantly simplifies, or even makes obsolete, many of the technical parts of~\cite{p6FreeMaxInd22}.

On high level, the proof of~\cite{p6FreeMaxInd22} consists of two parts. In the first part, PMCs that in some sense
``have more than two principal components'' are analysed. Here, the arguments are arguably neat and elegant in many places.
The second part deals with PMCs with exactly two ``principal components,'' that can chain up into long sequences.
Here, a highly technical replacement argument is developed to ``canonize'' an $I$-free minimal chordal completion
in such parts of the input graph.

Using the newly developed notions of treedepth structures, we lift the (more elegant part of the) arguments of~\cite{p6FreeMaxInd22}
to $(\td \leq k, \phi)$-\textsc{MWIS} problems, showing that PMCs with ``more than two principal components'' admit containers,
not only carvers.
Furthermore, we use the power of the new notion of the carver to construct carvers
for PMCs with two ``principal components,'' replacing the highly technical part of~\cite{p6FreeMaxInd22}
with arguably shorter and more direct arguments.

We refrain from providing a more detailed analysis of the running time
bounds of our algorithms beyond merely stating polynomial time.
Admittedly, the exponent in the polynomial bound of Theorem~\ref{thm:main-algo}
has a terrible dependency on $k$. In our opinion, extracting
a more precise bound would introduce substantial technical clutter
in the proof, while not bringing any new insight.

\subsection{Technical overview}

Let us now have a closer look at the three aforementioned contributions.

To this end, we need to introduce some definitions regarding chordal completions and PMCs.
Given a graph $G$, a set $\Omega \subseteq V(G)$ is a \textit{potential maximal clique} (or a \textit{PMC}) if there exists a minimal chordal completion of $G$ in which $\Omega$ is a maximal clique. A \textit{chordal completion} of $G$ is a supergraph of $G$ which is chordal and has the same vertex-set as $G$; it is \textit{minimal} if it has no proper subgraph which is also a chordal completion of $G$. (Recall that a graph is \textit{chordal} if it has no holes, where a \textit{hole} is an induced cycle of length at least~$4$.) Since chordal completions are obtained by adding edges to $G$, it is convenient to write them as $G+F$, where $F$ is a set of non-edges of $G$.

Chordal completions in a certain sense correspond to tree decompositions, and it is often more convenient to work with the latter.
(The formal definition of a tree decomposition can be found in Section~\ref{sec:prelims}.)
It is a folklore result that a graph $H$ is chordal if and only if it has a tree decomposition whose bags are exactly the maximal cliques of $H$ (meaning, in particular, that the number of nodes of the tree is equal to the number of maximal cliques of $H$). Such a tree decomposition is called a \emph{clique tree} of $H$; note that while the set of bags of a clique tree is defined uniquely, the actual tree part of the tree decomposition is not necessarily unique.

In the other direction, observe that if we have a tree decomposition of a given graph $G$, then by completing every bag of this
tree decomposition into a clique, we obtain a chordal supergraph.
Hence, minimal chordal completions correspond to ``the most refined'' tree decompositions of $G$, and this
supports the intuition that PMCs are ``reasonable'' choices of bags in a tree decomposition of $G$.

For a set $S \subseteq V(G)$ in a graph $G$, a \emph{full component} of $S$ is a connected component $A$ of $G-S$
such that $N(A) = S$. A set $S$ is a \emph{minimal separator} if $S$ has at least two full components.
It is well-known (cf.~\cite{BouchitteT01}) that if $\pmc$ is a PMC in $G$, then for every component $D$ of $G-\pmc$,
$N(D)$ is a minimal separator with $D$ as a full component and another full component containing $\pmc \setminus N(D)$.
Furthermore, if $st$ is an edge of $T$ for a clique tree $(T,\beta)$ of a minimal chordal completion $G+F$,
then $\beta(s) \cap \beta(t)$ is a minimal separator with one full component containing $\beta(s) \setminus \beta(t)$
and one full component containing $\beta(t) \setminus \beta(s)$.
Thus, in some sense, minimal separators are building blocks from which PMCs are constructed. While PMCs
correspond to bags of tree decompositions of~$G$, minimal separators correspond to \emph{adhesions} (intersections
of neighboring bags).

\paragraph{Treedepth structures.} The starting insight of Lokshtanov, Vatshelle, and Villanger~\cite{LokshantovVV14}
is that if $I$ is a maximal independent set in $G$, then by completing $V(G)\setminus I$ into a clique we obtain
a chordal graph (even a split graph), and thus there exists a minimal chordal completion $G+F$ that does not
add any edge incident to $I$; we call such a chordal completion \emph{$I$-free}.
In $G+F$, every maximal clique contains at most one vertex of $I$
and, if $I \cap \pmc = \{v\}$ for a maximal clique~$\pmc$, then $\pmc \subseteq N_G[v]$ and $N_G[v]$ is a good container
for $\pmc$.
They argue that it is sufficient to list a superset of all maximal cliques of $G+F$, and hence it suffices to focus on PMCs of $G$ that are disjoint
from the sought solution $I$. Such PMCs are henceforth called \emph{$I$-free}.

Let $\pmc$ be an $I$-free PMC. Since $I$ is maximal, every $v \in \pmc$ has a neighbor in $I$ that is outside $\pmc$,
as $\pmc$ is $I$-free. The existence of such neighbors is pivotal to a number of proofs of~\cite{LokshantovVV14,p6FreeMaxInd22}.

To discuss our generalization to induced subgraphs of bounded treedepth, we need a few standard definitions.
A \textit{rooted forest} is a forest $\T$ where each component has exactly one specified vertex called its \textit{root}. The \textit{depth} of a vertex $v \in V(\T)$ is the number of vertices in the unique path from $v$ to a root (so roots have depth~$1$). The \textit{height} of $\T$ is the maximum depth of any of its vertices. A path in $\T$ is \textit{vertical} if one of its ends is an ancestor of the other. (We consider each vertex to be both an ancestor and a descendent of itself.) Two vertices are \textit{$\T$-comparable} if they are connected by a vertical path; otherwise they are \textit{$\T$-incomparable}.
An \textit{elimination forest} of a graph $G$ is a rooted forest $\T$ such that $V(\T) = V(G)$ and the endpoints of each edge of $G$ are $\T$-comparable. The \textit{treedepth} of $G$ is then the smallest integer $d$ such that $G$ has an elimination forest of height $d$.

Let us now move to the new definitions.
Let $G$ be a graph and $d$ be a positive integer. A \emph{treedepth-$d$ structure in $G$} is a rooted forest $\T$ of height at most $d$ such that $V(\T)$ is a subset of $V(G)$ and $\T$ is an elimination forest of the subgraph of $G$ induced by $V(\T)$. We say that $\T$ is \emph{maximal} if there is no treedepth-$d$ structure $\T'$ in $G$ such that $\T$ is a proper induced subgraph of $\T'$ and every root of $\T$ is a root of $\T'$. In other words, $\T$ is maximal if one cannot extend it by appending a leaf while preserving the bound on the height.

Note that if $H$ is a maximal induced subgraph of $G$ of treedepth at most $d$, and $\T$ is a height-$d$ elimination forest of that subgraph, then $\T$ is a maximal treedepth-$d$ structure in $G$. Consequently, in the context of $(\td \leq d, \phi)$-\textsc{MWIS}, we can consider $\Sol$ as being in fact a maximal set inducing a subgraph of treedepth at most $d$ in $G$; if $(\Sol, X)$ is an actual solution, then there exists a maximal treedepth-$d$ structure $\Sol'$ that is a superset of $\Sol$, and

we can extend $\phi$ by saying that there exists a set $\Sol \subseteq \Sol'$ with all the desired properties.
Thus, most of the structural results in this work consider the set of all maximal treedepth-$d$-structures, which are more detailed versions of maximal sets inducing a subgraph of treedepth at most $d$.

Recall that for any independent set $I$, there is a minimal chordal completion of $G$ that is $I$-free, that is,
it does not add any edge incident to $I$.
This statement generalizes to chordal completions \emph{aligned} with a given treedepth-$d$ structure $\T$;
we say that a chordal completion $G+F$ is \textit{$\T$-aligned} if $F$ does not contain any pair $uv$ such that
\begin{enumerate}
    \item $u$ or $v$ is a depth-$d$ vertex of $\T$, or
    \item $u$ and $v$ are vertices of $\T$ which are $\T$-incomparable.
\end{enumerate}
\noindent The second condition equivalently says that $\T$ is a treedepth-$d$ structure in $G+F$.

We show that there is always a $\T$-aligned minimal chordal completion (see Lemma~\ref{lem:alignedChordalCompl}) and argue that it suffices to focus on PMCs
that come from an aligned minimal chordal completion.

The analog of the notion of ``$I$-freeness'' is as follows:
A PMC $\Omega$ is \textit{$\T$-avoiding} if it is a maximal clique of a minimal chordal completion that is $\T$-aligned, and it does not contain any depth-$d$ vertex of $\T$.
Similarly as in the case of PMCs that are not $I$-free,
if $\pmc$ is $\T$-aligned but not $\T$-avoiding, it contains exactly one vertex of $\T$ of depth $d$ and
one can argue that the closed neighborhood of such vertex gives rise to a container for $\pmc$
(after excluding the vertices of $\T \setminus \pmc$ that accidentally got into it, but there are at most $d-1$
one of them, because they all are ancestors of the guessed\footnote{Throughout this paper, by \emph{guessing} we mean branching into polynomially many choices of fixing the object in question.} vertex of $\T \cap \pmc$ of depth $d$ in the rooted
forest $\T$).

Thus, it remains to focus on $\T$-avoiding PMCs.
In the $I$-free setting, the important property of an $I$-free PMC was that every $v \in \pmc$ has a neighbor in $I$.
Here, one can argue that every $v \in \pmc \setminus \T$ in a $\T$-avoiding PMC $\pmc$ has a neighbor
in $\T \setminus \pmc$, as otherwise it can be added to $\T$ without increasing the maximum depth of $\T$, contradicting
the maximality of $\T$.

This concludes the overview of the adaptation of the notion of $I$-freeness to induced subgraphs of bounded treedepth.

\paragraph{Carvers.}
Let $G$ be a graph and let $I$ be an optimal solution to \textsc{MWIS} in $G$.
Assume that we are given a polynomial-sized family $\mathcal{F}$ of PMCs in $G$ that contains all maximal cliques of some
$I$-free minimal chordal completion $G+F$ of $G$. The crucial insight of~\cite{LokshantovVV14} is that this is enough
to solve \textsc{MWIS} in $G$ in polynomial time by a dynamic programming algorithm.
The algorithm considers the following set of states: for every $\pmc \in \mathcal{F}$, every $J \subseteq \pmc$ of size at most $1$,
and every component $D$ of $G-\pmc$, it tries to compute the best possible independent set $I[\pmc,J,D]$ in $G[\pmc \cup D]$
with $I[\pmc,J,D] \cap \pmc = J$.
The assumption that $\mathcal{F}$ contains all PMCs of $G+F$ allows one to argue that there is a computation path
of this dynamic programming algorithm that finds an independent set that is at least as good as $I$
(we may not find $I$ itself).

The crucial insight of~\cite{containerMethod21} is that for the dynamic programming algorithm to work, it is enough
to know \emph{containers} for the maximal cliques of $G+F$, that is, it is fine if the provided sets in $\mathcal{F}$ are larger,
as long as they do not contain extra vertices from the sought solution. The intuition here is that the dynamic programming
algorithm relies on the separation properties of PMCs as bags of a clique tree of $G+F$, and a superset is an even better separator
than a PMC itself.

From the point of view of separation, the following relaxation of a container would suffice.
A set $X$ is a \emph{weak container} of a PMC $\pmc$ if it contains the same vertices from the sought solution
and every connected component of $G-X$ intersects at most one connected component of $G-\pmc$
(that is, the vertices of $\pmc \setminus X$ do not connect two components of $G-(\pmc \cup X)$).

However, in the context of $P_6$-free graphs, we are unable to provide even weak containers to some PMCs,
and there seems to be a good reason for this failure.
Namely, there are examples of $P_6$-free graphs $G$
with an ($I$-free or $\T$-avoiding, depending on the problem
we are solving) PMC $\pmc$ with a subset $\mathcal{D}$ of components of $G-\pmc$ such that some local modifications
to the minimal chordal completion $G+F$ modify $\pmc$ slightly, but completely reshuffle the vertices of $\mathcal{D}$
into new components.
The intuition is that the dynamic programming algorithm should not attempt to separate $\mathcal{D}$ into components
while looking at a (weak) container of $\pmc$, but while looking at another PMC $\pmc'$ that is ``closer'' to $\mathcal{D}$.

\begin{figure}[tb]
\begin{center}
\small

\begin{tikzpicture}[scale=1.0]
\tikzstyle{ivertex}=[circle,draw=black,fill=red,minimum size=0.15cm,inner sep=0pt]
\tikzstyle{vertex}=[circle,draw=black,fill=black,minimum size=0.1cm,inner sep=0pt]

\draw[rounded corners] (0,0) rectangle (1, 7);
\draw[rounded corners] (3,0) rectangle (4, 7);
\foreach \n in {1,2,3,4,5,6,7} {
  \begin{scope}[shift={(0,7-\n)}]
    \draw (0.5, 0.5) node {$A_\n$};
    \draw (3.5, 0.5) node {$B_\n$};
    \foreach \x in {1,2,3,4} {
      \foreach \y in {1,2,3,4} {
        \draw (1, \x*0.2) -- (3, \y*0.2);
      }
    }
  \end{scope}
}
\foreach \m in {0,3} {
\foreach \n in {1,2,3,4,5,6} {
  \begin{scope}[shift={(\m,7-\n)}]
    \draw (0,0) -- (1,0);
    \foreach \x in {1,2,3,4} {
      \foreach \y in {1,2,3,4} {
        \draw (\x*0.2, 0.1) -- (\y*0.2, -0.1);
      }
    }
  \end{scope}
}}

\foreach \n in {1,2,3,4} {
  \node[ivertex] (ui) at (3.85, 5+\n*0.2) {};
}

\node[ivertex] (va) at (-2, 1.5) {};
\node[ivertex] (vb) at (-2, 3.5) {};
\node[ivertex] (vc) at (-2, 5.5) {};

\draw[left] (va) node {$v_{\{5,6,7\}}$};
\foreach \x in {5,6,7} {
  \foreach \y in {1,2,3,4} {
  \draw (va) -- (0, 7-\x+\y*0.2);
  }
}

\draw[left] (vb) node {$v_{\{2,3,4,5\}}$};
\foreach \x in {2,3,4,5} {
  \foreach \y in {1,2,3,4} {
  \draw (vb) -- (0, 7-\x+\y*0.2);
  }
}
\draw[left] (vc) node {$v_{\{1,2\}}$};
\foreach \x in {1,2} {
  \foreach \y in {1,2,3,4} {
  \draw (vc) -- (0, 7-\x+\y*0.2);
  }
}
\end{tikzpicture}
\caption{An example of a $P_6$-free graph with a maximal independent set where a weak container seems to be too restrictive a notion.
  The red vertices are the vertices of a maximal independent set.
    Here, $n=7$, $i_0 = 2$, and $\mathcal{F} = \{\{1,2\}, \{2,3,4,5\}, \{5,6,7\}\}$.}\label{fig:intro-example}
\end{center}
\end{figure}
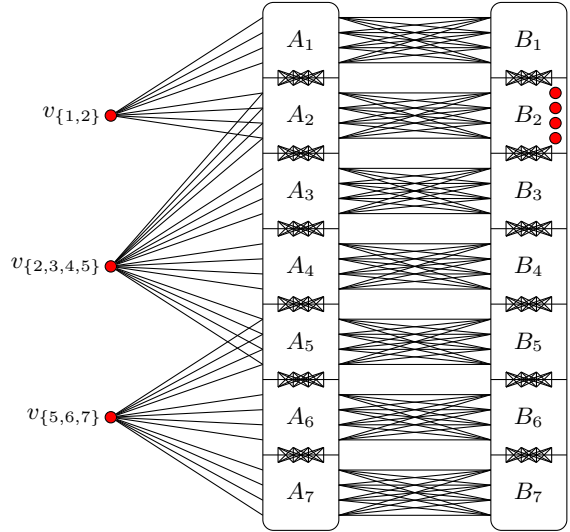

More precisely, consider the following example (cf. Figure~\ref{fig:intro-example}).
Let $A_1,\ldots,A_n$ and $B_1,\ldots,B_n$ be two sequences of $P_6$-free graphs and let $\mathcal{F}$
be a family of subsets of $[n]$ of size being a large polynomial in $n$ with $\bigcup \mathcal{F} = [n]$;
all subsets of $[n]$ of size at most $C$ for a large constant $C$ would do the job.
Construct a graph $G$ as follows.
Start with a disjoint union of $A_1,\ldots,A_n$ and $B_1,\ldots,B_n$. For every $i,j \in [n]$, $i \neq j$,
add all edges between $A_i$ and $A_j$ and all edges between $B_i$ and $B_j$. For every $i \in [n]$, add all edges between
$A_i$ and $B_i$. Finally, for every $K \in \mathcal{F}$, introduce a vertex $v_K$ and make it adjacent to $\bigcup_{i \in K} A_i$.
A direct check shows that $G$ is $P_6$-free, for every choice of $i_0 \in [n]$ and a maximal independent set $I_0$ in $B_{i_0}$,
the set $I_{i_0,I_0} := I_0 \cup \{v_K~|~K \in \mathcal{F}\}$ is a maximal independent set in $G$,
and, for every $\emptyset \neq J \subsetneq [n]$ that is not contained in any set of $\mathcal{F}$, the set
$S_J := \bigcup_{i \in J}A_i \cup \bigcup_{i \in [n] \setminus J} B_i$ is a minimal separator with one full component
$B_J := \bigcup_{i \in J}B_i$ and a second full component $A_J := \bigcup_{i \in [n] \setminus J} A_i \cup \{v_K~|~K \in \mathcal{F}, K \not\subseteq J\}$, and a number of single-vertex components $\{v_K\}$ for $K \in \mathcal{F}, K \subseteq J$.
Observe that a weak container for $S_J$ should separate $v_K$ for $K \subseteq J$ from those $v_K$ for which $K \not\subseteq J$.
The only way to make a small family of (weak) containers for all such separators $S_J$ is to make containers containing whole
$\bigcup_{i \in I} A_i$ but none of the vertices $v_K$; however, distinguishing $\bigcup_{i \in I} A_i$ and $\{v_K~|~K \in \mathcal{F}\}$
seems difficult using the toolbox used in~\cite{LokshantovVV14,p6FreeMaxInd22}.

Let us now consider how an $I_{i_0,I_0}$-free chordal completion
may look like.
As it needs to leave $I_0$ intact, to complete $C_4$s that take
two vertices in $I_0$ and two vertices in other set $B_i$ or in $A_{i_0}$,
the completion has to turn every $B_i$ for $i \in [n] \setminus \{i_0\}$
as well as $A_{i_0}$ into a clique.
To complete all $C_4s$ spanning across two sets $A_i$,
the completion has to turn every $A_i$ for $i \in [n]$ into a clique
(except for possibly one set $A_i$, which we ignore here).
To complete $C_4$s of the form $A_i-A_j-B_j-A_i$ for some $1 \leq i \neq j \leq n$, it needs to take any permutation $\pi$ of $[n]$ with $\pi(1) = i_0$
and add edges between $A_{\pi(i)}$ and $B_{\pi(j)}$ for every $1 \leq i < j \leq n$. This corresponds to turning
$S_J$ for $J = \pi(\{1,2,\ldots,i\})$ for every $1 \leq i \leq n$ into a clique.
Intuitively, the algorithm should not bother with the choice of $\pi$, which corresponds to ignoring
how vertices $v_K$ are separated while looking at intermediate separators $S_J$.

Recall that the correctness of the dynamic programming algorithm of~\cite{LokshantovVV14} relies on the observation
that a clique tree of $G+F$ provides a computation path in which the algorithm finds a solution at least as good as $I$.
In our setting, our trouble is a (possibly large) family $\mathcal{D}$
of components of $G-\Omega$ that we do not want to separate while
processing a container for $\Omega$, but in a different place,
``closer'' to the vertices of $\mathcal{D}$.
From the point of view of the dynamic programming algorithm, this is
not a problem as long as all vertices of $\mathcal{D}$
are confined
in one subtree of a clique tree of $G+F$. This consideration brings us to the final definition of a carver.

\begin{definition}\label{def:intro:carvers}
Let $G$ be a graph and $d$ and $k$ be positive integers.
A family $\carvers \subseteq 2^{V(G)}$ is a \emph{tree-depth-$d$ carver family} of \emph{defect $k$} in $G$
if for every treedepth-$d$ structure $\T$ in $G$,

there exists a tree decomposition $(T,\beta)$ of $G$
such that for each $t\in V(T)$ there exists $C \in \carvers$
such that
\begin{enumerate}
\item $C \cap \T$ contains $\beta(t)\cap \T$ and has size at most $k$, and
\item each component of $G-C$ is contained in $\beta(t) \cup \bigcup_{s \in T'}\beta(s)$ for some component $T'$ of $T-\{t\}$.
\end{enumerate}
Such a set $C$ as above is called a \emph{$(\T, (T,\beta))$-carver} for $\beta(t)$ of defect $k$; it might not be unique. We use this definition independently of that of carver families.
\end{definition}

We prove that this definition works as intended:
a tree-depth-$d$ carver family of small defect in $G$ is enough to design a dynamic programming
routine that solves the $(\td \leq d, \phi)$-\textsc{MWIS} problem on $G$.
\begin{theorem}\label{thm:intro:dp}
For any positive integers $d$ and $k$ and any \cmsotwo formula $\phi$, there exists an algorithm that, given a vertex-weighted graph $G$
and a tree-depth-$d$ carver family $\carvers \subseteq 2^{V(G)}$ of defect $k$ in $G$, runs in time polynomial in the input size
and either outputs an optimal solution to the $(\td \leq d, \phi)$-\textsc{MWIS} problem on~$G$, or determines that no feasible solution exists.
\end{theorem}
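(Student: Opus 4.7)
The plan is to generalise the Bouchitt\'e--Todinca dynamic programming for \textsc{MWIS}, in the \cmsotwo{} form extended by Fomin--Todinca--Villanger, replacing potential maximal cliques by carvers drawn from $\carvers$. As is standard, I may assume that the set $\Sol$ sought in $(\td \leq d,\phi)$-\textsc{MWIS} is a maximal treedepth-$d$ structure $\T^{\ast}$, with the formula $\phi$ rewritten to still select the desired subset $X$ inside $\T^{\ast}$. The target witness object is then a rooted clique tree $(T,\beta)$ of a $\T^{\ast}$-aligned minimal chordal completion, equipped at every node $t$ with a carver $C_t \in \carvers$ satisfying conditions (i) and (ii) of Definition~\ref{def:intro:carvers}; the algorithm's job is to discover (some) at-least-as-good computation path by dynamic programming over $\carvers$, without ever needing $\T^{\ast}$, $(G+F)$, or $(T,\beta)$ explicitly.

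The DP ranges over tuples $(C, D, Z, \pi, \tau)$, where $C \in \carvers$, $D$ is a component of $G-C$, $Z \subseteq C$ with $|Z| \leq k$ is a candidate for $C \cap \T^{\ast}$, the profile $\pi$ encodes the elimination-forest data on $Z$ (the depths in $\T^{\ast}$ assigned to elements of $Z$, the ancestor/descendent pattern among them, and the interface to the part of $\T^{\ast}$ already processed outside $C \cup D$), and $\tau$ is the \cmsotwo{} type, up to the quantifier rank of $\phi$, of the partial solution constructed so far inside $G[C \cup D]$ relativised to the interface $(Z,\pi)$. The value $\mathsf{DP}[C,D,Z,\pi,\tau]$ is the maximum weight of the $X$-portion compatible with these parameters. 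Transitions use the standard Feferman--Vaught composition for \cmsotwo{}-types: to evaluate an entry one enumerates every carver $C' \in \carvers$ eligible to serve as the next bag ``into'' $D$, every consistent extension $(Z',\pi',\tau')$, and combines the contribution of $C$ with the DP values of the subproblems corresponding to the components of $G - C'$ that lie inside $D$. At the top level, one fixes a ``root'' carver and merges its component-wise entries into a global type, verifying $\phi$ and maximising the total weight; if no combination satisfies $\phi$ the algorithm reports infeasibility.

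Correctness is established by fixing an optimal $\T^{\ast}$, invoking the carver-family property to obtain $(G+F)$, $(T,\beta)$, and carvers $\{C_t\}_{t \in V(T)}$, arbitrarily rooting $T$, and proving by induction from the leaves of $T$ that at every node $t$ the DP entry at $(C_t, D_t, \cdot, \cdot, \cdot)$ indexed by the correct guess (where $D_t$ is the component of $G - C_t$ containing the parent side of $t$) is at least the weight that the optimum collects in the subtree rooted at $t$. Condition (i) forces $C_t \cap \T^{\ast} \supseteq \beta(t) \cap \T^{\ast}$, so every $\T^{\ast}$-vertex is accounted for exactly when the clique tree ``sees'' it; condition (ii) forces each component of $G - C_t$ to be contained in a single subtree direction at $t$, so the subproblems assembled in the recurrence partition the remainder of $V(G)$ correctly and the type composition is well-defined. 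The main obstacle, and precisely the point at which the move from PMCs to the weaker notion of carvers bites, is the following: a component $D$ of $G - C_t$ may contain $\beta(t)$-vertices not placed in $C_t$ and may thus glue together pieces that $\beta(t)$ separated in $G+F$, potentially seeming to break the divide-and-conquer. The resolution is that descending into $D$ we eventually meet a descendant $C_{t'}$ whose condition (i) re-separates these vertices via $\beta(t') \cap \T^{\ast}$, so the inductive hypothesis still applies along the ``canonical'' computation path induced by the rooted $T$, even though the intermediate states carry coarser separations than in the PMC setting.

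The running time is polynomial: the number of DP states is at most $|\carvers| \cdot |V(G)|^{O(1)} \cdot \binom{|V(G)|}{\leq k} \cdot f(d, \phi)$, where $f$ depends only on $d$, $k$, and $\phi$, and accounts for the bounded number of profiles $\pi$ and \cmsotwo{}-types $\tau$; each transition enumerates at most $|\carvers|$ next-carvers and a constant number of extensions, with type composition executed in constant time by a precomputed automaton whose size depends only on $\phi$, $d$, and $k$. Thus the total running time is polynomial in $|V(G)| + |\carvers|$, as required.
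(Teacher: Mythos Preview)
Your outline has the right shape—dynamic programming indexed by carvers with \cmsotwo{} types tracked by an automaton, correctness argued by induction along a rooted clique tree of the witnessing completion—and this is indeed what the paper does. But the transition step, which you summarise as ``combine \ldots\ with the DP values of the subproblems corresponding to the components of $G-C'$ that lie inside $D$,'' hides exactly the difficulty the paper calls a ``significant technical hurdle,'' and your resolution (``descending into $D$ we eventually meet a descendant $C_{t'}$ that re-separates'') is not an argument.

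Concretely, two problems arise when passing from a carver $C_t$ (with child component $D$ of $t$) to a carver $C_s$ for a child $s$. First, a component $D_0$ of $G-C_s$ need not lie inside $D$: it may absorb vertices of $C_t\setminus C_s$, or of $\beta(t)\setminus C_t$; and even when $D_0\subseteq D$, nothing forces $D_0$ to be a \emph{child} component of $s$, so the inductive hypothesis at $s$ does not apply to it. The paper fixes this by restricting attention to \emph{useful} components of $G-C_s$ (those into which the partial structure can genuinely extend) and proving, using that the target $\T$ is \emph{neat} and maximal, that every useful component is a child component of $s$. Your DP has no analogue of this filtering step and no neatness assumption to support it.

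Second, the interfaces $C_t\cap\T$ and $C_s\cap\T$ are in general incomparable, so a value-only table indexed by a bounded profile $\pi$ does not obviously survive the change of carver: after recursing at $C_s$ one must \emph{restrict} the returned solution back to the ancestors of $D$, and two max-weight extensions at $C_s$ can restrict to different weights at $C_t$. The paper therefore stores an explicit partial solution in each cell and equips partial solutions with a tie-breaking quasi-order $\preceq$, proving that the $\preceq$-minimum feasible solution is neat and is the \emph{unique} best valid extension of every template it induces; this is what makes the restriction step sound. Finally, since the clique tree is unknown, there is no natural recursion order, and the paper runs $|V(G)|$ rounds of Bellman--Ford-style relaxation rather than a single descent. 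None of these mechanisms are present in your proposal, and without them the inductive correctness argument does not close.
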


We remark that the proof of Theorem~\ref{thm:intro:dp} is far from being just an involved verification of a natural approach.
There is a significant technical hurdle coming from the fact that, with fixed $\T$ and $(T,\beta)$,
carvers for neighboring bags of $(T,\beta)$ may greatly differ from each other in terms of the amount of non-solution vertices
added to them. One needs to design careful tie-breaking schemes for choices in partial solutions in the dynamic programming algorithm
in order to avoid conflicting tie-breaking decisions made while looking at different carvers.

\paragraph{Application to $P_6$-free graphs.}
The starting point of the work of~\cite{p6FreeMaxInd22} on \textsc{MWIS} in $P_6$-free graph
is an analysis of minimal separators that identifies a crucial case distinction between full components
of a minimal separator, into ones whose complement is disconnected (so-called \emph{mesh components}) or connected
(\emph{non-mesh components}).
The analysis splits minimal separators in a $P_6$-free graph $G$ into three categories:
\begin{description}
    \item[Simple,] being a proper subset of another minimal separator, having more than two full components,
    or having two non-mesh full components. Here, one can enumerate a polynomial-sized family of candidates that contains
    all such separators.
    \item[Somewhat complicated,] having exactly two full components, both being mesh.
    Here, one can enumerate a polynomial-sized family that contains a  ``weak container'' for every such separator,
    which is equally good for our applications as just knowing the separator exactly.
    \item[Really complicated,] having exactly two full components, one mesh and one non-mesh.
    Here, we can only enumerate a polynomial-sized family of ``semi-carvers'' that separate the mesh component from the
    other components, but such a semi-carver is not guaranteed to separate the non-mesh full component from some non-full components.
    (This weakness corresponds to examples mentioned earlier about inability to split some family $\mathcal{D}$
    of components of $G-\pmc$ for a PMC $\pmc$; note that all components $S_J$ in the aforementioned examples
    are of the really complicated type.)
\end{description}
This analysis generalizes to our setting, using the new notions of treedepth structures.

We proceed to discussing the PMCs.
Then, the following case distinction is identified in~\cite{p6FreeMaxInd22}.
A potential maximal clique $\pmc$ in a graph $G$ is \emph{two-sided} if there are exactly two components $D_1,D_2$ of $G-\pmc$ with maximal neighborhoods in $\pmc$.  More precisely, for every connected component $D$ of $G-\pmc$, we have $N(D) \subseteq N(D_1)$
or $N(D) \subseteq N(D_2)$.

The following statement has been essentially proven in~\cite{p6FreeMaxInd22}.
However, it has been proven only with the \textsc{Max Weight Independent Set} problem in mind,
so we need to adjust the argumentation using the notion of  treedepth structures.
\begin{theorem}\label{thm:intro:not-two-sided}
For every positive integer $d$
there exists a polynomial-time algorithm that, given a $P_6$-free graph $G$
outputs a family $\carvers \subseteq 2^{V(G)}$ with the following guarantee:
for every maximal treedepth-$d$ structure $\T$ in $G$
and every potential maximal clique $\pmc$ of $G$ that is $\T$-avoiding and not two-sided,
there exists $C \in \carvers$ that is a container for $\pmc$, i.e., $\pmc \subseteq C$
and $C \cap V(\T) = \pmc \cap V(\T)$.
\end{theorem}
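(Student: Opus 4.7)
The plan is to lift the proof of the analogous result in~\cite{p6FreeMaxInd22}---stated there for non-two-sided $I$-free PMCs, with $I$ a maximal independent set---to the treedepth-$d$ setting, where $V(\T)$ plays the role of $I$. A useful first observation is that $\pmc \cap V(\T)$ is a $\T$-vertical sequence: since $\pmc$ forms a clique in a $\T$-aligned minimal chordal completion of $G$ and $\T$-alignment forbids edges between $\T$-incomparable vertices, any two vertices of $\pmc \cap V(\T)$ must be $\T$-comparable; and since $\pmc$ is $\T$-avoiding, none of them has depth $d$. Hence $|\pmc \cap V(\T)| \leq d-1$, and we can afford to first guess this vertical sequence by branching over all $\Oh(n^{d-1})$ candidates. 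After the guess, we focus on producing a container for $\pmc \setminus V(\T)$.

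The structural key is the analog of the property used throughout~\cite{p6FreeMaxInd22} that every vertex of an $I$-free PMC has a neighbor in $I$ outside the PMC. In our setting, the corresponding statement is that every $v \in \pmc \setminus V(\T)$ has a neighbor in $V(\T) \setminus \pmc$: otherwise all $G$-neighbors of $v$ inside $V(\T)$ would lie within the guessed vertical sequence $\pmc \cap V(\T)$, and $v$ could be appended to $\T$ as a new leaf below the deepest element of this sequence (or as a new root, if the sequence is empty) without exceeding depth $d$, contradicting the maximality of $\T$. This \emph{witness property} is exactly what the enumeration algorithm of~\cite{p6FreeMaxInd22} actually uses from the solution set.

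The remainder of the argument---classification of minimal separators in $P_6$-free graphs into simple, somewhat complicated, and really complicated types, followed by the enumeration of non-two-sided PMCs through a constant-size witnessing configuration---is purely combinatorial and concerns only the structure of separators and components in $P_6$-free graphs, not the specifics of the solution set. I would therefore reuse it essentially verbatim. The enumeration produces, for each candidate configuration, a container $C'$ for $\pmc \setminus V(\T)$; the final container is built as $C = (C' \setminus V(\T)) \cup (\pmc \cap V(\T))$, where the subtraction of stray $V(\T)$-vertices outside the guessed sequence is safe because such vertices are not in $\pmc$ anyway, so we still have $\pmc \subseteq C$ and $C \cap V(\T) = \pmc \cap V(\T)$.

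The main obstacle is to verify that no step of the argument of~\cite{p6FreeMaxInd22} implicitly relies on the independence of $I$ in a way that breaks when $V(\T) \setminus \pmc$ induces edges. My expectation is that virtually all uses are subsumed by the witness property alone. For the few places where two distinct witnesses might be required to be $G$-non-adjacent, one can replace them by vertices at the same depth in $\T$, which are $\T$-incomparable and therefore non-adjacent both in $G$ (because $\T$ is an elimination forest of $G[V(\T)]$) and in the $\T$-aligned chordal completion (by $\T$-alignment), at the cost of at most constantly many additional guesses absorbed into the branching above.
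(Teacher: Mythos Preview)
Your high-level plan---lift the argument of~\cite{p6FreeMaxInd22} from maximal independent sets to maximal treedepth-$d$ structures, using the witness property that every $v\in\pmc\setminus V(\T)$ has a neighbor in $V(\T)\setminus\pmc$---is exactly what the paper does, and your derivation of that witness property is correct. However, the proposal has a concrete gap in two related places.

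First, your post-processing step $C=(C'\setminus V(\T))\cup(\pmc\cap V(\T))$ is not computable: the algorithm must output a single family $\carvers$ that works for \emph{every} maximal treedepth-$d$ structure $\T$, so it does not know $V(\T)$. You guessed the vertical sequence $\pmc\cap V(\T)$, but the stray vertices you want to subtract lie in $V(\T)\setminus\pmc$, which can be arbitrarily large and is completely unknown at enumeration time. So as written, the construction is ill-defined.

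Second, and this is what drives the paper's actual fix, your assumption that the remainder of the argument ``is purely combinatorial and concerns only the structure of separators and components'' is not quite right. In~\cite{p6FreeMaxInd22}, whenever a mesh component $D$ is analysed, the proof uses that $I\cap D$ lies in a \emph{single} maximal strong module of $D$ (distinct modules of a mesh graph are pairwise complete, so an independent set meets at most one). For a treedepth-$d$ structure this fails: $V(\T)\cap D$ can meet up to $d$ modules. Your proposed fallback---pick witnesses at the same depth to force non-adjacency---does not address this, because the issue is not non-adjacency of two witnesses but the multi-module spread of $V(\T)$ inside $D$. The paper's remedy is to guess, in each relevant mesh component, a \emph{footprint}: one vertex per maximal strong module that meets $\T$, at most $d$ vertices in total. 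Crucially, in the hardest case the enumeration of~\cite{p6FreeMaxInd22} returns not a container for $\pmc$ but the set $\pmc\cup D_1\cup D_2$ together with fuzzy versions $D_1^+,D_2^+$ of the two mesh components; the paper then guesses the at most $d$ maximal strong modules of each $D_i^+$ that meet $\T$ and removes those modules. This simultaneously (a) keeps $\pmc$ inside the resulting set, since the removed modules are disjoint from $\pmc$, and (b) strips all of $V(\T)\cap(D_1\cup D_2)$, yielding $C\cap V(\T)=\pmc\cap V(\T)$. This module-level guessing is the missing piece that replaces your unimplementable subtraction of $V(\T)$.
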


It remains to study two-sided PMCs, which were the main cause of technical hurdles in~\cite{p6FreeMaxInd22}.
Here we depart from the approach of~\cite{p6FreeMaxInd22} and use the power of carvers instead.

To use carvers, we would like to choose not only a minimal chordal completion $G+F$ (which, following the developments in the first
part of our work, would be any $\T$-aligned minimal chordal completion, where $\T$ is the sought solution) but
also a clique tree $(T,\beta)$ of $G+F$.
Recall that adhesions in $(T,\beta)$ correspond to minimal separators in $G$, and the really complicated minimal separators
are the ones with one mesh and one non-mesh full component, in which case it is difficult to isolate the non-mesh component.
So, we would like the clique tree $(T,\beta)$ to be imbalanced in the following way:
if $st \in E(T)$ is such that $\beta(s) \cap \beta(t)$ is a really complicated minimal separator with
the non-mesh full component $A_s$ containing $\beta(s) \setminus \beta(t)$ and
the mesh full component $A_t$ containing $\beta(t) \setminus \beta(s)$, then as much as possible of the decomposition $(T,\beta)$
should be reattached to the component of $T-\{st\}$ that contains $s$.

More precisely, for a clique tree $(T,\beta)$ of $G+F$, for every edge $st$ as above, orient $st$ from $t$ to $s$
(and keep all edges of $T$ that do not correspond to really complicated minimal separators undirected).
Consider now an edge $st$ as above and assume that there exists $s' \in N_T(t)$, $s \neq s'$ such that
\begin{equation}\label{eq:sst}
\beta(s') \cap \beta(t) \subseteq \beta(s) \cap \beta(t).
\end{equation}
Then, the minimal separator $\beta(s') \cap \beta(t)$ is a simple one (it is contained in another minimal separator
$\beta(s) \cap \beta(t)$), so the edge $s't$ is undirected. Observe that the assumption~\eqref{eq:sst} allows the following
modification of $(T,\beta)$: replace the edge $s't$ with an edge $s's$.
This modification corresponds to the intuition that while studying the really complicated minimal separator
$\beta(s) \cap \beta(t)$, it is difficult to separate the component $A_s$ from the full component of $G-(\beta(s') \cap \beta(t))$
that contains $\beta(s') \setminus \beta(t)$, and thus --- from the point of view of the PMC $\beta(t)$ --- both these components
should be contained in bags of the same component of $T-\{t\}$.

A simple potential argument shows that such modifications cannot loop indefinitely and there exists a clique tree $(T,\beta)$
where no modification is possible. This is the clique tree for which we are finally able to construct carvers using
the aforementioned analysis of minimal separators, in particular semi-carvers for the really difficult minimal separators.
The actual construction is far from straightforward, but arguably simpler than the corresponding argumentation of~\cite{p6FreeMaxInd22}
that handles two-sided PMCs.

\subsection{Organization}
After the preliminaries (Section~\ref{sec:prelims}), we
introduce the notion of carvers and carver families and provide the main algorithmic engine in Section~\ref{sec:dynProg}.
The remaining sections are devoted to $P_6$-free graphs and the proof of Theorem~\ref{thm:main-algo}.
Sections~\ref{sec:minSepCarving} and~\ref{sec:impMixedSeps} study approximate guessing of minimal separators.
Section~\ref{sec:not2Sided} recalls the main (and most elegant) structural results of $P_6$-free graphs of~\cite{p6FreeMaxInd22},
essentially extracting from~\cite{p6FreeMaxInd22} a family of containers for all PMCs that in some sense have ``more than two sides.''
Section~\ref{sec:2SidedAlignedPMCs} uses the results for minimal separators of Sections~\ref{sec:minSepCarving} and~\ref{sec:impMixedSeps}
to provide carvers for the remaining PMCs; this is the place where we crucially rely on the fact that we want to provide only carvers,
not containers. Finally, Section~\ref{sec:wrapUp} wraps up the proof of Theorem~\ref{thm:main-algo}, and Section~\ref{sec:conclusions} gives a concluding remark about $P_7$-free graphs.

\section{Preliminaries}\label{sec:prelims}
\label{sec:treeDepthDecomp}
We use standard graph-theoretic notation, and all graphs are simple, loopless, and finite. We consider the edge-set of a graph $G$ to be a subset of $\binom{V(G)}{2}$, which is the set of all $2$-element subsets of $V(G)$. We write $uv$ for an element $\{u,v\}$ of $\binom{V(G)}{2}$. A \textit{non-edge} of $G$ is then a pair of vertices $uv$ which is not in $E(G)$. Given a set $F \subseteq \binom{V(G)}{2}$, we write $G+F$ for the graph with vertex set $V(G)$ and edge set $E(G) \cup F$; so $G+F$ is obtained from $G$ by adding all pairs from $F$ as edges if they were not already present.

Given a graph $G$ and a set of vertices $S \subseteq V(G)$, we write $N(S)$ and $N[S]$, respectively, for the open and closed neighborhood of $S$ in $G$. That is, $N(S) \coloneqq \{u \in V(G)-S: uv \in E(G)$ for some $v \in S\}$ and $N[S] \coloneqq S \cup N(S)$. We do not distinguish between induced subgraphs and their vertex sets, except when it might cause confusion. So we typically use $S$ and $G[S]$ interchangeably. Finally, if $v_0, v_1, \ldots, v_k$ are distinct vertices of $G$, then we write $N(v_0, v_1, \ldots, v_k)$ for $N(\{v_0, v_1, \ldots, v_k\})$ and $N[v_0, v_1, \ldots, v_k]$ for $N[\{v_0, v_1, \ldots, v_k\}]$.

We use the following notation to talk about paths. If $X_1, X_2, \ldots, X_k \subseteq V(G)$, then a $P_k$ \textit{of the form $X_1 X_2 \ldots X_k$} is an induced copy of $P_k$ in $G$ so that the first vertex is in $X_1$, the second vertex is in $X_2$, and so on. If $X_i = \{v\}$ for some vertex $v$, then we may put
$v$ instead of $X_i$ in the sequence denoting the form. For instance, given a vertex $v$ and a set $A \subseteq V(G)$, a $P_4$ of the form $vAAA$ is one that
starts at a vertex $v$ and has the rest of its vertices in $A$.

We say that two disjoint sets $X,Y \subseteq V(G)$ are \textit{complete} if every vertex in $X$ is adjacent to every vertex in $Y$. If $X = \{v\}$ for some vertex $v$, then we say that $v$ and $Y$ are \textit{complete}. Similarly, we say that two disjoint sets, or a vertex and a set not containing that vertex, are \textit{anticomplete} if they are complete in the complement of $G$. The complement of $G$ is denoted by $\overline{G}$.

The following observation is straightforward and will be often used implicitly.

\begin{observation}
Let $G$ be a graph, $X$ be a connected subset of $V(G)$, and $v \in V(G)\setminus X$ be neither complete nor anticomplete to $X$.
Then there exists a $P_3$ of the form $vXX$.
\end{observation}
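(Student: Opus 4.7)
The plan is to exploit the fact that $v$ has both a neighbor and a non-neighbor in $X$, together with the connectivity of $X$, to locate two adjacent vertices of $X$ with opposite adjacency status to $v$. These three vertices will then form the desired induced $P_3$.

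More concretely, first I would pick vertices $a, b \in X$ such that $va \in E(G)$ (which exists because $v$ is not anticomplete to $X$) and $vb \notin E(G)$ (which exists because $v$ is not complete to $X$). Since $X$ is connected, there is a path $a = x_0, x_1, \ldots, x_\ell = b$ in $G[X]$ from $a$ to $b$. Walking along this path starting from $a$, the function ``is $x_i$ adjacent to $v$?'' starts true (at $i=0$) and ends false (at $i=\ell$), so there must be an index $i$ with $vx_i \in E(G)$ and $vx_{i+1} \notin E(G)$. I would then verify that $v, x_i, x_{i+1}$ is the sought $P_3$: the three vertices are pairwise distinct (since $v \notin X$ and $x_i \neq x_{i+1}$), the edges $vx_i$ and $x_ix_{i+1}$ are present, and $vx_{i+1}$ is a non-edge, so $G[\{v, x_i, x_{i+1}\}]$ is indeed an induced $P_3$ whose last two vertices lie in $X$, matching the required form $vXX$.

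There is essentially no obstacle here; the entire argument is a one-line discrete intermediate value principle along a path in $G[X]$. The only small care needed is the observation that $v \notin X$ guarantees $v$ is distinct from $x_i$ and $x_{i+1}$, so that the three vertices really form a $P_3$ rather than collapsing.
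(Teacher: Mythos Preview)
Your argument is correct and is exactly the standard proof of this folklore fact. The paper does not actually supply a proof here: the statement is presented as a ``straightforward'' observation to be used implicitly, so there is nothing to compare against beyond noting that your write-up fills in the omitted details in the expected way.
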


\subsection{Logic}\label{sec:logic} In this paper we use the logic \cmsotwo, which stands for monadic second-order logic with quantification over edge subsets and modular counting predicates, as a language for expressing graph problems. In this logic we have variables of four sorts: for single vertices, for single edges, for vertex subsets, and for edge subsets. The latter two types are called {\em{monadic variables}}. Atomic formulas of \cmsotwo are as follows:
\begin{itemize}
    \item equality $x=y$ for any two variables $x,y$ of the same sort;
    \item membership $x\in X$, where $X$ is a monadic variable and $x$ is a single vertex/edge variable;
    \item modular counting predicates of the form $|X|\equiv a\bmod m$, where $X$ is a monadic variable and $a,m$ are integers, $m\neq 0$; and
    \item incidence $\mathsf{inc}(x,f)$, checking whether vertex $x$ is incident to edge $f$.
\end{itemize}
Then \cmsotwo consists of all formulas that can be obtained from the atomic formulas by means of standard boolean connectives, negation, and universal and existential quantification (over all sorts of variables). This gives the syntax of \cmsotwo, and the semantics is obvious.

Note that a formula may have {\em{free variables}}, which are variables not bound by any quantifier. A formula without free variables is called a {\em{sentence}}.

It will be sometimes useful to consider graphs with some vertex annotations.
Formally, for a finite set $\Sigma^0$, a \emph{$\Sigma^0$-annotated graph $G$} is a graph $G$
together with a function $\lambda^0 : V(G) \to \Sigma^0$. A \cmsotwo{} formula $\phi$
over $\Sigma^0$-annotated graphs has additionally access to atomic formulas $\lambda^0(x) = \sigma$ for a single
vertex variable $x$ and $\sigma \in \Sigma^0$. The $(\tw \leq k, \phi)$-\textsc{MWIS} problem
naturally generalizes to $\Sigma^0$-annotated graphs $G$.

Logic \cmsotwo is usually associated with tree-like graphs through the following fundamental result of Courcelle~\cite{Courcelle90}: given an $n$-vertex graph $G$ of treewidth at most $k$ and a sentence $\phi$ of \cmsotwo, one can determine whether $\phi$ holds in $G$ in time $f(k,\phi)\cdot n$, for a computable function $f$. The proof of this result brings the notion of {\em{tree automata}} to the setting of tree-like graphs, which is a connection that will be also exploited in this work. For an introduction to this area, see the monograph of Courcelle and Engelfriet~\cite{CourcelleE12}.

\subsection{Treewidth and treedepth}
We now introduce treedepth because it turns out to be a more natural width parameter than treewidth in the context of $P_t$-free graphs. It is convenient to begin with some definitions on forests.

A \textit{rooted forest} is a forest $\T$ where each component has exactly one specified vertex called its \textit{root}. The \textit{depth} of a vertex $v \in V(\T)$ is the number of vertices in the unique path from $v$ to a root (so roots have depth~$1$). The \textit{height} of $\T$ is the maximum depth of any of its vertices. A path in $\T$ is \textit{vertical} if one of its ends is an ancestor of the other. (We consider each vertex to be both an ancestor and a descendent of itself.) Two vertices are \textit{$\T$-comparable} if they are connected by a vertical path; otherwise they are \textit{$\T$-incomparable}.

An \textit{elimination forest} of a graph $G$ is a rooted forest $\T$ such that $V(\T) = V(G)$ and the endpoints of each edge of $G$ are $\T$-comparable. The \textit{treedepth} of $G$ is then the smallest integer $d$ such that $G$ has an elimination forest of height $d$. Finally, we define the problem $(\td \leq d, \phi)$-\textsc{MWIS} analogously to $(\tw \leq k, \phi)$-\textsc{MWIS}, where the only difference
is that $G[\Sol]$ is required to have treedepth at most $d$ (instead of treewidth at most $k$).

It is well known that for every graph, its degeneracy is upper-bounded by treedepth, which in turn is upper-bounded by treewidth.
On the other hand, there are graphs of bounded degeneracy and unbounded treewidth (e.g., grids) and graphs with bounded treewidth and unbounded pathwidth (e.g., paths). However, in the context of $P_t$-free graphs, all these parameters are functionally equivalent due to the following theorem.

\begin{theorem}\label{thm:deg2td}
For any integers $t$ and $\ell$, there exists an integer $d$ such that if $G$ is a $P_t$-free graph with degeneracy at most $\ell$, then the treedepth of $G$ is at most~$d$.
\end{theorem}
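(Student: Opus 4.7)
My plan is to combine three ingredients. The first is the result of Gartland, Lokshtanov, Pilipczuk, Pilipczuk, and Rz\k{a}\.zewski~\cite{GartlandLPPR21} already mentioned in the introduction: in $P_t$-free graphs, treewidth and degeneracy are functionally equivalent. Consequently, there exists a function $w=w(t,\ell)$ such that $\tw(G)\le w$, and in particular $\tw(C)\le w$ for every connected component $C$ of $G$. The second ingredient is an elementary BFS fact: every connected $P_t$-free graph has diameter at most $t-2$, because the shortest path between any two vertices is necessarily induced (a chord would yield a strictly shorter path), so it contains at most $t-1$ vertices.

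The third ingredient is a ``bounded treewidth plus bounded diameter implies bounded treedepth'' lemma, which I would state in the form: for every connected graph $H$, $\td(H)\le (\tw(H)+1)(\mathrm{diam}(H)+1)$. Combining the three, and using that the treedepth of a graph equals the maximum treedepth over its connected components, yields $\td(G)\le (w(t,\ell)+1)(t-1)$, a quantity depending only on $t$ and $\ell$, as required. Setting $d:=(w(t,\ell)+1)(t-1)$ completes the proof.

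The main obstacle is the lemma in the third step. I would prove it by induction on the diameter $D$ of $H$. The base case $D=0$ (single vertex) is trivial. For the inductive step, I pick a vertex $r$ of minimum eccentricity in $H$, take a bag $B$ of a width-$w$ tree decomposition of $H$ containing $r$ (so $|B|\le w+1$), and eliminate $B$, placing its vertices at the top of the elimination forest and recursing on each resulting component. The subtle point is that removing $B$ may actually \emph{increase} the diameter of a resulting component (for example, deleting a few consecutive vertices of a long cycle yields a longer-diameter path), so a naive induction on diameter fails. To sidestep this, I would track the more robust parameter ``maximum distance in $H$ from $B$ to a surviving vertex'', which strictly decreases at each recursive step; with this potential function in place, the induction closes and delivers the claimed bound. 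I expect that working out this potential-function argument cleanly, so that the factor of $w+1$ per ``layer'' is actually what is paid, is the main piece of technical work.
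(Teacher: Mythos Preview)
Your first step (bounding treewidth via~\cite{GartlandLPPR21}) matches the paper, but your third ingredient is simply false. The lemma ``for every connected graph $H$, $\td(H)\le (\tw(H)+1)(\mathrm{diam}(H)+1)$'' fails already for the fan graph: take $P_n$ and add a universal vertex $u$. This graph has treewidth $2$ (bags $\{u,v_i,v_{i+1}\}$), diameter $2$ (any two vertices are joined through $u$), yet treedepth $\Theta(\log n)$, since it contains $P_n$ as a subgraph and treedepth is subgraph-monotone. So no bound depending only on treewidth and diameter can exist.

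Your proposed fix --- tracking ``maximum distance in $H$ from the removed bag $B$ to a surviving vertex'' --- does not strictly decrease in this example either. Initially, with $B_0=\{u,v_1,v_2\}$, every vertex is at distance at most $1$ from $u\in B_0$, so the potential is $1$. After deleting $B_0$ you recurse on the path $v_3,\ldots,v_n$; any bag $B_1=\{v_j,v_{j+1}\}$ there has vertices of the component at distance $2$ from $B_1$ in $H$ (via $u$), so the potential goes up to $2$. The underlying problem is that bounded diameter is not hereditary, whereas $P_t$-freeness is; by trading $P_t$-freeness for a diameter bound in step~2 you throw away exactly the information needed to make the recursion close.

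The paper does not attempt this route. After bounding treewidth via~\cite{GartlandLPPR21}, it invokes a result of Bonamy et al.\ that uses $P_t$-freeness directly (not merely bounded diameter): every $P_t$-free graph $G$ satisfies $\td(G)\le(\tw(G)+1)^{t-1}$. Note the exponential dependence on $t$, in contrast to your linear-in-$t$ target; the fan graph shows that replacing ``$P_t$-free'' by ``diameter $\le t-2$'' cannot yield any bound at all, let alone a linear one.
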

Theorem~\ref{thm:deg2td} has been discussed in~\cite{GartlandLPPR21}, but let us recall the reasoning.
The first step is the following result of~\cite{GartlandLPPR21}.
(A graph is \textit{$C_{>t}$-free} if it does not contain a cycle longer than $t$ as an induced subgraph; note that the class of $C_{>t}$-free graphs is a proper superclass of the class of $P_t$-free graphs.)
\begin{theorem}[\cite{GartlandLPPR21}]\label{thm:deg2tw}
For every pair of integers $\ell$ and $t$, there exists an integer $k \in (\ell t)^{\Oh(t)}$ such that
every $C_{>t}$-free graph of degeneracy at most $\ell$ has treewidth at most $k$.
\end{theorem}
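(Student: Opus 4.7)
The plan is to reduce the claimed treewidth bound to a balanced-separator bound and then invoke the standard fact that if every induced subgraph of $G$ admits a balanced separator of size at most $s$, then the treewidth of $G$ is $O(s)$. Since the class of $C_{>t}$-free graphs of degeneracy at most $\ell$ is hereditary, it suffices to establish the following separator lemma: every such graph admits, with respect to any vertex weighting, a balanced separator of size $(\ell t)^{O(t)}$. The final treewidth bound $k \in (\ell t)^{O(t)}$ then follows by tracing constants through the separator-to-treewidth reduction.

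To construct a balanced separator in $G$, I would run a BFS from an arbitrary vertex $v$ to obtain layers $L_0, L_1, \ldots, L_h$, and pick an index $i^\ast$ such that $L_{i^\ast}$ separates the graph into components each of weight at most a constant fraction of the total. If $|L_{i^\ast}|$ happens to be at most $(\ell t)^{O(t)}$, we are done. Otherwise the structural combinatorics of BFS together with $C_{>t}$-freeness must be used to derive a contradiction or shrink the separator. The key local tool is that for any edge $uw$ of $G$ lying on a cycle, a shortest $u$--$w$ path in $G-uw$ together with $uw$ is an induced cycle; consequently any ``long detour'' between nearby vertices $u$ and $w$ in the BFS layering immediately produces a hole of comparable length, which must be of length at most $t$. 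Combined with degeneracy $\ell$ (which controls the number of edges between consecutive BFS layers), this suggests that large layers force many nearly-disjoint short detours to coexist, and a pigeonhole/pathfinding argument on these detours should extract either a long induced cycle or a dense subconfiguration that violates the degeneracy bound.

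The main obstacle I anticipate is controlling the \emph{inducedness} of the cycles produced by iterating the detour construction: individual detours give short holes cheaply, but concatenating them to produce holes whose length grows with $|L_{i^\ast}|$ demands a careful choice of detour endpoints so that chords across different detours do not ``short-circuit'' the cycle. One way to handle this would be to process the BFS from the deepest layer upward, maintaining a greedy set of disjoint detours, and at each step using an elimination order witnessing degeneracy $\ell$ to keep the collection chord-free. The bound of the form $(\ell t)^{O(t)}$ is consistent with a recursion of depth $O(t)$, where each level branches by at most $\ell \cdot t$ in the number of detours that can coexist, which is the shape I would aim for.

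As a backup plan in case a single BFS layer cannot be directly bounded, I would instead take the separator to be the union of a bounded (in $t$) number of consecutive BFS layers together with $O(\ell)$ extra vertices coming from the degeneracy ordering, and argue via the same detour/hole-extraction scheme that this thicker separator still has size $(\ell t)^{O(t)}$ while separating $G$ balancedly. Either route reduces the theorem to the separator lemma, which is where the actual graph-theoretic work lies.
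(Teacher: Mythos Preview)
The paper does not prove this theorem; it is quoted from~\cite{GartlandLPPR21} and used as a black box, so there is no in-paper proof to compare your attempt against.

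Independently of that, your plan has a concrete structural gap. Take $G=K_{\ell,n}$ for $n$ large: it has degeneracy $\ell$, is $C_{>4}$-free (every induced cycle has length exactly~$4$), and has treewidth $\ell$. A BFS from any vertex yields three layers of sizes $1,n,\ell-1$ (or $1,\ell,n-1$, depending on the root), and the only layer whose removal leaves balanced pieces under uniform weights is the one of size $\Theta(n)$. There is no induced cycle of length $>4$ to extract, and no subconfiguration violating degeneracy $\ell$, so neither branch of your ``otherwise'' dichotomy fires. The small balanced separator that does exist --- one side of the bipartition --- is not a BFS layer, nor any union of \emph{consecutive} BFS layers, so your backup plan hits the same wall. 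More generally, BFS layers in $C_{>t}$-free graphs of bounded degeneracy can be arbitrarily large; your detour argument only bounds the \emph{length} of holes passing through a layer, not the \emph{size} of the layer, so the implication ``large layer $\Rightarrow$ long hole or degeneracy violation'' simply fails.

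If you want to recover a separator-based proof, you need a mechanism that can output separators which are not distance classes --- for instance, iterated neighborhoods of carefully chosen low-degree vertices, or an argument that first passes through a large-treewidth obstruction (wall/bramble) and extracts a long induced path from it. The shape $(\ell t)^{O(t)}$ is consistent with the latter kind of recursion, not with a BFS-layer count.
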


Treewidth and treedepth are functionally equivalent on $P_t$-free graphs by the following result of~\cite{BONAMY2022353}.

\begin{theorem}[{\cite[Lemma 29]{BONAMY2022353}}]
For any integer $t$, if $G$ is a $P_t$-free graph, then
\[ \mathrm{treedepth}(G) \leq (\mathrm{treewidth}(G) + 1)^{t-1}. \]
\end{theorem}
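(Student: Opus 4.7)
The plan is to prove the bound by induction on $t$, exploiting the key structural consequence of $P_t$-freeness: any connected $P_t$-free graph has diameter at most $t-2$. This holds because a shortest path of length $d$ in any graph is automatically an induced $P_{d+1}$, so if the diameter were at least $t-1$ the graph would contain an induced $P_t$. The base cases $t \le 2$ are immediate since $P_2$-free graphs are edgeless and so have treedepth at most $1$.

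For the inductive step, treedepth is the maximum over connected components, so we may assume that $G$ is connected with $\mathrm{diam}(G) \le t-2$. The crux is to establish the following more general claim: if $G$ is connected with $\tw(G) \le k$ and $\mathrm{diam}(G) \le d$, then $\td(G) \le (k+1)^{d+1}$. Applying this with $d=t-2$ gives $\td(G) \le (\tw(G)+1)^{t-1}$ as desired.

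I would prove this intermediate claim by a secondary induction on $d$. The base $d=0$ is a single vertex, with $\td = 1 \le k+1$. For the inductive step, the goal is to find a set $S \subseteq V(G)$ with $|S| \le k+1$ such that each connected component $C$ of $G - S$ satisfies $\mathrm{diam}(G[C]) \le d-1$. Once such $S$ is in hand, placing its vertices at the top of an elimination forest and applying induction to each component yields
\[ \td(G) \le (k+1) + (k+1)^{d} \le (k+1)^{d+1}, \]
where the last inequality holds for $k \ge 1$ by a direct check, and the case $k=0$ is trivial since $G$ is then edgeless.

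The main obstacle will be the existence of such a diameter-reducing set $S$. A straightforward tree-decomposition centroid balances vertex counts but need not shrink the diameter of each piece. My approach would be to fix a BFS center $v$ of $G$ (which exists because $\mathrm{diam}(G) \le d$), consider the BFS layers $L_0, L_1, \ldots, L_h$ with $h \le d$, and select a bag of a tree decomposition of $G$ of width $k$ that intersects every long ``BFS-vertical'' path from $v$ in each component. A pigeonhole-over-layers argument, combined with the way adhesions in a tree decomposition behave in small-diameter graphs, should produce a single bag $\beta(r)$ whose removal truncates the BFS layering by at least one layer in every resulting component, thereby lowering the diameter. Formalising this step is the true technical content of the argument in \cite{BONAMY2022353}, while the outer induction on $d$ (and hence on $t$) is then routine.
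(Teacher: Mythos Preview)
The paper does not prove this theorem; it is quoted verbatim from \cite{BONAMY2022353}. So there is no in-paper argument to compare against, and the question is simply whether your outline is sound. It is not: the ``more general claim'' you reduce to is false, so the whole induction collapses before you ever reach the step you defer to the reference.

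Your intermediate claim asserts that every connected graph with $\tw(G)\le k$ and $\mathrm{diam}(G)\le d$ satisfies $\td(G)\le (k+1)^{d+1}$. Take the fan graph $F_n$ consisting of a path $p_1p_2\cdots p_n$ together with one universal vertex $v$. A path decomposition with bags $\{v,p_i,p_{i+1}\}$ shows $\tw(F_n)=2$, and $\mathrm{diam}(F_n)=2$ since $v$ is universal. Your claim would give $\td(F_n)\le 3^3=27$. But $P_n$ is a subgraph of $F_n$, so $\td(F_n)\ge \td(P_n)=\lceil \log_2(n+1)\rceil$, which exceeds $27$ as soon as $n>2^{27}$. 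Hence the claim fails, and so does the ``key step'' meant to drive the induction: there is no set $S$ of at most $k+1$ vertices whose removal lowers the diameter of every component to $d-1$, because deleting the apex $v$ (which any such $S$ is forced to contain, as you can check directly) destroys all the shortcuts and leaves a path whose diameter is \emph{larger} than $d$.

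The conceptual error is treating diameter as a monotone quantity under vertex deletion. It is not, and this is exactly why $P_t$-freeness cannot be used merely once at the outset to bound the initial diameter. Any correct recursion has to carry the hereditary hypothesis ``no induced $P_t$'' through every step and decrease a parameter that genuinely drops (for instance $t$ itself, or the height of a suitable auxiliary tree), rather than the diameter.
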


Since the property of having treewidth at most $k$ and the property of having treedepth at most $d$
can be expressed in \cmsotwo{} (for more on expressibility of \cmsotwo{}, see \cite[Section 7.4]{DBLP:books/sp/CyganFKLMPPS15}), we obtain that the $(\tw \leq k,\phi)$-\textsc{MWIS} and $(\td \leq d,\phi)$-\textsc{MWIS}
formalisms describe the same class of problems in $P_t$-free graphs for any fixed $t$; every $(\tw \leq k,\phi)$-\textsc{MWIS} problem
has an equivalent definition as a $(\td \leq d,\phi')$-\textsc{MWIS} for some $d$ and $\phi'$ depending on $k$ and $\phi$,
and vice-versa. Hence, in this paper we can focus on solving problems formulated in the $(\td \leq d,\phi)$-\textsc{MWIS}
formalism.

\subsection{Chordal completions and PMCs}

Recall that our overall approach is based on potential maximal cliques. We introduce this approach now.

Given a graph $G$, a set $\Omega \subseteq V(G)$ is a \textit{potential maximal clique} (or a \textit{PMC}) if there exists a minimal chordal completion of $G$ in which $\Omega$ is a maximal clique. A \textit{chordal completion} of $G$ is a supergraph of $G$ which is chordal and has the same vertex-set as $G$; it is \textit{minimal} if it has no proper subgraph which is also a chordal completion of $G$. (Recall that a graph is \textit{chordal} if it has no holes, where a \textit{hole} is an induced cycle of length at least~$4$.) Since chordal completions are obtained by adding edges to $G$, it is convenient to write them as $G+F$, where $F$ is a set of non-edges of $G$.

The following classic result characterizes PMCs.

\begin{proposition}[\citeReference{Theorem~3.15}{BouchitteT01}]
\label{prop:PMCsChar}
Given a graph $G$, a set $\Omega \subseteq V(G)$ is a PMC if and only if both of the following conditions hold. \begin{enumerate}
    \item For each component $D$ of $G-\Omega$, $N(D)$ is a proper subset of $\Omega$.
    \item If $uv$ is a non-edge of $G$ with $u,v \in \Omega$, then there exists a component $D$ of $G - \Omega$ such that $u,v \in N(D)$.
\end{enumerate}
\end{proposition}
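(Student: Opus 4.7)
The plan is to prove the two directions separately, relying on the clique-tree structure of a minimal chordal completion for the forward direction and on a recursive construction for the backward direction. Throughout I use the standard characterization of minimal chordal completions: $G+F$ is a minimal chordal completion of $G$ if and only if every $uv \in F$ lies on some induced cycle of length at least four in $G+F$ whose unique chord is $uv$; equivalently, deleting any single $uv \in F$ creates a hole.

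For the $(\Rightarrow)$ direction, fix a minimal chordal completion $G+F$ in which $\Omega$ is a maximal clique, and let $(T,\beta)$ be a clique tree of $G+F$, with $t_\Omega \in V(T)$ the node whose bag equals $\Omega$. Components of $G-\Omega$ are then ``clustered'' by the subtrees of $T-t_\Omega$: each component $D$ is contained in $\bigcup_{s \in T'} \beta(s) \setminus \Omega$ for some component $T'$ of $T - t_\Omega$, and $N_{G}(D) \subseteq \beta(t_\Omega) \cap \beta(s^\star) \subsetneq \Omega$ where $s^\star$ is the neighbor of $t_\Omega$ in $T'$; the strict inclusion follows because if $N_G(D) = \Omega$ held, saturating some vertex of $D$ together with $\Omega$ would already be a clique in $G+F$, contradicting maximality of $\Omega$ (or revealing a redundant fill edge, contradicting minimality). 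This yields condition (1). For condition (2), let $uv$ be a non-edge of $G$ with $u,v \in \Omega$; then $uv \in F$, so by the characterization above there is an induced cycle in $G+F$ through $uv$ whose other vertices are nonadjacent to at least one of $u,v$ in $G+F$. Walking this cycle inside $G$ and truncating to a shortest $u$--$v$ path in $G$ with internal vertices outside $\Omega$ produces a single component $D$ of $G-\Omega$ that contains all these internal vertices and satisfies $u,v \in N_G(D)$.

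For the $(\Leftarrow)$ direction, assume (1) and (2) and construct $F$ explicitly. First, add to $G$ all pairs from $\binom{\Omega}{2}$, saturating $\Omega$ into a clique. Next, for each component $D$ of $G-\Omega$ set $S_D := N_G(D)$; by (1), $S_D \subsetneq \Omega$, and $S_D$ is already a clique in the current graph. Recursively take a minimal chordal completion of $G[D \cup S_D]$ with $S_D$ pre-saturated, and unify all these completions. Call the resulting graph $H$. Chordality of $H$ follows from gluing: every hole in $H$ would have to cross some $S_D$, which is a clique, and hence lives entirely inside one subproblem, where chordality was ensured by induction. Maximality of $\Omega$ as a clique of $H$ holds because every vertex $x \notin \Omega$ lies in some component $D$ and in $H$ has neighbors only in $D \cup S_D$, so it misses every vertex of the nonempty set $\Omega \setminus S_D$.

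The main obstacle, and the place where condition (2) is used essentially, is verifying minimality. Fill edges inside each recursive subproblem are minimal by the inductive hypothesis. The remaining fill edges are non-edges $uv$ of $G$ with $u,v \in \Omega$. For such a $uv$, condition (2) supplies a component $D$ with $u,v \in N_G(D)$; choose a shortest $u$--$v$ path $P$ in $G[D \cup \{u,v\}]$. Its internal vertices lie in $D$, and by construction $H$ adds no edges between $\Omega \setminus S_D$ and $D$, so the only possible $H$-chords of the cycle $P + uv$ are incident to $u$ or $v$ via $S_D \cap V(P)$; choosing $P$ minimally (shortest among such paths) eliminates these, giving an induced cycle of length at least four in $H - uv$ with $uv$ as its unique chord. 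Hence $uv$ cannot be removed without destroying chordality, so $H$ is a minimal chordal completion, and $\Omega$ is one of its maximal cliques. This completes the proof of both directions.
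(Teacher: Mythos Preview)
The paper does not include a proof of this proposition; it is simply quoted as Theorem~3.15 of~\cite{BouchitteT01}. I therefore evaluate your argument on its own.

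Your forward direction for condition~\textit{(i)} is essentially fine, though the justification for strictness is off: the clean reason $N_G(D)\subsetneq\Omega$ is that $N_G(D)\subseteq\beta(t_\Omega)\cap\beta(s^\star)$ and distinct maximal cliques cannot contain one another, so this adhesion is a proper subset of $\Omega$. Your forward direction for condition~\textit{(ii)} has a real gap. The hole you produce lives in $(G+F)-uv$, not in $G$; its edges may be fill edges and your description of it (``other vertices nonadjacent to at least one of $u,v$'') is already incorrect, since in the $4$-hole $u\text{-}x\text{-}v\text{-}y\text{-}u$ both $x$ and $y$ are adjacent to \emph{both} $u$ and $v$. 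Consequently ``walking this cycle inside $G$'' is not a valid step. What is actually needed is that the component $D$ of $G-\Omega$ containing a vertex of the hole satisfies $u,v\in N_G(D)$, not merely $u,v\in N_{G+F}(D)$; this requires an extra argument (for instance that $N_G(D)=N_{G+F}(D)$ for every component $D$ of $G-\Omega$), which you do not supply.

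In the backward direction, your construction of $H$ is in fact correct, but your proof that a fill edge $uv\subseteq\Omega$ is indispensable is wrong. You take a shortest $u$--$v$ path $P$ in $G[D\cup\{u,v\}]$ and assert that the only possible $H$-chords of $P+uv$ are incident to $u$ or $v$. This is false: $F_D$ may add chords between internal vertices of $P$, or may make some $z\in D$ adjacent to both $u$ and $v$, collapsing the intended hole. Concretely, take $\Omega=\{u,v,w\}$ independent in $G$, let $D$ be a $4$-cycle $a\text{-}b\text{-}c\text{-}d\text{-}a$ with $ua,vc\in E(G)$ the only edges from $D$ to $\Omega$; any minimal $F_D$ adds a diagonal of the $4$-cycle together with an edge such as $av$, so every shortest $u$--$v$ path in $G$ through $D$ acquires an $H$-chord. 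A correct argument: in the chordal graph $H[D\cup S_D]$ the clique vertices $u,v$ must share a common neighbour $z\in D$ (otherwise a shortest path in $H[D]$ from $N_H(u)\cap D$ to $N_H(v)\cap D$, extended by $u$ and $v$ and closed by $uv$, is a hole in $H[D\cup S_D]$); then any $w\in\Omega\setminus S_D$, which exists by condition~\textit{(i)}, yields the $4$-hole $u\text{-}z\text{-}v\text{-}w\text{-}u$ in $H-uv$.
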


Chordal completions in a certain sense correspond to tree decompositions and it is often more convenient to work with the latter. So recall that a \emph{tree decomposition} of a graph $G$ is a pair $(T, \beta)$ such that $T$ is a tree, $\beta$ is a function from $V(T)$ to $2^{V(G)}$, and the following conditions are satisfied:\begin{enumerate}
    \item for each $u \in V (G)$, the set $\{t \in V(T): u \in \beta(t)\}$ induces a non-empty and connected subtree of $T$, and
    \item for each $uv \in E(G)$, there is a node $t$ of $T$ such that ${u, v} \subseteq \beta(t)$.
\end{enumerate}For a node $t$ of $T$, the set $\beta(t)$ is called the \textit{bag} of $t$, and for an edge $st \in E(T)$, the set $\beta(s) \cap \beta(t)$ is called the \textit{adhesion} of $st$, and is denoted by $\sigma(st)$.

It is a folklore result that a graph $H$ is chordal if and only if it has a tree decomposition whose bags are exactly the maximal cliques of $H$ (meaning, in particular, that the number of nodes of the tree is equal to the number of maximal cliques of $H$). Such a tree decomposition is called a \emph{clique tree} of $H$; note that while the set of bags of a clique tree is defined uniquely, the actual tree part of the tree decomposition is not necessarily unique. For example, if $H = K_{1,s}$, then there are $s$ maximal cliques (corresponding to edges of $H$), but they can be arranged into a tree decomposition in essentially an arbitrary manner.
We also remark that a chordal graph on $n$ vertices has at most $n$ maximal cliques, and hence its clique tree has at most $n$ nodes.

We will need some additional facts about clique trees of minimal chordal completions. Let $G$ be a graph. Given a set $S \subseteq V(G)$, a \textit{full component} of $S$ is a component $A$ of $G-S$ such that $N(A) = S$. A \textit{minimal separator} of $G$ is then a set $S \subseteq V(G)$ which has at least two full components.

The next two lemmas were proven in~\cite{p6FreeMaxInd22} using the toolbox from~\cite{BouchitteT01}. The first one shows how to obtain minimal separators from adhesions.

\begin{lemma}[\citeReference{Proposition~2.7}{p6FreeMaxInd22}]
\label{lem:cliqueTreesAdhesion}
Let $G$ be a graph, $G+F$ be a minimal chordal completion of $G$, and $(T, \beta)$ be a clique tree of $G+F$. Then for each edge $st \in E(T)$, the adhesion $\sigma(st)$ is a minimal separator of $G$, and it has full components $A$ and $B$ such that $\beta(s)\setminus \sigma(st)\subseteq A$ and $\beta(t)\setminus \sigma(st)\subseteq B$.
\end{lemma}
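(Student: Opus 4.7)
The plan is to use the standard tree-decomposition partition induced by removing $st$ from $T$, and then extract connectedness in $G$ from the minimality of $F$ via Proposition~\ref{prop:PMCsChar}. Write $\sigma = \sigma(st)$, let $T_s, T_t$ be the two components of $T - st$ (containing $s, t$ respectively), and set $V_s = \bigcup_{u \in V(T_s)}\beta(u)$ and $V_t = \bigcup_{u \in V(T_t)}\beta(u)$. The connectedness property of the tree decomposition $(T,\beta)$ of $G+F$ immediately yields $V_s \cap V_t = \sigma$ and $V_s \cup V_t = V(G)$; since $E(G) \subseteq E(G+F)$, this shows that $\sigma$ separates $V_s \setminus \sigma$ from $V_t \setminus \sigma$ in $G$ as well. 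Note $\beta(s)\setminus\sigma$ and $\beta(t)\setminus\sigma$ are both nonempty because $\beta(s), \beta(t)$ are distinct maximal cliques of $G+F$.

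The central step, and the main obstacle, is to show that $\beta(s)\setminus\sigma$ lies inside a single connected component of $G - \sigma$. In $G+F$ this set is a clique, hence trivially connected, but in $G$ this is where one has to invoke minimality of the chordal completion. The plan is: given $u,v \in \beta(s)\setminus\sigma$, if $uv \in E(G)$ there is nothing to show; otherwise $uv \in F$, and since $\beta(s)$ is a PMC of $G$, Proposition~\ref{prop:PMCsChar} supplies a component $D$ of $G - \beta(s)$ with $u,v \in N(D)$. The separation established in the previous paragraph forces $D$ to lie either in $V_s \setminus \beta(s)$ or in $V_t \setminus \sigma$; the latter is impossible because then $N(D) \subseteq \sigma$ would contradict $u \notin \sigma$. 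Hence $D \subseteq V_s \setminus \sigma$, and $\{u\} \cup D \cup \{v\}$ witnesses that $u$ and $v$ lie in the same component of $G - \sigma$.

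Consequently, there is a component $A$ of $G - \sigma$ with $\beta(s)\setminus\sigma \subseteq A \subseteq V_s \setminus \sigma$, and symmetrically a (necessarily different) component $B$ with $\beta(t)\setminus\sigma \subseteq B \subseteq V_t \setminus \sigma$. It remains to verify fullness, i.e.\ $N(A) = \sigma = N(B)$. The inclusion $N(A) \subseteq \sigma$ is immediate from $A \subseteq V_s \setminus \sigma$ together with the fact that $\sigma$ separates $V_s\setminus\sigma$ from $V_t\setminus\sigma$ in $G$. For the reverse inclusion, pick any $x \in \sigma$ and any $y \in \beta(s)\setminus\sigma \subseteq A$; then $xy \in E(G+F)$. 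If $xy \in E(G)$ we are done, and otherwise another application of Proposition~\ref{prop:PMCsChar} to the PMC $\beta(s)$ yields a component $D'$ of $G - \beta(s)$ with $x,y \in N(D')$; the same dichotomy as above rules out $D' \subseteq V_t\setminus\sigma$, so $D' \subseteq V_s\setminus\sigma$, and since $y \in A$ we get $D' \subseteq A$, giving $x$ a neighbor in $A$. The argument for $N(B) = \sigma$ is symmetric. This shows $\sigma$ is a minimal separator with the prescribed full components, completing the proof.
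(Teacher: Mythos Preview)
Your proof is correct. The paper does not actually prove this lemma; it is imported verbatim as Proposition~2.7 of~\cite{p6FreeMaxInd22}, so there is no in-paper argument to compare against. What you have written is a clean self-contained argument that stays entirely within the toolbox the paper already sets up: you use only the basic tree-decomposition separation property and the PMC characterisation of Proposition~\ref{prop:PMCsChar} (specifically part~\textit{(ii)}) to pull connectedness in $G$ out of minimality of $F$. The dichotomy ``$D$ lies in $V_s\setminus\sigma$ or $V_t\setminus\sigma$'' together with $N_G(D)\subseteq\beta(s)$ and $\beta(s)\cap V_t=\sigma$ is exactly the right lever, and you reuse it consistently for both the single-component step and the fullness step.
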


Notice that the full component $A$ which satisfies Lemma~\ref{lem:cliqueTreesAdhesion} is unique given the vertex $s$ and the edge $st \in E(T)$. (This uses the fact that $\beta(s)\setminus \sigma(st)$ is non-empty, which holds since $\beta(s)$ and $\beta(t)$ are distinct maximal cliques of $G+F$.) When the graph, chordal completion, and clique tree are clear from context, we call $A$ \textit{the full component of $\sigma(st)$ on the $s$-side}.

Lemma~\ref{lem:cliqueTreesAdhesion} immediately implies also the following.
\begin{lemma}\label{lem:adh-to-comp}
Let $G$ be a graph, $G+F$ be a minimal chordal completion of $G$, and $(T, \beta)$ be a clique tree of $G+F$.
Then for every $st \in E(T)$, there exists a connected component $D$ of $G-\beta(t)$ such that $N(D) = \sigma(st)$
and $D \subseteq \bigcup_{t' \in V(T_s)} \beta(t')$ where $T_s$ is the component of $T-\{t\}$ that contains $s$.
\end{lemma}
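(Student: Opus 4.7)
The plan is to reduce the statement to Lemma~\ref{lem:cliqueTreesAdhesion} directly: the lemma already hands us the minimal separator structure of $\sigma(st)$, and what remains is to identify one of its full components as a component of the (larger) cut $G-\beta(t)$, and then locate all its vertices in bags on the $s$-side of $T$.

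First I would apply Lemma~\ref{lem:cliqueTreesAdhesion} to obtain full components $A$ and $B$ of $\sigma(st)$ in $G$ with $\beta(s)\setminus\sigma(st)\subseteq A$ and $\beta(t)\setminus\sigma(st)\subseteq B$. Since $A$ is a connected component of $G-\sigma(st)$, it is disjoint from $\sigma(st)$, and since $A$ and $B$ are distinct components of $G-\sigma(st)$, we have $A\cap B=\emptyset$; thus $A$ is disjoint from $\beta(t)=\sigma(st)\cup(\beta(t)\setminus\sigma(st))$. Consequently $A$ is a connected subgraph of $G-\beta(t)$, so $A$ is contained in some component $D$ of $G-\beta(t)$.

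Next I would show $D=A$. Suppose some vertex $v\in D$ lies outside $A$. Then $v\in V(G)\setminus\beta(t)$ so $v\notin\sigma(st)$, hence $v$ belongs to a component of $G-\sigma(st)$ different from $A$. Pick $u\in A\subseteq D$; since $u$ and $v$ are in the same component of $G-\beta(t)$, there is a $u$--$v$ path avoiding $\beta(t)$, in particular avoiding $\sigma(st)\subseteq\beta(t)$. But such a path cannot cross between different components of $G-\sigma(st)$, a contradiction. Thus $D=A$, and in particular $N(D)=N(A)=\sigma(st)$ since $A$ is a full component of $\sigma(st)$.

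Finally, for the containment $D\subseteq\bigcup_{t'\in V(T_s)}\beta(t')$, I would exploit the tree-decomposition property that the set of bags containing any given vertex forms a connected subtree of $T$, together with the fact that endpoints of each edge of $G\subseteq G+F$ share a bag. Fix any $w\in\beta(s)\setminus\sigma(st)\subseteq D$; then $s\in V(T_s)$ is a bag containing $w$. For any other $u\in D$ take a path $w=v_0,v_1,\dots,v_k=u$ in $G-\beta(t)$. For every edge $v_iv_{i+1}$ there is a bag $\beta(t_i)$ containing both endpoints; since neither endpoint lies in $\beta(t)$, we have $t_i\neq t$, so $t_i$ lies in some component of $T-\{t\}$. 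By connectedness of the subtrees of bags containing $v_i$, the $t_i$ all lie in a single component of $T-\{t\}$, and because the subtree for $w$ meets $V(T_s)$ via $s$, this component must be $T_s$. Hence $u$ lies in a bag indexed by $V(T_s)$, completing the proof.

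The only mildly delicate step is the last paragraph's tree-walking argument; everything else is bookkeeping from Lemma~\ref{lem:cliqueTreesAdhesion}. I do not anticipate a serious obstacle here since the proof is essentially a direct corollary as the authors advertise.
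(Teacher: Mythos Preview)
Your proof is correct and follows exactly the approach the paper intends: take $D$ to be the full component of $\sigma(st)$ on the $s$-side given by Lemma~\ref{lem:cliqueTreesAdhesion}. The paper's own proof is a single sentence pointing to this component; you have simply (and correctly) filled in the verifications that this $A$ is indeed a component of $G-\beta(t)$ and that its vertices all appear in bags on the $s$-side, via the standard subtree-connectedness argument.
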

\begin{proof}
Use Lemma~\ref{lem:cliqueTreesAdhesion} and take $D$ to be the full component of $\sigma(st)$ on the $s$-side.
\end{proof}

The next lemma shows how to obtain minimal separators from PMCs.

\begin{lemma}[\citeReference{Proposition~2.10}{p6FreeMaxInd22}]
\label{lem:PMCComponents}
Let $G$ be a graph, $\Omega$ be a PMC of $G$, and $D$ be a component of $G-\Omega$. Then $N(D)$ is a minimal separator of $G$, and it has a full component $D^{\Omega}\neq D$ which contains $\Omega \setminus N(D)$.
\end{lemma}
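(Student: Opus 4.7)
The plan is to exhibit two distinct full components of $N(D)$ in $G$: one is $D$ itself, and the other, call it $D^\Omega$, will be the component of $G - N(D)$ that swallows the entire set $\Omega \setminus N(D)$. Once both are shown to be full, $N(D)$ has at least two full components and is therefore a minimal separator.

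First I would record two easy facts. By Proposition~\ref{prop:PMCsChar}(i), $N(D) \subsetneq \Omega$, so $\Omega \setminus N(D)$ is nonempty and, since $D$ is a component of $G - \Omega$, disjoint from $D$. Moreover, because $N(D) \subseteq \Omega$ and every neighbor of $D$ lies in $N(D)$, the set $D$ remains a component of $G - N(D)$, and clearly its $G$-neighborhood is exactly $N(D)$, making it a full component.

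The core of the argument is to fix any $w_0 \in \Omega \setminus N(D)$, define $D^\Omega$ as the component of $G - N(D)$ containing $w_0$, and then prove two things: (a) $\Omega \setminus N(D) \subseteq D^\Omega$, and (b) $N(D^\Omega) = N(D)$. Both use Proposition~\ref{prop:PMCsChar}(ii). For (a), given $u,v \in \Omega \setminus N(D)$ with $uv \notin E(G)$, the proposition yields a component $D'$ of $G - \Omega$ with $u,v \in N(D')$. Such $D'$ must differ from $D$, since otherwise $u$ would lie in $N(D)$, contradicting $u \in \Omega \setminus N(D)$; and $D'$ is disjoint from $\Omega$, hence from $N(D)$. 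Concatenating $u$, a neighbor of $u$ in $D'$, a path through the connected set $D'$, and a neighbor of $v$ thus produces a $uv$-walk in $G - N(D)$, so $u$ and $v$ sit in the same component as $w_0$, namely $D^\Omega$. For (b), the inclusion $N(D^\Omega) \subseteq N(D)$ is immediate from $D^\Omega$ being a component of $G - N(D)$; for the reverse, given $x \in N(D)$, either $xw_0 \in E(G)$ (so $x \in N(D^\Omega)$) or, applying Proposition~\ref{prop:PMCsChar}(ii) to the non-edge $xw_0$ in $\Omega$, one obtains a component $D''$ of $G - \Omega$ with $x, w_0 \in N(D'')$. By the same argument as above, $D'' \neq D$ and $D''$ is disjoint from $N(D)$ but adjacent to $w_0 \in D^\Omega$, so $D'' \subseteq D^\Omega$, and the neighbor of $x$ inside $D''$ witnesses $x \in N(D^\Omega)$.

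Finally, $D^\Omega$ meets $\Omega$ whereas $D$ is disjoint from $\Omega$, so $D^\Omega \neq D$, and we have two distinct full components of $N(D)$, completing the proof. Conceptually the argument is not deep, but the main thing to be careful about is the repeated need to verify that the auxiliary components $D'$ and $D''$ supplied by Proposition~\ref{prop:PMCsChar}(ii) are not equal to $D$; this is where the hypothesis that the ``glued'' vertices lie in $\Omega \setminus N(D)$ (rather than just in $\Omega$) is doing the essential work.
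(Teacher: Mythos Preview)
Your proof is correct. The paper does not supply its own proof of this lemma; it simply cites it as Proposition~2.10 of~\cite{p6FreeMaxInd22}. Your argument is the standard one, relying only on the PMC characterization of Proposition~\ref{prop:PMCsChar}, and every step is sound, including the care you take to check that the auxiliary components $D'$ and $D''$ differ from $D$.
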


We will also need the following well-known facts about chordal completions.
\begin{lemma}\label{lem:cliques-stay}
Let $G$ be a graph and $G+F$ be a minimal chordal completion of $G$.
Let $S \subseteq V(G)$ be such that $(G+F)[S]$ is a clique.
Then $F$ contains no edges between different connected components of $G-S$.
\end{lemma}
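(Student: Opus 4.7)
My plan is to argue by contradiction using minimality of $F$. Assume that $F$ contains some edges with endpoints in distinct components of $G-S$, and let $F_0 \subseteq F$ collect all such ``crossing'' edges (so $F_0$ is nonempty by assumption). The aim is to prove that $G + (F \setminus F_0)$ is still a chordal completion of $G$, which contradicts the minimality of $G+F$.

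Suppose toward contradiction that $G + (F \setminus F_0)$ contains a hole $C$. Since $G+F$ is chordal, $C$ must have a chord in $G+F$, and every such chord must in fact belong to $F_0$. Fix one chord $xy \in F_0$ of $C$, so $x$ and $y$ lie in distinct components $C_x, C_y$ of $G-S$. This chord partitions $C$ into two internally disjoint paths from $x$ to $y$, say $P_1$ and $P_2$, each of length at least two.

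The key claim is that each $P_i$ contains an internal vertex in $S$. Every edge of $F \setminus F_0$ either has an endpoint in $S$ or lies inside a single component of $G-S$, so the connected components of the graph $(G + (F \setminus F_0))[V(G) \setminus S]$ coincide with the components of $G-S$. Since $x \in C_x$ and $y \in C_y$ lie in different such components, any path from $x$ to $y$ in $G + (F \setminus F_0)$ must visit $S$, so I can pick $s_i \in V(P_i) \cap S$. As $P_1$ and $P_2$ are internally disjoint and $S \cap \{x,y\} = \emptyset$, the vertices $s_1$ and $s_2$ are distinct internal vertices of $C$ lying on opposite arcs of the chord $xy$, hence are non-adjacent on $C$.

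To conclude, $(G+F)[S]$ is a clique, so $s_1 s_2 \in E(G+F)$; since both endpoints lie in $S$, the edge $s_1 s_2$ cannot belong to $F_0$, and therefore $s_1 s_2 \in E(G + (F \setminus F_0))$. Thus $s_1 s_2$ is a chord of $C$ in $G + (F \setminus F_0)$, contradicting that $C$ is a hole. The main subtlety is choosing the removal set $F_0$ precisely as the crossing edges, so that two things hold simultaneously: $S$ remains a clique in the reduced completion (ensuring the rescue chord $s_1 s_2$ survives), and the only way for a path outside $S$ to travel between components of $G-S$ is still blocked (forcing $P_1, P_2$ to pierce $S$). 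Everything else in the proof is a direct chase through this setup.
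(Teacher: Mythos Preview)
Your proof is correct and follows essentially the same strategy as the paper: remove from $F$ the set $F_0$ of crossing edges and use minimality once you know $G+(F\setminus F_0)$ is still chordal. The only difference is in how chordality of $G+(F\setminus F_0)$ is argued---the paper observes that this graph is obtained by gluing the chordal pieces $(G+F)[N[D]]$ along the clique $S$, whereas you unpack this by an explicit chord-finding argument on a hypothetical hole; both arguments are short and amount to the same idea.
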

\begin{proof}
Let $\mathcal{D}$ be the family of connected components of $G-S$.
For every $D \in \mathcal{D}$, $F \cap \binom{N[D]}{2}$ is a chordal completion of $G[N[D]]$ that turns $N(D)$ into a clique.
Since $(G+F)[S]$ is a clique, $(F \cap \binom{S}{2}) \cup \bigcup_{D \in \mathcal{D}} F \cap \binom{N[D]}{2}$ is a chordal completion of $G$.
The claim follows by the minimality of $G+F$.
\end{proof}
\begin{lemma}\label{lem:comp-to-adh}
Let $G$ be a graph, $G+F$ be a minimal chordal completion of $G$, and $(T, \beta)$ be a clique tree of $G+F$.
Let $S$ be a minimal separator of $G$ such that $(G+F)[S]$ is a clique and let $A$ and $B$ be two full sides of $S$.
Then there exists an edge $t_At_B \in E(T)$ such that $\sigma(t_At_B) = S$,
     $A \subseteq \bigcup_{t \in V(T_A)} \beta(t)$, $B \subseteq \bigcup_{t \in V(T_B)} \beta(t)$,
     where $T_A$ and $T_B$ are the components of $T-\{t_At_B\}$ that contain $t_A$ and $t_B$, respectively.
\end{lemma}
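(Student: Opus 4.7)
The plan is to promote $S$ from a minimal separator of $G$ to a minimal separator of $G+F$, and then locate it as an adhesion of $(T,\beta)$ via the standard ``subtree-of-a-vertex'' arguments. By Lemma~\ref{lem:cliques-stay}, $F$ contains no edge between distinct components of $G-S$, so $A$ and $B$ remain connected components of $(G+F)-S$ with $N_{G+F}(A) = N_{G+F}(B) = S$. Writing $T_u := \{t \in V(T) : u \in \beta(t)\}$ for each $u \in V(G)$, I will consider the sets $T^A := \bigcup_{a \in A} T_a$ and $T^B := \bigcup_{b \in B} T_b$. Because $A$ and $B$ induce connected subgraphs of $G+F$, both $T^A$ and $T^B$ are non-empty connected subtrees of $T$, and they are disjoint: a common node $t$ would have $\beta(t)$ contain both an $A$-vertex and a $B$-vertex, which would be adjacent in the clique $\beta(t)$, contradicting that $A$ and $B$ lie in distinct components of $(G+F)-S$.

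Let $t_0, t_1, \ldots, t_j$ be the unique shortest path in $T$ from $T^A$ to $T^B$, so $j \geq 1$, $t_0 \in T^A$, $t_j \in T^B$, and all internal nodes lie outside $T^A \cup T^B$; the edge I claim is $t_0 t_1$. For the inclusion $S \subseteq \sigma(t_0 t_1)$, note that every $s \in S$ has neighbors in both $A$ and $B$ by fullness, so $T_s$ meets both $T^A$ and $T^B$, and since $T_s$ is a subtree it must contain the entire path; hence $s \in \beta(t_0) \cap \beta(t_1)$. For the reverse inclusion, observe that $\beta(t_0)\setminus S$ is a clique in $(G+F)-S$, so it lies in a single component of $(G+F)-S$; since $t_0 \in T^A$, that component is $A$. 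An analogous analysis for $\beta(t_1)\setminus S$ shows that it lies in a single component that is \emph{not} $A$: if $j=1$ it is $B$, while if $j\geq 2$ it is some third component of $G-S$ (or $\beta(t_1)=S$). In either case $(\beta(t_0)\setminus S) \cap (\beta(t_1)\setminus S) = \emptyset$, yielding $\sigma(t_0 t_1) = S$.

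To conclude, let $T_A$ and $T_B$ be the components of $T - \{t_0 t_1\}$ containing $t_0$ and $t_1$ respectively. For each $a \in A$, the subtree $T_a$ is contained in the connected subtree $T^A$, which contains $t_0$ but not $t_1$; hence $T_a \subseteq V(T_A)$, so $a \in \bigcup_{t \in V(T_A)} \beta(t)$. The argument for $B$ is symmetric. The main subtlety is the adhesion computation, namely showing $\sigma(t_0 t_1)$ equals $S$ exactly rather than a strict superset: this requires the case split on $j=1$ versus $j \geq 2$, and relies on the observation that a bag sitting on an interior node of the path can contain $S$ only together with vertices of components of $G-S$ other than $A$ and $B$, so that the ``extra'' parts of $\beta(t_0)$ and $\beta(t_1)$ cannot overlap.
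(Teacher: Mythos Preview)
Your proof is correct and follows essentially the same approach as the paper's: both define the subtrees $T^A, T^B$ (the paper calls them $Z_A, Z_B$), show they are connected and disjoint via Lemma~\ref{lem:cliques-stay}, take the shortest path between them, and identify the desired edge at the $A$-end of this path. Your argument that $\sigma(t_0t_1)\subseteq S$ via the case split on $j$ is a mild rephrasing of the paper's observation that $\beta(q_A)\subseteq N_G[A]$ while $\beta(q)\cap N_G[A]\subseteq S$ for the neighbor $q$ of $q_A$ on the path; both rest on the fact that bags at nodes outside $T^A$ contain no vertex of~$A$.
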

\begin{proof}
Let $Z_A = \{t \in V(T)~|~A \cap \beta(t) \neq \emptyset\}$ and similarly define $Z_B$.
Since $A$ and $B$ are connected, $Z_A$ and $Z_B$ are connected in $T$.
By Lemma~\ref{lem:cliques-stay}, $Z_A \cap Z_B = \emptyset$.
Let $Q$ be the unique path in $T$ that has one endpoint in $Z_A$, the second endpoint in $Z_B$, and all internal vertices outside $Z_A \cup Z_B$. Note that the length of $Q$ is at least one.
Let $q_A$ and $q_B$ be the endpoints of $Q$ in $Z_A$ and $Z_B$, respectively.

Since $(T,\beta)$ is a tree decomposition of $G$, $N_G[A] \subseteq \bigcup_{t \in Z_A} \beta(t)$.
Since $(T,\beta)$ is a clique tree of the chordal graph $G+F$, we have $N_{G+F}[A] \supseteq \bigcup_{t \in Z_A} \beta(t)$.
Lemma~\ref{lem:cliques-stay} implies that $N_G[A] = N_{G+F}[A]$.
Thus $N_G[A] = N_{G+F}[A] =  \bigcup_{t \in Z_A} \beta(t)$ and, similarly, $N_G[B] = N_{G+F}[B] = \bigcup_{t \in Z_B} \beta(t)$.

Since $S = N_G[A] \cap N_G[B]$, $S \subseteq \beta(s)$ for every $s \in V(Q)$.
By the definition of $Z_A$, we have $\beta(s) \cap N_G[A] \subseteq S$ for every $s \in V(Q) \setminus \{q_A\}$.
Hence, if $q$ is the unique neighbor of $q_A$ on $Q$, then $\sigma(qq_A) = S$.
The lemma follows with $t_A = q_A$ and $t_B = q$.
\end{proof}

\subsection{Aligning chordal completions and treedepth structures}

Throughout the paper we will try to find a maximal induced subgraph with treedepth at most $d$. We will do so by considering a fixed elimination forest of this induced subgraph, as well as a chordal completion which ``aligns with'' the elimination forest. We now formalize these ideas.

Let $G$ be a graph and $d$ be a positive integer. A \emph{treedepth-$d$ structure in $G$} is a rooted forest $\T$ of height at most $d$ such that $V(\T)$ is a subset of $V(G)$ and $\T$ is an elimination forest of the subgraph of $G$ induced by $V(\T)$. We sometimes write $\T$ instead of $V(\T)$ when it is clear that we are working with a set of vertices; in particular, if $X$ is a set of vertices of $G$, then we write $X \cap \T$ instead of $X \cap V(\T)$. We say that $\T$ is \emph{maximal} if there is no treedepth-$d$ structure $\T'$ in $G$ such that $\T$ is a proper induced subgraph of $\T'$ and every root of $\T$ is a root of $\T'$.

Note that if $H$ is a maximal induced subgraph of $G$ of treedepth at most $d$, and $\T$ is a height-$d$ elimination forest of that subgraph, then $\T$ is a maximal treedepth-$d$ structure in $G$. Consequently, in the context of $(\td \leq d, \phi)$-\textsc{MWIS}, we can consider $\Sol$ being in fact a maximal set inducing a subgraph of treedepth at most $d$ in $G$: if $(\Sol, X)$ is an actual solution, then there exists a maximal treedepth-$d$ structure $\Sol'$ that is a superset of $\Sol$ and quantification over $\Sol$ can be implemented inside $\phi$.
(This step is formally explained in Section~\ref{sec:dynProg}.)
Thus, most of the structural results in this work consider the set of all maximal treedepth-$d$-structures, which are more detailed versions of maximal sets inducing a subgraph of treedepth at most $d$.

We conclude this section by discussing ``aligned'' chordal completions and by proving some basic lemmas about them. Let $G$ be a graph, $d$ be a positive integer, and $\T$ be a treedepth-$d$ structure in $G$. We say that a chordal completion $G+F$ is \textit{$\T$-aligned} if $F$ does not contain any pair $uv$ so that \begin{enumerate}
    \item $u$ or $v$ is a depth-$d$ vertex of $\T$, or
    \item $u$ and $v$ are vertices of $\T$ which are $\T$-incomparable.
\end{enumerate}
\noindent The second condition equivalently says that $\T$ is a treedepth-$d$ structure in $G+F$. First we show that there is always a $\T$-aligned minimal chordal completion.

\begin{lemma}
\label{lem:alignedChordalCompl}
For any positive integer $d$, graph $G$, and treedepth-$d$ structure $\T$ in $G$, there exists a minimal chordal completion of $G$ that is $\T$-aligned.
\end{lemma}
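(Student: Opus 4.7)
The plan is to construct a (not necessarily minimal) $\T$-aligned chordal completion $G+F_0$ and then pass to a minimal $F\subseteq F_0$ with $G+F$ still chordal; such an $F$ exists by greedy edge removal, and since $F\subseteq F_0$ it automatically satisfies the two alignment conditions. Thus the whole problem reduces to producing an initial $\T$-aligned chordal completion.

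The natural candidate is to take $F_0$ to be the set of \emph{all} non-edges $uv$ of $G$ that are permitted by the definition, i.e., every non-edge such that neither endpoint is a depth-$d$ vertex of $\T$ and, if both $u,v$ lie in $V(\T)$, they are $\T$-comparable. What needs to be checked is that this ``maximal aligned'' graph $G+F_0$ is chordal.

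I would verify chordality in two stages. Write $L$ for the set of depth-$d$ vertices of $\T$, and first consider the subgraph of $G+F_0$ induced on $V(G)\setminus L$. Within $V(\T)\setminus L$ the adjacency relation is precisely $\T$-comparability (the $G$-edges inside $V(\T)$ already respect this, since $\T$ is an elimination forest of $G[V(\T)]$); as an induced subgraph of the comparability graph of the rooted forest $\T$, this is well-known to be chordal. The set $V(G)\setminus V(\T)$ is, by construction of $F_0$, a clique completely joined to $V(\T)\setminus L$, hence acts as a collection of universal vertices sitting on top of a chordal base, which preserves chordality. Second, I would show that no hole of $G+F_0$ can meet $L$: a vertex $v\in L$ has only its $G$-neighbors in $G+F_0$, so its two neighbors $u,w$ on a putative hole are $G$-neighbors of $v$. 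A brief case analysis---on whether each of $u,w$ lies in $V(\T)\setminus L$ or in $V(G)\setminus V(\T)$, after ruling out $u,w\in L$ since two depth-$d$ vertices are $\T$-incomparable---shows that in every case $uw$ is already an edge of $G+F_0$, yielding a chord.

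The only non-routine step is this hole analysis, and the key structural observation behind it is that any two $G$-neighbors of $v\in L$ lying in $V(\T)$ must be ancestors of $v$ (since $v$ is at maximum depth in $\T$), hence are pairwise $\T$-comparable and therefore joined by $F_0$. I do not anticipate any genuine obstacle beyond the bookkeeping in the cases. Once $G+F_0$ is known to be chordal, greedy removal of edges of $F_0$ delivers the desired minimal $\T$-aligned chordal completion.
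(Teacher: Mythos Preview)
Your approach is correct and essentially the same as the paper's: both define $F_0$ as the set of all non-edges permitted by the alignment constraints, show $G+F_0$ is chordal, and then pass to a minimal $F\subseteq F_0$. Your chordality argument is a bit more elaborate than needed; the paper handles it in one step by taking any hole $C$, picking a vertex $u\in C\cap\T$ of maximum depth (which exists since $(G+F_0)-\T$ is a clique), and observing that both neighbors of $u$ on $C$ lie in the clique $\{\text{vertices outside }\T\}\cup\{\text{ancestors of }u\text{ in }\T\}$, a contradiction---this avoids splitting off the depth-$d$ vertices as a separate stage.
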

\begin{proof}
    Let $F$ denote the set of all non-edges $uv$ of $G$ which are not incident to a depth-$d$ vertex of $\T$, and are not between two vertices of $\T$ which are $\T$-incomparable. It suffices to prove that $G+F$ is chordal, since any chordal subgraph of $G+F$ is $\T$-aligned.

    Going for a contradiction, suppose that $C$ is a hole of $G+F$. As $(G+F)-\T$ is a clique, there is a vertex in $C \cap \T$; choose one, say $u$, which has maximum depth in $\T$. Consider the two neighbors of $u$ in $C$; they are either outside of $\T$ or ancestors of $u$ in $\T$. However, the set of all such vertices forms a clique in $G+F$, which contradicts the fact that $C$ has length at least~$4$.
\end{proof}

Throughout the paper we consider PMCs and minimal separators which might come from an aligned chordal completion. So, to state these definitions, let $G$ be a graph, $d$ be a positive integer, and $\T$ be a treedepth-$d$ structure in $G$. A PMC $\Omega$ is \textit{$\T$-avoiding} if it is a maximal clique of a minimal chordal completion that is $\T$-aligned, and it does not contain any depth-$d$ vertex of $\T$. We deal with the case that $\Omega$ does contain a depth-$d$ vertex separately, in the next lemma. Finally, a minimal separator $S$ of $G$ is \emph{$\T$-avoiding} if $S \cap \T$ is contained in a vertical path of $\T$ and has no depth-$d$ vertex. (So these are the separators that can come from $\T$-avoiding PMCs.)

For a fixed treedepth-$d$ structure $\T$, a set $\widetilde{Y} \subseteq V(G)$ is a \emph{container} for a set $Y \subseteq V(G)$ if $Y \subseteq \widetilde{Y}$ and $\widetilde{Y} \cap \T = Y \cap \T$, that is, $\widetilde{Y} \setminus Y$ is disjoint from $\T$.

\begin{lemma}
\label{lem:leafyPMC}
For each positive integer $d$, there is a polynomial-time algorithm which takes in a graph $G$ and returns a collection $\mathcal{L}\subseteq 2^{V(G)}$ such that for any maximal treedepth-$d$ structure $\T$ in $G$, any $\T$-aligned minimal chordal completion $G+F$ of $G$, and any maximal clique $\Omega$ of $G+F$ which contains a depth-$d$ vertex of $\T$, $\mathcal{L}$ contains a set $\widetilde\Omega$ that is a container for $\Omega$, i.e., $\Omega\subseteq \widetilde\Omega$ and $\widetilde\Omega\cap \T = \Omega \cap \T$.
\end{lemma}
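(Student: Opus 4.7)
The plan is to construct $\mathcal{L}$ by enumerating, for every vertex $v \in V(G)$ and every small subset $S \subseteq N_G(v)$ of size at most $d-1$, the set $N_G[v] \setminus S$. The intended use, for given $\T$, $G+F$, and $\Omega$, is to take $v$ to be the depth-$d$ vertex of $\T$ contained in $\Omega$ and $S := (N_G[v] \cap V(\T)) \setminus \Omega$. Since $S$ will turn out to have size at most $d-1$, the family $\mathcal{L}$ has size $O(n^d)$ and is computable in polynomial time for fixed $d$.

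The key observation powering the correctness is that $\T$-alignment makes a depth-$d$ vertex essentially behave as in $G$: no pair incident to such a $v$ lies in $F$, so the $G+F$-neighborhood of $v$ equals its $G$-neighborhood. Since $\Omega$ is a clique of $G+F$ containing $v$, this immediately gives $\Omega \subseteq N_G[v]$. Next, I would bound $N_G[v] \cap V(\T)$: for any $u \in N_G[v] \cap V(\T)$ with $u \ne v$, the edge $uv$ is in $G$, so because $\T$ is an elimination forest of $G[V(\T)]$, $u$ and $v$ are $\T$-comparable; and because $v$ has depth $d$ in a height-$d$ forest, $v$ is a leaf of $\T$, so $u$ is a proper ancestor of $v$. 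Hence $|N_G[v] \cap V(\T)| \le d$, and since $v \in \Omega$, the ``bad'' set $S$ defined above satisfies $|S| \le d-1$. The same argument shows that two depth-$d$ vertices of $\T$ in $\Omega$ must be the same vertex (two distinct leaves of $\T$ cannot be $\T$-comparable), so the choice of $v$ is unambiguous.

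With these ingredients in place the verification for $\widetilde\Omega := N_G[v]\setminus S$ is direct: $\Omega \subseteq \widetilde\Omega$ since $\Omega \subseteq N_G[v]$ and $\Omega \cap S = \emptyset$, and
\[ \widetilde\Omega \cap V(\T) = (N_G[v] \cap V(\T)) \setminus S = \Omega \cap V(\T), \]
using $\Omega \cap V(\T) \subseteq N_G[v] \cap V(\T)$ and the definition of $S$. I do not expect a serious obstacle here: the only conceptual step is recognizing that $\T$-alignment localises the vertices of $V(\T)$ that can appear inside $N_G[v]$ to the at most $d-1$ proper ancestors of $v$ in $\T$, which bounds the number of possibilities to guess and turns the whole argument into a short enumeration.
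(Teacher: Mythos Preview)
Your proposal is correct and follows exactly the same approach as the paper: guess the unique depth-$d$ vertex $v\in\Omega\cap\T$, guess the at most $d-1$ neighbours of $v$ in $\T\setminus\Omega$, and output $N_G[v]$ minus those guessed vertices. You have spelled out a few details (uniqueness of $v$, why $N_G[v]\cap V(\T)$ consists of $v$ and its proper ancestors) that the paper leaves implicit, but the argument is otherwise identical.
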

\begin{proof}
    We guess the vertex $v \in \Omega$ which is a depth-$d$ vertex of $\T$ (this vertex is unique). Thus $v$ is adjacent in $G$ to every other vertex of $\Omega$, because $\Omega$ is a clique in a $\T$-aligned chordal completion. Moreover, $v$ has at most $d-1$ neighbors in $\T$. We then guess the set $X$ of all neighbors of $v$ which are in $\T$ but are not in $\Omega$. Finally, for all guesses of $v$ and $X$, we add the set $N[v]\setminus X$ to $\mathcal{L}$. This collection $\mathcal{L}$ is as desired.
\end{proof}

The final lemma is how we will use the maximality of a treedepth-$d$ structure.

\begin{lemma}
\label{lem:PMCmaximality}
Let $G$ be a graph, $d$ be a positive integer, and $\T$ be a maximal treedepth-$d$ structure in $G$. Then for any $\T$-avoiding potential maximal clique $\Omega$ of $G$, each vertex in $\Omega\setminus\T$ has a neighbor in $\T\setminus\Omega$.
\end{lemma}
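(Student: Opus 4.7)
My plan is a direct contradiction argument: if some $v \in \Omega \setminus \T$ had no neighbor in $\T \setminus \Omega$, I would graft $v$ onto $\T$ as an extra leaf (or as an extra root) without increasing the height, contradicting the maximality of $\T$.

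The key structural observation to unlock this is that $\Omega \cap \T$ lies on a single vertical path of $\T$ and contains no depth-$d$ vertex. The depth claim is immediate from $\Omega$ being $\T$-avoiding. For the chain claim, pick any two $u_1, u_2 \in \Omega \cap \T$: they are adjacent in the $\T$-aligned minimal chordal completion $G+F$ witnessing that $\Omega$ is $\T$-avoiding, so either $u_1u_2 \in E(G)$ --- in which case $u_1, u_2$ are $\T$-comparable because $\T$ is an elimination forest of $G[V(\T)]$ --- or $u_1u_2 \in F$, and then $\T$-alignment explicitly forbids $u_1, u_2$ from being $\T$-incomparable. Either way they are $\T$-comparable, so $\Omega \cap \T$ is totally ordered by the ancestor relation in $\T$.

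With this in hand, the construction of the forbidden extension $\T'$ of $\T$ is mechanical. By hypothesis, $N_G(v) \cap V(\T) \subseteq \Omega \cap \T$. If $\Omega \cap \T = \emptyset$, I would make $v$ a new root of $\T'$; otherwise I would attach $v$ as a new child of the deepest vertex $w$ of $\Omega \cap \T$. In the first case $v$ has no neighbor in $V(\T)$ at all; in the second, every neighbor of $v$ in $V(\T)$ lies in $\Omega \cap \T$, which by the observation above consists of $w$ together with some ancestors of $w$ --- hence of $v$ --- in $\T'$. Therefore $\T'$ is an elimination forest of $G[V(\T')]$. Since $w$ has depth at most $d-1$ in $\T$ whenever it exists, the height of $\T'$ is still at most $d$, and every root of $\T$ remains a root of $\T'$; this contradicts the maximality of $\T$.

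The only real content of the argument is the first observation; once it is established, the leaf-attachment is entirely routine, so I do not anticipate any significant obstacle beyond unfolding the definition of $\T$-alignment.
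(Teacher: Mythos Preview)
Your proposal is correct and is essentially the same argument as the paper's own proof: the paper also observes that $\Omega\cap\T$ lies on a vertical path of $\T$ with no depth-$d$ vertex (this is exactly the content of the $\T$-avoiding definition), and then attaches the offending vertex to $\T$ as a new leaf to contradict maximality. You have simply spelled out in full the details that the paper leaves implicit.
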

\begin{proof}
    Recall that the set $\Omega \cap \T$ is contained in a vertical path of $\T$.
    Moreover, since $\Omega$ is $\T$-avoiding, $\Omega\cap \T$ does not contain any depth-$d$ vertex of $\T$.
    For contradiction, suppose there is a vertex $u \in \Omega \setminus \T$ that has no neighbor in $\T\setminus\Omega$
    Then, another treedepth-$d$ structure $\T'$ in $G$ can be obtained from $\T$ by adding $u$.
    This contradicts the maximality of $\T$.
\end{proof}

\section{Dynamic programming}
\label{sec:dynProg}

Now let us recall the definition of the main object of study in this paper.

\begin{definition}\label{def:carvers}
Let $G$ be a graph and $d$ and $k$ be positive integers.
A family $\carvers \subseteq 2^{V(G)}$ is a \emph{tree-depth-$d$ carver family} of \emph{defect $k$} in $G$
if for every treedepth-$d$ structure $\T$ in $G$,

there exists a tree decomposition $(T,\beta)$ of $G$
such that for each $t\in V(T)$ there exists $C \in \carvers$
such that
\begin{enumerate}
\item $C \cap \T$ contains $\beta(t)\cap \T$ and has size at most $k$, and
\item each component of $G-C$ is contained in $\beta(t) \cup \bigcup_{s \in T'}\beta(s)$ for some component $T'$ of $T-\{t\}$.
\end{enumerate}
Such a set $C$ as above is called a \emph{$(\T, (T,\beta))$-carver} for $\beta(t)$ of defect $k$; it might not be unique. We use this definition independently of that of carver families.
\end{definition}

It is important to compare the notion of a carver family with the notion of \emph{containers} of~\cite{containerMethod21}.
There, instead of the properties above, we mandate that $|\beta(t) \cap \T|\leq k$, that $C \cap \T = \beta(t) \cap \T$, and that $\beta(t) \subseteq C$
(so that, in particular, the choice of the tree $T$ in the tree decomposition $(T,\beta)$ is irrelevant for the definition).
These requirements imply that parts \textit{(i)} and \textit{(ii)}
 of the definition of a carver family hold; for the second part, observe that
if $\beta(t) \subseteq C$, then any component of $G-C$ is contained in a component of $G-\beta(t)$ which, by the properties of a tree decomposition, lies in the union of bags of a single component of $T-\{t\}$. The main difference is that in the notion of a carver, we actually allow a carver $C$ to miss some vertices of $\Omega$, as long as this does not result in ``gluing'' connected components of $G-\Omega$ residing in different subtrees of $T-\{t\}$ within the same connected component of $G-C$.

The main result of this section is that a tree-depth-$d$ carver family of small defect in $G$ is enough to design a dynamic programming
routine that solves the $(\td \leq d, \phi)$-\textsc{MWIS} problem on $G$.
We state it in the slightly more general form that allows $\Sigma^0$-annotated graphs for fixed finite set $\Sigma^0$.
\begin{theorem}\label{thm:dp}
For any positive integers $d$ and $k$, any finite set $\Sigma^0$ and any \cmsotwo formula $\phi$
over the signature of $\Sigma^0$-annotated graphs, there exists an algorithm that, given a vertex-weighted $\Sigma^0$-annotated graph $G$
and a tree-depth-$d$ carver family $\carvers \subseteq 2^{V(G)}$ of defect $k$ in $G$, runs in time polynomial in the input size
and either outputs an optimal solution to the $(\td \leq d, \phi)$-\textsc{MWIS} problem on~$G$, or determines that no feasible solution exists.
\end{theorem}

The remainder of this section is devoted to the proof of Theorem~\ref{thm:dp}.

\subsection{Canonizing and extending partial solutions}
Fix an integer $d$ and let $G$ be a graph. A \emph{partial solution} in $G$ is any tuple $(\T,X,\Sol)$ such that $\T$ is a tree-depth-$d$ structure in $G$
and $X \subseteq \Sol \subseteq V(\T)$.

Very roughly, the dynamic programming routine will have a table with some entries for each partial solution $(\T,X,\Sol)$ such that
$\T$ has at most $k$ leaves.
Each of these entries will contain a partial solution $(\T',X',\Sol')$ which ``extends'' $(\T,X,\Sol)$ into a specified part of the graph. We will update this partial solution $(\T',X',\Sol')$ when we find a ``better'' one. Sometimes this choice is arbitrary. So, in order to have more control over arbitrary choices, we now introduce a consistent tie-breaking scheme over partial solutions. More formally, we introduce a quasi-order $\preceq$ over partial solutions.

First, fix an arbitrary enumeration of $V(G)$ as $v_1,v_2,\ldots,v_{|V(G)|}$.
Second, define a total order $\preceq_1$ on subsets of $V(G)$ as follows: $X \prec_1 Y$ if $|X| > |Y|$
or if $|X| = |Y|$ and we have $v_i \in X$, where $i$ is the minimum integer such that $v_i \in X \triangle Y$ (i.e., we use the lexicographic order).
Third, define a quasi-order $\preceq_2$ on tree-depth-$d$ structures in $G$ as follows. Given a tree-depth-$d$ structure $\T$, associate to $\T$
the following tuple of $d+1$ subsets of $V(G)$:
\begin{itemize}
    \item $V(\T)$,
    \item the set of all vertices of depth $1$ in $\T$ (i.e., the roots),
    \item the set of all vertices of depth $2$ in $\T$,\\
    \ldots
    \item the set of all vertices of depth $d$ in $\T$.
\end{itemize}
When comparing two tree-depth-$d$ structures with $\preceq_2$, we compare with $\preceq_1$ the first sets in the above tuple that differ.

For two distinct tree-depth-$d$ structures $\T$ and $\T'$, we have $\T \preceq_2 \T'$ or $\T' \preceq_2 \T$.
However, we may have both $\T \preceq_2 \T'$ or $\T' \preceq_2 \T$ (i.e., it is possible that, for two different tree-depth-$d$ structures $\T$ and $\T'$, we have $V(\T) = V(\T')$ and every vertex of $V(\T)$ has the same depth in $\T$ and in $\T'$).
So $\preceq_2$ is only a quasi-order on the set of all tree-depth-$d$ structures in $G$; it partitions tree-depth-$d$ structures into equivalence classes, and between the equivalence classes it is a total order.

In order to avoid this problem, we will show that we can convert any tree-depth-$d$ structure into one that is ``neat'', and that $\preceq_2$ is a total order on ``neat'' tree-depth-$d$ structures. Formally, a tree-depth-$d$ structure $\T$ of $G$ is \emph{neat} if
for any non-root node $v$ of $\T$, the graph $G$ has at least one edge joining the parent of $v$ in $\T$ with a descendant of $v$ in $\T$ (possibly $v$ itself). One can easily see that this is equivalent to the following condition: for every node $v$ of $\T$, the subgraph of $G$ induced by the descendants of $v$ (including $v$) is connected.

The following lemma is standard when working with elimination forests: any tree-depth-$d$ structure can be adjusted to a neat one without increasing the depth.

\begin{lemma}\label{lem:neat-td}
Given a graph $G$ and a tree-depth-$d$ structure $\T$ of $G$, one can in polynomial time compute a neat tree-depth-$d$ structure $\T'$ of $G$ such that $V(\T')=V(\T)$ and for each $v\in V(\T)$, the depth of $v$ in $\T'$ is at most the depth of $v$ in~$\T$.
\end{lemma}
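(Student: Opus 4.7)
I will apply the standard pruning procedure for elimination forests. Say that a non-root vertex $v\in V(\T)$ is \emph{bad} if its parent $p$ in $\T$ has no neighbor in $G$ among the descendants of $v$ in $\T$ (including $v$ itself); equivalently, $\T$ fails to be neat precisely when $\T$ has a bad vertex. As long as a bad vertex $v$ exists, pick one and modify $\T$ as follows: detach the subtree rooted at $v$ from $p$, and reattach it so that $v$ becomes a child of the parent of $p$ in $\T$, or becomes a new root of $\T$ if $p$ was a root. Output the result once no bad vertex remains.

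The first thing to verify is that each such reattachment keeps an elimination forest of $G[V(\T)]$. Consider an edge $xy\in E(G)$ with $x,y \in V(\T)$ and $x$ lying in the subtree of $v$; I need $x$ and $y$ to be comparable in the updated forest. If $y$ also lies in the subtree of $v$, comparability is preserved. Otherwise, $y$ was a proper $\T$-ancestor of $x$ in $\T$, so $y\in \{p\}\cup \{\text{proper }\T\text{-ancestors of }p\}$; but $y=p$ is impossible since $v$ is bad and $x$ is a descendant of $v$. Hence $y$ is a proper $\T$-ancestor of $p$, and remains an ancestor of $x$ after the reattachment: when $p$ had a parent $q$ in $\T$, $v$ becomes a child of $q$ and the ancestor path above $v$ is exactly the ancestor path of $p$ minus $p$ itself; when $p$ was a root of $\T$ there is no such $y$ to worry about. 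Moreover, the depth of every vertex in the subtree of $v$ drops by exactly one and all other depths stay the same, so the height remains at most $d$ and vertex depths only weakly decrease throughout the whole procedure.

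Termination and polynomial running time follow from the potential $\Phi(\T)=\sum_{u\in V(\T)}\mathrm{depth}_\T(u)$, which is a nonnegative integer bounded by $|V(\T)|\cdot d$ and strictly decreases at each reattachment. Thus the procedure halts after at most $|V(\T)|\cdot d$ steps, and each step takes polynomial time: detecting a bad vertex reduces, for each non-root $v$, to checking whether the parent of $v$ has any $G$-neighbor among the descendants of $v$ in $\T$. When the procedure halts, no vertex of the resulting $\T'$ is bad, which is precisely the definition of $\T'$ being neat; by construction $V(\T')=V(\T)$ and $\mathrm{depth}_{\T'}(v)\leq \mathrm{depth}_{\T}(v)$ for every $v$, giving the lemma. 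The only subtlety I expect to watch out for is the boundary case in which $p$ is a root, where the reattachment promotes $v$ to a new root; the badness of $v$ is exactly what ensures that no edge of $G[V(\T)]$ is broken by this promotion.
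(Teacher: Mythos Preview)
Your proposal is correct and follows essentially the same approach as the paper's proof: iteratively detect a ``bad'' vertex whose parent has no $G$-neighbor in its subtree, reattach that subtree one level higher (or as a new root), and argue termination via decreasing depths. You supply more detail than the paper in verifying that the elimination-forest property is preserved and in the potential argument (using $\sum_u \mathrm{depth}(u)\le |V(\T)|\cdot d$, whereas the paper just observes depths decrease and bounds the number of steps by $|V(G)|^2$), but the underlying argument is identical.
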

\begin{proof}
While possible, perform the following improvement step. If $v \in V(\T)$ is such that $v$ is not a root of $\T$, but has a parent $u$,
and the subtree $\T_v$ of $\T$ rooted at $v$ does not contain a vertex of $N_G(u)$, then
reattach $\T_v$ to the parent of $u$ if $u$ is not a root or detach it as a separate component of $\T$ otherwise. It is immediate that
the new rooted forest is also a tree-depth-$d$ structure of $G$ and, furthermore, that the depths of the elements of $\T_v$ decreased by one.
This in particular implies that there will be at most $|V(G)|^2$ improvement steps. Each of them can be executed in polynomial time. Once no more improvement steps are possible, the resulting tree-depth-$d$ structure is neat, as desired.
\end{proof}

Next, we show that $\preceq_2$ is a total order on neat tree-depth-$d$ structures. In fact we show something slightly stronger: that each neat tree-depth-$d$ structure is in a singleton equivalence class.

\begin{lemma}\label{lem:dp:neat-equal}
If $\T$ and $\T'$ are tree-depth-$d$ structures such that $\T$ is neat, $\T \preceq_2 \T'$, and $\T' \preceq_2 \T$, then $\T = \T'$.
\end{lemma}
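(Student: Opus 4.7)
The plan is to first use the total-order structure of $\preceq_1$ to extract strong information from the two assumed $\preceq_2$ inequalities, and then to reconstruct the parent relations of both forests by induction on depth, with neatness of $\T$ as the single nontrivial structural input. Since $\preceq_1$ is a total order on subsets of $V(G)$, the only way to have both $\T \preceq_2 \T'$ and $\T' \preceq_2 \T$ is for every entry of the two associated tuples to match. This yields $V(\T)=V(\T')$ and, for every $i\in\{1,\ldots,d\}$, equality of the set of depth-$i$ vertices of $\T$ and of $\T'$. Write $H \coloneqq G[V(\T)]$; both $\T$ and $\T'$ are elimination forests of $H$.

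I would then prove by induction on $j$ the statement: for every depth-$j$ vertex $p$, (a) the parent of $p$ in $\T$ coincides with its parent in $\T'$ (vacuous at $j=1$), and (b) $V(\T_p) \subseteq V(\T'_p)$, where $\T_p$ and $\T'_p$ denote the subtrees rooted at $p$ in $\T$ and $\T'$ respectively. Once this is established for all $j \le d$, every vertex has the same parent in both forests, giving $\T=\T'$. The structural fact feeding the induction step is a standard observation about elimination forests: any edge of $H$ incident to $V(\T'_p)$ has its other endpoint either in $V(\T'_p)$ or among the strict ancestors of $p$ in $\T'$. Consequently, if one deletes from $H$ the strict ancestors of $p$ in $\T'$, the set $V(\T'_p)$ is a union of connected components, and in particular the component of $p$ in that deleted graph is contained in $V(\T'_p)$. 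The analogous statement holds for $\T$ in place of $\T'$; but because $\T$ is \emph{neat}, $H[V(\T_p)]$ is actually connected, so $V(\T_p)$ \emph{equals} the component of $p$ in $H$ minus the strict ancestors of $p$ in $\T$.

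With these facts in hand, the induction step proceeds as follows. Assuming (a) and (b) through depth $j$, iterating (a) up chains of parents gives that every depth-$j$ vertex has the same set of ancestors in $\T$ and in $\T'$. For a depth-$(j+1)$ vertex $p$ with $q \coloneqq \mathrm{parent}_\T(p)$, part (b) at depth $j$ gives $p \in V(\T_q) \subseteq V(\T'_q)$, which forces $q$ to be the unique depth-$j$ ancestor of $p$ in $\T'$, proving (a) at depth $j+1$. Combined with (a) up to depth $j$, this yields equality of the strict-ancestor sets of $p$ in the two forests; the two component-based descriptions of $V(\T_p)$ and of a superset of $V(\T'_p)$ are therefore taken in exactly the same graph, so $V(\T_p) \subseteq V(\T'_p)$, proving (b) at depth $j+1$. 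The main obstacle is just keeping this bookkeeping honest: the inclusion at depth $j$ feeds parent-equality at depth $j+1$, which feeds ancestor-equality at depth $j+1$, which in turn feeds the inclusion at depth $j+1$. It is essential that neatness is hypothesised on $\T$ and not on $\T'$, since only on the $\T$ side do we need $V(\T_p)$ to recover \emph{all} of the relevant component rather than merely being contained in a union of components.
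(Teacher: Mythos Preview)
Your proof is correct. It differs from the paper's argument in the direction of the induction: you argue \emph{top-down} (on increasing depth $j$), carrying along the auxiliary invariant $V(\T_p)\subseteq V(\T'_p)$, whereas the paper argues \emph{bottom-up} (on decreasing depth), directly establishing that the subtrees of $\T$ and $\T'$ rooted at any vertex of a given depth coincide. In the paper's version, once the subtrees rooted at a depth-$(d-i)$ vertex $v$ are known to be equal, neatness of $\T$ produces an edge from some descendant $w$ of $v$ to the $\T$-parent $u$ of $v$; the elimination-forest property of $\T'$ then forces $u$ (which has the right depth) to be an ancestor of $w$ in $\T'$, hence the parent of $v$ there too. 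Your route trades this single-edge argument for the connected-component description of $V(\T_p)$ that neatness yields, which is a slightly heavier but equally natural invariant. Both arguments use exactly the same two ingredients (neatness of $\T$ gives connectivity of descendant sets; the elimination-forest property of $\T'$ controls where edges can go), just packaged in opposite directions. The paper's proof is marginally shorter because it gets equality of subtrees rather than inclusion, but your top-down organisation is perfectly sound and the bookkeeping you describe is accurate.
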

\begin{proof}
Since $\T \preceq_2 \T'$ and $\T' \preceq_2 \T$, we have that $V(\T) = V(\T')$ and every vertex has the same depth in $\T$ and $\T'$.
We prove inductively on $i$ that the set of all vertices of depth at least $d-i$ induces the same forest in $\T$ and $\T'$. The base case of $i=0$ holds since the depth-$d$ vertices are an independent set in both $\T$ and $\T'$. For the inductive step, it suffices to show that each depth-$(d-i)$ vertex $v$ has the same parent in $\T$ and $\T'$. So let $u$ and $u'$ be the parent of $v$ in $\T$ and $\T'$, respectively.
From the inductive hypothesis, the subtrees of $\T$ and $\T'$ rooted at $v$ are equal.
Since $\T$ is neat, there is a vertex $w$ in this subtree that is adjacent to $u$ in $G$.
Since $\T'$ is an elimination forest, $u'$ and $w$ are comparable in $\T'$. Since $u'$ and $u$ have the same depth
in $\T$ and $\T'$, this is only possible if $u=u'$.
\end{proof}

Finally, given two partial solutions $(\T,X,\Sol)$ and $(\T',X',\Sol')$ in a graph $G$, we say that $(\T,X,\Sol) \preceq (\T',X',\Sol')$ if:
\begin{enumerate}
    \item the weight of $X$ is larger than the weight of $X'$, or
    \item the weights of $X$ and $X'$ are equal, but $X \prec_1 X'$;
    \item $X = X'$, but $\Sol \prec_1 \Sol'$;
    \item $X = X'$ and $\Sol = \Sol'$, but $\T \preceq_2 \T'$.
\end{enumerate}
We say that $(\T,X,\Sol)$ is \emph{better} than $(\T',X',\Sol')$
(or that $(\T',X',\Sol')$ is \emph{worse} than $(\T,X,\Sol)$) if $(\T,X,\Sol) \preceq (\T',X',\Sol')$ and some comparison above is strict.

Using this quasi-order, we can now look for a partial solution $(\T, X, \Sol)$ such that $\T$ is maximal and neat. This is based on the following observation.

\begin{lemma}\label{lem:dp:best-is-neat}
For any $X \subseteq \Sol \subseteq V(G)$ such that $G[\Sol]$ has tree-depth at most $d$, there exists a tree-depth-$d$ structure $\T$ such that $(\T, X, \Sol)$ is a partial solution. Moreover, if one chooses $\T$ so that $(\T, X, \Sol)$ is $\preceq$-minimal (among all choices of $\T$, for fixed $X$ and $\Sol$), then $\T$ is maximal and neat.
\end{lemma}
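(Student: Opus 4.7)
The plan is to split the argument into three parts: existence, maximality of a minimizer, and neatness of a minimizer. For existence, since $G[\Sol]$ has treedepth at most $d$ it has an elimination forest $\T$ of height at most $d$, which (viewed as a rooted forest on $V(\T)=\Sol\subseteq V(G)$) is a treedepth-$d$ structure in $G$; then $X\subseteq \Sol=V(\T)$, so $(\T,X,\Sol)$ is a partial solution and the set of valid $\T$'s (for the given $X$ and $\Sol$) is non-empty, so a $\preceq$-minimal choice exists.

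Now fix a $\preceq$-minimal $(\T,X,\Sol)$ and argue maximality by contradiction: if $\T$ is not maximal, there is a treedepth-$d$ structure $\T'$ in $G$ obtained from $\T$ by appending a leaf, with the same set of roots. Then $\Sol\subseteq V(\T)\subsetneq V(\T')$, so $(\T',X,\Sol)$ is still a partial solution, and $|V(\T')|>|V(\T)|$. Because $V(\cdot)$ is the first coordinate of the tuple defining $\preceq_2$ and $\prec_1$ prefers larger sets, this gives $V(\T')\prec_1 V(\T)$, hence $\T'\prec_2 \T$, and finally $(\T',X,\Sol)\prec (\T,X,\Sol)$, contradicting minimality.

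For neatness, assume $\T$ is not neat and apply Lemma~\ref{lem:neat-td} to obtain a neat treedepth-$d$ structure $\T'$ with $V(\T')=V(\T)$ and with no vertex having larger depth in $\T'$ than in $\T$. Because $\T$ is not neat, the construction of Lemma~\ref{lem:neat-td} performs at least one improvement step (reattach a subtree that has no edge to its grandparent higher up), and each such step strictly decreases the depth of some vertex and never increases any depth; so some vertex has strictly smaller depth in $\T'$ than in $\T$. Let $D_i(\T)$ and $D_i(\T')$ denote the sets of depth-$i$ vertices in $\T$ and $\T'$. Since $V(\T)=V(\T')$, the comparison under $\preceq_2$ is decided at the smallest $i_0$ with $D_{i_0}(\T)\neq D_{i_0}(\T')$, and this $i_0$ exists by the strict depth decrease. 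A short argument by induction on $j<i_0$ (using $D_j(\T)=D_j(\T')$ together with the fact that no depth increases from $\T$ to $\T'$) shows $D_{i_0}(\T)\subseteq D_{i_0}(\T')$; strict inclusion then forces $|D_{i_0}(\T')|>|D_{i_0}(\T)|$, so $D_{i_0}(\T')\prec_1 D_{i_0}(\T)$, hence $\T'\prec_2 \T$ and $(\T',X,\Sol)\prec (\T,X,\Sol)$, again contradicting minimality.

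The main obstacle, if any, is the bookkeeping in the neatness step: one has to combine the definition of $\preceq_2$ (which favors larger sets coordinatewise, with lexicographic tie-breaking inside $\prec_1$) with the right monotonicity provided by Lemma~\ref{lem:neat-td}, and in particular observe that any ``moved'' vertex must first surface at a level $i_0$ where the depth-$i_0$ set strictly grows. The existence and maximality parts are direct consequences of the definitions once one notices that the first coordinate of the $\preceq_2$-tuple is $V(\T)$ and that $\prec_1$ prefers larger sets.
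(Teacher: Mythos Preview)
Your proof is correct. The existence and maximality parts match the paper's argument exactly. For neatness you take a slightly different route: the paper applies Lemma~\ref{lem:neat-td} only as a black box to obtain a neat $\T'$ with $V(\T')=V(\T)$ and non-increasing depths, deduces $\T'\preceq_2 \T$, uses $\preceq$-minimality to get $\T\preceq_2 \T'$ as well, and then invokes Lemma~\ref{lem:dp:neat-equal} (a neat structure is alone in its $\preceq_2$-equivalence class) to conclude $\T=\T'$, hence $\T$ is neat. You instead open up the proof of Lemma~\ref{lem:neat-td}, observe that if $\T$ is not neat then at least one improvement step is performed and so some depth strictly drops, and then argue directly that $\T'\prec_2 \T$ strictly (via the inclusion $D_{i_0}(\T)\subsetneq D_{i_0}(\T')$ at the first differing level), contradicting minimality. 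Your approach trades one auxiliary lemma (Lemma~\ref{lem:dp:neat-equal}) for reliance on the internals of another (Lemma~\ref{lem:neat-td}); both are short and valid, and the bookkeeping you flag is indeed routine.
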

\begin{proof}
For the first claim, any depth-$d$ elimination forest of $G[\Sol]$ can serve as $\T$.
For the second claim, fix a $\preceq$-minimal partial solution $(\T,X,\Sol)$. Since the first comparison is on the sizes of $V(\T)$, we have that $\T$ is maximal.

Now, by Lemma~\ref{lem:neat-td}, there exists a neat tree-depth-$d$ structure $\T'$ such that $V(\T') = V(\T)$ and, for each $v \in V(\T)$, the depth of $v$ in $\T'$ is at most the depth of $v$ in $\T$. Thus $\T' \preceq_2 \T$. So, since $(\T',X,\Sol)$ is not better than $(\T,X,\Sol)$, we also have that $\T \preceq_2 \T'$. Since $\T'$ is neat, Lemma~\ref{lem:dp:neat-equal} says that $\T' = \T$. So $\T$ is neat, as desired.
\end{proof}

It is convenient to conclude this subsection by defining extensions of partial solutions. Roughly, an ``extension'' of a partial solution $(\T, X, \Sol)$ in a graph $G$ is any partial solution $(\T', X', \Sol')$ that can be obtained from $(\T, X, \Sol)$ by adding new vertices which are not ancestors of any node of $\T$. More formally, $(\T', X', \Sol')$ is an \emph{extension} of $(\T, X, \Sol)$ if $\T$ is an induced subgraph of $\T'$, every root of $\T$ is a root of $\T'$, and $X' \cap \T = X$ and $\Sol' \cap \T = \Sol$. We define extensions of tree-depth-$d$ structures analogously, omitting $X$ and $\Sol$.

We will use the following properties of extensions.

\begin{lemma}
\label{lem:extensionsTDS}
Let $G$ be a graph, $d$ be an integer, and $\T$ and $\T'$ be tree-depth-$d$ structures in $G$ such that $\T'$ is neat and extends $\T$. Then each connected component of $\T'- V(\T)$ is neat and induces a connected subgraph of $G$.
\end{lemma}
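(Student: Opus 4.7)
The plan is to isolate a structural fact about how $\T'$ sits on top of $\T$, and then derive both conclusions from the neatness of $\T'$. Specifically, I will show that $V(\T)$ is \emph{upward closed} in $\T'$ (every $\T'$-ancestor of a vertex of $V(\T)$ is itself in $V(\T)$), and deduce that each component $C$ of $\T'-V(\T)$ is a full subtree of $\T'$ rooted at some vertex $r_C$. The main obstacle is this first claim, because it is the only point where one has to use simultaneously that $\T$ is an induced subgraph of $\T'$ and that roots of $\T$ remain roots of $\T'$; once it is in place, the rest follows formally.

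To prove upward closure, fix $v \in V(\T)$ and let $v = v_0, v_1, \ldots, v_k$ be the unique path from $v$ to its root in $\T$. Each edge $v_iv_{i+1}$ belongs to $\T$, hence to $\T'$ since $\T$ is an induced subgraph of $\T'$. By assumption $v_k$ is a root of $\T$ and therefore of $\T'$, so inside the forest $\T'$ the concatenation $v_0v_1\ldots v_k$ is already the unique path from $v$ to a root of its $\T'$-component. Consequently the $\T'$-ancestors of $v$ are exactly $v_1, \ldots, v_k$, all of which lie in $V(\T)$.

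Fix now a component $C$ of $\T' - V(\T)$. For any $u \in V(C)$, upward closure prevents any $\T'$-descendant of $u$ from lying in $V(\T)$, since otherwise $u \notin V(\T)$ would be a $\T'$-ancestor of a vertex of $V(\T)$. Hence the entire $\T'$-subtree rooted at $u$ is disjoint from $V(\T)$ and, being $\T'$-connected, is contained in $V(C)$. Picking $r_C \in V(C)$ of minimum $\T'$-depth, a standard least-common-ancestor argument inside the forest $\T'$ shows that every vertex of $V(C)$ is a $\T'$-descendant of $r_C$, so $V(C)$ coincides with the $\T'$-subtree rooted at $r_C$.

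With this subtree description both conclusions fall out immediately. For neatness of $C$, given any non-root $v$ of $C$, its $\T'$-parent $p$ is a proper $\T'$-descendant of $r_C$ and thus in $V(C)$, so it is also the parent of $v$ inside $C$; neatness of $\T'$ supplies a $G$-neighbor of $p$ among the $\T'$-descendants of $v$, and these descendants all lie in $V(C)$ by the previous paragraph. For connectivity of $G[V(C)]$, I will invoke the equivalent reformulation of neatness stated just before Lemma~\ref{lem:neat-td}: for every vertex of $\T'$, its $\T'$-descendants induce a connected subgraph of $G$. Applied to $r_C$, this yields precisely $G[V(C)]$.
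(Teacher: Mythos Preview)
Your proof is correct and follows the same approach as the paper's (very terse) argument: both establish that each component of $\T'-V(\T)$ is the full $\T'$-subtree rooted at some vertex, and then deduce neatness and $G$-connectivity from neatness of $\T'$. One harmless slip: when $v$ is a child of $r_C$, its $\T'$-parent $p$ equals $r_C$ itself rather than being a \emph{proper} descendant of $r_C$, but the needed conclusion $p\in V(C)$ holds regardless.
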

\begin{proof}Each connected component of $\T'- V(\T)$ is obtained by selecting a vertex $v \in V(\T')$ and then taking all descendants of $v$ in $\T'$ (including $v$ itself). Any such subtree of $\T'$ is neat. For the second part, we observe that any neat tree-depth-$d$ structure with just one root induces a connected subgraph of $G$.
\end{proof}

\subsection{Threshold automata}
\label{subsec:threshold}
Next, we introduce {\em{threshold automata}}, which capture through an abstract notion of a computation device, the idea of processing a labelled forest in a bottom-up manner using a dynamic programming procedure.
As we will comment on, the design of this automata model follows standard constructions that were developed in the 90s.

 We need to introduce some notation before stating the main definitions. For a finite alphabet $\Sigma$, a {\em{$\Sigma$-labelled forest}} is a rooted forest $F$ where every vertex $x\in V(F)$ is labelled with an element $\lbl(x)\in\Sigma$. Similarly, given an unlabelled rooted forest $F$, we call any function $\lbl$ from $V(F)$ to $\Sigma$ a \textit{$\Sigma$-labelling} of $F$.

We use the notation $\lmset\cdot\rmset$ for defining multisets. For a multiset $X$ and an integer $\tau\in \N$, let $X\wedge \tau$ be the multiset obtained from $X$ by the following operation: for every element $e$ whose multiplicity $k$ is larger than $2\tau$, we reduce its multiplicity to the unique integer in $\{\tau+1,\ldots,2\tau\}$ with the same residue as $k$ modulo $\tau$ (that is, we reduce it to $k - \tau \lfloor \frac{k - \tau - 1}{\tau} \rfloor$). This definition lets us track at the same time the residue modulo $\tau$ of the multiplicity as well as whether the multiplicity is greater than $\tau$ or not. For a finite set $Q$ and an integer $\tau\in \N$, we write $\Multi(Q,\tau)$ for the family of all multisets with elements from $Q$, where each element appears at most $2\tau$ times. Note that $|\Multi(Q,\tau)|=(2\tau+1)^{|Q|}$.

Informally, a threshold automaton is run bottom-up on a $\Sigma$-labelled forest $F$. As it runs, it assigns each vertex of $F$ a state from a finite set $Q$. The state of the next vertex $v \in V(F)$ depends only on $\lbl(v)$ and the ``reduced'' multiset $X \wedge \tau$, where $X$ denotes the multiset of the states of all children of $v$. The accepting condition is similarly determined by ``reducing'' the multiset of the states of the roots. The formal definition is as follows.

\begin{definition}
 A {\em{threshold automaton}} is a tuple $\Aa=(Q,\Sigma,\tau,\delta,C)$, where:
 \begin{itemize}[nosep]
  \item $Q$ is a finite set of states;
  \item $\Sigma$ is a finite alphabet;
  \item $\tau\in \N$ is a nonnegative integer called the {\em{threshold}};
  \item $\delta\colon \Sigma\times \Multi(Q,\tau)\to Q$ is the transition function; and
  \item $C\subseteq \Multi(Q,\tau)$ is the accepting condition.
 \end{itemize}
 For a $\Sigma$-labelled forest $F$, the {\em{run}} of $\Aa$ on $F$ is the unique labelling $\xi\colon V(F)\to Q$ satisfying the following property for each $x\in V(F)$:
 $$\xi(x)=\delta(\lbl(x),\lmset\xi(y)\colon y\textrm{ is a child of }x\rmset\wedge \tau).$$
 We say that $\Aa$ {\em{accepts}} $F$ if
 $$\lmset\xi(z)\colon z\textrm{ is a root of }F\rmset\wedge \tau \in C,$$
 where $\xi$ is the run of $\Aa$ on $F$.
\end{definition}

It turns out that threshold automata precisely characterize the expressive power of \cmso over labelled forests. Here, we consider the standard encoding of $\Sigma$-labelled forests as relational structures using one binary parent relation and $|\Sigma|$ unary relations selecting nodes with corresponding labels. Consequently, by \cmso over $\Sigma$-labelled forests we mean the logic in which
\begin{itemize}
    \item there are variables for single nodes and for node sets,
    \item in atomic formulas one can check equality, membership, modular counting predicates, parent relation, and labels of single nodes, and
    \item larger formulas can be obtained from atomic ones using standard boolean connectives, negation, and both universal and existential quantification over both sorts of variables.
\end{itemize}
The proof of the next statement is standard,
see for instance \cite[Theorem 5.3]{Courcelle90} for a proof in somewhat different terminology and~\cite[Section~7.6]{Libkin04} for the closely related settings of binary trees and ordered, unranked trees (the proof techniques immediately lift to our setting).
Hence, we only provide a sketch.

\begin{lemma}\label{lem:mso-threshold}
 Let $\Sigma$ be a finite alphabet. Then for every sentence $\varphi$ of \cmso over $\Sigma$-labelled forests, there exists a threshold automaton $\Aa$ with alphabet $\Sigma$ such that for any $\Sigma$-labelled forest $F$, we have $F\models \varphi$ if and only if $\Aa$ accepts $F$.
\end{lemma}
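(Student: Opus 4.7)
The plan is to proceed by structural induction on $\varphi$, following the standard Doner--Thatcher--Wright style argument adapted to labelled forests with CMSO. The main adaptation is to allow formulas with free variables: given a sentence $\varphi$, we first transform it into an equivalent formula over a richer alphabet in which single-element variables are replaced by singleton set variables, and then we work with formulas $\psi(X_1,\ldots,X_r)$ whose free set variables $X_1,\ldots,X_r$ are encoded by expanding the alphabet to $\Sigma\times\{0,1\}^r$. Under this encoding, a $\Sigma$-labelled forest together with an assignment to $X_1,\ldots,X_r$ becomes a $(\Sigma\times\{0,1\}^r)$-labelled forest, and our task is to build a threshold automaton $\Aa_\psi$ accepting exactly those expanded forests on which $\psi$ holds.

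The base cases are straightforward: for each atomic formula one builds an automaton whose states carry a bounded amount of local information (whether a relevant singleton has been seen, which label is present, membership bits), together with one special state per atomic formula indicating ``so far consistent with truth.'' The only base case that requires the threshold mechanism is the modular counting predicate $|X|\equiv a\bmod m$, which can be verified by a two-state automaton whose threshold is set to $\tau=m$ and whose accepting condition inspects the residue encoded in the reduced multiset of root states. Closure under conjunction is handled by the usual product construction on states, alphabets staying the same and threshold being the common maximum. Because threshold automata as defined assign exactly one run to each input, they are deterministic, so closure under negation is immediate: simply replace the accepting set $C$ by $\Multi(Q,\tau)\setminus C$.

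The main obstacle is existential quantification $\exists X\, \psi(X,\ldots)$, since this corresponds to projecting away one $\{0,1\}$-coordinate of the alphabet, and projection typically introduces nondeterminism. I would handle this by a subset construction tailored to threshold automata: given an automaton $\Aa_\psi=(Q,\Sigma\times\{0,1\},\tau,\delta,C)$, construct $\Aa'=(2^Q,\Sigma,\tau',\delta',C')$, where a state $S\subseteq Q$ collected at a vertex $v$ represents all states that $\Aa_\psi$ could assign to $v$ across the different possible values of $X$ on the subtree rooted at $v$. The transition $\delta'$ at a vertex with label $a\in\Sigma$ and reduced children-multiset $M\in\Multi(2^Q,\tau')$ must therefore consider every multiset of representatives one from each child-set and every possible $\{0,1\}$-label at the current vertex, apply $\delta$ to each, and collect the results. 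The accepting condition $C'$ then checks whether some multiset of representatives of the root states lies in $C$ after reduction.

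The delicate point is that this construction must respect the ``reduce modulo $\tau$'' abstraction: if two vertices have children-multisets that agree after reduction at threshold $\tau$, then the sets of possible parent-states computed by $\delta'$ must coincide. This forces us to pick the new threshold $\tau'$ large enough that reducing at $\tau'$ and then applying $\delta$ child-by-child is equivalent to applying $\delta$ first and then reducing; a standard calculation shows that $\tau'=\tau\cdot|Q|$ suffices, because the only way reduction at $\tau$ loses information relevant for $\delta$ is through the overall residues modulo $\tau$, and each representative choice contributes at most $|Q|$ times to a single residue class. With this choice, $\Aa'$ is a well-defined threshold automaton accepting exactly the forests on which $\exists X\,\psi$ holds, closing the induction. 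A final cosmetic step applies this construction to eliminate all free variables of the original sentence, yielding the desired automaton $\Aa$ over $\Sigma$.
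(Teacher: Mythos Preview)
Your approach via structural induction on the formula (the Doner--Thatcher--Wright route) is a legitimate alternative to the paper's semantic approach. The paper fixes the rank $q$ of $\varphi$, takes as state space the finite set $\Tp^q$ of rank-$q$ \cmso types, and invokes compositionality of types (an Ehrenfeucht--Fra\"iss\'e / Feferman--Vaught argument) to argue directly that $\tp^q(F_x)$ is determined by $\lbl(x)$ together with the multiset of children's types reduced with threshold~$q$. No explicit determinisation step is needed.

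Your inductive argument is fine through negation and conjunction, but the treatment of existential quantification has a concrete gap: the claimed threshold $\tau'=\tau\cdot|Q|$ for the subset construction is too small. Take $\tau=1$, $Q=\{a,b\}$, and suppose every child carries the subset-state $\{a,b\}$. With $n$ children the achievable reduced $Q$-multisets (after $\wedge\,1$) are pairwise distinct for $n=0,1,2,3$ and coincide for all $n\ge 4$; in particular $n=3$ gives $\{(2,0),(2,1),(1,2),(0,2)\}$ while $n=5$ gives $\{(2,0),(2,1),(2,2),(1,2),(0,2)\}$. Yet with $\tau'=\tau|Q|=2$ both $n=3$ and $n=5$ reduce to the same value $3$ under $\wedge\,\tau'$, so your projected automaton cannot distinguish them and the transition $\delta'$ is ill-defined. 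A threshold of order $2\tau|Q|$ repairs this single-subset example, but once several subsets interact the required bound and its justification are more delicate than your one-line ``standard calculation'' suggests. The paper's type-based argument avoids this bookkeeping entirely, which is one reason it is the customary presentation for the counting extension of \mso.
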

\begin{proof}[Sketch of proof]
Let the {\em{rank}} of $\varphi$ be the product of the quantifier rank of $\varphi$ (that is, the maximum number of nested quantifiers in~$\varphi$) and the least common multiple of all moduli featured in modular predicates present in $\varphi$.
It is well-known that there is only a finite number of pairwise non-equivalent \cmso sentences over $\Sigma$-labelled forests with rank at most $q$. Let then $\Tp^q$ be the set containing one such sentence from each equivalence class. Then $\Tp^q$ is finite, and we may assume that $\varphi\in \Tp^q$.

Consider a $\Sigma$-labelled forest $F$.
For a vertex $x\in V(F)$, let $F_x$ be the subtree of $F$ induced by $x$ and all of its descendants. The {\em{$q$-type}} of $F_x$ is the set of all sentences from $\Tp^q$ which are satisfied in $F_x$, that is,
$$\tp^q(F_x)\coloneqq \{\,\psi\in \Tp^q\ \mid\ F_x\models \psi\,\}.$$
A standard argument using Ehrenfeucht-Fra\"isse games shows that $\tp^q(F_x)$ is uniquely determined by $\lbl(x)$ and the multiset $\lmset\tp^q(F_y)\colon y\textrm{ is a child of }x\rmset\wedge q$. Similarly, the type $\tp^q(F)$, defined analogously as above, is uniquely determined by the multiset $\lmset\tp^q(F_r)\colon r\textrm{ is a root of }F\rmset\wedge q$. This means that we may define a threshold automaton $\Aa$ with state set $\Tp^q$ and threshold $q$ so that $\Aa$ accepts $F$ if and only if $\varphi\in \tp^q(F)$, which is equivalent to $F\models\varphi$.
\end{proof}

We would like to use Lemma~\ref{lem:mso-threshold} in order to verify that a given solution $(\Sol,X)$ to $(\td \leq d,\phi)$-\textsc{MWIS} indeed is such that $G[\Sol]$ satisfies $\phi(X)$. For this, our dynamic programming tables will be indexed not only by partial solutions of the form $(\T,X,\Sol)$, but also by guesses on ``partial evaluation'' of $\phi$ that occurs outside of $V(\T)$; or more formally, by an appropriate multiset of states of a threshold automaton associated with $\phi$. For this, we need to understand how to run threshold automata on treedepth-$d$ structures rather than just labelled forest. This will be done in a standard way: by labelling the forest underlying a treedepth-$d$ structure $\T$ so that the labels encode $\T$. This idea is formalized in the next definition.

\begin{definition} Let $d$ be an integer and $\Sigma^0, \Sigma$ be finite alphabets. Then a \emph{$(d,\Sigma^0,\Sigma)$-labeller} is a polynomial-time algorithm $\Lambda$ that, given an $\Sigma^0$-annotated graph $(G, \lambda^0)$ with a partial solution $(\T, X, \Sol)$ for the $(\td \leq d,\phi)$-\textsc{MWIS} problem, computes a $\Sigma$-labelling of $\T$ such that for every $v \in V(\T)$, the label of $v$ depends only on:
\begin{itemize}[nosep]
    \item the label $\lambda^0(v)$,
    \item the integer $h \in \{1,2,\ldots, d\}$ such that $v$ has depth $h$ in $\T$,
    \item the set of all indices $i \in \{1,2,\ldots, h-1\}$ such that $v$ is adjacent, in $G$, to the unique ancestor of $v$ in $\T$ with depth $i$, and
    \item which of the sets $X$ and $\Sol$ contain $v$.
\end{itemize}
That is, if we run $\Lambda$ again on another $G'$ and $(\T', X', \Sol')$, then any vertex $v' \in V(\T')$ with the same properties from above as $v$ is labelled the same as $v$.
\end{definition}

When $\Lambda$ and $(G, \lambda^0)$ are clear from context, we write $\lbl_{(\T, X, \Sol)}$ for the $\Sigma$-labelling on $\T$ which is returned by running $\Lambda$ on $(G, \lambda^0)$ and $(\T, X, \Sol)$. A key aspect of this definition is that, if $(\T', X', \Sol')$ is a partial solution which extends $(\T, X, \Sol)$, then each vertex $v \in V(\T)$ satisfies $\lbl_{(\T, X, \Sol)}(v) = \lbl_{(\T', X', \Sol')}(v)$.

We are now ready to state the main proposition of this subsection.

\begin{restatable}{proposition}{propThresh}
\label{prop:mainThreshold}
Given a fixed $(\td \leq d,\phi)$-\textsc{MWIS} problem, where $\phi$ is a \cmsotwo formula over the signature of $\Sigma^0$-annotated graphs, there exists a finite alphabet $\Sigma$, a $(d,\Sigma^0,\Sigma)$-labeller $\Lambda$, and a threshold automaton $\Aa$ with alphabet $\Sigma$ such that for any partial solution $(\T, X, \Sol)$ in any $\Sigma^0$-annotated graph $G$, we have that $(\Sol, X)$ is feasible for $(\td \leq d,\phi)$-\textsc{MWIS} in $G$ if and only if $\Aa$ accepts the $\Sigma$-labelled forest obtained from $\T$ by equipping it with $\lbl_{(\T, X, \Sol)}$.
\end{restatable}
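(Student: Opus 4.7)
The plan is to design the labelling so that the $\Sigma$-labelled forest obtained from $(\T, X, \Sol)$ encodes the triple $(G[V(\T)], X, \Sol)$ up to vertex identities, and then to translate $\phi(X)$ (interpreted over $G[\Sol]$) into an equivalent \cmso sentence $\psi$ over $\Sigma$-labelled forests, to which Lemma~\ref{lem:mso-threshold} is applied. We let $\Sigma$ consist of all tuples $(h, A, x, s)$ with $h \in \{1, \ldots, d\}$, $A \subseteq \{1, \ldots, h-1\}$, and $x, s \in \{0, 1\}$, and let $\Lambda$ assign to $v \in V(\T)$ the tuple $(h_v, A_v, \mathbf{1}[v \in X], \mathbf{1}[v \in \Sol])$, where $h_v$ is the depth of $v$ in $\T$ and $A_v$ is the set of depths $i < h_v$ such that $v$ is adjacent in $G$ to its unique depth-$i$ ancestor. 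This matches the definition of a $(d, \Sigma)$-labeller and is polynomial-time computable. Since $\T$ is an elimination forest of $G[V(\T)]$, every edge of $G[V(\T)]$ joins two $\T$-comparable vertices, so the data $A_v$ stored at each $v$ recovers $E(G[V(\T)])$; combining with the $\Sol$-bits recovers $G[\Sol]$, and the $X$-bits recover the interpretation of the free variable.

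The main step is to build, by structural induction on $\phi$, a \cmso sentence $\psi$ over $\Sigma$-labelled forests of height at most $d$ such that a labelled forest $F$ satisfies $\psi$ exactly when the pair $(G[\Sol], X)$ decoded from $F$ as above satisfies $\phi(X)$. Single-vertex quantifiers of $\phi$ are relativized to vertices whose $\Sol$-bit is set. An edge of $G[\Sol]$ is encoded by a pair $(v, i)$ where $v$ is $\Sol$-labelled, $i \in A_v$, and the depth-$i$ ancestor of $v$ is also $\Sol$-labelled; since $d$ is fixed, a single-edge quantifier is replaced by one vertex quantifier together with a finite disjunction over the value of $i$, and incidence is definable because the ancestor-at-depth-$i$ relation is definable in \cmso on forests of bounded height. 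Vertex-set quantifiers are relativized to subsets of $\Sol$-labelled vertices. For edge-set quantifiers we use the bounded-depth trick: an edge set $F \subseteq E(G[\Sol])$ is represented by a tuple $(F_1, \ldots, F_{d-1})$ of vertex sets, where $v \in F_i$ means that $v$ and its depth-$i$ ancestor are both $\Sol$-labelled, $i \in A_v$, and the corresponding edge lies in $F$. Each edge-set quantifier is thereby replaced by $d-1$ suitably guarded vertex-set quantifiers. Modular predicates on vertex sets translate directly; for a predicate $|F| \equiv a \bmod m$ on an edge set, we guess by finite disjunction residues $r_1, \ldots, r_{d-1} \in \{0, \ldots, m-1\}$, assert each $|F_i| \equiv r_i \bmod m$ via a plain \cmso modular predicate, and check that $r_1 + \cdots + r_{d-1} \equiv a \bmod m$.

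Given $\psi$, Lemma~\ref{lem:mso-threshold} yields a threshold automaton $\Aa$ over $\Sigma$ that accepts a $\Sigma$-labelled forest if and only if it models $\psi$, so acceptance by $\Aa$ of the labelled $\T$ produced by $\Lambda$ is equivalent to $G[\Sol] \models \phi(X)$. The remaining conditions for $(\Sol, X)$ to be feasible, namely $X \subseteq \Sol \subseteq V(G)$ and $\td(G[\Sol]) \leq d$, hold automatically from $(\T, X, \Sol)$ being a partial solution, since restricting $\T$ to $\Sol$ is an elimination forest of $G[\Sol]$ of height at most $d$; this completes the claimed equivalence. The principal obstacle we anticipate is the \cmsotwo-to-\cmso translation, specifically the handling of modular counting on edge sets: the decomposition $|F| = \sum_{i=1}^{d-1} |F_i|$ must be exploited uniformly, and all the guards involving $A_v$ and the $\Sol$-bits must be threaded through the inductive construction consistently.
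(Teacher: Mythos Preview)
Your proposal is correct and follows essentially the same approach as the paper: the alphabet and labeller you define are exactly those of Lemma~\ref{lem:mso}, and your syntactic translation (single edges via the deeper endpoint and an ancestor depth, edge sets via $d-1$ vertex sets indexed by the shallower endpoint's depth) matches the paper's transformation, after which both proofs invoke Lemma~\ref{lem:mso-threshold}. The only organizational difference is that the paper factors the argument through an intermediate formula over $G[\T]$ (Lemma~\ref{lem:dp:rewrite-phi}) before translating to labelled forests, whereas you relativize to $\Sol$-labelled vertices directly inside the translation; you also spell out the handling of modular predicates on edge sets, which the paper leaves implicit.
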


We first prove several lemmas, and then we prove Proposition~\ref{prop:mainThreshold} by combining them. It is straightforward to rewrite formulas to obtain the following lemma.

\begin{lemma}\label{lem:dp:rewrite-phi}
For any $d \in \N$, finite set $\Sigma^0$, and \cmsotwo formula $\phi$ over the signature of $\Sigma^0$-annotated graphs with one free vertex set variable, there exists a \cmsotwo formula $\varphi'$ over the signature of $\Sigma^0$-annotated graphs with two free vertex set variables such that for any partial solution $(\T,X,\Sol)$ in any $\Sigma^0$-annotated graph $G$, we have that $(\Sol,X)$ is feasible for $(\td \leq d,\phi)$-\textsc{MWIS} in $G$ if and only if $G[\T] \models \varphi'(X, \Sol)$.
\end{lemma}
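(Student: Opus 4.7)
The plan is to take $\varphi(X,\Sol)$ to be the standard syntactic relativization of $\phi$ to the vertex set $\Sol$. First I would observe that since $(\T,X,\Sol)$ is a partial solution we already have $X \subseteq \Sol \subseteq V(\T)$ by definition, and because $\T$ is a tree-depth-$d$ structure in $G$ and treedepth is closed under taking induced subgraphs, $G[\Sol]$ automatically has treedepth at most $d$. Hence the only non-trivial feasibility condition for $(\Sol,X)$ in $(\td \leq d,\phi)$-\textsc{MWIS} is that $G[\Sol]\models \phi(X)$, and the goal reduces to expressing this as a \cmsotwo formula evaluated over $G[\T]$ with $X$ and $\Sol$ as free vertex set variables.

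To do so, I would define by induction on formulas an operator $\psi \mapsto \psi^{\Sol}$ sending each \cmsotwo formula to a \cmsotwo formula with one additional free vertex set variable $\Sol$. Atomic formulas (equality, membership, incidence, modular counting predicates) are left unchanged; Boolean connectives and negation commute with the operator; a vertex quantifier $\exists v.\,\psi'$ becomes $\exists v.\,(v\in \Sol \wedge \psi'^{\Sol})$; a vertex-set quantifier $\exists Y.\,\psi'$ becomes $\exists Y.\,((\forall v.\,v\in Y \to v\in \Sol) \wedge \psi'^{\Sol})$; an edge quantifier $\exists e.\,\psi'$ becomes $\exists e.\,((\forall w.\,\inc(w,e) \to w\in \Sol) \wedge \psi'^{\Sol})$; an edge-set quantifier is treated analogously, demanding that every edge in the quantified set has both endpoints in $\Sol$; universal quantifiers are handled dually. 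Modular counting predicates need no modification, since the enclosing quantifier has already been relativized to ensure the corresponding set lies inside $\Sol$ or the edge set of $G[\Sol]$.

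A straightforward induction on the structure of $\psi$ then yields the following semantic guarantee: for any graph $H$, any $S \subseteq V(H)$, and any valuation of the free variables of $\psi$ by objects lying inside $S$ (vertices from $S$, edges with both endpoints in $S$, and sets thereof), we have $H \models \psi^{S}$ if and only if $H[S] \models \psi$. Applying this with $\psi = \phi$, $H = G[\T]$, and $S = \Sol \subseteq V(\T)$, and using that $G[\T][\Sol] = G[\Sol]$, we obtain $G[\T] \models \phi^{\Sol}(X) \Leftrightarrow G[\Sol] \models \phi(X)$. Setting $\varphi(X,\Sol) \coloneqq \phi^{\Sol}(X)$ completes the construction. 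There is no real obstacle here: the argument is a routine syntactic manipulation familiar from MSO-interpretations, and the only minor care needed is in relativizing edge and edge-set quantifiers so that only edges internal to $G[\Sol]$ are considered.
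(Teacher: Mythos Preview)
Your proof is correct and is exactly the kind of routine syntactic relativization the paper has in mind; the paper itself gives no details beyond calling the rewriting ``straightforward.'' Your observation that the treedepth bound on $G[\Sol]$ and the containment $X\subseteq\Sol$ are automatic from the definition of a partial solution is precisely why the only content left to encode in $\varphi$ is $G[\Sol]\models\phi(X)$, which your relativization handles.
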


We now show how to obtain an alphabet $\Sigma$ and a $(d, \Sigma^0, \Sigma)$-labeller which lets us get rid of the graph entirely. That is, we will reduce the given sentence to a sentence in \cmso over a $\Sigma$-labelled forest.

\begin{lemma}\label{lem:mso}
 For any $d \in \N$ and finite set $\Sigma^0$, there exist a finite alphabet $\Sigma$ and a $(d,\Sigma^0,\Sigma)$-labeller $\Lambda$ so that the following holds. For any formula $\varphi$ of \cmsotwo over $\Sigma^0$-annotated graphs with two free vertex set variables, there exists a sentence $\wh{\varphi}$ of \cmso over $\Sigma$-labelled forests such that for any partial solution $(\T, X, \Sol)$ in any $\Sigma^0$-annotated graph $G$,
 $$G[\T] \models \varphi(X, \Sol)\qquad\textrm{if and only if}\qquad \wh{\T} \models \wh{\varphi},$$
 where $\wh{\T}$ is the $\Sigma$-labelled forest obtained from $\T$ by equipping it with $\lbl_{(\T, X, \Sol)}$.
\end{lemma}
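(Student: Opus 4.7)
The plan is to choose a finite alphabet $\Sigma$ so that the labelling produced by $\Lambda$ determines $G[V(\T)]$ together with $X$ and $\Sol$, and then translate $\varphi$ via a standard \cmso-interpretation. Concretely, let $\Sigma$ be the finite set of triples $(h, A, B)$ with $h\in\{1,\ldots,d\}$, $A\subseteq\{1,\ldots,h-1\}$, and $B\subseteq\{X,\Sol\}$; the labeller $\Lambda$ assigns to $v \in V(\T)$ the label $(h, A, B)$ in which $h$ is the depth of $v$, $A$ lists the depths $i<h$ at which $v$ is adjacent in $G$ to its unique depth-$i$ ancestor in $\T$, and $B$ records whether $v$ belongs to $X$ and to $\Sol$. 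Since $\T$ is an elimination forest of $G[V(\T)]$, every edge of $G[V(\T)]$ is recorded exactly once in this labelling, at the descendant endpoint, so $G[V(\T)]$, $X$, and $\Sol$ become definable over $\wh{\T}$ in \cmso: depth is read off the label; the ancestor predicate is the standard MSO-expressible reflexive-transitive closure of parent, namely $\mathrm{anc}(u,v)\Leftrightarrow \forall S\,[v\in S \wedge \forall w(w\in S \Rightarrow \mathrm{parent}(w)\in S)] \Rightarrow u\in S$; adjacency in $G[V(\T)]$ says that two vertices are $\T$-comparable and the descendant's label contains the ancestor's depth; and $X$, $\Sol$ are selected by their label markers.

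Given this setup, $\wh{\varphi}$ is obtained from $\varphi$ by structural induction. Atomic vertex-sort predicates translate directly, and occurrences of the free variables $X$ and $\Sol$ are replaced by references to the corresponding label-definable sets. The main technical step is handling the edge sort of \cmsotwo: since every edge of $G[V(\T)]$ has a unique $\T$-descendant endpoint, a single-edge variable $e$ is encoded by a pair $(v, i)$, where $v$ is a single-vertex variable and $i \in \{1,\ldots,d-1\}$ is a constant, meaning ``the edge between $v$ and its depth-$i$ ancestor,'' under the side condition that $v$ has depth $>i$ and its label lists $i$ in its second component. A quantifier $\exists e\,\psi$ is translated into the finite disjunction $\bigvee_{i=1}^{d-1} \exists v\,(\text{side condition}\wedge \psi')$, where $\psi'$ comes from the inductive translation of $\psi$. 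Analogously, an edge-set variable $F$ is encoded as a tuple $(F_1,\ldots,F_{d-1})$ of vertex-set variables, with the analogous side condition on membership in each $F_i$; the atoms $e \in F$ and $\inc(x, e)$ translate to $v \in F_i$ and to ``$x = v$, or $x$ is the depth-$i$ ancestor of $v$,'' respectively. A modular-counting atom $|F| \equiv a \pmod{m}$ is rewritten as a finite disjunction, over all tuples $(r_1, \ldots, r_{d-1}) \in \{0,\ldots,m-1\}^{d-1}$ with $\sum_{i} r_i \equiv a \pmod{m}$, of the conjunction $\bigwedge_{i=1}^{d-1} |F_i| \equiv r_i \pmod{m}$, each conjunct being a primitive \cmso modular-counting atom on a vertex set.

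The main obstacle is careful bookkeeping through the induction: the side conditions attached to each edge encoding must be propagated consistently under nested Boolean connectives and quantifiers, and the reduction of edge-set modular counting to per-component vertex-set modular counting must commute with edge-set quantifiers; in particular, the decomposition $|F| = \sum_i |F_i|$ must be respected everywhere the edge-set $F$ appears under counting. Once this is verified, a routine structural induction establishes the equivalence $G[V(\T)] \models \varphi(X, \Sol) \iff \wh{\T} \models \wh{\varphi}$ for every partial solution $(\T, X, \Sol)$ in every graph $G$, as required.
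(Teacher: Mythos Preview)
Your proposal is correct and follows essentially the same approach as the paper: both choose a label alphabet recording depth, adjacency-to-ancestors, and $X/\Sol$ membership, and both translate edge-set quantification into a $(d-1)$-tuple of vertex-set variables indexed by the depth of the shallower endpoint. Your treatment is in fact slightly more thorough, since you spell out explicitly how to handle modular-counting atoms $|F|\equiv a\pmod m$ on edge-set variables via a disjunction over residue tuples, a point the paper leaves implicit; the only cosmetic difference is that the paper encodes a single edge variable as a pair of vertex variables $(x,y)$ rather than your $(v,i)$ pair, but both encodings are standard and equivalent.
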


\begin{proof}
 We let $\Sigma=\Sigma^0 \times \{1,\ldots,d\}\times \{0,1\}^{\{1,\ldots,d\}} \times \{0,1\}^{2}$, where the third coordinate is treated as a function from $\{1,\ldots,d\}$ to $\{0,1\}$; note that $|\Sigma|=|\Sigma^0| \cdot d\cdot 2^{d+2}$.

 Consider a graph $G$ and a partial solution $(\T, X, \Sol)$ in $G$. We now define the $(d,\Sigma^0,\Sigma)$-labeller $\Lambda$. Consider any $x\in V(\T)$. Let $h$ be the depth of $x$ in $\T$. Let $f$ be the function from $\{1,\ldots,d\}$ to $\{0,1\}$ defined as follows: for $i\geq h$ we set $f(i)=0$, and for $i<h$ we set $f(i)=1$ if and only if $x$ is adjacent to the unique ancestor of $x$ in $\T$ that has depth $i$.
 Let $\mathbbm{1}_X$ and $\mathbbm{1}_\Sol$ be equal the value $1$ if $v$ is in $X$ or $\Sol$, respectively, and $0$ otherwise. Then we set
 $$\lbl(x)\coloneqq (\lambda^0(x),h,f,\mathbbm{1}_X, \mathbbm{1}_\Sol).$$
 Note that this labelling function can be computed from $G[\T]$ and $(\T, X, \Sol)$ in polynomial time. Moreover, this algorithm is a $(d,\Sigma^0,\Sigma)$-labeller. Let $\wh{\T}$ denote the $\Sigma$-labelled forest obtained from $\T$ by equipping it with this labelling.

 We now apply the following syntactic transformation to $\varphi$ in order to obtain a sentence $\wh{\varphi}$ of \cmso over $\Sigma$-labelled forests.
 \begin{itemize}
  \item For every quantification over an edge $e$, replace it with a quantification over the pair $x,y$ of its endpoints, followed by a check that $x$ and $y$ are indeed adjacent. Since the depth of $\T$ is at most~$d$, which is a constant, this check can be performed using a first-order formula as follows: verify that $x$ and $y$ are in the ancestor-descendant relation in $\wh{\T}$, retrieve the depth of $x$ and $y$ in $\wh{\T}$ from their labels, and check that the label of the deeper of those two nodes contains information that the shallower one is adjacent to it.
  \item Replace each atom expressing that a vertex $z$ is incident to an edge $e$ by a disjunction checking that $z$ is one of the endpoints of~$e$.
  \item For every quantification over an edge set, say $\exists\,Y$, replace it with quantification of the form $\exists\,Y_1\ \exists\,Y_2\ \ldots\ \exists\,Y_{d-1}$, where $Y_i$ is interpreted as the set of all the deeper endpoints of those edges from~$Y$ whose shallower endpoint has depth $i$. This quantification is followed by checking that for each $x\in Y_i$, indeed $x$ is adjacent to its unique ancestor at depth $i$; this information is encoded in the label of $x$.
  \item Replace each atom $e\in Y$, where $e$ is an edge variable and $Y$ is an edge set variable, with a disjunction over $i\in \{1,\ldots,d\}$ of the following checks: denoting the endpoints of $e$ by $x$ and $y$, either $x$ is at depth $i$ and $y\in Y_i$, or vice~versa.
  \item Replace each check $\lambda^0(x) = \sigma$, $x \in X$, or $x \in \Sol$ with the corresponding check of the first, fourth, or fifth coordinate of the label of $x$.
 \end{itemize}
 It is straightforward to see that the sentence $\wh{\varphi}$ obtained in this manner satisfies the desired property. This completes the proof of Lemma~\ref{lem:mso}.
\end{proof}

We complete this section by proving Proposition~\ref{prop:mainThreshold}, which is restated below for convenience.

\propThresh*
\begin{proof}
Fix $d$, $\Sigma^0$, and $\phi$.
By Lemma~\ref{lem:dp:rewrite-phi}, there exists a \cmsotwo formula $\varphi'$ over the signature of $\Sigma^0$-annotated graphs with two free vertex set variables such that for any partial solution $(\T,X,\Sol)$ in any graph $G$, we have that $(\Sol,X)$ is feasible for $(\td \leq d,\phi)$-\textsc{MWIS} in $G$ if and only if $G[\T] \models \varphi(X, \Sol)$.
By Lemma~\ref{lem:mso}, there exist a finite alphabet $\Sigma$, a $(d,\Sigma^0,\Sigma)$-labeller $\Lambda$, and a sentence $\wh{\varphi}$ of \cmso over $\Sigma$-labelled forests such that for any partial solution $(\T, X, \Sol)$ in any $\Sigma^0$-annotated graph $G$,
 \[G[\T] \models \varphi'(X, \Sol)\qquad\textrm{if and only if}\qquad \wh{\T} \models \wh{\varphi},\]
 where $\wh{\T}$ is the $\Sigma$-labelled forest obtained from $\T$ by equipping it with $\lbl_{(\T, X, \Sol)}$.

 Finally, by Lemma~\ref{lem:mso-threshold}, there exists a threshold automaton $\Aa$ with alphabet $\Sigma$ such that for any $\Sigma$-labelled forest $F$, we have $F\models \wh{\varphi}$ if and only if $\Aa$ accepts $F$. Proposition~\ref{prop:mainThreshold} follows.
\end{proof}

\subsection{The algorithm}
Fix an integer $d$, a finite set $\Sigma^0$, and a \cmsotwo formula $\phi$ over the signature of $\Sigma^0$-annotated graphs. By Proposition~\ref{prop:mainThreshold}, there exists a finite alphabet $\Sigma$, a $(d,\Sigma^0,\Sigma)$-labeller $\Lambda$, and a threshold automaton $\Aa= (Q_\Aa,\Sigma, \tau_\Aa, \delta_\Aa, C_\Aa)$ such that for any partial solution $(\T, X, \Sol)$ in any $\Sigma^0$-annotated graph $(G, \lambda^0)$, we have that $(\Sol, X)$ is feasible for $(\td \leq d,\phi)$-\textsc{MWIS} in $(G, \lambda^0)$ if and only if $\Aa$ accepts the $\Sigma$-labelled forest obtained from $\T$ by equipping it with the labelling $\lbl_{(\T, X, \Sol)}$. The algorithm will make use of $\Sigma$, $\Lambda$, and $\Aa$.

For convenience, we say that a \emph{multistate assignment} of a rooted forest $F$ is any function $\xi:\{\emptyset\} \cup V(F) \to \Multi(Q_\Aa,\tau_\Aa)$. Consider a multistate assignment $\xi$ of a tree-depth-$d$ structure $\T$. Essentially, we use $\xi$ to specify the desired behavior of an extension of a partial solution $(\T, X, \Sol)$. In order to combine two extensions, sometimes we need to combine two multistate assignments $\xi_1$ and $\xi_2$ of a rooted forest $F$. So we write $\xi_1 \cup \xi_2$ for the multistate assignment of $F$ defined by setting $(\xi_1 \cup \xi_2)(v) \coloneqq (\xi_1(v) \cup \xi_2(v)) \wedge \tau_\Aa$ for each $v \in \{\emptyset\} \cup V(F)$.

Let $\C \subseteq 2^{V(G)}$ be a tree-depth-$d$ carver family of defect $k$ in $G$. A \emph{template} is a tuple $\sigma = (\T,X,\Sol, C, D, \xi)$ such that \begin{enumerate}
    \item $(\T, X, \Sol)$ is a partial solution in $G$,
    \item $C \in \C$,
    \item $D$ is a subset of $V(G)$ which is a union of zero or more components of $G-C$, and
    \item $\xi$ is a multistate assignment of $\T$.
\end{enumerate}
\noindent We say that $\sigma$ is \emph{simple} if $\T$ has at most $k$ leaves and $D$ is a component of $G-C$. A
\emph{(simple) pre-template} is a tuple $\alpha = (\T, X, \Sol, C)$ as in the definition of a (simple) template, except with $D$ and $\xi$ omitted. We say that a template $\sigma = (\T,X,\Sol, C, D, \xi)$ is \emph{over} the pre-template $(\T, X, \Sol, C)$.

The dynamic programming algorithm stores a table $M$ that has an entry $M[\sigma]$ for each simple template $\sigma$. We observe that the table has $\mathcal{O}(|\C|\cdot |V(G)|^{dk+1})$ entries, where the constant hidden in the big-$\mathcal{O}$ notation depends on $d$, $|\Sigma^0|$, $k$, and $\phi$. We initiate the value of each entry $M[\sigma]$ to a symbol $\bot$. As the algorithm proceeds, $M[\sigma]$ will be updated to contain a partial solution $(\T', X', \Sol')$ which is a ``valid extension'' (defined formally in the next paragraph) of $\sigma$. We only update $M[\sigma]$ when we discover a new valid extension better than the old one according to $\preceq$; we use the convention that every valid extension is better than $\bot$.

Now, let $\sigma = (\T,X,\Sol, C, D, \xi)$ be a template (which may or may not be simple). Then a \emph{valid extension} of $\sigma$ is any extension $(\T',X',\Sol')$ of $(\T, X, \Sol)$ such that $V(\T')\setminus V(\T) \subseteq D$ and, if $\zeta:V(\T') \to Q_{\mathcal{A}}$ denotes the run of $\mathcal{A}$ on the $\Sigma$-labelled forest obtained from $\T'$ by equipping it with $\lbl_{(\T', X', \Sol')}$, then
        \[ \lmset \zeta(z)~|~z\textrm{ is a root of }\T'\textrm{ but not of }\T \rmset \wedge \tau_\Aa = \xi(\emptyset), \]
         and, for every $v \in V(\T)$,
        \[ \lmset \zeta(z)~|~z \textrm{ is a child of }v \textrm{ in } \T' \textrm{ but not in }\T\rmset \wedge \tau_\Aa = \xi(v). \]
\noindent Note that if $z$ is a child of $v$ in $\T'$ but not in $\T$, then $z\notin V(\T)$ since, if it was, then it would have the same parent in $\T'$ and $\T$ by the definition of extensions.

The following observation about combining extensions is the crucial building block of the algorithm. To state the lemma, we need to know when we can combine two tree-depth-$d$ structures $\T$ and $\T'$ in a graph $G$. So we say that $\T$ and $\T'$ are \emph{compatible} if the sets $V(\T)\setminus V(\T')$ and $V(\T')\setminus V(\T)$ are anticomplete in $G$, and each vertex in $V(\T) \cap V(\T')$ has the same parent in $\T$ and $\T'$. (We think of the empty set as being the parent of a root; so in particular this means that every ancestor of a vertex in $V(\T) \cap V(\T')$ is also in $V(\T) \cap V(\T')$.) If $\T$ and $\T'$ are compatible, then there is a unique tree-depth-$d$ structure, which we denote by $\T \cup \T'$, such that:\begin{enumerate}
    \item the vertex set of $\T \cup \T'$ is $V(\T) \cup V(\T')$,
    \item each vertex in $V(\T)$ has the same parent in $\T \cup \T'$ and $\T$, and
    \item each vertex in $V(\T')$ has the same parent in $\T \cup \T'$ and $\T'$.
\end{enumerate}
\noindent Note that we can check if $\T$ and $\T'$ are compatible, and find $\T \cup \T'$ if they are, in polynomial time. Now we are ready to state the key lemma.

\begin{lemma}\label{lem:dp:basic-combine}
Let $\sigma_1 = (\T, X, \Sol, C, D_1, \xi_1)$ and $\sigma_2 = (\T, X, \Sol, C, D_2, \xi_2)$ be two templates over the same pre-template. Suppose that $D_1$ and $D_2$ are disjoint and that $(\T_i, X_i, \Sol_i)$ is a valid extension of $\sigma_i$ for $i=1,2$. Then $\T_1$ and $\T_2$ are compatible and $(\T_1 \cup \T_2, X_1 \cup X_2, \Sol_1 \cup \Sol_2)$ is a valid extension of $(\T, X, \Sol, C, D_1 \cup D_2, \xi_1 \cup \xi_2)$.

Moreover, if for $i=1,2$, $(\T_i', X_i', \Sol_i')$ is a valid extension of $\sigma_i$ which is not worse than $(\T_i, X_i, \Sol_i)$, then $(\T_1' \cup \T_2', X_1' \cup X_2', \Sol_1' \cup \Sol_2')$ is not worse than $(\T_1 \cup \T_2, X_1 \cup X_2, \Sol_1 \cup \Sol_2)$.
\end{lemma}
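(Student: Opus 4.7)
The plan is to split the proof into three stages: (i) show that the union of the two extensions forms a valid extension of the combined template; (ii) verify the multistate condition on the threshold-automaton run; and (iii) handle monotonicity separately.

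Set $A_i \coloneqq V(\T_i)\setminus V(\T)\subseteq D_i$. Since $D_1\cap D_2=\emptyset$ and $D_1,D_2$ are unions of distinct components of $G-C$, we have $A_1\cap A_2=\emptyset$ with no $G$-edges between $A_1$ and $A_2$. Thus $V(\T_1)\cap V(\T_2)=V(\T)$, and as both $\T_i$ extend $\T$ the parent of every $v\in V(\T)$ agrees in $\T_1$ and $\T_2$, so the two structures are compatible and $\T_1\cup\T_2$ is well-defined. For any edge $uv$ of $G[V(\T_1)\cup V(\T_2)]$, anticompleteness forces both endpoints into one $V(\T_i)$, where they are $\T_i$-comparable and hence $(\T_1\cup\T_2)$-comparable, making $\T_1\cup\T_2$ an elimination forest of height at most~$d$. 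The remaining properties (extending $(\T,X,\Sol)$, and new vertices lying in $D_1\cup D_2$) are immediate.

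For the multistate condition, note that $X_j\cap V(\T_i)=X_j\cap V(\T)=X$ for $j\neq i$, and similarly for $\Sol$; hence the $(d,\Sigma)$-labellings agree on each $V(\T_i)$. Letting $\xi'$ denote the run of $\Aa$ on the combined labelled forest and $\xi_i'$ the run on $(\T_i,X_i,\Sol_i)$, a bottom-up induction gives $\xi'(v)=\xi_i'(v)$ for every $v\in A_i$, since the subtree of $\T_1\cup\T_2$ rooted at $v$ coincides, as a $\Sigma$-labelled forest, with the corresponding subtree of $\T_i$. For each $v\in V(\T)$, its children in $\T_1\cup\T_2$ outside $V(\T)$ partition into its $A_1$-children (inherited from $\T_1$) and its $A_2$-children (inherited from $\T_2$), so the multiset of their $\xi'$-values is $M_1\cup M_2$, where $M_i\wedge\tau_\Aa=\xi_i(v)$ by validity of the $i$-th extension. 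The desired equality then reduces to the arithmetic identity
\[
(M_1\cup M_2)\wedge\tau_\Aa=\bigl((M_1\wedge\tau_\Aa)\cup(M_2\wedge\tau_\Aa)\bigr)\wedge\tau_\Aa,
\]
which follows by a short case analysis on each multiplicity pair (checking that $\wedge\tau_\Aa$ preserves both the residue modulo $\tau_\Aa$ and the indicator of exceeding $\tau_\Aa$). The root multiset at $\emptyset$ is handled identically.

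Finally, for monotonicity it suffices by transitivity to replace one side at a time: suppose $(\T_1',X_1',\Sol_1')$ is not worse than $(\T_1,X_1,\Sol_1)$, and keep $(\T_2,X_2,\Sol_2)$ fixed. The key observation is that the overlap between the two sides is pinned to $V(\T)$, on which $X$, $\Sol$, and every depth-layer of $\T$ are already fixed: $X_1'\cap X_2=X=X_1\cap X_2$, and likewise for $\Sol$ and for each depth-class of $\T$. Consequently $w(X_1'\cup X_2)-w(X_1\cup X_2)=w(X_1')-w(X_1)$, and both cardinalities and symmetric differences decompose cleanly. If any weight comparison is strict, the combined solution is strictly better; otherwise weights are equal and we cascade through the lex criteria. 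For any comparison of the form ``$S_1'\preceq_1 S_1$ with common fixed $V(\T)$-part'' (used for $X$, then for $\Sol$, then layer by layer in $\preceq_2$ on $\T$), either $|S_1'|>|S_1|$ (giving strict $\prec_1$ on the unions via cardinality) or the minimum-index element $v_j\in S_1'\triangle S_1$ lies outside $V(\T)$, hence outside $V(\T_2)$, so $v_j$ witnesses the same $\prec_1$ comparison for the unions. I expect the main technical obstacle to be organizing this cascading case analysis coherently across all four nested criteria of $\preceq$; but in every case the argument reduces to this single monotonicity lemma about $\prec_1$ under union with a fixed set, which the pinned-overlap observation makes routine.
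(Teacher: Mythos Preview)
Your proof is correct and follows essentially the same approach as the paper. For parts (i) and (ii) your argument matches the paper's almost exactly: disjoint anticomplete pieces, hence compatibility, and the run decomposes over the two sides, reducing to the identity $(M_1\cup M_2)\wedge\tau_\Aa=((M_1\wedge\tau_\Aa)\cup(M_2\wedge\tau_\Aa))\wedge\tau_\Aa$. For part (iii) the paper simply states that monotonicity ``follows immediately from the used total ordering of partial solutions''; you have supplied the details that the paper omits, via the pinned-overlap observation and the resulting preservation of each layer of $\preceq_1$ under union with a fixed set. This is not a different route, just a more explicit one.
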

\begin{proof}
Observe that since $D_1$ and $D_2$ are disjoint and each of them is a union of components of $G-C$, they are also anticomplete. So the sets $V(\T_1) \setminus V(\T)$ and $V(\T_2) \setminus V(\T)$ are also disjoint and anticomplete. So $V(\T_1) \cap V(\T_2) = V(\T)$ and, by the definition of extensions, it follows that $\T_1$ and $\T_2$ are compatible and that $(\T_1 \cup \T_2, X_1 \cup X_2, \Sol_1 \cup \Sol_2)$ is an extension of the partial solution $(\T, X, \Sol)$. It is also clear that $V(\T_1 \cup \T_2)\setminus V(\T)$ is a subset of $D_1 \cup D_2$.

Now it just remains to consider the run $\zeta$ of $\Aa$ on the $\Sigma$-labelled forest obtained from $\T_1 \cup \T_2$ by equipping it with the labelling $\lbl_{(\T_1 \cup \T_2, X_1 \cup X_2, \Sol_1 \cup \Sol_2)}$. For this, observe that every component of the graph $\T_1 \cup \T_2- V(\T)$ is either a component of $\T_1 - V(\T)$
or a component of $\T_2 - V(\T)$.
For $i=1,2$, the run $\zeta$ within each component of $\T_i - V(\T)$
depends only on this component. Consequently,
the function $\zeta$ gives the same state to each vertex in $V(\T_i) \setminus V(\T)$ as does the run of $\Aa$ on $\T_i$ and $\lbl_{(\T_i, X_i, \Sol_i)}$. The first part of the lemma now follows from the fact that, for any disjoint multisets $A$ and $B$ whose elements are in $Q_{\mathcal{A}}$, we have that $(A \cup B) \wedge \tau_{\mathcal{A}} = \left( (A \wedge \tau_{\mathcal{A}}) \cup (B \wedge \tau_{\mathcal{A}})\right) \wedge \tau_{\mathcal{A}}$.

The second part of the lemma follows immediately from the used total ordering of partial
solutions.
\end{proof}

\subsubsection{Subroutine}
Given as input a simple pre-template $(\T,X,\Sol, C)$ and a sequence $(D_i)_{i=1}^r$ of pairwise distinct components of $G-C$,
we define the following subroutine.
For each $j \in \{0,1,\ldots, r\}$, we set $D_{\leq j} \coloneqq \bigcup_{i=1}^j D_i$. (So $D_{\leq 0}$ is the empty set.)
The subroutine creates an auxiliary table $M'$ with an entry $M'[j,\xi]$ for every $j \in \{0,1,\ldots, r\}$ and every multistate assignment $\xi$ of $\T$.
Each entry $M'[j, \xi]$ will be either the symbol $\bot$, or a valid extension of the template $(\T, X, \Sol, C, D_{\leq j}, \xi)$.
Initially all cells are set to $\bot$.

For $j=0$, there is only one multistate assignment $\xi$ of $\T$ such that the template $(\T, X, \Sol, C, \emptyset, \xi)$ might have a valid extension, and that is the function $\xi \equiv \emptyset$. The unique valid extension is $(\T,X,\Sol)$; so we set $M'[0, \xi \equiv \emptyset] \coloneqq (\T,X,\Sol)$. Then, for $j=1,2,\ldots,r$, we fill the cells $M'[j,\cdot]$ as follows.
We iterate over all multistate assignments $\xi_<$ and $\xi_=$ of $\T$, and, if neither $M'[j-1, \xi_<]$ nor $M[(\T, X, \Sol, C, D_j, \xi_=)]$ is $\bot$, then we apply Lemma~\ref{lem:dp:basic-combine} to combine them into a valid extension $(\T',X',\Sol')$ of $(\T, X, \Sol, C, D_{\leq j}, \xi_< \cup \xi_=)$. If this extension is better than the previous value of $M'[j,\xi_< \cup \xi_=]$, then we set $M'[j,\xi_< \cup \xi_=] \coloneqq (\T',X',\Sol')$.
This finishes the description of the subroutine.

\subsubsection{Outline}
In a preliminary phase, the algorithm iterates over every simple template $\sigma = (\T, X, \Sol, C, D, \xi)$ such that $\xi \equiv \emptyset$. Then it sets $M[\sigma] \coloneqq (\T, X, \Sol)$; note that $(\T, X, \Sol)$ is a valid extension of $\sigma$ which is better than~$\bot$.

In the main phase, the algorithm performs $2|V(G)|$ loops. In each loop, it iterates over every simple template $\sigma = (\T, X, \Sol, C, D, \xi)$ and simple pre-template $\alpha_0 = (\T_0, X_0, \Sol_0, C_0)$ such that $\T$ and $\T_0$ are compatible, $X \cap \T\cap \T_0 = X_0 \cap \T \cap \T_0$, and $\Sol \cap \T \cap \T_0 = \Sol_0 \cap \T \cap \T_0$. The algorithm will try to find a valid extension of $\sigma$ which is better than $M[\sigma]$. The building blocks for constructing this valid extension of $\sigma$ will be the valid extensions $M[\sigma_0]$ where $\sigma_0$ is a simple template over $\alpha_0$. In fact we will be slightly more restrictive about which components of $G-C_0$ we are allowed to ``extend $\alpha_0$ into''.

We call a component $D_0$ of $G-C_0$ \textit{useless} if $\T\cup \T_0$ is a maximal tree-depth-$d$ structure in the subgraph of $G$ induced by $V(\T)\cup V(\T_0) \cup (D \cap D_0)$; we call $D_0$ \textit{useful} otherwise. Note that if $D_0$ is useful, then in particular $D \cap D_0$ is non-empty. We now execute the subroutine on the simple pre-template $\alpha_0$ and the useful components of $G-C_0$, ordered arbitrarily. (If there are no useful components, then we still execute the subroutine on the empty sequence.) The subroutine returns an array $M'$. Write $r$ for the number of useful components of $G-C_0$ and $U \subseteq V(G)$ for their union. Then iterate over all multistate functions $\xi_0$ of $\T_0$ such that $M'[r, \xi_0] \neq \bot$. Thus $M'[r, \xi_0]$ is a valid extension of $(\T_0, X_0, \Sol_0, U, \xi_0)$, which we denote by $(\T_0', X_0', \Sol_0')$.

Now, let $A$ denote the set of all vertices of $\T_0'$ which are an ancestor, in $\T_0'$, of at least one vertex in $D$. If $\T_0'[A]$ and $\T$ are compatible, and if the tuple $$(\T_0'[A] \cup \T, (X_0' \cap A) \cup X, (\Sol_0' \cap A) \cup \Sol)$$is a valid extension of $\sigma$, then update $M[\sigma]$ to the above if it is better than the previous value of $M[\sigma]$. This can be done in polynomial time.

After completing the main phase consisting of $2|V(G)|$ loops as above, the algorithm performs the following finalizing step, which is very similar to the above routine except without $\sigma$. So, for every simple pre-template $\alpha = (\T , X, \Sol, C)$, we execute the subroutine on $\alpha$ and the components of $G-C$ in an arbitrary order. The subroutine returns an array $M'$. Then, writing $r$ for the number of components of $G-C$, we iterate over all multistate functions $\xi$ of $\T$ such that $M'[r, \xi] \neq \bot$. We then check if the valid extension $M'[r, \xi]$ is a feasible solution to the problem.  That is, if $M'[r, \xi] = (\T',X',\Sol')$, we check whether $\mathcal{A}$ accepts the $\Sigma$-labelled forest obtained from $\T'$ by equipping it with the labelling $\lbl_{(\T', X', \Sol')}$. By Proposition~\ref{prop:mainThreshold}, this is equivalent to $(\Sol', X')$ being feasible for $(\td \leq d,\phi)$-\textsc{MWIS} in $G$. Finally, we return the best
solution found, or that there is no solution if none was found.

This concludes the description of the algorithm. Clearly, it runs in $\Oh(|\carvers|^2 \cdot |V(G)|^{2dk+\Oh(1)})$ time. It remains to prove correctness.

\subsection{Correctness}
We may assume that the $(\td \leq d,\phi)$-\textsc{MWIS} problem is feasible since the algorithm checks for feasibility before returning a solution. So there exists a partial solution $(\T, X, \Sol)$ which is $\preceq$-minimal among all partial solutions $(\T', X', \Sol')$ such that $(\Sol', X')$ is feasible for $(\td \leq d,\phi)$-\textsc{MWIS} in $G$. By Lemma~\ref{lem:dp:best-is-neat}, we have that $\T$ is maximal and neat, and $X$ has maximum possible weight among all feasible solution for $(\td \leq d,\phi)$-\textsc{MWIS} in $G$. By Lemma~\ref{lem:dp:neat-equal}, there is no other partial solution in the same equivalence class of $\preceq$ as $(\T, X, \Sol)$.

Since $\carvers$ is a tree-depth-$d$ carver family of defect $k$ in $G$, there exists a tree decomposition $(T,\beta)$ of $G$ as in Definition~\ref{def:carvers}.
That is, for each $t \in V(T)$, we can fix a set of vertices $C_{t} \in \carvers$ such that
\textit{(i)} $C_{t} \cap \T$ contains $\beta(t) \cap \T$ and has size at most $k$, and
\textit{(ii)} for each component $D$ of $G-C_{t}$, there exists a component $T'$ of $T-\{t\}$
such that $D$ is contained in $\beta(t) \cup \bigcup_{s \in T'} \beta(s)$.

We root $T$ in an arbitrary node. We argue that without loss of generality we can assume that
\begin{equation}\label{eq:T-bound}
\mathrm{height}(T) \leq 2|V(G)|.
\end{equation}
Indeed, it is straightforward to verify that one can perform the following operations on $(T,\beta)$ exhaustively:
\begin{enumerate}
\item if there exists a leaf $t$ of $T$ with a parent $t'$ such that $\beta(t) \subseteq \beta(t')$, delete $t$ from $(T,\beta)$;
\item if there exists a node $t$ of $T$ with only one child $t'$ such that $\beta(t) \subseteq \beta(t')$, contract the edge $tt'$ in $(T,\beta)$.
\end{enumerate}
If none of this operation is applicable to $(T,\beta)$, then $T$ has at most $|V(G)|$ leaves and for every edge between a parent $t$ and a child $t'$,
either $t$ has more than one children, or $\beta(t) \setminus \beta(t') \neq \emptyset$.
There are at most $|V(G)|-1$ vertices $t$ of the first type in $T$ and on every root-to-leaf path there are at most $|V(G)|$ edges of the second type.
The bound~\eqref{eq:T-bound} follows.

Consider now a fixed node $t \in V(T)$. We say that a \emph{child component of $t$} is any component of $G-C_t$ which is contained in the union of all bags $\beta(s)$ such that $s$ is a descendant of $t$ in $T$ (including $t$ itself). We define a partial solution $(\T_t, X_t, \Sol_t)$ corresponding to $t$ as follows. Let $\T_t$ be the subgraph of $\T$ induced by all vertices which are an ancestor of at least one vertex in $C_t$. Set $X_t \coloneqq X \cap \T_t$ and $\Sol_t \coloneqq \Sol \cap \T_t$. It is convenient to write $\alpha_t \coloneqq (\T_t, X_t, \Sol_t, C_t)$; so $\alpha_t$ is a simple pre-template. Finally, let $h_t$ be the height of $t$ in the subtree of $T$ rooted at $t$; so the leaves of $T$ have $h_t=1$, for instance.

We will show that after $h_t$ iterations of the algorithm, the following holds for each child component $D$ of $t$: there exists a multistate function $\xi_{t,D}$ of $\T_t$ such that $M[(\alpha_t, D, \xi_{t,D})]$ is precisely the partial solution ``induced by the ancestors of $D \cup C_t$ in $(\T, X, \Sol)$.'' This lemma, which is stated as Lemma~\ref{lem:dp:correct}, will essentially complete the proof. (After $|V(G)|$ rounds, we will consider the child components of the root node of $T$.) However, it is convenient to give some more definitions before stating the lemma.

So consider a fixed node $t \in V(T)$ and a fixed set $D \subseteq V(G)$ which is the union of zero or more components of $G-C_t$. First we define a partial solution $(\T_{t,D}, X_{t,D}, \Sol_{t,D})$ as follows. Let $\T_{t,D}$ be the subgraph of $\T$ induced by all vertices which are an ancestor of at least one vertex in $D\cup C_t$. Set $X_{t,D} \coloneqq X \cap \T_{t,D}$ and $\Sol_{t,D} \coloneqq \Sol \cap \T_{t,D}$. We note that $V(\T_{t,D})\setminus V(\T_t)$ is actually contained in~$D$. To see this, observe that by Lemma~\ref{lem:extensionsTDS}, since $\T$ is neat and extends $\T_t$, each component of $\T-V(\T_t)$ induces a connected subgraph of $G$. Therefore each component of $\T - V(\T_t)$ is either disjoint from or contained in~$D$.

Finally, let $\xi_{t,D}$ denote the multistate function of $\T_t$ defined as follows. If $\xi$ is the run of $\Aa$ on $\T_{t,D}$ equipped with $\lbl_{(\T_{t,D}, X_{t,D}, \Sol_{t,D})}$, then we set:
\[ \xi_{t,D}(\emptyset) \coloneqq \lmset \xi(z)~|~z \textrm{ is a root of } \T_{t,D} \textrm{ but not of } \T_t \rmset \wedge \tau_\Aa, \]
 and, for every $v \in V(\T_t)$,
        \[ \xi_{t,D}(v) = \lmset \xi(z)~|~z \textrm{ is a child of }v \textrm{ in }\T_{t,D}\textrm{ but not in }\T_t \rmset \wedge \tau_\Aa. \]
Notice that $(\T_{t,D}, X_{t,D}, \Sol_{t,D})$ is a valid extension of $(\alpha_t, D, \xi_{t,D})$; denote the latter by $\sigma_{t,D}$. So, if $D$ is a component of $G-C_t$, then $\sigma_{t,D}$ is a simple template.

Our tie-breaking quasi-order and the choice of $(\T, X, \Sol)$ imply that, in fact, $(\T_{t,D},X_{t,D}, \Sol_{t,D})$ is the unique
$\preceq$-minimal valid extension of~$\sigma_{t,D}$.

\begin{lemma}\label{lem:dp:best}
Let $t \in V(T)$ and let $D\subseteq V(G)$ be the union of zero or more components of $G-C_t$. Then $(\T_{t,D}, X_{t,D}, \Sol_{t,D})$ is the only valid extension of $\sigma_{t,D}$ which is not worse than $(\T_{t,D}, X_{t,D}, \Sol_{t,D})$.
\end{lemma}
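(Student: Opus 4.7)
The plan is to consider an arbitrary valid extension $(\T', X', \Sol')$ of $\sigma_{t,D}$ satisfying $(\T', X', \Sol') \preceq (\T_{t,D}, X_{t,D}, \Sol_{t,D})$, ``swap it into'' the globally $\preceq$-minimal feasible solution $(\T, X, \Sol)$ in place of $(\T_{t,D}, X_{t,D}, \Sol_{t,D})$, and then use $\preceq$-minimality together with neatness of $\T$ and Lemma~\ref{lem:dp:neat-equal} to force the two local extensions to coincide.

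\paragraph*{Construction of the swap.} Set $U := V(\T_{t,D})\setminus V(\T_t)$. Since $\T$ is neat, I would first observe, using Lemma~\ref{lem:extensionsTDS} and the fact that $V(\T)\cap D \subseteq V(\T_{t,D})$, that $U$ is exactly the union of those components of $\T - V(\T_t)$ that lie in $D$; in particular, deleting $U$ yields a well-defined rooted subforest $\T_{\mathrm{rest}}$ of $\T$. Gluing $\T_{\mathrm{rest}}$ with $\T'$ along $\T_t$, i.e., giving each vertex in $V(\T')\setminus V(\T_t)$ its $\T'$-parent, produces a rooted forest $\T^*$ of height at most $d$. Since $V(\T')\setminus V(\T_t)\subseteq D$, the set $D$ is anticomplete in $G$ to $V(G)\setminus (C_t\cup D)$, and $V(\T_{\mathrm{rest}})\setminus V(\T_t)$ is disjoint from $D\cup C_t$, the elimination-forest property is preserved, so $\T^*$ is a treedepth-$d$ structure in $G$. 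Set $X^* := (X\setminus X_{t,D})\cup X'$ and $\Sol^* := (\Sol\setminus \Sol_{t,D})\cup \Sol'$; then $\Sol^*\subseteq V(\T^*)$.

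\paragraph*{Feasibility of the swap.} The crux is to show that $\Aa$ accepts $\T^*$ equipped with $\lbl_{(\T^*, X^*, \Sol^*)}$, knowing that it accepts $\T$ equipped with $\lbl_{(\T, X, \Sol)}$ (since $(\Sol,X)$ is feasible). Proceeding bottom-up on $V(\T_t)$, I would show that each $v\in V(\T_t)$ receives the same state in both runs. Labels on $V(\T_t)$ match because $\T_t$ is preserved and $X,\Sol$ restrict identically to it. For the transition at $v$, the children outside $\T_t$ split into ``untouched'' ones (roots of components of $\T - V(\T_t)$ disjoint from $D$), whose subtrees and labels are identical in $\T$ and $\T^*$, and ``swapped'' ones: in $\T$ the latter are the roots of $U$-components under $v$, while in $\T^*$ they are the roots of $V(\T')\setminus V(\T_t)$ under $v$. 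By the definition of a valid extension of $\sigma_{t,D}$, both of these multisets of states reduce to $\xi_{t,D}(v)$ under $\wedge\tau_\Aa$. A short algebraic check shows that $A\wedge\tau_\Aa = A'\wedge\tau_\Aa$ implies $(A\cup B)\wedge\tau_\Aa = (A'\cup B)\wedge\tau_\Aa$, so the fully reduced multisets fed into $\delta_\Aa$ match at $v$. Applying the same reasoning at the set of roots (this time using the $\xi_{t,D}(\emptyset)$ component) transfers acceptance from $\T$ to $\T^*$, and Proposition~\ref{prop:mainThreshold} then certifies that $(\Sol^*, X^*)$ is feasible.

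\paragraph*{Decoupling the comparison, and conclusion.} Since $V(\T)\cap V(\T')\subseteq V(\T_{t,D})$, a direct computation gives $X^* \triangle X = X' \triangle X_{t,D}$, $\Sol^*\triangle \Sol = \Sol'\triangle \Sol_{t,D}$, and $V(\T^*)\triangle V(\T) = V(\T')\triangle V(\T_{t,D})$ with matching depth profiles outside $V(\T_t)$. Consequently, any strict improvement of $(\T', X', \Sol')$ over $(\T_{t,D}, X_{t,D}, \Sol_{t,D})$ in any one of the four criteria defining $\prec$ lifts verbatim to a strict improvement of $(\T^*, X^*, \Sol^*)$ over $(\T, X, \Sol)$, contradicting the $\preceq$-minimality of $(\T, X, \Sol)$ among feasible solutions. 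Therefore $(\T', X', \Sol')$ and $(\T_{t,D}, X_{t,D}, \Sol_{t,D})$ are $\preceq$-equivalent; in particular $X' = X_{t,D}$, $\Sol' = \Sol_{t,D}$, and $\T^*$ is $\preceq_2$-equivalent to $\T$. Since $\T$ is neat, Lemma~\ref{lem:dp:neat-equal} gives $\T^* = \T$, which unpacks (using the explicit formulas for $V(\T)\triangle V(\T^*)$ and the parent assignments in $\T^*$) to $V(\T') = V(\T_{t,D})$ with matching parent pointers, hence $\T' = \T_{t,D}$. The main obstacle is the feasibility step: one must cleanly partition the non-$\T_t$ children of each $v\in V(\T_t)$ into the ``untouched'' part and the ``swapped'' part, and exploit the threshold-insensitivity of $\wedge\tau_\Aa$ to merge the two contributions without ever inspecting individual states on the swapped side.
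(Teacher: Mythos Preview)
Your proposal is correct and follows essentially the same strategy as the paper: replace the $D$-part of the global minimizer $(\T,X,\Sol)$ by $(\T',X',\Sol')$, argue that feasibility is preserved via the automaton, and then invoke $\preceq$-minimality together with Lemma~\ref{lem:dp:neat-equal}. The only difference is packaging: the paper lets $D_0$ be the complement of $D$ in $V(G)\setminus C_t$, observes that your $\T_{\mathrm{rest}}$ is exactly $\T_{t,D_0}$, and then cites Lemma~\ref{lem:dp:basic-combine} (applied to the pair $(\T',X',\Sol')$ and $(\T_{t,D_0},X_{t,D_0},\Sol_{t,D_0})$) to obtain in one stroke both that $\T^*=\T'\cup\T_{t,D_0}$ is a valid extension of $(\alpha_t,V(G)\setminus C_t,\xi_{t,D}\cup\xi_{t,D_0})$ and that it is not worse than $(\T,X,\Sol)$---thereby bypassing your explicit bottom-up automaton analysis and the symmetric-difference bookkeeping.
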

\begin{proof}
Let $(\T', X', \Sol')$ be a valid extension of $\sigma_{t,D}$ which is not worse than $(\T_{t,D}, X_{t,D}, \Sol_{t,D})$. Let $D_0$ denote the union of all components of $G-C_t$ which are not in $D$. We already observed that $(\T_{t,D_0}, X_{t,D_0}, \Sol_{t,D_0})$ is a valid extension of $\sigma_{t,D_0}$. So, since $D$ and $D_0$ are disjoint, Lemma~\ref{lem:dp:basic-combine} tells us that $\T'$ and $\T_{t,D_0}$ are compatible, and that the component-wise union of $(\T', X', \Sol')$ and $(\T_{t,D_0}, X_{t,D_0}, \Sol_{t,D_0})$ is a valid extension of $(\alpha_t, D \cup D_0, \xi_{t, D} \cup \xi_{t, D_0})$. Also by Lemma~\ref{lem:dp:basic-combine}, this valid extension is not worse than the component-wise union of $(\T_{t,D}, X_{t,D}, \Sol_{t,D})$ and $(\T_{t,D_0}, X_{t,D_0}, \Sol_{t,D_0})$. The latter equals $(\T, X, \Sol)$ and is a valid extension of that same template $(\alpha_t, D \cup D_0, \xi_{t, D} \cup \xi_{t, D_0})$.

In general, the runs of $\Aa$ on any two valid extensions of the same template are the same. By Proposition~\ref{prop:mainThreshold}, the run of $\Aa$ determines whether a partial solution yields a solution to $(\td \leq d, \phi)$-\textsc{MWIS} on $G$. So, by the choice of $(\T, X, \Sol)$ and by Lemma~\ref{lem:dp:neat-equal} applied to the tree $\T$, which is neat, we find that\begin{align*}(\T' \cup \T_{t,D_0}, X' \cup X_{t,D_0}, \Sol' \cup \Sol_{t,D_0}) = (\T, X, \Sol).\end{align*}
\noindent It follows that $(\T', X', \Sol') = (\T_{t,D}, X_{t,D}, \Sol_{t,D})$, as desired.
\end{proof}

We are now ready to prove the main lemma.
\begin{lemma}\label{lem:dp:correct}
Let $t \in V(T)$, and assume that at least $h_t$ iterations of the algorithm have been executed.
Then for any child component $D$ of $t$, we have $M[\sigma_{t,D}] = (\T_{t,D}, X_{t,D}, \Sol_{t,D})$. Furthermore, if the subroutine is executed on $\alpha_t$ and any sequence of child components of $t$, then, where we write $M'$ for the array which is returned, $r$ for the number of child components under consideration, and $U$ for their union, we have $M'[r,\xi_{t,U}] = (\T_{t,U}, X_{t,U}, \Sol_{t,U})$.
\end{lemma}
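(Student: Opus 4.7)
The plan is to prove Lemma~\ref{lem:dp:correct} by induction on $h_t$, handling both statements of the lemma simultaneously at each step.

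\emph{Base case.} If $h_t=1$ then $t$ is a leaf of $T$ and every child component $D$ of $t$ is contained in $\beta(t)$. Combining this with $\beta(t)\cap V(\T)\subseteq C_t\cap V(\T)$ (from the carver property) and $D\cap C_t=\emptyset$, we get $D\cap V(\T)=\emptyset$. Hence $\T_{t,D}=\T_t$, $(\T_{t,D},X_{t,D},\Sol_{t,D})=(\T_t,X_t,\Sol_t)$, and $\xi_{t,D}\equiv\emptyset$, so the preliminary phase already sets $M[\sigma_{t,D}]=(\T_t,X_t,\Sol_t)$. Lemma~\ref{lem:dp:best} identifies this as the unique $\preceq$-best valid extension of $\sigma_{t,D}$, so no later iteration overwrites it. The subroutine statement at $t$ then follows by an inner induction on $j$: the value $M'[j,\xi_{t,D_{\leq j}}]$ is produced via Lemma~\ref{lem:dp:basic-combine} by combining the inductively correct $M'[j-1,\xi_{t,D_{\leq j-1}}]$ with the just-established $M[\sigma_{t,D_j}]$, and Lemma~\ref{lem:dp:best} prevents any worse extension from dislodging the correct value.

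\emph{Inductive step.} Fix $t$ with $h_t>1$ and a child component $D$ of $t$. The carver property gives $D\subseteq\beta(t)\cup\bigcup_{s\in T''}\beta(s)$ for some component $T''$ of $T-\{t\}$, and the definition of a child component forces $T''=T_{t'}$ for a unique child $t'$ of $t$ in $T$. Since $h_{t'}\leq h_t-1$, the inductive hypothesis (applied after $h_t-1$ iterations) supplies $M[\sigma_{t',D''}]=(\T_{t',D''},X_{t',D''},\Sol_{t',D''})$ for every child component $D''$ of $t'$, as well as the subroutine statement for $t'$. In the $h_t$-th iteration the algorithm considers the pair $(\sigma_{t,D},\alpha_{t'})$. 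Compatibility of $\T_t$ and $\T_{t'}$ holds because both are subforests of the global elimination forest $\T$ induced by ancestor-closed vertex sets: any $G$-edge with one endpoint in $V(\T_t)\setminus V(\T_{t'})$ and the other in $V(\T_{t'})\setminus V(\T_t)$ would have $\T$-comparable endpoints, and in either direction the ancestor endpoint would then have to lie in both $V(\T_t)$ and $V(\T_{t'})$, a contradiction; parent-consistency is automatic because parents inside $\T_t$ or $\T_{t'}$ agree with parents in $\T$. The conditions on $X$ and $\Sol$ collapse to the trivial identity $X\cap V(\T_t)\cap V(\T_{t'})=X\cap V(\T_t)\cap V(\T_{t'})$.

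Next we identify the useful components of $G-C_{t'}$ with respect to $\sigma_{t,D}$ and $\alpha_{t'}$: we show that their union $U^\ast$ equals the union of those child components $D''$ of $t'$ with $D\cap D''\neq\emptyset$. The subroutine part of the inductive hypothesis applied to $\alpha_{t'}$ and $U^\ast$ then yields $M'[r,\xi_{t',U^\ast}]=(\T_{t',U^\ast},X_{t',U^\ast},\Sol_{t',U^\ast})$. Letting $A$ be the set of vertices of $\T_{t',U^\ast}$ that are ancestors of at least one vertex in $D$, the algorithm checks compatibility of $\T_{t',U^\ast}[A]$ with $\T_t$ and glues them; the resulting triple is exactly $(\T_{t,D},X_{t,D},\Sol_{t,D})$, a valid extension of $\sigma_{t,D}$. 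Lemma~\ref{lem:dp:best} declares it the unique $\preceq$-best valid extension of $\sigma_{t,D}$, so $M[\sigma_{t,D}]$ is updated to this value. The subroutine statement for $t$ then follows from the first part exactly as in the base case.

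\emph{Main obstacle.} The key technical point is the identification of useful components of $G-C_{t'}$ with precisely those child components of $t'$ that meet $D$, together with the verification that restricting $\T_{t',U^\ast}$ to $A$ and gluing onto $\T_t$ reproduces $(\T_{t,D},X_{t,D},\Sol_{t,D})$ exactly. Usefulness is defined via non-maximality of $\T_t\cup\T_{t'}$ inside a specific induced subgraph, whereas the ``correct'' decomposition of $D$ is dictated by the clique-tree structure of $T$. Bridging the two perspectives requires simultaneous use of the carver properties at both $t$ and $t'$, the maximality of the global tree-depth structure $\T$, and the consequence of Lemma~\ref{lem:extensionsTDS} that each component of $\T_{t,D}-V(\T_t)$ is neat and induces a connected subgraph of $G$, hence lives inside a single branch of $T_{t'}$.
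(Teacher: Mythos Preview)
Your overall strategy---induction on $h_t$, handling the base case via the preliminary phase, and in the inductive step passing to a child $t'$ of $t$ determined by the carver property---matches the paper's proof exactly. The use of the second part of the lemma at $t'$ (rather than re-proving the subroutine correctness inline) is a legitimate shortcut, since the paper observes that the second part follows from the first together with Lemmas~\ref{lem:dp:basic-combine} and~\ref{lem:dp:best}.

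There is one genuine imprecision. You assert that $U^\ast$ \emph{equals} the union of those child components $D''$ of $t'$ with $D\cap D''\neq\emptyset$. The inclusion ``useful $\subseteq$ child component meeting $D$'' is correct and is essentially the paper's Claim~\ref{cl:dp:D-to-child}. The reverse inclusion, however, can fail: a child component $D_0$ of $t'$ may satisfy $D\cap D_0\neq\emptyset$ yet be useless---for instance if every vertex of $D\cap D_0$ either already lies in $V(\T_t)\cup V(\T_{t'})$, or is a vertex outside $\T$ whose neighbours in $V(\T_t)\cup V(\T_{t'})$ are $\T$-incomparable (so it cannot be appended as a leaf). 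You implicitly rely on the reverse inclusion when you claim that the glued triple is exactly $(\T_{t,D},X_{t,D},\Sol_{t,D})$: this requires every vertex of $(D\cap\T)\setminus(C_t\cup C_{t'})$ to lie in $U^\ast$.

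The fix is exactly what the paper does: drop the equality and instead prove the two facts that are actually needed. First, every useful component is a child component of $t'$ (your ``$\subseteq$'' direction; this is the paper's Claim~\ref{cl:dp:D-to-child}, proved via the carver property at $t'$, the containment $D\cap D_s\subseteq\beta(t)\cup\beta(s)$, Lemma~\ref{lem:extensionsTDS}, and maximality of $\T$). Second, every vertex $v\in(D\cap\T)\setminus(C_t\cup C_{t'})$ lies in a useful component, because $v$ itself can be appended as a leaf to $\T_t\cup\T_{t'}$: its $G$-neighbours in $V(\T_t)\cup V(\T_{t'})$ are all proper ancestors of $v$ in $\T$ (since $V(\T_t)\cup V(\T_{t'})$ is ancestor-closed and contains no descendant of $v$), hence lie on a single vertical path of depth at most $d-1$. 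Your ``Main obstacle'' paragraph already lists precisely these ingredients, so the gap is one of formulation rather than of missing ideas.

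A minor point: the child $t'$ need not be unique when $D\subseteq\beta(t)$; the paper simply picks one arbitrarily in that edge case.
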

\begin{proof}
We may assume that the lemma holds for every child of $t$ by induction on $h_t$. We will argue about the first claim of the lemma for the node $t$. Note that the second claim follows
from the first claim and Lemmas~\ref{lem:dp:basic-combine} and~\ref{lem:dp:best} (using induction on $r$). So, fix a child component $D$ of $t$. Note that we only have to show that $M[\sigma_{t,D}]$ is set to $(\T_{t,D}, X_{t,D}, \Sol_{t,D})$ at some point; Lemma~\ref{lem:dp:best} implies that, once this occurs, $M[\sigma_{t,D}]$ is never changed.

For the base case of $h_t = 1$, we have that $t$ is a leaf of $T$ and $D \subseteq \beta(t)$. So, since $C_t \cap \T$ contains $\beta(t) \cap \T$ by the definition of a carver family, the set $D \cap \T$ is empty. Thus $\T_{t,D}=\T_t$ and $\xi_{t, D} \equiv \emptyset$. It follows that, in the preliminary phase, we set $M[\sigma_{t,D}] = (\T_{t}, X_{t}, \Sol_{t})$, as desired. So we may assume that $h_t>1$.

Thus, using the definition of a carver family, there exists a child $s$ of $t$ in $T$ such that we have $D\subseteq \beta(t) \cup \bigcup_{t' \in T'}\beta(t')$, where $T'$ denotes the component of $T-\{t\}$ which contains $s$. (If $D \subseteq \beta(t)$, then there may be more than one such vertex~$s$, and we choose $s$ arbitrarily.) We focus on the $h_t$-th iteration of the algorithm and the moment when the algorithm considers
the simple template $\state_{t,D}$ and the simple pre-template $\alpha_s$. Note that $\T_t$ and $\T_s$ are compatible, and that $\T_t \cup \T_s$ is precisely the subgraph of $\T$ induced by the ancestors of $C_t \cup C_s$. Recall that a component $D_s$ of $G-C_s$ is \textit{useful} if $\T_t \cup \T_s$ is not a maximal tree-depth-$d$ structure in the subgraph of $G$ induced by $V(\T_t) \cup V(\T_s) \cup (D \cap D_s)$.

We need the following key observation.

\begin{claim}\label{cl:dp:D-to-child}
Every useful component of $G-C_s$ is a child component of $s$.
\end{claim}
\begin{claimproof}
Suppose towards a contradiction that $D_s$ is a useful component of $G-C_s$ which is not a child component of $s$. Then the definition of a carver family tells us that $D_s \subseteq \beta(s) \cup \bigcup_{t' \in T'}\beta(t')$, where $T'$ is the component of $T-\{s\}$ which contains $t$. Since $D$ is a child component of $t$, we have that $D \cap D_s \subseteq \beta(t) \cup \beta(s)$.

Since $D \cap D_s$ is disjoint from $C_t \cup C_s$, and the latter contains all vertices of $(\beta(t) \cup \beta(s)) \cap \T$, we also have that $D \cap D_s$ is disjoint from $V(\T)$. Furthermore, $D \cap D_s$ is the union of some subset of components of $G-(C_t \cup C_s)$. By Lemma~\ref{lem:extensionsTDS}, since $\T$ is neat, each component of $\T-(V(\T_t) \cup V(\T_s))$ induces a connected subgraph of $G$; so the vertex set of each such component is either contained in or disjoint from $D \cap D_s$. Hence, the maximality of $\T$ implies that $\T_t \cup \T_s$ is also a maximal tree-depth-$d$ structure in the subgraph of $G$ induced by $V(\T_t) \cup V(\T_s) \cup (D \cap D_s)$. This contradicts the fact that $D_s$ is useful.
\end{claimproof}

As in the outline of the algorithm, let $D_1,\ldots,D_r$ be the useful components of $G-C_s$, in an arbitrary order. Claim~\ref{cl:dp:D-to-child} implies that every $D_j$ is a child component of $s$.
Hence, from the inductive hypothesis, at the beginning of the $h_t$-th iteration we have, for every $1 \leq j \leq r$, that $M[\sigma_{s,D_j}] = (\T_{s,D_j}, X_{s,D_j}, \Sol_{s,D_j})$. We now claim the following.

\begin{claim}\label{cl:dp:subroutine}
In the run of the subroutine, we have for every $0 \leq j \leq r$, that\begin{align*}
 M'[j,\xi_{s,D_{\leq j}}] = (\T_{s,D_{\leq j}}, X_{s,D_{\leq j}}, \Sol_{s,D_{\leq j}}).
 \end{align*}
\end{claim}
\begin{claimproof}
 We prove the claim by induction on $j$. For $j=0$ the claim holds since $\xi_{s,\emptyset} \equiv \emptyset$ and thus $M'[0,\xi_{s,\emptyset}] = (\T_s,X_s, \Sol_s)$.
 For $j > 0$, from the inductive hypothesis on $j$ we have $M'[j-1, \xi_{s, D_{\leq j-1}}] = (\T_{s,D_{\leq j-1}}, X_{s,D_{\leq j-1}}, \Sol_{s,D_{\leq j-1}})$
 and from before, we have $M[\sigma_{s,D_j}] = (\T_{s,D_j}, X_{s,D_j}, \Sol_{s,D_j})$.
 Hence, the partial solution $(\T_{s,D_{\leq j}}, X_{s,D_{\leq j}}, \Sol_{s,D_{\leq j}})$ is considered for $M'[j,\xi_{s,D_{\leq j}}]$;
 Lemma~\ref{lem:dp:best} ensures that it is assigned there and stays till the end. This proves the claim.
\end{claimproof}

After the subroutine is executed, the algorithm iterates over every multistate function $\xi_s$ of $\T_s$ and attempts to use $M'[r, \xi_s]$ to find a better valid extension of $\sigma_{t,D}$ than $M[\sigma_{t,D}]$. By Lemma~\ref{lem:dp:best}, it suffices to prove that when $\xi_{s, D \leq r}$ is considered, the resulting valid extension $(\T_{t,D}, X_{t,D}, \Sol_{t,D})$ of $\sigma_{t,D}$ is found.

By Claim~\ref{cl:dp:subroutine} we have $M'[r,\xi_{s,D_{\leq r}}] = (\T_{s,D_{\leq r}}, X_{s,D_{\leq r}}, \Sol_{s,D_{\leq r}})$. As in the outline of the algorithm, let $A$ denote the set of all vertices of $\T_{s,D_{\leq r}}$ which are an ancestor of at least one vertex in $D$. Note that $\T_{s,D_{\leq r}}[A]=\T[A]$, that $\T[A]$ and $\T_t$ are compatible, and that $\T[A] \cup \T_t$ is precisely the subtree of $\T$ induced by the ancestors of vertices in $D \cap (D_{\leq r} \cup C_s)$ and $C_t$. Thus, it just remains to show that this induced subtree is $\T_{t, D}$, or, equivalently, that every vertex in $(D\cap \T)\setminus(C_s \cup C_t)$ is in a useful component of $G-C_s$ (i.e., in $D_{\leq r}$). This holds by the definition of useful components, because such a vertex can be added to the tree-depth-$d$ structure $\T_s \cup \T_t$. This finishes the proof of Lemma~\ref{lem:dp:correct}.
\end{proof}

By~\eqref{eq:T-bound}, after $2|V(G)|$ iterations,
Lemma~\ref{lem:dp:correct} can be applied to the root of $T$, which we denote by $t$.
Consider now the finalizing step of the algorithm and the moment it considers the pre-template $\alpha_{t}$. Let $M'$ denote the computed array, $r$ the number of components of $G-C_t$, and $U$ the set $V(G)\setminus C_t$. Since every component of $G-C_t$ is a child component of $t$, Lemma~\ref{lem:dp:correct} implies that $M'[r, \xi_{t,U}] = (\T_{t,U},X_{t,U}, \Sol_{t,U})$.
So, as $U = V(G) \setminus C_{t}$, we have $\T_{t,U} = \T$ and thus $M'[r,\xi_{t,U}] = (\T,X,\Sol)$.
As $(\T,X,\Sol)$ is the unique $\preceq$-minimal partial solution such that $(\Sol, X)$ is feasible for $(\td \leq d, \phi)$-\textsc{MWIS} in $G$, the algorithm returns $(\T,X,\Sol)$.

This finishes the proof of Theorem~\ref{thm:dp}.

\section{Minimal separator carving}
\label{sec:minSepCarving}

Given a graph $G$ and a minimal separator $S$ of $G$, we say that a set $\widetilde{S}$ \textit{carves away} a component $D$ of $G-S$ if no component of $G-\widetilde{S}$ intersects both $D$ and another component of $G-S$. (We say that two sets \textit{intersect} if their intersection is non-empty.) In this section we find ``carvers'' for minimal separators. We break up minimal separators into four different types based on which of their full components can be carved away.

First of all, a minimal separator $S$ is \emph{subordinate} if there exists a minimal separator $S'$ and two full sides $A'$ and $B'$ of $S'$
such that $S \subseteq S'$ and some full component of $S$ is disjoint from $A' \cup S' \cup B'$. Notice that any minimal separator which is not subordinate has exactly two full
sides; otherwise we could take $S'=S$ and $A'$ and $B'$ to be two full components of $S$.

The other three types of minimal separator are based on how many full components are ``mesh''. A graph $H$ is \emph{mesh} if its complement $\overline{H}$ is not connected. Otherwise $\overline{H}$ is connected, and we call $H$ \emph{non-mesh}. We say that a minimal separator $S$ is \textit{mesh}/\textit{mixed}/\textit{non-mesh} (respectively) if $S$ is not subordinate and has exactly $2$/$1$/$0$ full components which are mesh.

Now we define carvers for minimal separators based on their type.

\begin{definition}
Let $G$ be a graph, $d$ be a positive integer, $\T$ be a treedepth-$d$ structure in $G$, and let $S$ be a $\T$-avoiding minimal separator of $G$. Then a \textit{$\T$-carver for $S$} is a set $\widetilde{S} \subseteq V(G)$ such that $\widetilde{S} \cap \T = S \cap \T$ and\begin{enumerate}
    \item if $S$ is subordinate or non-mesh, then $\widetilde{S} = S$;
    \item if $S$ is mixed, then $\widetilde{S}$ carves away the mesh full component of $S$; and
    \item if $S$ is mesh, then $\widetilde{S}$ carves away every component of $G-S$.
\end{enumerate}
\end{definition}

\noindent In this section we show how to find a subset of $2^{V(G)}$ which contains carvers for all appropriate $\T$ and $S$; see Proposition~\ref{prop:sepCarving} for a precise statement. Our approach to proving this proposition is based on the theory of modular decompositions.

A \textit{module} of a graph $G$ is a set $X\subseteq V(G)$ such that every vertex in $V(G)\setminus X$ is adjacent to either all of $X$ or none of $X$. A module is \textit{strong} if it  does not cross any other module, where two sets \emph{cross} if they intersect and neither is contained in the other.
A strong module is \textit{maximal} if it is not $V(G)$ and it is not properly contained in any strong module besides $V(G)$.
We do not need the full theory of modular decompositions, just the following fact.

\begin{lemma}[see~\cite{surveyModularDecomp10}]
\label{lem:maximalStrongModules}
The maximal strong modules of a graph $G$ are disjoint and, if $G$ is mesh, then they are the vertex sets of the components of $\overline{G}$.
\end{lemma}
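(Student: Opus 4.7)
The plan is to handle the two assertions separately. For disjointness, I would argue that if two distinct maximal strong modules $M_1, M_2$ intersected, then, since they cannot cross, one would be properly contained in the other, say $M_1 \subsetneq M_2$; but $M_2$ is a strong module different from $V(G)$, which contradicts the maximality of $M_1$.

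For the mesh case, let $V_1, \ldots, V_k$ be the vertex sets of the components of $\overline{G}$; since $G$ is mesh, $k \geq 2$. Each $V_i$ is a module of $G$, because any $v \notin V_i$ lies in a different component of $\overline{G}$ and is therefore complete to $V_i$ in $G$. Next I would show each $V_i$ is a strong module: suppose some module $M$ crosses $V_i$, and pick $u \in M \cap V_i$, $w \in V_i \setminus M$, and $v \in M \setminus V_i$. Tracing a path from $u$ to $w$ in the connected subgraph $\overline{G}[V_i]$ yields an edge $xy \in E(\overline{G})$ with $x \in M \cap V_i$ and $y \in V_i \setminus M$. Now $v$ lies outside $V_i$, hence is complete to $V_i$ in $G$; in particular $vy \in E(G)$, so the module property of $M$ forces $y$ to be complete to $M$ in $G$. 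But then $xy \in E(G)$, contradicting $xy \in E(\overline{G})$.

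It remains to identify the maximal strong modules. For any maximal strong module $M$, non-crossing against each $V_i$ (now known to be strong) together with the partition $V(G) = V_1 \sqcup \cdots \sqcup V_k$ gives $M \cap V_i \in \{\emptyset, M, V_i\}$ for every $i$. The option $M \subsetneq V_i$ would put a strong module $V_i \neq V(G)$ strictly between $M$ and $V(G)$, contradicting maximality of $M$; hence $M = \bigcup_{i \in I} V_i$ for some non-empty $I \neq \{1, \ldots, k\}$. If $|I| \geq 2$, pick distinct $a, b \in I$ and some $c \notin I$; then $V_a \cup V_c$ is also a module of $G$ (same complete-from-outside argument), and it crosses $M$, since they share $V_a$, while $V_b \subseteq M \setminus (V_a \cup V_c)$ and $V_c \subseteq (V_a \cup V_c) \setminus M$, contradicting that $M$ is strong. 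So $|I|=1$ and $M$ coincides with one of the $V_i$. The same crossing argument, applied to any hypothetical strong module strictly between a given $V_i$ and $V(G)$, shows that each $V_i$ is itself maximal.

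The main obstacle is the middle step, where I argue that each $V_i$ is strong: this is the one place where one has to carefully combine the module property of a hypothetical crossing set with the connectivity of $V_i$ inside $\overline{G}$ in order to force a single non-edge of $G$ to be an edge. The remaining pieces follow by standard bookkeeping from the fact that strong modules form a laminar family.
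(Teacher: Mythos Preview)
The paper does not prove this lemma; it simply cites it from the modular-decomposition survey~\cite{surveyModularDecomp10}. Your proof is correct and self-contained, so there is nothing to compare against in the paper itself.

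One small remark on presentation: in the last paragraph, when you argue that each $V_i$ is maximal by considering a hypothetical strong module $N$ with $V_i \subsetneq N \subsetneq V(G)$, you implicitly use again that $N$ is not crossed by any $V_j$ (hence is a union of $V_j$'s) before invoking the $V_a \cup V_c$ crossing. This is fine, since ``strong'' precisely means no module crosses $N$, but you may want to spell it out, as earlier you only proved the $V_j$ are strong, not that every strong module respects the partition. Otherwise the argument is clean.
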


We typically guess two vertices which satisfy the following lemma.

\begin{lemma}
\label{lem:findingPQ}
Let $G$ be a graph, $d$ be a positive integer, $\T$ be a maximal treedepth-$d$ structure in $G$, $S$ be a $\T$-avoiding minimal separator, and $A$ be a full component of $S$. Then $A \cap \T$ is non-empty, and there exists a vertex $p_A \in A$ which has at most $d-1$ neighbors in $\T$. Moreover, if $|A|>1$, then there exists a vertex $q_A \in A$ which is adjacent to $p_A$ and in a different maximal strong module of~$A$ than $p_A$.
\end{lemma}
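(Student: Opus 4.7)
Each of the three conclusions will be proved in sequence, using either the maximality of $\T$ or basic structural facts.

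For the first claim, that $A \cap \T$ is non-empty, I would argue by contradiction. Suppose $A \cap \T = \emptyset$ and pick any $v \in A$. Since $S$ is $\T$-avoiding, the set $S \cap \T$ sits on a vertical path of $\T$ and contains no vertex of depth $d$; let $u$ denote its deepest vertex, if any. I attach $v$ to $\T$ as a child of $u$, or as a new root if $S \cap \T$ is empty, obtaining a candidate rooted forest $\T'$ on vertex set $V(\T) \cup \{v\}$. The height remains at most $d$, since $u$ has depth at most $d-1$. Because $A$ is a full component of $S$, we have $N_G(v) \subseteq A \cup S$, and the assumption $A \cap \T = \emptyset$ forces every neighbor of $v$ in $V(\T)$ to lie in $S \cap \T$, hence to be an ancestor of $v$ in $\T'$. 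Therefore $\T'$ is a treedepth-$d$ structure that properly extends $\T$ and has the same set of roots, contradicting the maximality of $\T$.

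For the second claim, I would let $p_A$ be a vertex of $A \cap \T$ of maximum depth, which exists by the first claim. Any neighbor of $p_A$ in $V(\T)$ is $\T$-comparable to $p_A$, since $\T$ is an elimination forest of $G[\T]$. The strict ancestors of $p_A$ contribute at most $\mathrm{depth}(p_A) - 1$ such neighbors. For descendants: because $N_G(p_A) \subseteq A \cup S$ and, by the choice of $p_A$, no descendant of $p_A$ in $\T$ lies in $A$, any descendant-neighbor must lie in $S \cap \T$. But $S \cap \T$ is contained in a single vertical path of $\T$ and has depth at most $d-1$, and a vertical path contains at most one vertex per depth, so at most $\max(0, d-1-\mathrm{depth}(p_A))$ of its vertices are strict descendants of $p_A$. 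Summing, $p_A$ has at most $d-1$ neighbors in $V(\T)$.

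For the third claim, assume $|A| > 1$. Every singleton in $A$ is a strong module of $G[A]$ (it crosses no other module), and, by Lemma~\ref{lem:maximalStrongModules}, maximal strong modules are disjoint; together these facts imply that the maximal strong modules of $G[A]$ partition $A$ into at least two non-empty parts. Let $M$ be the maximal strong module containing $p_A$, so $M \subsetneq A$. Since $A$ is a connected component of $G-S$, the graph $G[A]$ is connected, so there is an edge of $G[A]$ between $M$ and $A \setminus M$. By the defining property of a module, the endpoint $q_A$ of this edge lying in $A \setminus M$ is complete to $M$ in $G[A]$, and hence adjacent to $p_A$. As $q_A \notin M$, it lies in a maximal strong module distinct from that of $p_A$, as required. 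The main obstacle is the tight accounting in the second step: achieving the $d-1$ bound requires simultaneously using that $\T$ is an elimination forest, that $S \cap \T$ is confined to a single vertical path of depth at most $d-1$, and the choice of $p_A$ as the deepest vertex of $A \cap \T$ so that no descendant of $p_A$ in $\T$ lies in $A$.
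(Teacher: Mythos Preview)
Your proof is correct and follows essentially the same approach as the paper: contradicting maximality for the first claim, choosing $p_A$ as a deepest vertex of $A\cap\T$ and arguing that $N[p_A]\cap\T$ lies on a single vertical path for the second, and using the partition into maximal strong modules together with connectivity of $A$ for the third. One tiny wording issue: when $S\cap\T=\emptyset$ and you attach $v$ as a new root, $\T'$ does not have the \emph{same} set of roots as $\T$, but this is harmless since maximality only requires that every root of $\T$ remain a root of $\T'$.
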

\begin{proof}
    First notice that $A$ contains a vertex in $\T$. Otherwise, each vertex $a \in A$ would satisfy $N(a) \cap \T\subseteq S$.
    As $\T \cap S$ is contained in a vertical path of $\T$ and does not contain any depth-$d$ vertex of $\T$, we could add $a$ to $\T$ as a leaf without increasing its height beyond $d$, thus contradicting the maximality of $\T$.

    Now choose a vertex $p_A \in A\cap \T$ which has maximum depth among all vertices in $A \cap \T$.
    All vertices in $N(p_A) \cap A \cap \T$ are ancestors of $p_A$ in $\T$.
    All vertices in $N(p_A) \cap \T$ that are descendants of $p_A$ must be in $S$ and thus they are contained in a vertical path of $\T$.
    This means that all vertices in $N(p_A) \cap \T$ are contained in a single vertical path in $\T$, which also contains $p_A$.
    Hence, $|N[p_A] \cap \T| \leq d$, thus $|N(p_A) \cap \T| \leq d-1$.

    Finally, suppose that $|A|>1$. Lemma~\ref{lem:maximalStrongModules} tells us that the maximal strong modules of $A$ partition $A$. There is more than one part since $|A|>1$. So, since $A$ is connected, we can choose a neighbor $q_A$ of $p_A$ which is in a different part from $p_A$.
\end{proof}

We frequently apply the following lemmas from~\cite{p6FreeMaxInd22} to two vertices which come from Lemma~\ref{lem:findingPQ}.

\begin{lemma}[\citeReference{Lemma~4.2}{p6FreeMaxInd22}]
\label{lem:neighborDecomp}
Let $G$ be a graph, let $S$ be a minimal separator, let $A$ be a full component of $S$, and let $p_A$ and $q_A$ be adjacent vertices which are in different maximal strong modules of $A$. Then for any $u \in S$, at least one of the following conditions holds:\begin{enumerate}
    \item there is an induced $P_4$ of the form $uAAA$,
    \item at least one of $p_A$ and $q_A$ is adjacent to $u$, or
    \item the graph $A$ is mesh, and each of its maximal strong modules is either complete or anticomplete to $u$.
\end{enumerate}
\end{lemma}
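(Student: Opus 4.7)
I would assume that (ii) fails, so both $p_A$ and $q_A$ lie in $V_0 := A \setminus N(u)$; also $V_1 := A \cap N(u)$ is nonempty because $A$ is a full component of $S$. I then show that if (iii) fails then (i) holds. Recall that (iii) fails precisely when either $A$ is non-mesh, or some maximal strong module $M$ of $A$ is \emph{split by $u$}, meaning $M$ meets both $V_0$ and $V_1$. Since $p_A$ and $q_A$ belong to two distinct maximal strong modules $M_{p_A}, M_{q_A}$, no maximal strong module equals all of $A$; in particular any split module $M$ is a proper subset of $A$. Moreover, the edge $p_A q_A$ forces $M_{p_A}$ and $M_{q_A}$ to be complete to each other in $A$, a fact I will use repeatedly.

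Case 1: there exists a split maximal strong module $M$. Pick $w_1 \in M \cap V_1$ and $w_0 \in M \cap V_0$. Since $M$ equals at most one of $M_{p_A}, M_{q_A}$, at least one of $p_A, q_A$ lies outside $M$, and by the module property is complete to $M$; call such a vertex $x$. I want the induced $P_4$ $u\,w_1\,x\,w_0$: the three required edges are immediate from $w_1 \in V_1$, $x$ complete to $M$, and the three required non-edges $u \not\sim x$, $u \not\sim w_0$, $w_1 \not\sim w_0$ hold by $x \in V_0$, $w_0 \in V_0$, and the choice of $(w_0, w_1)$ respectively. The only potentially problematic chord is $w_0 w_1$. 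In the sub-case where $A$ is mesh, Lemma~\ref{lem:maximalStrongModules} tells us that $\overline{A}[M]$ is connected, so one can choose $w_0 \in M \cap V_0$ and $w_1 \in M \cap V_1$ non-adjacent in $A$, killing the chord. In the non-mesh sub-case the same idea is combined with possibly swapping the roles of $p_A$ and $q_A$ (to choose a different $x$) and a short analysis of how $V_0 \cap M$ and $V_1 \cap M$ sit inside $M$.

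Case 2: $A$ is non-mesh and no maximal strong module is split. Then the maximal strong modules partition as $\mathcal{M}_0 \cup \mathcal{M}_1$ according to containment in $V_0$ or $V_1$, both classes nonempty, with $M_{p_A}, M_{q_A} \in \mathcal{M}_0$ adjacent in the quotient $A^\star$. Since $A$ is connected and non-mesh, $A^\star$ is a prime graph, so the nontrivial subset $\mathcal{M}_1$ of $V(A^\star)$ is not a module of $A^\star$; this gives modules $M_a, M_b \in \mathcal{M}_1$ and $M_c \in \mathcal{M}_0$ such that $M_a$ is adjacent to $M_c$ in $A^\star$ while $M_b$ is not. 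Combining this configuration with the edge $M_{p_A} M_{q_A}$ in $A^\star$ and, if necessary, one further application of primality to locate a $V_0$-module adjacent to $M_c$ but not to $M_a$, I extract vertices $v \in V_1$ and $v', v'' \in V_0$ with $v \sim v'$, $v' \sim v''$, and $v \not\sim v''$, so that $u v v' v''$ is the required induced $P_4$.

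The main obstacle is the non-mesh sub-case of Case~1: there the connectedness of $\overline{A}[M]$ is no longer available to provide a non-adjacent pair inside $M$, and one has to do a slightly tedious analysis of the internal structure of $M$ (noting that $V_0 \cap M$ and $V_1 \cap M$ would themselves be modules of $M$ if no chord-free choice existed), and use the existence of $p_A, q_A$ outside $M$ together with primality of $A^\star$ to produce a $P_4$ starting at $u$ by a different route.
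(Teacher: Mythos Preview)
The paper does not give its own proof of this lemma; it is quoted verbatim from~\cite{p6FreeMaxInd22} (as Lemma~4.2 there) and used as a black box. So there is no in-paper argument to compare against.

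On your proposal itself: the overall case split is sound, and the mesh sub-case of Case~1 is fully correct --- when $A$ is mesh, $M$ is a connected component of $\overline{A}$ by Lemma~\ref{lem:maximalStrongModules}, so one can indeed choose $w_0\in M\cap V_0$ and $w_1\in M\cap V_1$ non-adjacent, and $u\,w_1\,x\,w_0$ is the required induced $P_4$. Case~2 is also essentially right, though the write-up is loose. A clean way to finish it: if no triple $M_1\in\mathcal{M}_1$, $M_2,M_3\in\mathcal{M}_0$ with $M_1\sim M_2\sim M_3$ and $M_1\not\sim M_3$ exists, then for every $M_1\in\mathcal{M}_1$ the set $N_{A^\star}(M_1)\cap\mathcal{M}_0$ is a union of connected components of $A^\star[\mathcal{M}_0]$; hence the component of $A^\star[\mathcal{M}_0]$ containing the edge $M_{p_A}M_{q_A}$ is a non-trivial module of $A^\star$, contradicting primality. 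Note that your phrasing ``the nontrivial subset $\mathcal{M}_1$ is not a module'' fails when $|\mathcal{M}_1|=1$, but the argument above avoids this.

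The genuine gap is exactly where you flag it: the non-mesh sub-case of Case~1. There you only sketch a plan (``a slightly tedious analysis of the internal structure of $M$ \ldots\ by a different route'') without executing it. Concretely, if $M$ is a split maximal strong module with $M\cap V_0$ complete to $M\cap V_1$, the path $u\,w_1\,x\,w_0$ always acquires the chord $w_0w_1$, and swapping $p_A\leftrightarrow q_A$ does not help when both happen to be complete to $M$. Your observation that $M\cap V_0$ and $M\cap V_1$ are then modules of $A$ (not just of $A[M]$) is correct, but this alone does not contradict $M$ being a maximal strong module, so additional work is genuinely needed. As written, this case is not proved.
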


\noindent We note that the outcomes in Lemma~\ref{lem:neighborDecomp} are not exclusive.

The next lemma helps us to take care of minimal separators which are mesh.

\begin{lemma}[\citeReference{Lemma~4.4}{p6FreeMaxInd22}]
\label{lem:neighborCover}
Let $G$ be a $P_6$-free graph, $S$ be a minimal separator,  $A$ and $B$ be full mesh components of $S$, and  $p_A$ and $q_A$ (respectively, $p_B$ and $q_B$) be adjacent vertices which are in different maximal strong modules of $A$ (respectively, $B$). Then there exist $r_A \in A$ and $r_B \in B$ so that $S \subseteq N(p_A, q_A, r_A, p_B, q_B, r_B)$.
\end{lemma}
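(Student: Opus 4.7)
}

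Set $U \coloneqq S \setminus N(\{p_A, q_A, p_B, q_B\})$; since every vertex of $S \setminus U$ is already covered by $p_A, q_A, p_B$, or $q_B$, it suffices to find $r_A \in A$ and $r_B \in B$ dominating $U$. Because $A$ is mesh, Lemma~\ref{lem:maximalStrongModules} says that the maximal strong modules of $A$ are exactly the vertex sets of the components of $\overline{A}$; in particular any two distinct modules of $A$ are complete to each other in $G$, and each module is either complete or anticomplete to any fixed $u \in S$. For $u \in U$ I let $\Phi_A(u)$ denote the set of maximal strong modules of $A$ that are complete to $u$. This set is non-empty because $A$ is a full component of $S$, and it avoids the modules $M_p$ and $M_q$ of $p_A$ and $q_A$ because $u$ is non-adjacent to both. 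I define $\Phi_B(u)$ analogously, with respect to the modules $M_{p_B}, M_{q_B}$ of $B$.

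The heart of the proof is the following pairwise consequence of $P_6$-freeness: for every $u_1, u_2 \in U$, $\Phi_A(u_1) \cap \Phi_A(u_2) \neq \emptyset$ or $\Phi_B(u_1) \cap \Phi_B(u_2) \neq \emptyset$. I would prove this by contradiction. Assuming both intersections empty, pick $a_i$ in some module of $\Phi_A(u_i)$ and $b_i$ in some module of $\Phi_B(u_i)$; the distinctness of the chosen modules on each side gives $a_1 a_2, b_1 b_2 \in E(G)$, while $\Phi$-disjointness together with the fact that $A$ and $B$ lie in different components of $G - S$ yields the non-edges $a_1 u_2$, $a_2 u_1$, $b_1 u_2$, $b_2 u_1$ and $a_i b_j$. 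If $u_1 u_2 \notin E(G)$, then $a_1 u_1 b_1 b_2 u_2 a_2 a_1$ is an induced $C_6$, which contains an induced $P_6$. If $u_1 u_2 \in E(G)$, then $p_A a_1 u_1 u_2 b_2 p_B$ is an induced $P_6$, using $u_1, u_2 \in U$ to rule out the chords with $p_A$ and $p_B$, and $\Phi$-disjointness to rule out $a_1 u_2$ and $b_2 u_1$. Either way we contradict $P_6$-freeness. A nearly identical $P_6$-argument (with the same $p_A a_1 u_1 u_2 b_2 p_B$ pattern) refines the statement to: whenever $u_1, u_2 \in U$ are adjacent, $\Phi_A(u_1) \subseteq \Phi_A(u_2)$ or $\Phi_B(u_2) \subseteq \Phi_B(u_1)$, and symmetrically $\Phi_A(u_2) \subseteq \Phi_A(u_1)$ or $\Phi_B(u_1) \subseteq \Phi_B(u_2)$.

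Next I introduce the product quasi-order on $U$ defined by $u \leq v$ iff $\Phi_A(u) \subseteq \Phi_A(v)$ and $\Phi_B(u) \subseteq \Phi_B(v)$, and pick $u^* \in U$ minimal under it. Combining the two symmetric inclusions above with the minimality of $u^*$, a short case analysis shows the following: for every $u \in U$ with $u^* u \in E(G)$, either $\Phi_A(u^*) \subseteq \Phi_A(u)$, or $\Phi_B(u^*) = \Phi_B(u)$, or $\Phi_A(u^*) = \Phi_A(u)$ (the case $u \leq u^*$ strictly is excluded by minimality). In particular, fixing any $i^* \in \Phi_A(u^*)$ and $j^* \in \Phi_B(u^*)$, we have $i^* \in \Phi_A(u)$ or $j^* \in \Phi_B(u)$, so any vertex in the module $N_{i^*}$ or $N'_{j^*}$ is adjacent to $u$.

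The main obstacle is to handle the vertices $u \in U$ non-adjacent to $u^*$: the pairwise property guarantees only that $\Phi_A(u^*) \cap \Phi_A(u) \neq \emptyset$ or $\Phi_B(u^*) \cap \Phi_B(u) \neq \emptyset$, which need not place the specific $i^*$ in $\Phi_A(u)$ or $j^*$ in $\Phi_B(u)$. I plan to resolve this by further refining the choice of $u^*$, for instance by additionally minimizing $|\Phi_A(u^*)| + |\Phi_B(u^*)|$ among all product-minimal elements, and by constructing additional $P_6$'s through auxiliary neighbors of $u^*$ in $A$ and $B$ (of the form $p_A - a - u^* - \cdots - u - b - p_B$, padded with vertices witnessing non-emptiness of the various $\Phi$'s) to force that in fact $\Phi_A(u^*) \subseteq \Phi_A(u)$ or $\Phi_B(u^*) \subseteq \Phi_B(u)$ for every non-adjacent $u$. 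Once this refined statement is established, taking $r_A$ and $r_B$ to be arbitrary vertices of $N_{i^*}$ and $N'_{j^*}$ dominates $U$ and completes the proof.
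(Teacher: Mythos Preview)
The paper does not prove this lemma; it is cited from~\cite{p6FreeMaxInd22}. However, the companion Lemma~\ref{lem:biRanking} (also cited from there and appearing right below in the paper) is precisely the combinatorial tool that finishes an argument of the kind you set up, and comparing your plan with that lemma reveals where your approach goes astray.

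Your setup is sound: reducing to $U=S\setminus N(p_A,q_A,p_B,q_B)$ and recording, for each $u\in U$, the families $\Phi_A(u),\Phi_B(u)$ of maximal strong modules of $A,B$ complete to $u$ is the right move. One small correction: the sentence ``each module is either complete or anticomplete to any fixed $u\in S$'' is false as stated, since modules of $G[A]$ need not be modules of $G$. What is true is that this dichotomy holds for $u\in U$, by Lemma~\ref{lem:neighborDecomp}: outcome~\textit{(ii)} is excluded by definition of $U$, and outcome~\textit{(i)} would give a $P_4$ of the form $uAAA$ which, together with a $P_3$ of the form $BBu$ (available since $u\notin N(p_B)$), yields an induced $P_6$. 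You only use $\Phi_A(u),\Phi_B(u)$ for $u\in U$, so this is a phrasing issue rather than a real gap.

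The genuine gap is the endgame. You switch to the \emph{product} order $u\le v\iff \Phi_A(u)\subseteq\Phi_A(v)$ and $\Phi_B(u)\subseteq\Phi_B(v)$, take a minimal $u^\ast$, handle neighbours of $u^\ast$, and then leave non-neighbours to an unspecified refinement. This is where the plan stalls, and the ``further refining $u^\ast$'' and ``additional $P_6$'s'' paragraph is not a proof. The clean fix is to abandon the product order and instead consider the two quasi-orders $u\preceq_A v\iff \Phi_A(u)\subseteq\Phi_A(v)$ and $u\preceq_B v\iff \Phi_B(u)\subseteq\Phi_B(v)$ separately, and to strengthen your pairwise claim to: \emph{every pair $u_1,u_2\in U$ is comparable in $\preceq_A$ or in $\preceq_B$}. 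Your own witnesses already prove this. If $u_1,u_2$ are incomparable in both, pick $a_i$ in a module of $\Phi_A(u_i)\setminus\Phi_A(u_{3-i})$ and $b_i$ similarly; then, if $u_1u_2\notin E(G)$, the cycle $a_1u_1b_1b_2u_2a_2a_1$ is an induced $C_6$, and if $u_1u_2\in E(G)$, the path $p_Aa_1u_1u_2b_2p_B$ is an induced $P_6$ --- exactly the two configurations you already wrote down, just invoked under the correct (incomparability) hypothesis. With comparability in hand, Lemma~\ref{lem:biRanking} yields $u^\ast\in U$ with $\Phi_A(u^\ast)\subseteq\Phi_A(u)$ or $\Phi_B(u^\ast)\subseteq\Phi_B(u)$ for every $u\in U$, and then any $r_A$ in a module of $\Phi_A(u^\ast)$ and $r_B$ in a module of $\Phi_B(u^\ast)$ dominate $U$. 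No separate treatment of adjacent versus non-adjacent vertices is needed.
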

Note that since $A$ (resp., $B$) is mesh and $p_A$ and $q_A$ (resp., $p_B$ and $q_B$) are in different maximal strong modules,
we can equivalently say that $N[p_A, q_A, r_A, p_B, q_B, r_B]=A \cup S \cup B$.

We also need the following simplified version of Lemma~\ref{lem:neighborCover} which applies to every type of minimal separator.

\begin{lemma}[\citeReference{Lemma~4.5}{p6FreeMaxInd22}]
\label{lem:neighborCoverGeneral}
Let $G$ be a $P_6$-free graph, $S$ be a minimal separator, and $A$ and $B$ be two full components of $S$. Then there exist $A' \subseteq A$ and $B' \subseteq B$ such that $|A'|\leq 3$, $|B'|\leq 3$, and $S \subseteq N(A' \cup B')$.
\end{lemma}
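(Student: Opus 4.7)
The plan is to perform a case analysis on whether $A$ and $B$ are mesh, relying in each case on the tools already developed (Lemmas~\ref{lem:findingPQ}, \ref{lem:neighborDecomp}, and~\ref{lem:neighborCover}) plus one short ad hoc $P_6$-freeness argument. First I would dispose of the degenerate case: if $|A| \leq 3$, take $A' = A$ and $B' = \emptyset$, which covers $S$ because $N(A) = S$ by fullness of $A$; analogously for $|B| \leq 3$. Otherwise, I apply Lemma~\ref{lem:findingPQ} to obtain adjacent $p_A, q_A \in A$ in distinct maximal strong modules of $A$, and analogously $p_B, q_B \in B$; the final $A'$ and $B'$ will always contain $\{p_A, q_A\}$ and $\{p_B, q_B\}$, with at most one extra vertex on each side.

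If both $A$ and $B$ are mesh, Lemma~\ref{lem:neighborCover} directly supplies $r_A$ and $r_B$ with $S \subseteq N(p_A, q_A, r_A, p_B, q_B, r_B)$, so we are done. If neither $A$ nor $B$ is mesh, I claim that $\{p_A, q_A, p_B, q_B\}$ alone covers $S$. Indeed, outcome \textit{(iii)} of Lemma~\ref{lem:neighborDecomp} is unavailable on both sides, so for any $u \in S$ uncovered by $\{p_A, q_A\}$, outcome \textit{(i)} yields an induced $P_4$ of the form $u a_1 a_2 a_3$ with all $a_i \in A$, and symmetrically an induced $P_4$ of the form $u b_1 b_2 b_3$ in $B$. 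Concatenating these at $u$ produces the sequence $a_3 a_2 a_1 u b_1 b_2 b_3$, which is induced because $A$ and $B$ lie in different components of $G-S$ and are therefore anticomplete; this is an induced $P_7$, contradicting $P_6$-freeness.

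The remaining case --- exactly one of $A, B$ is mesh --- is where I expect the main obstacle. By symmetry assume $A$ is non-mesh and $B$ is mesh. I would take $A' = \{p_A, q_A\}$ and $B' = \{p_B, q_B, r_B\}$, where $r_B$ is chosen in any maximal strong module $M$ of $B$ distinct from those containing $p_B$ and $q_B$; if no such third module exists, then outcome \textit{(iii)} of Lemma~\ref{lem:neighborDecomp} applied to $B$ at an uncovered $u$ would demand a module complete to $u$ different from $M_{p_B}$ and $M_{q_B}$ (since $B$ is full, some module is complete to $u$, and those two are anticomplete because $u\not\sim p_B, q_B$), an impossibility; so outcome \textit{(i)} must hold on both sides and we finish exactly as in the doubly non-mesh case.

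With $r_B$ selected as above, consider any $u \in S$ uncovered by $A' \cup B'$. Outcome \textit{(i)} applies for $A$, giving an induced $P_4$ $u a_1 a_2 a_3$ in $A$, and outcome \textit{(iii)} applies for $B$. Since $r_B \in M$ is not adjacent to $u$, the module $M$ is anticomplete to $u$, while some other module $M_u$ of $B$ is complete to $u$; pick any $m \in M_u$. Because $B$ is mesh and $M \neq M_u$, the vertices $m$ and $r_B$ are adjacent. Then $a_3 a_2 a_1 u m r_B$ is an induced $P_6$: the non-edges inside the $P_4$ portion are guaranteed by inducedness; the non-edges between $\{a_1,a_2,a_3\}$ and $\{m, r_B\}$ follow because $A$ and $B$ lie in different components of $G-S$; and the non-edge $u r_B$ is by the choice of $u$. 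This contradicts $P_6$-freeness and completes the case analysis.
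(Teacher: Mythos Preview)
The paper does not prove this lemma; it is quoted verbatim from~\cite{p6FreeMaxInd22} (Lemma~4.5 there), so there is no in-paper argument to compare against. Your proof is correct in substance. Two small points should be cleaned up:

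\begin{itemize}
    \item Your invocation of Lemma~\ref{lem:findingPQ} is formally illegitimate: that lemma has a maximal treedepth-$d$ structure $\T$ among its hypotheses, and none is present here. What you actually use is only that when $|A|>1$ one can pick adjacent $p_A,q_A\in A$ lying in distinct maximal strong modules of $A$; this follows immediately from Lemma~\ref{lem:maximalStrongModules} (the maximal strong modules partition $A$) together with connectedness of $A$, without reference to any~$\T$.
    \item In the mixed case with at least three modules of $B$, you write ``outcome \textit{(iii)} applies for $B$'' for an uncovered $u$, but you have not excluded outcome \textit{(i)}. If outcome \textit{(i)} holds for $B$, then concatenating the resulting $uBBB$ with the $A$-side $uAAA$ yields an induced $P_7$ exactly as in your doubly non-mesh case, and you are done. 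Only once \textit{(i)} is ruled out does \textit{(iii)} follow (since \textit{(ii)} fails by uncoveredness), at which point your $a_3a_2a_1umr_B$ argument goes through.
\end{itemize}

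With these two adjustments the argument is complete.
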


Note that Lemma~\ref{lem:neighborCoverGeneral} is sufficient to find all subordinate separators.

\begin{corollary}
\label{cor:subordinate}
There is a polynomial-time algorithm which takes in a $P_6$-free graph $G$ and returns a collection $\mathcal{S}_{\textrm{sub}} \subseteq 2^{V(G)}$ which contains each subordinate minimal separator.
\end{corollary}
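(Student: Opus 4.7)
The plan is to exhibit, for every subordinate minimal separator $S$, a constant-size ``certificate'' from which $S$ can be reconstructed. Suppose $S$ is subordinate with witness $(S', A', B', D)$, meaning $S'$ is a minimal separator with full sides $A'$ and $B'$, $S \subseteq S'$, and some full component $D$ of $S$ is disjoint from $A' \cup S' \cup B'$. Applying Lemma~\ref{lem:neighborCoverGeneral} to $S'$ with its full components $A'$ and $B'$ yields sets $\hat{A} \subseteq A'$ and $\hat{B} \subseteq B'$ with $|\hat{A}|, |\hat{B}| \leq 3$ and $S' \subseteq N(\hat{A} \cup \hat{B})$; the pair $(\hat{A}, \hat{B})$ will serve as the certificate.

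The algorithm then iterates over every pair $(\hat{A}, \hat{B})$ of subsets of $V(G)$ with $|\hat{A}|, |\hat{B}| \leq 3$ (so $O(|V(G)|^6)$ pairs). For each pair it computes $N \coloneqq N(\hat{A} \cup \hat{B}) \setminus (\hat{A} \cup \hat{B})$, and for every connected component $C$ of $G - N$ it inserts $N(C)$ into $\mathcal{S}_{\textrm{sub}}$. This runs in polynomial time and produces a polynomial-sized family.

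For correctness, I will verify that when the correct pair $(\hat{A}, \hat{B})$ associated with a subordinate $S$ is processed, the set $S$ is indeed added. The crux is the two-sided inclusion $S' \subseteq N \subseteq A' \cup S' \cup B'$. The lower inclusion holds because $S' \subseteq N(\hat{A} \cup \hat{B})$ by choice while $S'$ is disjoint from $\hat{A} \cup \hat{B} \subseteq A' \cup B'$. The upper inclusion uses that $A'$ is a full component of $S'$: then $N(A') = S'$, so $N(\hat{A}) \subseteq A' \cup S'$; symmetrically $N(\hat{B}) \subseteq B' \cup S'$, giving $N \subseteq A' \cup S' \cup B'$. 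Since $D$ is disjoint from $A' \cup S' \cup B'$ by assumption, $D \cap N = \emptyset$; moreover, every neighbor of $D$ in $G$ lies in $N(D) = S \subseteq S' \subseteq N$, so $D$ is precisely one component of $G - N$, and the algorithm adds $N(D) = S$ to $\mathcal{S}_{\textrm{sub}}$. The only mild obstacle is tracking the nested containments between $S$, $S'$, $A'$, $B'$, $D$, and $N$, but once the inclusion $N \subseteq A' \cup S' \cup B'$ is established (crucially using the fullness of $A'$ and $B'$, which forces their external neighborhoods to lie in $S'$), the rest follows mechanically.
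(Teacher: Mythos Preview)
Your proof is correct and follows essentially the same approach as the paper: guess the six (or fewer) vertices $\hat{A}\cup\hat{B}$ provided by Lemma~\ref{lem:neighborCoverGeneral}, compute their neighborhood, and output $N(D)$ for every component $D$ of the complement. Your argument is in fact more carefully written than the paper's, as you explicitly verify the inclusion $N\subseteq A'\cup S'\cup B'$ (using fullness of $A'$ and $B'$) that the paper leaves implicit; note also that since $N(\hat{A}\cup\hat{B})$ already excludes $\hat{A}\cup\hat{B}$ by the paper's definition of open neighborhood, your subtraction $\setminus(\hat{A}\cup\hat{B})$ is harmless but redundant.
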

\begin{proof}
    Let $S$ be a subordinate minimal separator. Then there exists a minimal separator $S'$ and two full sides $A'$ and $B'$ of $S'$ so that $S \subseteq S'$ and some full component of $S$ is disjoint from $A' \cup S' \cup B'$. By Lemma~\ref{lem:neighborCoverGeneral}, there exist $A'' \subseteq A'$ and $B'' \subseteq B'$ so that $|A''|\leq 3$, $|B''|\leq 3$, and $S' \subseteq N(A'' \cup B'')$. We guess $A''$ and $B''$. Then, for each component $D$ of $G-N(A'' \cup B'')$, we insert $N(D)$ into $\mathcal{S}$. The full component of $S$ which is disjoint from $A' \cup S' \cup B'$ is itself such a component $D$. So $\mathcal{S}$ contains $S$ and $|\mathcal{S}|\leq |V(G)|^6$.
\end{proof}

Finally, some types of minimal separator can be taken care of very quickly using the following lemma. We state it in a slightly weaker fashion than in~\cite{p6FreeMaxInd22}.

\begin{lemma}[\citeReference{Lemma~5.5}{p6FreeMaxInd22}]
\label{lem:nonMesh}
There is a polynomial-time algorithm which takes in a $P_6$-free graph $G$ and returns a collection $\mathcal{F} \subseteq 2^{V(G)}$ which contains each full component of a non-mesh separator of $G$.
\end{lemma}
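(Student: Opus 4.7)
The plan is to enumerate each full component $A$ of a non-mesh minimal separator $S$ of $G$ by guessing a constant-sized subset of $A\cup B$, where $B$ is the other full component of $S$. Recall that a non-mesh separator is not subordinate and has zero mesh full components, hence has exactly two full components, both non-mesh.

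By Lemma~\ref{lem:neighborCoverGeneral}, there exist $A'\subseteq A$ and $B'\subseteq B$ of size at most three with $S\subseteq N(A'\cup B')$, and we can enumerate all $\Oh(|V(G)|^6)$ candidate pairs in polynomial time. The main step is to refine $A',B'$ into constant-sized supersets $A^*\subseteq A$ and $B^*\subseteq B$ satisfying the stronger property $S\subseteq N(A^*)\cap N(B^*)$; such refinements can be enumerated exhaustively in polynomial time as well. Once $A^*$ and $B^*$ are guessed, the separator is recovered directly as
\[S \;=\; \bigl(V(G)\setminus(A^*\cup B^*)\bigr)\cap N(A^*)\cap N(B^*),\]
because $A$ and $B$ are anticomplete in $G$, so any vertex of $V(G)\setminus S$ lying in both $N(A^*)$ and $N(B^*)$ would have to be adjacent to both $A$ and $B$ simultaneously, which only happens inside $S$. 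The full component $A$ is then the unique connected component of $G-S$ that contains $A'$, and we add it to $\mathcal F$.

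The main obstacle is showing the existence of constant-sized $A^*,B^*$ with $S\subseteq N(A^*)\cap N(B^*)$ in the non-mesh case, since Lemma~\ref{lem:neighborCover} treats only mesh full components. The key $P_6$-freeness argument, which is the content of Lemma~5.5 of~\cite{p6FreeMaxInd22}, builds on Lemma~\ref{lem:neighborDecomp}: for any $u\in S$, either $u$ is adjacent to one of two preselected vertices in $A$, or there is an induced $P_4$ of the form $uAAA$; the third alternative (that $A$ is mesh) is excluded by hypothesis. One then argues that the diversity of such $P_4$-witnesses is bounded: given two witnesses with far-apart $A$-endpoints, one can join them through a $B$-side $P_4$ and prepend a non-neighbor of the first $A$-endpoint inside $A$ (guaranteed since $\overline{G[A]}$ is connected) to obtain an induced $P_6$, contradicting $P_6$-freeness. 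Hence the $A$-endpoints of these $P_4$-witnesses must cluster into a constant-sized set $A^*$, and symmetrically a constant-sized $B^*$ exists. The delicate part is choosing the witness vertices so that all required non-adjacencies simultaneously hold to actually produce the $P_6$; this careful combinatorial bookkeeping is the technical heart of the proof, and is precisely what is carried out in~\cite{p6FreeMaxInd22}.
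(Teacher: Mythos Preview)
The paper does not give its own proof of this statement: Lemma~\ref{lem:nonMesh} is quoted verbatim from~\cite{p6FreeMaxInd22} and used as a black box. So there is no in-paper proof to compare your attempt against.

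On the merits of your sketch, the reduction step is fine: if one had constant-sized $A^\ast\subseteq A$ and $B^\ast\subseteq B$ with $S\subseteq N(A^\ast)\cap N(B^\ast)$, then your formula for $S$ is correct (any vertex outside $S$ lies in some component of $G-S$, which is anticomplete to either $A$ or $B$), and enumerating components of $G-S$ recovers the full sides. The problem is the existence of $A^\ast$. Your justification for it is not a proof: you invoke Lemma~\ref{lem:neighborDecomp} to get, for each $u\in S\setminus N(p_A,q_A)$, an induced $P_4$ of the form $uAAA$, and then assert that the $A$-ends of these $P_4$'s ``cluster into a constant-sized set'' via a $P_6$-argument that is only gestured at (``join them through a $B$-side $P_4$ and prepend a non-neighbor\ldots''). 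As written, this sentence does not specify which vertices form the alleged $P_6$, why the needed non-adjacencies hold, or why two arbitrary $uAAA$-witnesses can be concatenated through anything on the $B$ side into a single induced path. The last sentence then explicitly defers the ``technical heart'' back to~\cite{p6FreeMaxInd22}, which is circular: you are trying to prove the lemma that is cited from there.

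In short, the outline is plausible, but the central combinatorial claim --- that in the non-mesh case a bounded number of $A$-vertices dominate all of $S$ --- is precisely the content of the lemma and you have not established it. Either supply a concrete $P_6$-freeness argument showing why the witness endpoints are bounded (with the exact induced path exhibited and all non-edges justified), or acknowledge that you are importing the result from~\cite{p6FreeMaxInd22} rather than reproving it.
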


Now we are ready to prove the main result of this section.

\begin{proposition}
\label{prop:sepCarving}
For each positive integer $d$, there exists a polynomial-time algorithm which takes in a $P_6$-free graph $G$ and returns a collection $\mathcal{S} \subseteq 2^{V(G)}$ such that for any maximal treedepth-$d$ structure $\T$ in $G$ and any $\T$-avoiding minimal separator $S$, the collection $\mathcal{S}$ contains a $\T$-carver for $S$.
\end{proposition}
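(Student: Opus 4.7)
My plan is to construct $\mathcal{S}$ as a union of four subfamilies, each tailored to one of the four types of $\T$-avoiding minimal separators defined earlier in the section.

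For \emph{subordinate} separators the carver must equal $S$ itself, so I simply include the family $\mathcal{S}_{\text{sub}}$ from Corollary~\ref{cor:subordinate}, which already contains every subordinate separator. For \emph{non-mesh} separators we again need $\widetilde{S}=S$; since both full components of a non-mesh separator are non-mesh, each is captured by the family $\mathcal{F}$ of Lemma~\ref{lem:nonMesh}, so adding $N(A)$ for every $A\in\mathcal{F}$ to $\mathcal{S}$ suffices, using that $S=N(A)$ whenever $A$ is a full component of $S$.

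For the \emph{mesh} type, I would enumerate over all 6-tuples $(p_A,q_A,r_A,p_B,q_B,r_B)\in V(G)^6$, over all vertex subsets $P\subseteq V(G)$ of size at most $d-1$ (meant to be $S\cap\T$), and over all vertex subsets $X\subseteq V(G)$ of bounded size (meant to be $(N_6\cap\T)\setminus P$), where $N_6$ denotes the open neighborhood of the six guessed vertices. For each such triple I place
\[ \widetilde{S}\;:=\;P\cup(N_6\setminus X) \]
into $\mathcal{S}$. If the guessed objects are the correct ones---namely $p_A,q_A$ as supplied by Lemma~\ref{lem:findingPQ} applied to the true mesh full component $A$ and $r_A$ as supplied by Lemma~\ref{lem:neighborCover}, symmetrically on the $B$-side, and $P,X$ matching their intended values---then $S\subseteq N_6\subseteq A\cup S\cup B$, and a short check shows that the resulting $\widetilde{S}$ equals $(S\cap\T)\cup(N_6\setminus\T)$, which is a superset of $S$ that agrees with $S$ on $\T$. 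Every component of $G-\widetilde{S}$ therefore lies inside a single component of $G-S$, so $\widetilde{S}$ carves away every component. The \emph{mixed} type follows the same template, but now applies Lemma~\ref{lem:neighborCoverGeneral} to a pair of three-vertex witness sets $A'\subseteq A$, $B'\subseteq B$, together with the choice of $p_B,q_B$ from Lemma~\ref{lem:findingPQ} on the mesh side $B$; the same $P,X$ guessing scheme delivers $\widetilde{S}\supseteq N(B)=S$, so the mesh full component $B$ is carved away as required.

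The main obstacle is to keep the enumeration polynomial, which boils down to bounding $|X|$ by some function of $d$ alone, so that the candidates for $X$ range over only $|V(G)|^{c(d)}$ possibilities. For $p_A$ and $p_B$ this is free from Lemma~\ref{lem:findingPQ}, which gives them at most $d-1$ neighbors in $\T$. The delicate point is bounding the number of $\T$-neighbors of the remaining witnesses $q_A,r_A,q_B,r_B$ (or of the three-vertex witness sets in the mixed case): here one must exploit the elimination-forest structure of $\T$ restricted to the mesh full components $A,B$, the maximality of $\T$, and the fact that $\T\cap S$ is contained in a single vertical path of $\T$ of length at most $d-1$. This is where most of the technical work resides, and it is exactly the step at which the $P_6$-free structure used by the preceding lemmas of this section is needed.
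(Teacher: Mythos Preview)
Your handling of the subordinate and non-mesh cases matches the paper. The genuine gap is in the mesh and mixed cases, and it is exactly the step you flag as ``delicate'': bounding $|X|$ by a function of $d$. This cannot be done. The vertices $q_A,r_A,q_B,r_B$ may have arbitrarily many neighbours in $\T\cap(A\cup B)$. Concretely, since $A$ is mesh and $q_A$ lies in a maximal strong module different from that of $p_A$, the vertex $q_A$ is adjacent to every vertex of $A$ outside its own module; but $\T\cap A$ can meet many modules of $A$ (up to $d$ modules admittedly, yet each module may contain many $\T$-vertices, since a module can contain an entire subtree of $\T$). There is no structural reason why $|N_6\cap\T\cap(A\cup B)|$ should be bounded in terms of $d$, and the ``elimination-forest structure of $\T$ restricted to $A,B$'' does not supply one.

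The paper sidesteps this obstacle by \emph{not} aiming for a container. Its $\widetilde{S}$ is built from $S\cap\T$, from $N(p_A)\setminus T_A$ and $N(p_B)\setminus T_B$ (here $T_A,T_B$ are the small sets $N(p_A)\cap A\cap\T$ and $N(p_B)\cap B\cap\T$), and from certain \emph{intersections} of neighbourhoods that are automatically inside $S$ (for instance $N(\{p_A,q_A\}\cup T_A)\cap N(\{p_B,q_B\}\cup T_B)$ and, in the mesh case, $N[q_A,r_A]\cap N[q_B,r_B]$). The resulting $\widetilde{S}$ need not contain $S$; instead one proves, via Lemma~\ref{lem:neighborDecomp} and a short $P_6$-freeness argument, that the vertices of $S\setminus\widetilde{S}$ are anticomplete to the ``wrong'' side (mixed case) or split into two anticomplete halves $S_A,S_B$ each anticomplete to the opposite full component (mesh case). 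This is precisely why the carver notion is weaker than the container notion: one is allowed to miss vertices of $S$ provided they do not glue distinct components of $G-S$. Your proposal tries to produce a container, and that is genuinely harder --- in fact the paper's whole point is that containers for these separators are \emph{not} known to be obtainable in polynomial time.
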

\begin{proof}
    Let $d$, $G$, $\T$, and $S$ be as in the statement of the proposition. We will show how to construct $\mathcal{S}$ by making ``guesses'' among polynomially-many options. We will separately consider four cases, depending on the type of $S$. We output the collection $\mathcal{S}$ consisting of all sets $\widetilde{S}$ constructed as described below.

\smallskip
    \paragraph{\textbf{Case 1. $S$ is subordinate.}}  Recall that in Corollary~\ref{cor:subordinate} we constructed the family $\mathcal{S}_{\textrm{sub}}$ that contains all subordinate minimal separators $S$. As $S$ is a $\T$-carver for $S$, it is sufficient to include $\mathcal{S}_{\textrm{sub}}$ in the output family $\mathcal{S}$.

\smallskip
    \paragraph{\textbf{Case 2. $S$ is non-mesh.}} By Lemma~\ref{lem:nonMesh} we can, in polynomial time, find a collection $\mathcal{F} \subseteq 2^{V(G)}$ which contains each full component of a non-mesh separator of $G$. For each $D \in \mathcal{F}$, we insert $N(D)$ into $\mathcal{S}$. So $\mathcal{S}$ contains $S$, which is a $\T$-carver for $S$.

\medskip

    From now on we may assume that $S$ is either mixed or mesh. Thus $S$ has exactly two full components, and at least one of them is mesh. Let $A$ and $B$ be the two full components of $S$. (We are not guessing $A$ and $B$, we are just giving them names.) Next, guess the vertices in $S \cap \T$ (there are at most $d-1$ of them) and add them to a set $\widetilde{S}$. As we proceed throughout the proof, we will add more and more vertices of $V(G)\setminus \T$ to $\widetilde{S}$. Thus we will always have that $\widetilde{S}\cap \T = S \cap \T$, and we are trying to show that $\widetilde{S}$ eventually becomes a $\T$-carver for $S$.

    By Lemma~\ref{lem:findingPQ}, there exists a vertex $p_A \in A$ (respectively, $p_B \in B$) which has at most $d-1$ neighbors in $\T$. For convenience, write $T_A \coloneqq N(p_A)\cap A \cap \T$ and $T_B \coloneqq N(p_B)\cap B \cap \T$. We guess the vertices $p_A$ and $p_B$ and the sets $T_A$ and $T_B$. We then add the vertices in $N(p_A)\setminus T_A$ and $N(p_B)\setminus T_B$ to $\widetilde{S}$.

    If either $A$ or $B$ has size one, then this set $\widetilde{S}$ already contains $S$ and is therefore a $\T$-carver for $S$. So we may assume that $|A|>1$ and $|B|>1$. Thus, by Lemma~\ref{lem:findingPQ}, there exists $q_A \in A$ (respectively, $q_B \in B$) so that $p_A$ and $q_A$ (respectively, $p_B$ and $q_B$) are adjacent vertices which are in different maximal strong modules of $A$ (respectively, $B$). We add every vertex which is in both $N(\{p_A, q_A\} \cup T_A)$ and $N(\{p_B, q_B\} \cup T_B)$ to $\widetilde{S}$; note that these newly added vertices are a subset of $S$.

    It is helpful to state the following observation; note that it will hold even after we add more vertices to $\widetilde{S}$.

    \smallskip
    \begin{enumerate}
        \item[(1)] Each vertex $u \in S \setminus \widetilde{S}$ is non-adjacent to $p_A$ and $p_B$ and therefore in a $P_3$ of the form $uAA$ and in a $P_3$ of the form $uBB$.
    \end{enumerate}
In particular, note that when applying Lemma~\ref{lem:neighborDecomp} for any $u \in S \setminus \widetilde{S}$ and $A$ (resp., $B$) we never obtain the first outcome, as then we would get an induced $P_6$ of the form $BBuAAA$ (resp., $AAuBBB$).

\smallskip
    \paragraph{\textbf{Case 3. $S$ is mixed.}}
    We claim that $\widetilde{S}$ is already a $\T$-carver for $S$. By symmetry between $A$ and $B$, we may assume that $A$ is mesh and $B$ is non-mesh. So it just remains to show that $A$ is carved away by $\widetilde{S}$; that is, that no component of $G-\widetilde{S}$ intersects both $A$ and another component of $G-S$. We will do this by showing that $S \setminus \widetilde{S}$ and $A \setminus \widetilde{S}$ are anticomplete. So consider a vertex $u \in S \setminus \widetilde{S}$. By (1), the second outcome of Lemma~\ref{lem:neighborDecomp} holds for $B$, and $u \in N(p_B, q_B)$. So $u$ is anticomplete to $\{p_A, q_A\} \cup T_A$; otherwise we would have $u\in \widetilde{S}$. Now the third outcome of Lemma~\ref{lem:neighborDecomp} holds for $A$; that is, each maximal strong module of $A$ is either complete or anticomplete to $u$. As $A$ is mesh and $p_A \notin N(u)$, each neighbor of $u$ in $A$ is in $N(p_A)\setminus T_A$. Since $N(p_A)\setminus T_A \subseteq \widetilde{S}$, this completes the proof that $\widetilde{S}$ is a $\T$-carver for $S$.

    \smallskip
    \paragraph{\textbf{Case 4. $S$ is mesh.}}
    Then by Lemma~\ref{lem:neighborCover}, there exist $r_A \in A$ and $r_B \in B$ so that $S \subseteq N(p_A, q_A, r_A, p_B, q_B, r_B)$, i.e., $N[p_A, q_A, r_A, p_B, q_B, r_B]=A \cup S \cup B$. Guess these vertices $r_A$ and $r_B$.
    So for each component $D$ of $G-N[p_A, q_A, r_A, p_B, q_B, r_B]$, add the vertices in $N(D)$ to $\widetilde{S}$.
    Furthermore, we add to $\widetilde{S}$ all vertices in $N[q_A,r_A] \cap N[q_B, r_B]$.
    Note that these newly added vertices are a subset of $S$.

    We will show that now $\widetilde{S}$ is a $\T$-carver for~$S$. It just remains to show that $\widetilde{S}$ carves away the components of $G-S$: that is, that each component of $G-\widetilde{S}$ intersects at most one component of $G-S$. Since $N(D)$ as explicitly added to $\widetilde{S}$ for each component $D$ of $G-[p_A, q_A, r_A, p_B, q_B, r_B]$, the only possibility that needs to be checked is that some component of $G-\widetilde{S}$ intersects both $A$ and $B$. So it suffices to show that $S\setminus \widetilde{S}$ has a partition into two parts, $S_A$ and $S_B$, so that $S_A$ and $S_B$ are anticomplete, $S_A$ and $B\setminus \widetilde{S}$ are anticomplete, and $S_B$ and $A\setminus \widetilde{S}$ are anticomplete.

    Let $S_A$ (respectively, $S_B$) be the set of all vertices in $S \setminus \widetilde{S}$ which are in $N(q_A,r_A)$ (respectively, $N(q_B,r_B)$).
    The sets $S_A$ and $S_B$ partition $S \setminus \widetilde{S}$ by observation~(1) and the definitions of $r_A,r_B$ and $\widetilde{S}$.
    Now consider a vertex $u \in S_A$. Again using (1), the third outcome of Lemma~\ref{lem:neighborDecomp} holds for $B$; each maximal strong module of $B$ is either contained in or disjoint from the neighborhood of $u$. So each neighbor of $u$ in $B$ is in $N(p_B)\setminus T_B$, and therefore also in $\widetilde{S}$. By this and the symmetric argument for $S_B$, we have proven that $S_A$ and $B\setminus \widetilde{S}$ are anticomplete, and $S_B$ and that $A\setminus \widetilde{S}$ are anticomplete.

    It just remains to show that $S_A$ and $S_B$ are anticomplete. For this we need to be slightly more careful about the argument above; notice that we actually have that if $u \in S_A$, then $u$ is anticomplete to every component of $\overline{B}$ which intersects $\{p_B, q_B, r_B\}$. Let $M_B$ denote the union of these components of $\overline{B}$. Note that $M_B$ induces a connected subgraph of $B$ since $p_B$ and $q_B$ are in different components of $\overline{B}$. Thus, if $u$ was adjacent to a vertex $v \in S_B$, then we could find a $P_6$ of the form $M_AM_AuvM_BM_B$, where, symmetrically, $M_A$ is the union of the components of $\overline{A}$ which intersect $\{p_A, q_A, r_A\}$.

    This completes all four cases and therefore the proof of Proposition~\ref{prop:sepCarving}.
\end{proof}

\section{Improving carvers for mixed minimal separators}\label{sec:impMixedSeps}
We need a more refined understanding of mixed minimal separators.
We will use the tools developed here in Section~\ref{sec:2SidedAlignedPMCs} to find carvers for the so-called \emph{two-sided PMCs}.

So let $G$ be a graph, $S$ be a mixed minimal separator of $G$, and $A$ and $B$ be the mesh and non-mesh full sides of $S$, respectively. Given a set $\widetilde{S} \subseteq V(G)$, we say that a component $\widetilde{D}$ of $G-\widetilde{S}$ is \textit{clarified} if it is disjoint from $A \cup B$. In this section we show how to ``carve away'' all of the clarified components; see Proposition~\ref{prop:betterMixedCarver}.

To prove this proposition, we will use the following enumeration routine to obtain a ``fuzzy'' version of the mesh full component. Given a graph $G$, a \emph{fuzzy version} of a set $A \subseteq V(G)$ is a set $A^+ \subseteq V(G)$ such that $A \subseteq A^+$
and every vertex of $A^+ \setminus A$ is complete to $A$.

\begin{lemma}[\citeReference{Lemma~5.6}{p6FreeMaxInd22}]
\label{lem:mixedFuzzy}
There is a polynomial-time algorithm which takes in a $P_6$-free graph $G$ and returns a collection $\mathcal{A} \subseteq 2^{V(G)}$ such that for every mixed minimal separator $S$ in $G$ with $A$ as its full mesh component, there exists $A^+ \in \mathcal{A}$
that is a fuzzy version of $A$.
\end{lemma}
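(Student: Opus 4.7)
The plan is to enumerate polynomially many candidates in $\mathcal{A}$, each obtained from a constant-sized guess, and verify that one of them is a fuzzy version of the mesh full component $A$ whenever $S$ is a mixed minimal separator with full components $A$ (mesh) and $B$ (non-mesh).

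The guessing combines two already-available tools. By Lemma~\ref{lem:neighborCoverGeneral}, there exist $A_0 \subseteq A$ and $B_0 \subseteq B$ with $|A_0|, |B_0| \leq 3$ and $S \subseteq N(A_0 \cup B_0)$. Since $A$ is mesh, by Lemma~\ref{lem:maximalStrongModules} it has at least two maximal strong modules, so there are adjacent vertices $p_A, q_A \in A$ lying in different such modules; I would enlarge $A_0$ to include them, keeping $|A_0|\leq 5$. The algorithm enumerates all tuples $(A_0, B_0)$ of vertex subsets with $|A_0| \leq 5$ and $|B_0| \leq 3$, giving $\Oh(|V(G)|^8)$ candidates. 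For each tuple, set $T \coloneqq N(A_0 \cup B_0)$, so that $T \supseteq S$, and let $D$ be the connected component of $G - T$ containing $p_A$. Since every $A$--$B$ path in $G$ goes through $S \subseteq T$, we have $D \subseteq A$. The associated candidate is
\[
A^+ \coloneqq A_0 \cup D \cup \bigl\{\, v \in T : v \text{ is complete in } G \text{ to } A_0 \cup D \,\bigr\}.
\]

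Two properties then need verification: (a) $A \subseteq A^+$, and (b) every vertex in $A^+ \setminus A$ is complete to $A$. Part (b) is the cleaner direction: any $v \in A^+ \setminus A$ lies in $T$ and is complete to $A_0 \cup D$, in particular to $p_A$ and $q_A$, which sit in different maximal strong modules of $A$. Applying Lemma~\ref{lem:neighborDecomp} to $v$ with anchors $(p_A, q_A)$ and using $P_6$-freeness (the first outcome, an induced $P_4$ of the form $vAAA$, combined with a $P_3$ through the connected set $B$, would produce a forbidden induced $P_6$) forces the third, module-respecting outcome; completeness to the modules containing $p_A$, $q_A$ together with completeness to representatives of the remaining modules inside $D$ then upgrades to completeness to all of $A$. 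The main obstacle is part (a): a vertex $a \in A$ may lie outside $D$ because $T \cap A = N(A_0) \cap A$ can disconnect $A$. To handle this, one argues by a further $P_6$-free case analysis that any such $a$ lies in a component of $G - T$ glued back to $D$ by some vertex $t \in T$ that turns out to be complete to $A_0 \cup D$, and is therefore absorbed into $A^+$ through the third term of the definition; possibly the union of candidates produced by several different guesses of $(A_0, B_0)$ is needed to cover every $a \in A$. Carrying out this glueing argument cleanly---exploiting the mesh structure of $A$ to pin down module membership, and the fact that $B$ is a single full (connected) component to forbid long induced paths through $S$---is the technical core of the proof and the place where the dichotomy between mesh and non-mesh sides does the real work.
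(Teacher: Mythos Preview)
First, a framing note: the present paper does not prove this lemma. It is quoted verbatim from~\cite{p6FreeMaxInd22} (their Lemma~5.6), so there is no ``paper's own proof'' here to compare against. What follows is an assessment of your sketch on its own merits.

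Your argument for part~(b) contains a logical slip. Lemma~\ref{lem:neighborDecomp} is a disjunction: at least one of outcomes (i)--(iii) holds. Since $v$ is complete to $A_0\cup D \ni p_A,q_A$, outcome~(ii) already holds, so excluding outcome~(i) does not ``force the third, module-respecting outcome''; the lemma is simply satisfied by~(ii) and yields nothing further about how $N(v)\cap A$ decomposes into modules. Separately, your exclusion of outcome~(i) via a $P_6$ of the form $BBvAAA$ presupposes that $v\in S$ has a non-neighbour in $B$, which need not be the case. Without outcome~(iii) you cannot upgrade ``complete to one representative of each module'' to ``complete to all of $A$'', and you have also not established that $A_0\cup D$ contains a representative of every maximal strong module of $A$ (the set $T=N(A_0\cup B_0)$ may swallow an entire module that then sits in neither $A_0$ nor $D$).

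For part~(a) the gap is acknowledged rather than hidden: you say that recovering vertices $a\in A\setminus(A_0\cup D)$ is ``the technical core'' and leave it undone. The hedge that ``possibly the union of candidates produced by several different guesses of $(A_0,B_0)$ is needed to cover every $a\in A$'' is fatal to the stated approach, because a fuzzy version must be a \emph{single} set $A^+\in\mathcal{A}$ with $A\subseteq A^+$; a union across guesses is not one of the enumerated candidates. As it stands, the proposal is a plausible outline of the right shape (guess constant-size anchors, take a component, pad with vertices complete to what you have) but neither containment is actually established.
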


We also use the following lemma about minimal elements in quasi-orders. A \textit{quasi-order} is a pair $(X, \preceq)$ so that $X$ is a set and $\preceq$ is a reflexive and transitive relation on $X$.

\begin{lemma}[\citeReference{Lemma~4.1}{p6FreeMaxInd22}]
\label{lem:biRanking}
Let $X$ be a non-empty finite set, and let $(X, \preceq_0)$ and $(X, \preceq_1)$ be quasi-orders such that each pair of elements of $X$ is comparable either with respect to $\preceq_0$ or with respect to $\preceq_1$ (or both). Then there exists an element $x \in X$ such that for every $y \in X$, either $x \preceq_0 y$ or $x \preceq_1 y$ (or both).
\end{lemma}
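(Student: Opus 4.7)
I would prove the lemma by induction on $|X|$, with the base case $|X|=1$ immediate by reflexivity. For the inductive step, the plan is to identify a strictly smaller subset $X' \subsetneq X$ to which the inductive hypothesis applies, and then lift a bi-sink of $X'$ to one of $X$.

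First, fix any $\preceq_0$-minimal element $m$ of $X$ (which exists by finiteness). If $m$ already works, i.e., $m \preceq_0 y$ or $m \preceq_1 y$ for every $y \in X$, we are done. Otherwise there exists $y_0 \in X$ with $m \not\preceq_0 y_0$ and $m \not\preceq_1 y_0$; the $\preceq_0$-minimality of $m$ forces $\{m,y_0\}$ to be $\preceq_0$-incomparable, and the comparability hypothesis then forces $y_0 \prec_1 m$ strictly. This naturally suggests the choice $X' := \{z\in X : z \prec_1 m \text{ strictly}\}$, which is nonempty (it contains $y_0$) and does not contain $m$, so $|X'| < |X|$. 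The inductive hypothesis applied to $X'$ (which inherits comparability as a subset of $X$) yields a bi-sink $c$ of~$X'$.

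I would then argue that $c$ is a bi-sink for the whole $X$. For $y \in X'$ this is immediate. For $y \notin X'$, the failure of $y \prec_1 m$ together with comparability and the $\preceq_0$-minimality of $m$ yields either $m \preceq_1 y$ (whence $c \preceq_1 m \preceq_1 y$ by transitivity, and we are done), or $m \preceq_0 y$ with $\{m,y\}$ $\preceq_1$-incomparable. In the latter sub-case, applying the comparability hypothesis to the pair $\{c,y\}$ and chaining $\preceq_1$ through $c \preceq_1 m$ rules out $y \prec_1 c$ (it would give $y \preceq_1 m$, contradicting $\preceq_1$-incomparability of $\{m,y\}$); so the only remaining bad possibility is $y \prec_0 c$ strictly together with $c \not\preceq_1 y$.

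The main obstacle is closing this last sub-case, since $c$ may genuinely fail for such a $y$. My proposed remedy is to strengthen the inductive statement: rather than returning an arbitrary bi-sink of $X'$, require the inductive hypothesis to return one that is additionally $\preceq_0$-minimal among the bi-sinks of $X'$. With this refinement, a candidate $y$ witnessing the bad sub-case above lies strictly $\preceq_0$-below $c$, so $y$ itself must then satisfy the bi-sink property for $X$ (otherwise we iterate the same argument with $y$ in place of $c$, obtaining a strict $\prec_0$-descent which must terminate in a finite set). I expect this finite-descent phase, together with the bookkeeping needed to propagate the strengthened hypothesis through the induction, to account for the bulk of the technical work in a fully formal write-up.
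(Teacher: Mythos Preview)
The paper does not give its own proof of this lemma; it is quoted verbatim from \cite{p6FreeMaxInd22} and used as a black box. So there is nothing to compare against at the level of technique.

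On its own merits, your proposal has a genuine gap in the final step. Up to and including the identification of the bad sub-case ($y\notin X'$, $\{m,y\}$ and $\{c,y\}$ both $\preceq_1$-incomparable, $m\preceq_0 y\prec_0 c$) the argument is fine. The problem is the proposed remedy. Your descent ``iterate the same argument with $y$ in place of $c$'' does not go through, because the argument for $c$ used two facts that $y$ does not enjoy: (a) $c$ is a bi-sink of $X'$ (used to handle all $z\in X'$), and (b) $c\prec_1 m$ (used, via transitivity, to handle all $z$ with $m\preceq_1 z$). Since $y\notin X'$, neither (a) nor (b) holds for $y$, so when you test $y$ against some $z$ and it fails, there is no reason the witness $z$ satisfies $z\prec_0 y$; it could equally well satisfy $z\prec_1 y$, and then your ``strict $\prec_0$-descent'' evaporates. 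Strengthening the inductive output to be $\preceq_0$-minimal \emph{among bi-sinks of $X'$} does not help here either, because the offending $y$ lies outside $X'$ and need not be a bi-sink of $X'$, so $y\prec_0 c$ does not contradict that minimality.

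It is conceivable that the $\preceq_0$-minimal bi-sink of $X'$ is in fact always a bi-sink of $X$ (small attempts to build a counterexample collapse against the comparability hypothesis), but if so, this needs an actual argument, not the descent you sketch. As written, the proposal does not close the last sub-case.
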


We use Lemma~\ref{lem:biRanking} to prove the following lemma, which will help us recognize an independent set which is contained in a mixed minimal separator.

\begin{lemma}
\label{lem:indSet}
Let $G$ be a $P_6$-free graph,  $S \subseteq V(G)$ be a set with a mesh full component $A$, and  $I \subseteq S$ be a non-empty independent set. Then there exist a component $M_I$ of $\overline{A}$ and a vertex $x \in I \cap N(M_I)$ so that every vertex $y \in I\setminus N(M_I)$ is a neighbor of every component $D$ of $G-(A \cup S)$ so that $x\in N(D)$.
\end{lemma}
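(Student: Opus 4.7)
The plan is to apply Lemma~\ref{lem:biRanking} to two quasi-orders on $I$ that encode ``adjacency profiles'' toward the two relevant families: components of $\overline{A}$ and components of $G-(A\cup S)$. For $u \in I$, let
\[ \mathcal{M}(u) \coloneqq \{M : M \text{ is a component of } \overline{A}, \; u \in N(M)\}, \qquad \mathcal{D}(u) \coloneqq \{D : D \text{ is a component of } G-(A\cup S), \; u \in N(D)\}, \]
and define $u \preceq_0 v$ iff $\mathcal{M}(u) \subseteq \mathcal{M}(v)$ and $u \preceq_1 v$ iff $\mathcal{D}(u) \subseteq \mathcal{D}(v)$. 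Both are reflexive and transitive. Granting the key comparability claim below, Lemma~\ref{lem:biRanking} produces $x \in I$ with: for every $y \in I$, $x \preceq_0 y$ or $x \preceq_1 y$. Since $A$ is a full component of $S$, every $u \in I \subseteq S = N(A)$ has a neighbor in $A$, so $\mathcal{M}(x) \neq \emptyset$; pick any $M_I \in \mathcal{M}(x)$. For any $y \in I \setminus N(M_I)$, we have $M_I \in \mathcal{M}(x) \setminus \mathcal{M}(y)$, so $x \not\preceq_0 y$, forcing $x \preceq_1 y$, i.e.\ $\mathcal{D}(x) \subseteq \mathcal{D}(y)$, which is exactly the conclusion of the lemma.

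The key step, which I expect to be the main obstacle, is the claim that every pair $u,v \in I$ is comparable under $\preceq_0$ or under $\preceq_1$. I would prove it by contradiction, constructing a forbidden $P_6$. Suppose double-incomparability, giving witnesses $M_u \in \mathcal{M}(u)\setminus \mathcal{M}(v)$, $M_v \in \mathcal{M}(v)\setminus \mathcal{M}(u)$, $D_u \in \mathcal{D}(u)\setminus \mathcal{D}(v)$, $D_v \in \mathcal{D}(v)\setminus \mathcal{D}(u)$, and pick $a_u \in M_u \cap N(u)$, $a_v \in M_v \cap N(v)$, $d_u \in D_u \cap N(u)$, $d_v \in D_v \cap N(v)$. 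Since $M_u$ and $M_v$ are distinct components of $\overline{A}$, they are complete to each other in $G[A]$, so $a_u a_v \in E(G)$. The candidate induced path is then $d_u\, u\, a_u\, a_v\, v\, d_v$.

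To confirm it is an induced $P_6$, the crucial observation is that $A$ is a connected component of $G-S$, so every neighbor of a vertex of $A$ lies in $A \cup S$; since $d_u,d_v \in V(G)\setminus(A\cup S)$, this kills all four edges between $\{a_u,a_v\}$ and $\{d_u,d_v\}$. Independence of $I$ gives $uv \notin E(G)$, and $M_v \notin \mathcal{M}(u)$ together with $M_u \notin \mathcal{M}(v)$ give $u a_v, v a_u \notin E(G)$; the asymmetric choice of $D_u,D_v$ gives $u d_v, v d_u \notin E(G)$, and also forces $d_u \neq d_v$ (else $u,v \in N(D_u)=N(D_v)$), after which $d_u d_v \notin E(G)$ since $D_u \neq D_v$ are distinct components of $G-(A\cup S)$. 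The resulting induced $P_6$ contradicts $P_6$-freeness, which proves the comparability claim and completes the argument. The bookkeeping of distinctness and non-edges (rather than any deep structural insight) is the part that requires the most care.
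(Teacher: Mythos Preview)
Your proof is correct and takes essentially the same approach as the paper: define the two quasi-orders on $I$ via inclusion of adjacency profiles toward components of $\overline{A}$ and of $G-(A\cup S)$, verify pairwise comparability by exhibiting the induced $P_6$ $d_u\,u\,a_u\,a_v\,v\,d_v$ from double-incomparability witnesses, apply Lemma~\ref{lem:biRanking}, and read off $M_I$ and $x$. The only differences are cosmetic (the paper swaps the labels of $\preceq_0$ and $\preceq_1$ and states the $P_6$ more tersely as one ``of the form $D_u u M_u M_v v D_v$'' without spelling out the non-edge checks).
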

\begin{proof}
    For convenience, let $\mathcal{D}$ denote the collection of components of $G-(A \cup S)$, and let $\mathcal{M}$ denote the collection of components of $\overline{A}$; we will obtain one quasi-order from $\mathcal{D}$ and another from $\mathcal{M}$. Notice that if there are two vertices $u,v \in I$ such that there exists both a pair $D_u,D_v \in \mathcal{D}$ so that $N(D_u) \cap \{u,v\} = \{u\}$ and $N(D_v) \cap \{u,v\} = \{v\}$, and a pair $M_u,M_v \in \mathcal{M}$ so that $N(M_u) \cap \{u, v\} =  \{u\}$ and $N(M_v) \cap \{u, v\} =  \{v\}$, then there is a $P_6$ of the form $D_uuM_uM_vvD_v$.

    Consider the quasi-orders $\preceq_0$ and $\preceq_1$ on $I$ defined as follows:
    \begin{align*}
    u \preceq_0 v &\Longleftrightarrow \{D \in \mathcal{D}~|~u \in N(D)\} \subseteq \{D \in \mathcal{D}~|~v \in N(D)\} \textrm{, and}\\
    u \preceq_1 v &\Longleftrightarrow \{M \in \mathcal{M}~|~u \in N(M)\} \subseteq \{M \in \mathcal{M}~|~v \in N(M)\}.
    \end{align*}
    Any two $u,v \in I$ are comparable in at least one of these orders. Hence, Lemma~\ref{lem:biRanking} asserts that there exist $x \in I$ such that for every $y \in I$ either $x \preceq_0 y$ or $x \preceq_1 y$. We pick any $M_I \in \mathcal{M}$ with $x$ as a neighbor (it exists since $I \subseteq S = N(A)$).
\end{proof}

We are ready to prove the main proposition about improving carvers for mixed minimal separators.

\begin{proposition}
\label{prop:betterMixedCarver}
For each positive integer $d$, there exists a polynomial-time algorithm which takes in a $P_6$-free graph $G$ and a set $\widetilde{S} \subseteq V(G)$ and returns a collection $\mathcal{S}' \subseteq 2^{V(G)}$ so that for any maximal treedepth-$d$ structure $\T$ in $G$ and any $\T$-avoiding mixed minimal separator $S$ of $G$, there exists $S' \in \mathcal{S}'$ so that \begin{enumerate}
    \item $S'$ contains $\widetilde{S}$,
    \item $S'\cap \T \subseteq S \cup \widetilde{S}$, and
    \item for each clarified component $\widetilde{D}$ of $G-\widetilde{S}$, no component of $\widetilde{D}-S'$ intersects more than one component of $G-S$.
\end{enumerate}
\end{proposition}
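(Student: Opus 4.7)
The plan is to augment the separator carver $\widetilde{S}$ with additional vertices that isolate the components of $G-S$ lying inside each clarified component $\widetilde{D}$. I will enumerate, over polynomially-many combinations of guesses, candidate sets $S'$; for the correct combination, $S'$ will satisfy (i)--(iii). The main structural engine is Lemma~\ref{lem:mixedFuzzy}: we first enumerate a polynomial-sized family $\mathcal{A}$ of candidates for a fuzzy version $A^+$ of the mesh full component $A$, recalling that $A^+\supseteq A$ and $A^+\setminus A\subseteq S$.

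For each $A^+\in\mathcal{A}$, I will also enumerate the auxiliary guesses $p_A,p_B,T_A,T_B,q_A,q_B$ from Case~3 of Proposition~\ref{prop:sepCarving}; Lemma~\ref{lem:findingPQ} and polynomial enumeration over subsets of $\T$-neighbors keep this combinatorially small. Starting from $\widetilde{S}$, the candidate $S'$ incorporates the Case~3 additions $N(p_A)\setminus T_A$, $N(p_B)\setminus T_B$, and $N(\{p_A,q_A\}\cup T_A)\cap N(\{p_B,q_B\}\cup T_B)$, which (as in the proof of Proposition~\ref{prop:sepCarving}) contribute vertices whose intersection with $\T$ lies in $S\cup\widetilde{S}$. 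Next, the vertices of $A^+\setminus A\subseteq S$ cover every $S$-vertex that is complete to~$A$; since these are in $S$, they are safe for~(ii). To separate the remaining ``hidden'' $S$-vertices inside a clarified $\widetilde{D}$ --- those in $S_{\widetilde{D}}\setminus A^+$, which are not complete to~$A$ --- we invoke Lemma~\ref{lem:indSet} on an independent subset $I$ of such remnants: it supplies a representative $x\in I$ and a component $M_I$ of~$\overline{A}$ with the property that each $y\in I\setminus N(M_I)$ is adjacent to every component $D$ of $G-(A\cup S)$ with $x\in N(D)$. Enumerating the vertex $x$ together with a small representative for $M_I$, and then inserting the vertices of $\widetilde{D}$ identified this way, completes the construction of the candidate $S'$; a case analysis based on Lemma~\ref{lem:neighborDecomp}, mirroring Case~3 of Proposition~\ref{prop:sepCarving}, shows that these insertions are contained in~$S$ and therefore do not violate~(ii).

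The main obstacle is the simultaneous satisfaction of~(ii) and~(iii). Condition~(ii) forbids adding any $\T$-vertex outside $S\cup\widetilde{S}$, while~(iii) demands that within each clarified~$\widetilde{D}$ the added vertices really separate all pairs of components of~$G-S$. The bi-ranking mechanism of Lemma~\ref{lem:biRanking} underlying Lemma~\ref{lem:indSet}, together with $P_6$-freeness of~$G$, is precisely what lets us catch every such $S$-vertex with only polynomially-many guesses: each hidden $S$-vertex in $\widetilde{D}$ is either flushed out directly through $A^+$ (as a member of $A^+\setminus A$) or pinned down through the representative~$x$ and the component $M_I$, so that some candidate~$S'$ in the output family contains enough of $S\cap\widetilde{D}$ to carve~$\widetilde{D}$ into pieces lying entirely in single components of~$G-S$.
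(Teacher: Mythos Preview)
Your proposal identifies the right ingredients (Lemmas~\ref{lem:mixedFuzzy} and~\ref{lem:indSet}), but there are two concrete gaps that prevent it from going through.

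First, you propose to add $A^{+}\setminus A$ to $S'$, asserting that these vertices ``cover every $S$-vertex that is complete to~$A$''. Neither half of this works. The algorithm only has $A^{+}$ in hand, not $A$, so it cannot compute $A^{+}\setminus A$; and Lemma~\ref{lem:mixedFuzzy} only promises $A^{+}\setminus A$ is complete to $A$, not that it \emph{contains} every $S$-vertex complete to $A$. The paper sidesteps this entirely: it guesses the union $M$ of the at most $d$ components of $\overline{A}$ that meet $\T$ (each such component is also a component of $\overline{A^{+}}$, so this is a genuine polynomial guess), and adds $N(M)$ to the working set $X$. One checks $N(M)\cap\T\subseteq S$, so condition~\textit{(ii)} is preserved.

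Second, and more seriously, you never specify how $S'$ is built from the guessed data, and your sketch omits the structural step that makes the construction work. After forming $X\coloneqq\widetilde{S}\cup N(M)\cup(N(p_B)\setminus T_B)$, the paper proves (via a $P_6$ of the form $AAuBBB$, not via Lemma~\ref{lem:neighborDecomp} alone) that some $q_B\in B$ is \emph{complete} to $S\setminus X$; this is what turns $N(q_B)$ into a usable proxy for the unknown remainder of $S$. One then adds $N(M_I)\cap N(q_B)\subseteq S$, and, for each component $D$ of $G-X-N(q_B)$ having $x$ as a neighbour, adds every component $H$ of $N(q_B)\setminus X$ adjacent to $D$ (and argues $H\subseteq S$). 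The verification that the resulting $S'$ contains $N(\mathcal{D})\cap S$ is a specific $P_6$ argument of the form $uvq_BxM_IM$. Your phrase ``inserting the vertices of $\widetilde{D}$ identified this way'' followed by ``a case analysis based on Lemma~\ref{lem:neighborDecomp}, mirroring Case~3'' does not supply any of this mechanism; in particular, without the completeness of $q_B$ to $S\setminus X$ you have no handle on the hidden $S$-vertices inside a clarified component.
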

\begin{proof}
    Let $d$, $G$, $\widetilde{S}$, $S$, and $\T$ be as in the lemma statement. Let $A$ and $B$ denote the mesh and non-mesh full sides of $S$, respectively. Additionally, let $\mathcal{D}$ denote the set of all vertices of $G-S$ which are in a clarified component of $G-\widetilde{S}$. So $\mathcal{D}$ is the union of some components of $G-(A \cup S \cup \widetilde{S} \cup B)$, and the graph $G-\widetilde{S}$ has no path between $\mathcal{D}$ and $A \cup B$. We will find a set $S'$ which satisfies conditions~\textit{(i)} and~\textit{(ii)} of the proposition and includes $N(\mathcal{D}) \cap S$; this implies condition~\textit{(iii)}.

    Notice that there are at most $d$ components of $\overline{A}$ which intersect $\T$; let $M\subseteq V(G)$ denote the union of these components. We claim that we can guess $M$. By Lemma~\ref{lem:mixedFuzzy}, we can, in polynomial-time, obtain a set $\mathcal{A} \subseteq 2^{V(G)}$ which includes a fuzzy version of $A$. That is, there exists $A^+\in \mathcal{A}$ so that $A \subseteq A^+$ and $A^+ \setminus A$ is complete to $A$. Guess this set $A^+ \in \mathcal{A}$; there are polynomially-many choices. Now each component of $\overline{A}$ is also a component of the complement of $A^+$ and can thus be guessed. So indeed we can guess $M$, as it is the union of at most $d$ components of $\overline{A^+}$. We will use the fact that $M$ is non-empty, which follows from the fact that $A \cap \T$ is non-empty by Lemma~\ref{lem:findingPQ}.

    Now we define an intermediate set $X \subseteq V(G)$ which contains $\widetilde{S}$ and is our current best guess at $S'$. To begin with we set $X\coloneqq \widetilde{S} \cup N(M)$; these vertices are safe to include since $N(M) \cap \T \subseteq S$. Next, by Lemma~\ref{lem:findingPQ}, there exists a vertex $p_B \in B$ which has at most $d-1$ neighbors in $\T$. We guess this vertex, along with which of its neighbors are in $\T\cap B$, and then we add all of its other neighbors to $X$. This completes the definition of $X$. Notice that $X \cap \T \subseteq S \cup \widetilde{S}$, that $S \setminus X$ is anticomplete to $M \cup \{p_B\}$, and that $G-X$ has no path between $\mathcal{D}$ and $A \cup B$ (this follows from the fact that $\widetilde{S} \subseteq X$). We also remark that $X \subseteq A \cup B \cup S \cup \widetilde{S}$.

    We claim that there exists a vertex $q_B \in B$ which is complete to $S \setminus X$. If $S \setminus X$ is empty, then this is trivially true, so assume that it is non-empty. Then $|B|>1$ since $S \setminus X$ is anticomplete to $p_B$. So by Lemma~\ref{lem:findingPQ}, there is a vertex $q_B \in B$ so that $p_B$ and $q_B$ are adjacent and in different maximal strong modules of $B$. If $S \setminus X$ is not complete to $q_B$, then by Lemma~\ref{lem:neighborDecomp} applied to the full component $B$ of $S$, we obtain a vertex $u \in S \setminus X$ which is in a $P_4$ of the form $uBBB$. However, $u$ is also in a $P_3$ of the form $uAA$ since $S \setminus X$ is anticomplete to $M$ (which is non-empty). But then we obtain a $P_6$ of the form $AAuBBB$, which contradicts the fact that $G$ is $P_6$-free. Consequently, that $S \setminus X$ is complete to $q_B$. We guess such a vertex $q_B$.

    Now form an independent set $I \subseteq S\setminus X$ as follows. For each component of $S\setminus X$ which has a neighbor in $\mathcal{D}$, choose one vertex with a neighbor in $\mathcal{D}$ and add that vertex to $I$. (We are not saying that we can guess $I$, just that it exists.) We may assume that $I$ is non-empty since otherwise the proposition holds with $S' \coloneqq X$. Now apply Lemma~\ref{lem:indSet} to the subgraph induced on $A \cup S  \cup \mathcal{D}$. Thus, there exist a component $M_I$ of $\overline{A}$ and a vertex $x \in I \cap N(M_I)$ so that every vertex $y \in I\setminus N(M_I)$ is a neighbor of every component $D$ of $\mathcal{D}$ so that $x\in N(D)$. We can guess $M_I$ for the same reason we were able to guess $M$ (because $M_I$ is a component of $\overline{A}$ and we can guess the fuzzy version $A^+$ of $A$).

    We will prove that the following set $S'$ satisfies the proposition. First we add $X$ and $N(M_I)\cap N(q_B)$ to $S'$. These vertices are safe to add since $X\cap \T \subseteq S \cup \widetilde{S}$ and $N(M_I)\cap N(q_B) \subseteq S$.
We observe that since $X \subseteq A \cup B \cup S \cup \widetilde{S}$, we have $S' \subseteq A \cup B \cup S \cup \widetilde{S}$ at this moment.
    Now consider each component $D$ of $G-X - N(q_B)$ which has $x$ as a neighbor.
    Clearly, $D$ is disjoint from $S$ as $S \setminus X \subseteq N(q_B)$.
    Let $H$ be a component of $N(q_B) \setminus X$ that contains a neighbor of $D$.
    Since $x$ has a neighbor in $\mathcal{D}$, there is a component of $G-\widetilde{S}$ that contains $H$, $D$, $x$, and a component of $\mathcal{D}$, hence, it is disjoint from $A \cup B$.
    In particular, $D$ is disjoint with $A \cup B$, so $N(D) \subseteq S \cup \widetilde{S}$ as $X \subseteq A \cup B \cup S \cup \widetilde{S}$.
    Furthermore, we have $H \subseteq S$. Over all choices of $D$ and $H$ as above, we add the component $H$ to $S'$.

    We have already proved that conditions~\textit{(i)} and~\textit{(ii)} of the proposition hold for $S'$. Recall that, in order to obtain the final condition~\textit{(iii)}, it is enough to show that $S'$ contains $N(\mathcal{D}) \cap S$. So, going for a contradiction, suppose that there exists a vertex $u \in \mathcal{D}$ which has a neighbor $v \in S \setminus S'$. Let $H$ be the component of $S \setminus X$ which contains $v$. Then $x$ is disjoint from and anticomplete to $H \cup \{u\}$, since otherwise we would have added $v$ to $S'$. However, now there is a $P_6$ of the form $uvq_BxM_IM$, which contradicts the fact that $G$ is $P_6$-free. (To see that there is a $P_6$ of this form, recall that $q_B$ is complete to $S \setminus X$, $x$ has a neighbor in $M_I$ while $u$ and $v$ do not, and $S \setminus X$ is anticomplete to $M$, which is non-empty; therefore $M_I$ is a component of $\overline{A}$ which is not any of the components of $\overline{A}$ we used to define $M$.) This contradiction completes the proof of Proposition~\ref{prop:betterMixedCarver}.
\end{proof}

\section{Not-two-sided PMCs}
\label{sec:not2Sided}
A potential maximal clique $\pmc$ in a graph $G$ is \emph{two-sided}
if there exist two distinct connected components $D_1,D_2$ of $G-\pmc$
such that for every connected component $D$ of $G-\pmc$, we have $N(D) \subseteq N(D_1)$
or $N(D) \subseteq N(D_2)$.

The following statement has been essentially proven in~\cite{p6FreeMaxInd22}.
However, it has been proven only with the \textsc{Max Weight Independent Set} problem in mind,
so we need to slightly adjust the argumentation to fit the more general setting of this paper.

\begin{theorem}\label{thm:not-two-sided}
For every positive integer $d$
there exists a polynomial-time algorithm that, given a $P_6$-free graph $G$
outputs a family $\carvers \subseteq 2^{V(G)}$ with the following guarantee:
for every maximal tree-depth-$d$ structure $\T$ in $G$
and every potential maximal clique $\pmc$ of $G$ that is $\T$-avoiding and not two-sided,
there exists $C \in \carvers$ that is a container for $\pmc$, i.e., $\pmc \subseteq C$
and $C \cap V(\T) = \pmc \cap V(\T)$.
\end{theorem}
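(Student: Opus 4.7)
My plan is to reduce to the corresponding statement for \textsc{MWIS} already proved in~\cite{p6FreeMaxInd22}, using two bridging observations: (1) the intersection $\Omega \cap \T$ is small and can be enumerated, and (2) Lemma~\ref{lem:PMCmaximality} is exactly the right substitute for the ``maximality of the independent set'' property exploited in~\cite{p6FreeMaxInd22}. Since $\Omega$ is $\T$-avoiding, the set $\Omega \cap \T$ is contained in a vertical path of $\T$ and does not contain any depth-$d$ vertex of $\T$, so it has at most $d-1$ vertices. We can therefore enumerate $|V(G)|^{d-1}$ candidate sets $Y$ for $\Omega \cap \T$; for each $Y$ we attempt to produce containers $C$ with $C \cap \T = Y$, and we put all produced sets into $\carvers$. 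Fixing $Y$, we are left with the task of containing the vertices of $\Omega \setminus \T$ in some polynomial-size collection.

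The key observation is that the role of the ``maximal independent set'' $I$ in the arguments of~\cite{p6FreeMaxInd22} is now played by $\T \setminus Y$. Indeed, the only property of $I$ used in the ``not-two-sided'' portion of~\cite{p6FreeMaxInd22} is the witness property: every vertex of an $I$-free PMC $\Omega$ has a neighbor in $I\setminus \Omega$. By Lemma~\ref{lem:PMCmaximality}, every vertex in $\Omega \setminus \T$ has a neighbor in $\T \setminus \Omega \subseteq \T \setminus Y$, giving exactly the analogous witness property. Vertices of $\Omega \cap \T = Y$ are not forced to have such a witness, but this is fine because they are explicitly included in the container through the guess $Y$.

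With this correspondence, I would run the structural argument of~\cite{p6FreeMaxInd22} for not-two-sided PMCs essentially verbatim. In that argument, one exploits the fact that a not-two-sided $\Omega$ has at least three principal components of $G-\Omega$ with pairwise incomparable neighborhoods; one guesses a constant number of landmark vertices inside these principal components and, using $P_6$-freeness together with the witness property, shows that $\Omega$ is contained in the union of the closed neighborhoods of the landmarks. This gives a candidate set $C_0$ with $\Omega \subseteq C_0$. To enforce $C_0 \cap \T = Y$, I remove from $C_0$ the vertices of $\T \setminus Y$ that leaked into it. Since we do not know $\T$ in advance, we also branch over the at most $d-1$ vertices of $\T$ in the closed neighborhood of each landmark (these are the ancestors of the landmark in $\T$, analogously to Lemma~\ref{lem:leafyPMC}); over all constantly-many landmarks this contributes only a polynomial factor. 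One of these branches removes exactly $(\T \setminus Y) \cap C_0$, yielding the desired container. All produced containers are added to $\carvers$.

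The main obstacle is the careful re-reading of the ``not-two-sided'' portion of~\cite{p6FreeMaxInd22} needed to verify that no argument there secretly uses the independence of $I$ beyond the witness property above. The relevant arguments are almost entirely short induced-path analyses in $P_6$-free graphs (finding or forbidding induced $P_6$'s through landmarks and witnesses), so I expect every appeal to ``$u \in I$'' to translate directly to ``$u \in \T \setminus Y$''. If any step does use independence, for example by claiming that two witnesses in $I$ are non-adjacent to extend a path, then a localized replacement will be needed; such a replacement should be handled by the observation that the height of $\T$ is bounded by the constant $d$, so any adjacencies among witnesses lie along vertical paths of bounded length and can be absorbed into the guess.
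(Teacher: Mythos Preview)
Your high-level strategy matches the paper's: adapt the not-two-sided analysis of~\cite{p6FreeMaxInd22} by replacing the maximal independent set $I$ with the treedepth structure $\T$, using Lemma~\ref{lem:PMCmaximality} as the substitute for maximality of $I$. However, your reduction is oversimplified in two concrete places.

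First, your claim that ``the only property of $I$ used in the not-two-sided portion of~\cite{p6FreeMaxInd22} is the witness property'' is incorrect. The independence of $I$ is used in a second, essential way: whenever a mesh component $D$ is analysed, the argument uses that $I$ meets at most \emph{one} maximal strong module of $D$ (since distinct modules of a mesh component are complete to each other), and then guesses a single vertex $p\in I\cap M_p$. With a treedepth-$d$ structure, $\T$ may meet up to $d$ modules of $D$, so one cannot simply pick a single $p$. The paper's fix is to guess a \emph{footprint} $P\subseteq \T\cap D$ of size at most $d$, one vertex per module of $D$ that $\T$ intersects, and a satellite $q$ in a different module; all downstream uses of $N[p]$ or $N[p,q]$ are replaced by $N[P]$ or $N[P\cup\{q\}]$. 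Your fallback paragraph gestures in the right direction (bounded height), but does not locate this step or supply the footprint construction.

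Second, your cleanup step is not justified as stated. You claim each landmark has at most $d-1$ neighbours in $\T$ because ``these are the ancestors of the landmark in $\T$''. That holds only for a very specific choice of landmark (a maximum-depth vertex of $\T$ inside the component, as in Lemma~\ref{lem:findingPQ}); an arbitrary landmark, even one in $\T$, may have many descendants in $\T$ as neighbours. More importantly, the output of the~\cite{p6FreeMaxInd22} argument is not a union of closed neighbourhoods of landmarks; it is either $\Omega$ itself, or the set $\Omega\cup D_1\cup D_2$ together with fuzzy versions $D_1^+,D_2^+$ of two mesh components. The paper's cleanup is accordingly different: for each such triple, branch over all choices of at most $d$ maximal strong modules of $D_1^+$ and of $D_2^+$, and output $(\Omega\cup D_1\cup D_2)$ minus the chosen modules. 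This removes exactly the unwanted vertices of $\T$, since $\T\cap D_i$ lies in at most $d$ modules of $D_i$.
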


As mentioned, Theorem~\ref{thm:not-two-sided} is essentially proven in Section~5 of~\cite{p6FreeMaxInd22}.
There, for a fixed maximal independent set $I$, a PMC $\pmc$ is \emph{$I$-free} if it is disjoint with $I$.
This assumption here is replaced with $\pmc$ being $\T$-avoiding for a fixed tree-depth-$d$ structure $\T$.
Informally speaking, to adjust it to our setting, we need to make three adjustments within the proof of~\cite{p6FreeMaxInd22}.
\begin{enumerate}
\item Often, when mesh component $D$ is analyzed, it is argued that the independent set $I$ intersects at most one maximal module $M_p$ of $D$, and a vertex $p \in M_p \cap I$ is guessed. This step is usually followed by a guess of an arbitrary vertex $q$ in a different maximal strong module of $D$.

In our case, the tree-depth-$d$ structure $\T$ can intersect at most $d$ modules of $D$, and the guess of $p$ is replaced
with a guess of a set $P$ of at most $d$ vertices of $\T \cap D$, one vertex from each maximal strong module of $D$ that intersects $\T$.
For $q$, it is enough to take an arbitrary vertex of $D$, unless $|P|=1$ (i.e., $\T$ intersects only one maximal strong module of $D$)
where we need to pick $q$ from a different maximal strong module. In this manner, we maintain the property that $P \cup \{q\}$ contains
vertices of at least two maximal strong modules of $D$, so in particular $D \subseteq N[P \cup \{q\}]$.
Whenever later the proof of~\cite{p6FreeMaxInd22} considers $N[p]$ or $N[p,q]$, we consider here $N[P]$ or $N[P \cup \{q\}]$ instead.

In what follows, we call such a set $P$ a \emph{footprint of $\T$ in $D$} and the vertex $q$ a \emph{satellite of the footprint $P$}.
\item When a PMC $\pmc$ that is disjoint with the maximal independent set $I$ is analyzed, and we often argue that the maximality of $I$ implies that every $v \in \pmc$ has a neighbor in $I$ that is outside $\pmc$.
In our case, Lemma~\ref{lem:PMCmaximality} gives the same corollary, except for the vertices of $\T \cap \pmc$, but there are fewer than $d$
of them and they can be guessed separately.
\item Finally, the notion of a \emph{neighbor-maximal} component of Section~5.3 of~\cite{p6FreeMaxInd22} is a bit incompatible
with our statement, as it considers two components $D_1,D_2$ of $G-\pmc$ with $N(D_1) = N(D_2)$ both
\emph{not} neighbor-maximal. This definition restricts the set of all PMCs with more than two neighbor-maximal components.
We observe that the assumption ``more than two neighbor-maximal components'' is used only once in the proof
and can be easily replaced with the (slightly weaker) assumption of being not two-sided.
\end{enumerate}

Let us now have a closer look at Section~5 of~\cite{p6FreeMaxInd22} and provide formal details.
The toolbox in the earlier sections nor Lemmas~5.2 up to Lemma~5.7 use the notion of $I$-freeness, so they
can be used in our setting without any modifications.

Lemma~5.8 of~\cite{p6FreeMaxInd22}, the main result of Section~5 there, would now obtain the following form.
\begin{lemma}[analog of Lemma~5.8 of~\cite{p6FreeMaxInd22}]\label{lem:gkpp-58}
For every integer $d$ there exists a polynomial-time algorithm that, given on input
a $P_6$-free graph $G$, outputs two families $\mathcal{F}_9^1$ and $\mathcal{F}_9^2$ such that the following
holds: for every maximal tree-depth-$d$ structure $\T$ in $G$
and every potential maximal clique $\pmc$ of $G$ that is $\T$-avoiding and not two-sided,
either $\mathcal{F}_9^1$ contains $\pmc$ or $\mathcal{F}_9^2$ contains a triple $(\pmc \cup D_1 \cup D_2, D_1^+, D_2^+)$
for some components $D_1,D_2$ of $G-\pmc$ that are mesh, where $D_i^+$ is a fuzzy version of $D_i$ for $i \in \{1,2\}$.
\end{lemma}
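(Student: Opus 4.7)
The plan is to follow the proof of Lemma~5.8 in~\cite{p6FreeMaxInd22} step by step, applying the three adjustments outlined above. First, since $\pmc$ is $\T$-avoiding, the set $\pmc \cap V(\T)$ is contained in a vertical path of $\T$ of length at most $d-1$, so we begin by branching over all at most $|V(G)|^{d-1}$ choices of a set $W \subseteq V(\T)$ playing the role of $\pmc \cap V(\T)$; any output set will automatically have $W$ added to it, so this ensures that container-type outputs contain $\pmc \cap V(\T)$ without any further care. Similarly, any container we produce is allowed to contain any subset of $V(G) \setminus V(\T)$, so we only have to control which vertices of $V(\T) \setminus W$ are excluded from it.

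The core replacement is the treatment of mesh components $D$ of $G-\pmc$. In~\cite{p6FreeMaxInd22} the maximality of $I$ is used to guess a single vertex $p \in I \cap D$ sitting in a canonically chosen maximal strong module $M_p$ of $D$, together with an auxiliary vertex $q \in D$ from another module. In our setting, we instead enumerate a \emph{footprint} $P \subseteq D \cap V(\T)$ containing exactly one representative of every maximal strong module of $D$ that meets $V(\T)$; since the number of such modules is at most $|V(\T)| \leq |V(G)|$ and the depth bound forces $|P| \leq d$, enumeration costs only $|V(G)|^{\Oh(d)}$. We pick a satellite $q \in D$ arbitrarily from a different module when $|P|=1$, and arbitrarily from $D$ otherwise, so that $P \cup \{q\}$ always meets at least two maximal strong modules of $D$ and hence satisfies $D \subseteq N[P \cup \{q\}]$. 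Every occurrence of $N[p]$, $N[p,q]$, or $N[\{p,q\} \cup T_p]$ in the arguments of~\cite{p6FreeMaxInd22} is then replaced by $N[P]$, $N[P \cup \{q\}]$, or $N[P \cup \{q\} \cup T_P]$, and each application of Lemma~\ref{lem:neighborDecomp} or Lemma~\ref{lem:neighborCover} is invoked for the pair $(p_i,q)$ for a suitably chosen $p_i \in P$. The $P_6$-freeness arguments carry over because any induced $P_6$ found through a vertex of the original $p$ can equally be found through the corresponding vertex of $P$; the fact that $P \cup \{q\}$ still meets at least two modules is exactly what is required for the mesh arguments.

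The second adjustment replaces every invocation of the maximality of $I$ by Lemma~\ref{lem:PMCmaximality}: for any $v \in \pmc \setminus V(\T)$, there is a neighbor of $v$ in $V(\T) \setminus \pmc$. The finitely many vertices of $\pmc \cap V(\T)$ that no longer enjoy this property lie in the pre-guessed set $W$ and therefore contribute nothing that has not already been fixed. The third adjustment is local to the top-level argument of Lemma~5.8: the hypothesis ``more than two neighbor-maximal components'' is used only to pick a third component whose neighborhood is not covered by $N(D_1) \cup N(D_2)$, and this conclusion is precisely what ``not two-sided'' provides. If such a third component exists, the construction from~\cite{p6FreeMaxInd22} returns a container which we add to $\mathcal{F}_9^1$; otherwise we are in the two-mesh-component case, we enumerate fuzzy versions $D_1^+, D_2^+$ of the two identified components using the same enumeration scheme as in Lemma~\ref{lem:mixedFuzzy} (applied with $D_i$ playing the role of the mesh full side), and we add the triple $(\pmc \cup D_1 \cup D_2, D_1^+, D_2^+)$ to $\mathcal{F}_9^2$.

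The main obstacle is bookkeeping: one must revisit each auxiliary lemma and each case distinction in Sections~5.1--5.3 of~\cite{p6FreeMaxInd22} to confirm that the footprint-plus-satellite replacement preserves the structural conclusions (in particular, that arguments producing a forbidden $P_6$ of the form $XXpYYY$ transfer to paths routed through any element of $P$, and that arguments using $N[p] \cap N[q]$ remain sound when $p$ is replaced by any member of $P$). Since the depth of $\T$ is a constant, all enumeration factors remain polynomial, and running the entire adjusted proof produces the two families $\mathcal{F}_9^1$ and $\mathcal{F}_9^2$ satisfying the statement.
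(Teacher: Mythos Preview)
Your high-level plan—carrying out the three adjustments (footprint/satellite, Lemma~\ref{lem:PMCmaximality} in place of maximality of $I$, and relaxing ``more than two neighbor-maximal components'' to ``not two-sided'')—matches the paper's approach. However, the top-level argument in your third paragraph has two concrete gaps that the paper handles differently.

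First, your treatment of the ``not two-sided'' relaxation is incomplete. You assert that ``not two-sided'' directly provides a third component $D$ with $N(D)\not\subseteq N(D_1)\cup N(D_2)$, but this is not what the definition gives: it only guarantees, for every pair, some $D$ with $N(D)\not\subseteq N(D_1)$ \emph{and} $N(D)\not\subseteq N(D_2)$. The paper's argument instead observes that whenever two components of $G-\pmc$ share the same neighborhood, that neighborhood is a \emph{subordinate} minimal separator (it has three full sides), so its full components are captured by Corollary~\ref{cor:subordinate}. Adding all full components of subordinate separators to the working family $\mathcal{G}$ is what bridges the gap between ``more than two neighbor-maximal components'' and ``not two-sided''; you do not mention this step.

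Second, your plan to obtain fuzzy versions of $D_1,D_2$ via Lemma~\ref{lem:mixedFuzzy} does not work in general: that lemma applies only when $N(D_i)$ is a \emph{mixed} minimal separator, but here $N(D_i)$ could be mesh (both full sides mesh). The paper instead uses Lemma~5.7 of~\cite{p6FreeMaxInd22}, which requires as input a \emph{known} component $D'$ of $G-\pmc$, distinct from $D_1,D_2$, with $N(D')\not\subseteq N(D_i)$. The existence of such $D'$ is exactly where ``not two-sided'' is invoked, and the fact that $D'\in\mathcal{G}$ (hence is enumerable) relies on the subordinate-separator step above. Without these two ingredients your proposal does not produce the family $\mathcal{F}_9^2$.
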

Note that Theorem~\ref{thm:not-two-sided} follows easily from
Lemma~\ref{lem:gkpp-58}: we insert into $\carvers$ every element of $\mathcal{F}_9^1$ and, for
every $(K, L_1,L_2) \in \mathcal{F}_9^2$, every choice of at most $d$ maximal strong modules of $L_1$
and every choice of at most $d$ maximal strong modules of $L_2$, we insert into $\carvers$ the set $K$ minus the chosen modules.
Thus, it remains to prove Lemma~\ref{lem:gkpp-58}.

The proof of Lemma~5.8 of~\cite{p6FreeMaxInd22} splits into three lemmas: Lemma~5.9, Lemma~5.10, and Lemma~5.11.
These statements have a fixed $P_6$-free graph $G$ and a maximal independent set $I$ in their context.
In our setting, instead of $I$ we fix an integer $d$ and a maximal tree-depth-$d$ structure $\T$ in $G$.

Lemma~5.9 of~\cite{p6FreeMaxInd22} takes the following form.
\begin{lemma}[analog of Lemma~5.9 of~\cite{p6FreeMaxInd22}]\label{lem:gkpp-59}
Suppose $\pmc$ is a $\T$-avoiding PMC in $G$ and $D$ is a component of $G-\pmc$ which is mesh.
Let $P$ be a footprint of $\T$ in $D$ and let $q$ be a satellite of $P$.
Let $J \subseteq N(D)$ be an independent set with the following property: for every $v \in J$,
    the set $N(v) \cap D$ consists of some maximal strong modules of $D$ and is disjoint with $\T \cap D$.
Then there exists $w \in D$ and a component $D'$ of $G-\pmc$, distinct from $D$, such that
$J \subseteq (\T \cap \pmc) \cup N(w) \cup N(D')$.
\end{lemma}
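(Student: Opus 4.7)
The plan is to combine the maximality of $\T$ (through Lemma~\ref{lem:PMCmaximality}), the modular structure of the mesh component $D$, and two applications of $P_6$-freeness. First, reduce to the case $J' := J \setminus (\T \cap \pmc) \neq \emptyset$, as otherwise the conclusion is trivial: any $w \in D$ and any component $D' \neq D$ of $G-\pmc$ will do (and such a $D'$ exists because $N(D) \subsetneq \pmc$ for any PMC, and any vertex in $\pmc \setminus N(D)$ that is not in $\T$ supplies such a component via Lemma~\ref{lem:PMCmaximality}). For each $v \in J'$, we have $v \in \pmc \setminus \T$, so Lemma~\ref{lem:PMCmaximality} provides a neighbour $z_v \in \T \setminus \pmc$; the assumption that $N(v) \cap D$ is disjoint from $\T$ forces $z_v \notin D$, hence $z_v$ belongs to some component $D_v \neq D$ of $G-\pmc$. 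Thus every vertex of $J'$ witnesses at least one ``outside'' component, which will be our pool of candidates for $D'$.

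Second, establish the following sublemma by a direct induced-$P_6$ construction: \emph{for $v, v' \in J'$ with no common neighbour in $D$, there exists a component $D^* \neq D$ of $G - \pmc$ with $v, v' \in N(D^*)$.} Writing $\mathcal{M}_v$ for the set of maximal strong modules of $D$ whose union is $N(v) \cap D$, the hypothesis reads $\mathcal{M}_v \cap \mathcal{M}_{v'} = \emptyset$. Pick any $x \in N(v) \cap D$ and $x' \in N(v') \cap D$; these vertices lie in distinct modules of the mesh graph $D$, and therefore $xx' \in E(G)$. If $D_v = D_{v'}$ this component serves as $D^*$; otherwise, suppose for contradiction that $v \notin N(D_{v'})$ and $v' \notin N(D_v)$. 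Then the sequence $z_v\, v\, x\, x'\, v'\, z_{v'}$ is readily verified to be an induced $P_6$ in $G$ (the five consecutive edges exist by construction, and every non-consecutive pair is a non-edge: $z_v$--side and $z_{v'}$--side because components of $G-\pmc$ are pairwise anticomplete, $v\, x'$ and $x\, v'$ by $\mathcal{M}_v \cap \mathcal{M}_{v'} = \emptyset$, and $v\, v'$ by independence of $J$), contradicting $P_6$-freeness.

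Finally, choose $D' \neq D$ to be a component of $G-\pmc$ maximising $|N(D') \cap J'|$ and set $U := J' \setminus N(D')$; the claim is that $\bigcap_{v \in U}\,(N(v) \cap D) \neq \emptyset$, and any $w$ in this intersection, together with this $D'$, then yields $J \subseteq (\T \cap \pmc) \cup N(w) \cup N(D')$. Assuming the intersection were empty, the sublemma supplies $v, v' \in U$ with disjoint module sets and a component $D^* \neq D, D'$ covering both; the maximality of $|N(D') \cap J'|$ then forces at least one vertex $u \in (N(D') \cap J') \setminus N(D^*)$. The hard step is a second induced-$P_6$ construction using one of the paths $v' \,y_v \,v\, x\, u\, y_u$, where $y_v \in D^* \cap N(v)$ is chosen also to be adjacent to $v'$ (possible when $v$ and $v'$ have a common $D^*$-neighbour, which one can always arrange, perhaps by swapping the roles of $v, v'$ or by prepending a further vertex of $D^*$ adjacent to $y_v$ but not to $v$), $x \in N(v) \cap N(u) \cap D$ is a common $D$-neighbour, and $y_u \in D' \cap N(u)$. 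The mesh property of $D$ together with the disjointness of $\mathcal{M}_v, \mathcal{M}_{v'}$ produces the required non-edges, yielding the desired $P_6$. The main obstacle will be the bookkeeping of this final construction: handling the subcase where $v$ and $u$ share no common $D$-neighbour (which invokes the sublemma once more on $(v,u)$ and produces a fourth component $D^{**}$ to be ruled out by maximality) and the degenerate case where $v$ is complete to $D^*$, both of which require some care but follow the same $P_6$-obstruction template.
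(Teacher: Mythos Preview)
Your overall strategy---use Lemma~\ref{lem:PMCmaximality} to attach to each $v\in J':=J\setminus(\T\cap\pmc)$ an outside component $D_v\neq D$, prove a pairwise sublemma, and then globalise---has the right shape, but the globalisation step contains a genuine gap. You write: ``assuming the intersection were empty, the sublemma supplies $v,v'\in U$ with disjoint module sets.'' This inference is false: the family $\{\mathcal{M}_v : v\in U\}$ can have empty total intersection while every pair of members meets (there is no Helly property for arbitrary subfamilies of the module set). Concretely, with three modules $M_1,M_2,M_3$ one can have $\mathcal{M}_{v_1}=\{M_1,M_2\}$, $\mathcal{M}_{v_2}=\{M_2,M_3\}$, $\mathcal{M}_{v_3}=\{M_1,M_3\}$, and your sublemma never fires. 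The subsequent $P_6$ construction also leaves real subcases open (you yourself flag that $v$ and $u$ may share no $D$-neighbour, and the existence of a common $D^*$-neighbour of $v$ and $v'$ is asserted rather than proved), so the endgame is not complete even if a disjoint pair were available.

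The paper does not spell out the argument; it defers to Lemma~5.9 of~\cite{p6FreeMaxInd22} and only records the single adaptation (replacing $I$-freeness by Lemma~\ref{lem:PMCmaximality}, at the cost of the extra term $\T\cap\pmc$). The clean way to finish---and the natural tool from~\cite{p6FreeMaxInd22}, reused in this paper as Lemma~\ref{lem:biRanking} (cf.\ the proof of Lemma~\ref{lem:indSet})---is bi-ranking, which sidesteps the Helly issue entirely. On $J'$ set $v\preceq_0 v'$ iff $\mathcal{M}_v\subseteq\mathcal{M}_{v'}$ and $v\preceq_1 v'$ iff $\mathcal{D}_v\subseteq\mathcal{D}_{v'}$, where $\mathcal{D}_v$ is the set of components $D''\neq D$ of $G-\pmc$ with $v\in N(D'')$. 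If $v,v'$ are incomparable in both orders, witnesses $M\in\mathcal{M}_v\setminus\mathcal{M}_{v'}$, $M'\in\mathcal{M}_{v'}\setminus\mathcal{M}_v$, $D_1\in\mathcal{D}_v\setminus\mathcal{D}_{v'}$, $D_2\in\mathcal{D}_{v'}\setminus\mathcal{D}_v$ yield an induced $P_6$ of the form $D_1\,v\,M\,M'\,v'\,D_2$. Hence Lemma~\ref{lem:biRanking} gives $x\in J'$ with $\mathcal{M}_x\subseteq\mathcal{M}_y$ or $\mathcal{D}_x\subseteq\mathcal{D}_y$ for every $y\in J'$; then any $w$ in a module of $\mathcal{M}_x$ together with any $D'\in\mathcal{D}_x$ (both nonempty, the latter by your first paragraph) covers all of $J'$ at once.
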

\begin{proof}[Proof sketch.]
The proof of Lemma~5.9 of~\cite{p6FreeMaxInd22} uses $I$-freeness of $\pmc$
in only one place: to argue that if $v \in J$ is anti-complete to all vertices of $I$
in $D$, then it needs to be adjacent to a vertex of $I$ in another component $D'$ of $G-\pmc$, so in particular
it is adjacent to some other component of $G-\pmc$.
In our case, $J$ is anti-complete to $D \cap \T$, and Lemma~\ref{lem:PMCmaximality} gives the same corollary, except for
the vertices of $\T \cap \pmc$ that need to be added there separately.
The rest of the proof is the same.
\end{proof}

Similarly we adjust Lemma~5.10 of~\cite{p6FreeMaxInd22}.
\begin{lemma}[analog of Lemma~5.10 of~\cite{p6FreeMaxInd22}]\label{lem:gkpp-510}
Given a family $\mathcal{X} \subseteq 2^{V(G)}$, one can in time polynomial in the size of $G$ and the size of $\mathcal{X}$
compute a family $\mathcal{F}_7(\mathcal{X}) \subseteq 2^{V(G)}$ with the following properties:
for every $\T$-avoiding PMC $\pmc$ and every component $D$ of $G-\pmc$, if all components of $G-\pmc$, except
for possibly $D$, belong to $\mathcal{X}$, then all components of $G-\pmc$ belong to $\mathcal{F}_7(\mathcal{X})$.
\end{lemma}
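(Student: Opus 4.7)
\smallskip\noindent\textbf{Proof plan for Lemma~\ref{lem:gkpp-510}.}
Initialize $\mathcal{F}_7(\mathcal{X}):=\mathcal{X}$; since by hypothesis $\mathcal{X}$ already contains every component of $G-\pmc$ except possibly $D$, it suffices to enumerate a polynomial-size collection of sets that is guaranteed to contain $D$ itself on a correct sequence of guesses. The structural leverage comes from Lemma~\ref{lem:PMCmaximality}: every vertex of $\pmc\setminus \T$ has a neighbor in $\T\setminus \pmc$. Writing $P:=\pmc\cap\T$, this yields the containment
\[
\pmc\ \subseteq\ P\cup N_G(\T\setminus P).
\]
Since $\pmc$ is $\T$-avoiding, $P$ lies on a vertical path of $\T$ and contains no depth-$d$ vertex of $\T$, so there are only polynomially many candidates for $P$, all of which we enumerate.

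The plan is to split the analysis into two cases depending on whether $D$ meets $\T$. First, consider the case $D\cap \T=\emptyset$. Then $\T\setminus P\subseteq V(G)\setminus\pmc$ lies entirely in components of $G-\pmc$ other than $D$, and since no edges run between distinct components of $G-\pmc$, the set $N_G(\T\setminus P)$ is disjoint from $D$. Hence $S_P:=P\cup N_G(\T\setminus P)$ is a superset of $\pmc$ that is disjoint from $D$, so $D$ is exactly the connected component of $G-S_P$ containing any of its vertices. Accordingly, for each guess of $P$ and each $v\in V(G)\setminus S_P$, the algorithm inserts into $\mathcal{F}_7(\mathcal{X})$ the connected component of $v$ in $G-S_P$; the correct guess exposes $D$.

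For the remaining case $D\cap\T\neq\emptyset$, the set $S_P$ above may cut through $D$ (neighbors in $\T$ of vertices of $D\cap\T$ may themselves lie in $D$), so a finer argument is needed. Here we guess an auxiliary vertex $v^*\in D\cap\T$, together with a $\T$-carver $\widetilde{S}$ for the minimal separator $N(D)$ supplied by Proposition~\ref{prop:sepCarving}; note that $N(D)$ is indeed $\T$-avoiding because $N(D)\subseteq\pmc$ which is $\T$-avoiding. Combining these with $P$ and with a small number of elements $X_1,X_2\in\mathcal{X}$ meant to represent components of $G-\pmc$ on the opposite side of $N(D)$, the algorithm adds to $\mathcal{F}_7(\mathcal{X})$ the connected component of $v^*$ in $G$ after deletion of the union of these guessed sets. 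A case distinction on the type of $N(D)$ (subordinate or non-mesh, mixed, or mesh) then shows that on a correct combination of guesses the resulting component equals $D$.

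The main obstacle is the sub-case where $D\cap\T\neq\emptyset$ and $N(D)$ is mesh, for then $\widetilde{S}$ is only guaranteed to carve $D$ away rather than to preserve it intact, so the component extracted above may be a strict subset of $D$. Overcoming this requires using the separation properties of the elements of $\mathcal{X}$ on the opposite full component of $N(D)$ to argue that, after further intersecting with the appropriate $N_G(X_i)$'s, the pieces into which $\widetilde{S}$ cuts $D$ still merge into a single component of the final deleted graph. This part plays in our setting the role played by $I$-freeness and the neighbor-covering arguments from Section~5 of~\cite{p6FreeMaxInd22}, and proceeds in close analogy to them.
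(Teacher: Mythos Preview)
Your plan has a basic but fatal gap: the set $S_P := P \cup N_G(\T\setminus P)$ is not computable by the algorithm. In the setting of this lemma (and of all the auxiliary lemmas in Section~\ref{sec:not2Sided}), the structure $\T$ is fixed only for the \emph{analysis}; the algorithm receives just $G$ and $\mathcal{X}$ and must output a single family $\mathcal{F}_7(\mathcal{X})$ that works for \emph{every} maximal treedepth-$d$ structure $\T$ (cf.\ the statements of Theorem~\ref{thm:not-two-sided} and Proposition~\ref{prop:sepCarving}, where the output is quantified over all $\T$). You may guess $P=\pmc\cap\T$, a set of size $<d$, but you cannot form $N_G(\T\setminus P)$ without knowing $\T$ itself. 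So your clean Case~1 argument, while structurally correct as a statement about $G$, does not yield an enumerable family. The same issue contaminates your Case~2, where you also guess the carver $\widetilde{S}$ from Proposition~\ref{prop:sepCarving}: that proposition hands you a polynomial-size family $\mathcal{S}$, and you can iterate over it, but the component you extract from $G-\widetilde{S}$ need not equal $D$ (as you already note for the mesh subcase), and your proposed fix via ``intersecting with the appropriate $N_G(X_i)$'s'' is not an argument --- it never explains how two known elements of $\mathcal{X}$ would reassemble the pieces of $D$ cut off by $\widetilde{S}$.

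The paper does something quite different and much less inventive: it does not attempt a new proof, but simply reruns the proof of Lemma~5.10 of~\cite{p6FreeMaxInd22} verbatim and patches the (exactly three) places where that proof appeals to $I$-freeness. The patches are mechanical: where the original guesses $p\in I\cap D$ in a single module and a companion $q$, one guesses a \emph{footprint} of $\T$ in $D$ (one vertex per module of $D$ meeting $\T$, at most $d$ of them) together with a satellite; a small set of $\T$-vertices is additionally excluded from the auxiliary set $Y$ by brute-force guessing; and the invocation of Lemma~5.9 is replaced by the adjusted Lemma~\ref{lem:gkpp-59}. None of these patches require access to $\T$ beyond guessing $O(d)$ vertices. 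If you want to salvage your approach, you would have to replace $N_G(\T\setminus P)$ by something computable from $G$, $\mathcal{X}$, and $O(d)$ guessed vertices --- but at that point you are essentially redoing the case analysis of~\cite{p6FreeMaxInd22} anyway.
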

\begin{proof}[Proof sketch.]
The assumption on $I$-freeness of Lemma~5.10 of~\cite{p6FreeMaxInd22} comes into play in the proof only
in the last case, namely Case~3, where in particular $D$ is mesh.

First, after Claim~1, we guess a vertex $p \in I \cap M_p$ for the unique maximal strong module $M_p$ of $D$ that
intersects $I$, and a vertex $q$ in another maximal strong module.
Here, we perform the standard adjustment, guessing instead a footprint of $\T$ in $D$ and its satellite.

Second, in the definition of $Y$, we also want to exclude the vertices of $\T \cap D$ from it (there are fewer than $d$
of them, so we just try all possibilities).

Third, after Claim~7 we invoke Lemma~5.9. Because of the previous adjustment, the set $J$ here is
disjoint with $\T \cap \pmc$. Hence, we can invoke the adjusted Lemma~\ref{lem:gkpp-59} instead.
\end{proof}

We now move to Lemma~5.11 of~\cite{p6FreeMaxInd22}.
\begin{lemma}[analog of Lemma~5.11 of~\cite{p6FreeMaxInd22}]\label{lem:gkpp-511}
One can in polynomial time compute a family $\mathcal{F}_8$ such that the following holds:
Take any $\T$-avoiding PMC $\pmc$ and assume there are different components $D_1,D_2$ of $G-\pmc$
that are meshes. Then $\mathcal{F}_8$ contains either $D_1$, or $D_2$, or $\pmc \cup D_1 \cup D_2$.
\end{lemma}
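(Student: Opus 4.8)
\medskip
\noindent\emph{Proof plan.} The plan is to follow the proof of Lemma~5.11 of~\cite{p6FreeMaxInd22}, applying the three adjustments spelled out above: maximality of the fixed independent set is replaced by Lemma~\ref{lem:PMCmaximality}; the guess of a single vertex $p$ in a maximal strong module of a mesh component together with a companion vertex $q$ is replaced by the guess of a footprint $P$ of $\T$ together with a satellite $q$ (so that $D\subseteq N[P\cup\{q\}]$ and $P\cup\{q\}$ meets at least two maximal strong modules of $D$); and every appeal to Lemma~5.9 of~\cite{p6FreeMaxInd22} is replaced by Lemma~\ref{lem:gkpp-59}. All ``guesses'' range over at most $|V(G)|^{\Oh(d)}$ options, so the family $\mathcal{F}_8$ we build has polynomial size.

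First I would dispose of the subordinate cases. Write $S_i\coloneqq N(D_i)$ for $i\in\{1,2\}$; by Lemma~\ref{lem:PMCComponents} each $S_i$ is a minimal separator and $D_i$ is a full component of it (indeed $D_i$ is a whole component of $G-S_i$, since $N_G(D_i)=S_i$). If $S_1=S_2$, or one of $S_1,S_2$ is contained in the other, then one of the two separators is subordinate: when $S_1=S_2=S$ the common separator has, besides $D_1$ and $D_2$, the third full component $D_1^\pmc=D_2^\pmc$ containing the nonempty set $\pmc\setminus S$, and a minimal separator with three full components is subordinate (take two of them as the witnessing full sides); and when $S_1\subsetneq S_2$, either $S_2$ is subordinate, or $S_2$ has exactly the two full components $D_2$ and $D_2^\pmc$, and then $D_1$ is a full component of $S_1\subsetneq S_2$ disjoint from $D_2\cup S_2\cup D_2^\pmc$, witnessing that $S_1$ is subordinate. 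It therefore suffices to add to $\mathcal{F}_8$ all full components (at most $|V(G)|$ of them per separator) of all members of the family $\mathcal{S}_{\textrm{sub}}$ from Corollary~\ref{cor:subordinate}, which puts $D_1$ or $D_2$ into $\mathcal{F}_8$. So from now on $S_1$ and $S_2$ are distinct, incomparable, and non-subordinate, hence each $S_i$ has exactly two full components $D_i$ and $D_i^\pmc\ni\pmc\setminus S_i$; taking $v\in S_2\setminus S_1$ gives $v\in\pmc\setminus S_1\subseteq D_1^\pmc$ and $v\in N(D_2)$, so $D_2\subseteq D_1^\pmc$, and symmetrically $D_1\subseteq D_2^\pmc$.

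Next I would try to pin down the separator $S_2$, which is equivalent to pinning down $D_2$ (the full component of $S_2$ containing a guessed footprint vertex). Guess footprints $P_1,q_1$ in $D_1$ and $P_2,q_2$ in $D_2$. Applying Lemma~\ref{lem:neighborDecomp} to the full mesh component $D_2$ of $S_2$ with a pair of vertices of $P_2\cup\{q_2\}$ lying in two distinct maximal strong modules of $D_2$, every $u\in S_2$ is of one of three types: it is adjacent to $P_2\cup\{q_2\}$ (hence lies in the already-known set $N(P_2\cup\{q_2\})$), it lies on an induced $P_4$ of the form $uD_2D_2D_2$, or its neighborhood inside $D_2$ is a union of maximal strong modules of $D_2$. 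The $P_4$-type vertices are handled exactly as in~\cite{p6FreeMaxInd22}; the module-type vertices that avoid $\T\cap D_2$ form, after keeping one vertex per module-pattern, an independent subset of $S_2$ of the kind required by Lemma~\ref{lem:gkpp-59}, which then shows they are covered by $(\T\cap\pmc)\cup N(w)\cup N(D')$ for a single $w\in D_2$ and a single component $D'\neq D_2$ of $G-\pmc$. We guess $w$ and the at most $d-1$ vertices of $\T\cap\pmc$, so that all of $S_2$ reduces to the single unknown $N(D')$.

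The main obstacle is precisely this last reduction. If $D'$ may be chosen to be an ``easy'' component — one already lying in the auxiliary candidate family $\mathcal{X}$ assembled while handling the easy components in the remainder of Section~\ref{sec:not2Sided}, so that $N(D')$ is available — then $S_2$, and hence $D_2$, is determined, and we simply add $D_2$ to $\mathcal{F}_8$. The remaining possibility is that $D'$ is forced to be $D_1$ (or $D_2$ itself): the only components capable of covering the module-type boundary of $S_2$ are the two problematic ones. This is the cyclic dependency between $D_1$ and $D_2$ that makes it impossible in general to isolate either of them, and it is exactly why the conclusion of the lemma is allowed to be $\pmc\cup D_1\cup D_2$ instead. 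In that case one treats $D_1\cup D_2$ as a single super-component: every component of $G-\pmc$ other than $D_1$ and $D_2$ is ``easy'', so the bootstrapping argument behind Lemma~\ref{lem:gkpp-510} — used with the single unknown component taken to be the union $D_1\cup D_2$ — enumerates all of them, and $V(G)$ minus their union equals $\pmc\cup D_1\cup D_2$, which we add to $\mathcal{F}_8$. Throughout, the number of branches is a fixed polynomial in $|V(G)|$ with exponent depending only on $d$, giving the claimed running time.
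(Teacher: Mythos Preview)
Your stated plan is to follow the proof of Lemma~5.11 of~\cite{p6FreeMaxInd22} with the three standard adjustments, but what you actually sketch is a different argument, and it has real gaps.

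First, the paper's sketch pinpoints exactly where the original proof must be touched: one guesses footprints $P_i$ and satellites $q_i$ instead of single vertices $p_i\in I$; one enlarges the set $X$ defined in the original proof by adding $\T\cap\pmc$ so that the set $Z$ becomes disjoint from $\T$; and Claim~11 of the original proof (a vertex $z\in Z\subseteq N(D_2)$ with no neighbor in $\T\cap D_2$ must be adjacent to some other component of $G-\pmc$) is re-derived via Lemma~\ref{lem:PMCmaximality}. Nowhere does the original proof invoke Lemma~5.9 or Lemma~5.10; the objects $X$, $Z$, and Claim~11 are specific to Lemma~5.11, and you do not engage with them at all.

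Second, the fallback you propose in the ``hard case'' does not work as written. Lemma~\ref{lem:gkpp-510} takes a family $\mathcal{X}$ and, assuming \emph{all components but one} of $G-\pmc$ lie in $\mathcal{X}$, recovers the missing one. You want to apply it with two components $D_1,D_2$ missing, treating $D_1\cup D_2$ as ``the single unknown component,'' but $D_1\cup D_2$ is not a component of $G-\pmc$ (they are anticomplete), so neither the statement nor the proof of Lemma~\ref{lem:gkpp-510} applies. Moreover, you appeal to an ``auxiliary candidate family $\mathcal{X}$ assembled while handling the easy components in the remainder of Section~\ref{sec:not2Sided},'' but Lemma~\ref{lem:gkpp-511} is a self-contained statement that takes no such family as input; the claim that all components other than $D_1,D_2$ are ``easy'' is established only later in the proof of Lemma~\ref{lem:gkpp-58}, not here. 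Your first branch (when $D'$ can be chosen ``easy'') has the same defect: you need $N(D')$ to be available from a precomputed family, but no such family is given.

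In short, the adjustments required are local surgeries on the original argument (to $X$, $Z$, and Claim~11), not a rerouting through Lemmas~\ref{lem:gkpp-59} and~\ref{lem:gkpp-510}.
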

\begin{proof}[Proof sketch.]
Again, the proof in~\cite{p6FreeMaxInd22} starts by selecting, for every $i \in \{1,2\}$,
  a vertex $p_i \in I$ in the unique maximal strong module of $D_i$ that intersects $I$.
We adjust it in the standard way by selecting a footprint $P_i$ of $\T$ and its satellite $q_i$.

Then, when defining $X$ and $Z$, we need to also include $\T \cap \pmc$ into $X$, so $Z$ is disjoint with $\T$.
Since $\T \cap \pmc$ is of size less than $d$, we just try all possibilities.

Finally, Claim~11 relies on $I$-freeness. It argues that a vertex $z \in Z \subseteq N(D_2)$ that does not have
a neighbor in $I \cap D_2$, needs to have a neighbor
in $I$ in another component of $G-\pmc$, in particular it is adjacent to another component of $G-\pmc$.
In our case, $z$ is not in $\T$ (as it is in $Z$) and $z$ has no neighbor in $\T \cap D_2$,
   so Lemma~\ref{lem:PMCmaximality} gives the same corollary.
\end{proof}

With the above three lemmas in hand, we can now adjust the proof of Lemma~5.8 of~\cite{p6FreeMaxInd22}
to show Lemma~\ref{lem:gkpp-58}.
The crucial insight is that if there are two components $D_1,D_2$ of $G-\pmc$ with $N(D_1) = N(D_2)$,
then $N(D_1)$ is subordinate (because $N(D_1)$ has three full sides, $D_1$, $D_2$, and a component containing $\pmc \setminus N(D_1)$)
and hence it belongs to the family $\mathcal{S}_{\textrm{sub}}$ provided by Corollary~\ref{cor:subordinate}.

Therefore, by adding all full components of subordinate separators to a constructed set $\mathcal{G}$, we obtain the same properties
as in the proof of Lemma~5.8 of~\cite{p6FreeMaxInd22} under the weaker assumption that $\pmc$ is not two-sided:
$\mathcal{G}$ contains either all components of $G-\pmc$, or
all except for at most two mesh components $D_1$ and $D_2$.
The first outcome allows us to recover $\pmc$ exactly. In the second outcome we use Lemma~\ref{lem:gkpp-511}: we either
get exactly $\pmc$ or the set $\pmc \cup D_1 \cup D_2$. In the latter case,
it remains to get, for every $i\in\{1,2\}$, a fuzzy version of $D_i$.

Since $D_i \notin \mathcal{G}$, $N(D_i)$ is not subordinate.
Since $\pmc$ is not two-sided, there is another component $D'$ of $G-\pmc$,
distinct from $D_1$ and $D_2$, such that $N(D') \not\subseteq N(D_i)$. This component is in $\mathcal{G}$.
Then, Lemma~5.7 of~\cite{p6FreeMaxInd22} gives a polynomial number of candidates for a fuzzy version of $D_i$.

  This completes the proof sketch of Lemma~\ref{lem:gkpp-58} and thus concludes
  the proof of Theorem~\ref{thm:not-two-sided}.

\section{Analysis of two-sided aligned PMCs}
\label{sec:2SidedAlignedPMCs}

In this section we deal with the last remaining type of PMCs: two-sided aligned PMCs. Contrary to the previous sections, we need to make some delicate surgery on the clique tree in order to adjust it before generating a small family of carvers. More precisely, we will need the following special property of a clique tree $(T,\beta)$ of a chordal completion $G+F$. (Recall here that full components of adhesions were defined following Lemma~\ref{lem:cliqueTreesAdhesion}.)

\begin{enumerate}
\item[$(\spadesuit)$] There are no two distinct edges $st,tu \in E(T)$ such that
\begin{enumerate}
    \item $\sigma(st) \subseteq \sigma(tu)$;
    \item $\sigma(tu)$ is a mixed minimal separator; and
    \item the full component of $\sigma(tu)$ on the $u$-side is non-mesh.
\end{enumerate}
\end{enumerate}

The next lemma verifies that property $(\spadesuit)$ can be always achieved, even without changing the completion set $F$.

\begin{lemma}
\label{lem:findingCliqueTree}
For any graph $G$ and minimal chordal completion $G+F$ of $G$, there exists a clique tree $(T, \beta)$ of $G+F$ with property~$(\spadesuit)$.
\end{lemma}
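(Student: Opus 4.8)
The plan is to keep the bag multiset fixed (it is forced: it is the set of maximal cliques of $G+F$) and vary only the tree, repeatedly applying one rotation move that destroys a violation of $(\spadesuit)$, controlled by a potential that strictly increases. Concretely, suppose edges $st,tu\in E(T)$ witness a failure of $(\spadesuit)$: $\sigma(st)\subseteq\sigma(tu)$, $\sigma(tu)$ is mixed, and its full component on the $u$-side is non-mesh. The move replaces the edge $st$ by the edge $su$.

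First I would verify that $T':=(T-st)+su$ is again a clique tree of $G+F$. It is a tree: removing the bridge $st$ splits $T$, and $u$ lies on the $t$-side, so adding $su$ reconnects. The bags are unchanged, so only the connectivity axiom needs checking. If a vertex $v$ has its bag-set $B_v$ not containing both $s$ and $t$, then $B_v$ sits in one component of $T-st$, and the surgery does not touch that component. If $s,t\in B_v$, then $v\in\sigma(st)\subseteq\sigma(tu)\subseteq\beta(u)$, so $u\in B_v$ and $B_v$ contains the path $s$–$t$–$u$; deleting $st$ and inserting $su$ keeps $B_v$ connected. Along the way I would record the identity $\sigma(su)=\sigma(st)$: indeed $\beta(s)\cap\beta(u)\subseteq\beta(t)$ since $t$ is on the $s$–$u$ path, while $\sigma(st)\subseteq\beta(u)$ from $\sigma(st)\subseteq\sigma(tu)$, and together these force equality. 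In particular $su$ is ``bad'' (has a mixed adhesion) exactly when $st$ is, and the ``$s$-side'' of $st$ in $T$ and of $su$ in $T'$ correspond to the same full component of that common separator.

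Next I would set up the potential $\Phi(T)=\sum_{e}\,n^{\mathrm{nm}}(e)$, summed over edges $e$ of $T$ whose adhesion is a mixed minimal separator, where $n^{\mathrm{nm}}(e)$ is the number of nodes of $T$ on the non-mesh side of $\sigma(e)$ (the non-mesh-side endpoint of $e$ being determined by the bag contents via Lemma~\ref{lem:cliqueTreesAdhesion}, hence unaffected by the rest of the tree). Then I would show a move strictly increases $\Phi$ by exactly $|V(T_s)|\ge 1$, where $T_s$ is the component of $T-st$ containing $s$: (a) the edge $tu$ keeps its adhesion and orientation, and its non-mesh ($u$-)side gains precisely $V(T_s)$, which one checks is disjoint from the old $u$-side; (b) the contributions of $st$ (in $T$) and $su$ (in $T'$) cancel, since $\sigma(su)=\sigma(st)$ and the relevant non-mesh sides have equal size — both $|V(T_s)|$ or both $|V(T)|-|V(T_s)|$, according to whether the $s$-side is the non-mesh side; (c) every other edge survives into $T'$ with unchanged adhesion, orientation, and node-partition of $T-e$, because the surgery only reroutes the subtree $T_s$ inside a single side of any such edge. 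Since $0\le\Phi(T)<|V(G)|^2$, the process terminates, and a clique tree admitting no move is exactly one with property $(\spadesuit)$; note $F$ is never altered.

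The main obstacle I expect is pinning down (b) and (c): showing that the move leaves every edge outside $\{st,tu,su\}$ entirely undisturbed and that the $st$-versus-$su$ bookkeeping really cancels. This leans on the identity $\sigma(su)=\sigma(st)$ together with a short case analysis of where a given edge $e$ lies relative to $T_s$ (an edge of $T_s$, or hanging elsewhere), so that one sees the two components of $T-e$ and of $T'-e$ agree as vertex sets. The tree-decomposition verification and the termination bound are routine.
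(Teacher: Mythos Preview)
Your proof is correct and takes essentially the same approach as the paper: both fix the bag set, identify the same rotation move (replace $st$ by $su$), verify the result is again a clique tree using $\sigma(st)\subseteq\sigma(tu)$, and argue termination via a potential that strictly increases. The only difference is the choice of potential: the paper uses the somewhat indirect count $\sum_{u\in V(T)}|\{\text{undirected edges incident to nodes reachable from }u\text{ by a directed path}\}|$, whereas you use the cleaner $\Phi(T)=\sum_{e\text{ mixed}} n^{\mathrm{nm}}(e)$, for which the increase by $|V(T_s)|$ is transparent from your (a)--(c). (In fact your case~(b) is vacuous: the paper observes that $\sigma(st)$ is necessarily subordinate, hence never mixed, so $st$ and $su$ never contribute to $\Phi$ at all---but your cancellation argument is correct regardless.)
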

\begin{proof}
    We already know that $G+F$ has some clique tree. We will choose a clique tree which maximizes a certain count; for this definition we need to orient some edges of the tree. So, given a clique-tree $(T, \beta)$ of $G+F$, orient each edge of $T$ whose adhesion is a mixed minimal separator ``towards the non-mesh side''. That is, if $tu \in E(T)$ is such that $\sigma(tu)$ is a mixed minimal separator and the full component of $\sigma(tu)$ on the $u$-side is non-mesh, then orient $tu$ as $(t, u)$.

    Now, choose a clique tree $(T, \beta)$ of $G+F$ which maximizes the sum, over all nodes $u\in V(T)$, of the number of undirected edges which are incident to a node of $T$ that can be reached from $u$ via a directed path (that is, a path which does not use any undirected edge and which follows the directed edges according to their direction). Such a choice exists since all clique trees have the same number of nodes. We will prove that $(T, \beta)$ satisfies the conditions of the lemma. So, going for a contradiction, suppose that there exist distinct edges $st,tu \in E(T)$ so that \textit{(i)}, \textit{(ii)}, and \textit{(iii)} of property~$(\spadesuit)$ hold. By conditions~\textit{(ii)} and \textit{(iii)}, $tu$ is oriented as $(t,u)$.

    For convenience, set $S \coloneqq \sigma(tu)$, and let $A$ (respectively, $B$) denote the full component of $S$ on the $t$-side (respectively, $u$-side).
    Since $S$ is mixed, it has exactly two full components: $A$ that contains $\beta(t) \setminus S$ and
    $B$ that contains $\beta(u) \setminus S$.
    Since $\sigma(st) \subseteq S$ and $\sigma(st)$ separates $\beta(s) \setminus \sigma(st)$
    from both $\beta(t) \setminus S$ and $\beta(u)\setminus S$, it follows that the full component of $\sigma(st)$ on the $s$-side is disjoint from $A \cup S \cup B$.
    In particular,  this component cannot be a full component of $S$ (which has only two full sides, $A$ and~$B$), hence $\sigma(st)\subsetneq S$.
    Therefore $\sigma(st)$ is subordinate and the edge $st$ of $T$ is undirected.

    Now we define a new clique tree $(T', \beta')$ of $G+F$ as follows. Replace the edge $st$ of $T$ with the edge $su$; that is, reattach the component of $T-\{st\}$ that contains $s$ to be connected via an edge $su$ instead of the edge $st$. Since $\sigma(st) \subseteq S = \sigma(tu)$, the resulting tree is in fact a clique tree of $G+F$. Furthermore, the orientations of the edges do not change; $su$ is an undirected edge as $S$ is a subordinate separator, while for every other edge of $T$ the full sides considered in the orientation remain the same. Moreover, the relevant count of $(T', \beta')$ is strictly larger than that of $(T, \beta)$: the count for $u$ increases by one, while no other count decreases. This contradicts the choice of $(T, \beta)$ and completes the proof of Lemma~\ref{lem:findingCliqueTree}.
\end{proof}

Now we are ready to prove the main result of this section.

\begin{proposition}\label{prop:two-sided}
For each positive integer $d$, there exists a polynomial-time algorithm which takes in a $P_6$-free graph $G$
and returns a collection $\C_1 \subseteq 2^{V(G)}$ so that for any maximal treedepth-$d$ structure $\T$ in $G$
and any $\T$-aligned minimal chordal completion $G+F$ of $G$,
there exists a clique tree $(T,\beta)$ of $G+F$
such that for each node $t$ of $T$, if $\beta(t)$ is two-sided and $\T$-avoiding,
then the set $\C_1$ contains a $(\T, (T,\beta))$-carver for $\beta(t)$.
\end{proposition}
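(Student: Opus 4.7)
The plan is to first apply Lemma~\ref{lem:findingCliqueTree} to the given $\T$-aligned minimal chordal completion $G+F$ to obtain a clique tree $(T,\beta)$ satisfying property~$(\spadesuit)$; this will be the clique tree promised by the proposition. The algorithm then produces, in polynomial time, a family $\C_1$ of candidate carvers by enumerating polynomially many guesses. For any two-sided $\T$-avoiding PMC $\pmc = \beta(t)$ with dominant components $D_1,D_2$ of $G-\pmc$ and $S_i \coloneqq N(D_i)$, the ``correct'' branch of this enumeration will yield a $(\T,(T,\beta))$-carver for $\pmc$ of defect at most $d-1$.

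The construction for a fixed branch proceeds as follows. Since $\pmc$ is $\T$-avoiding, $\pmc\cap\T$ has at most $d-1$ vertices and lies on a vertical path of $\T$, so it can be guessed exhaustively. Next, we guess $\T$-carvers $\widetilde{S}_1,\widetilde{S}_2$ for $S_1,S_2$ from the family supplied by Proposition~\ref{prop:sepCarving}, and form $\widetilde{C}\coloneqq \widetilde{S}_1\cup\widetilde{S}_2\cup(\pmc\cap\T)$. Then, for every $i\in\{1,2\}$ with $S_i$ mixed, we apply Proposition~\ref{prop:betterMixedCarver} with $\widetilde{C}$ in the role of $\widetilde{S}$ to obtain a refinement $S_i'$ carving away clarified components, and define $C\coloneqq \widetilde{C}\cup S_1'\cup S_2'$. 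All such $C$ are inserted into $\C_1$. Condition~(i) of the carver definition is immediate: every constituent of $C$ is $\T$-compatible with $\pmc$, so $C\cap\T=\pmc\cap\T$, giving defect at most $d-1$.

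The bulk of the argument is condition~(ii), which is equivalent to the assertion that no connected component of $G-C$ intersects two components of $G-\pmc$ that lie on different subtree-sides at~$t$. Let $s_i\in V(T)$ be the neighbor of $t$ on whose side $D_i$ lives, so that $\sigma(ts_i)=S_i$; by the tree-decomposition property, every component $D$ of $G-\pmc$ living on the $s$-side satisfies $N(D)\subseteq\sigma(ts)$, and the two-sidedness of $\pmc$ further forces $N(D)\subseteq S_1$ or $N(D)\subseteq S_2$. We then case-analyse on the type of each $S_i$: if $S_i$ is subordinate or non-mesh, then $\widetilde{S}_i=S_i\subseteq C$ and its inclusion splits the two full sides of $S_i$; if $S_i$ is mesh, then $\widetilde{S}_i$ carves away every component of $G-S_i$; and if $S_i$ is mixed with $D_i$ the mesh full component, then $\widetilde{S}_i$ carves $D_i$ away from the non-mesh side. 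The delicate case is $S_i$ mixed with $D_i$ the non-mesh full component. Here property~$(\spadesuit)$, applied with $u=s_i$, forbids any other edge $ts$ from satisfying $\sigma(ts)\subseteq S_i$, so for every subtree-side $s\neq s_i$ the dominant component $D_s$ has $N(D_s)=\sigma(ts)\not\subseteq S_i$, and hence $N(D_s)\subseteq S_j$ with $j\neq i$ by two-sidedness. Proposition~\ref{prop:betterMixedCarver} then dispatches the clarified components of $G-\widetilde{C}$, which are precisely the residual objects through which a vertex of $S_i\setminus\widetilde{S}_i$ could otherwise bridge $D_i$ to a component on a different subtree-side.

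The main obstacle lies in this last delicate case: verifying that the combination of $(\spadesuit)$ with the clarified-component refinement of Proposition~\ref{prop:betterMixedCarver} really is enough to block every potential cross-side bridge, given that $\widetilde{S}_i$ on its own does not isolate the non-mesh side $D_i$ from the rest of $G-S_i$. The entire reason for arranging $(\spadesuit)$ in the clique tree and for developing Proposition~\ref{prop:betterMixedCarver} is precisely to close off this configuration, so the technical work is to carefully track which vertices of $\pmc\setminus C$ remain and to argue that none of them can simultaneously see $D_i$ and a component of $G-\pmc$ on a different subtree-side.
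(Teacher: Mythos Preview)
Your high-level strategy is precisely the paper's: fix the clique tree via Lemma~\ref{lem:findingCliqueTree}, guess $\T$-carvers $\widetilde{S}_1,\widetilde{S}_2$ for the two separators $N(D_1),N(D_2)$ using Proposition~\ref{prop:sepCarving}, and then upgrade with Proposition~\ref{prop:betterMixedCarver}. The paper formalises this by calling the intermediate set a \emph{precarver} and then performs a substantial case analysis on the orientations of the edges $tt_0$ and $tt_1$. Your proposal, however, asserts that the bare union $\widetilde{C}=\widetilde{S}_1\cup\widetilde{S}_2\cup(\pmc\cap\T)$ is already a precarver in every case, and this is where it breaks.

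There are two concrete gaps. First, your use of $(\spadesuit)$ only constrains \emph{adhesions}: if $tt_i$ is oriented towards $s_i$ then indeed $\sigma(ts)\not\subseteq S_i$ for every other neighbour $s$, but an \emph{individual} component $D$ of $G-\pmc$ on the $s$-side only satisfies $N(D)\subseteq\sigma(ts)$, and nothing prevents $N(D)\subsetneq\sigma(ts)$ with $N(D)\subseteq S_i$. So ``dominant component'' reasoning does not localise all components. Second, and more seriously, consider the configuration where $tt_0$ is oriented away from $t$ (so $D_0$ is the non-mesh side of the mixed separator $N(D_0)$) and $tt_1$ is oriented towards $t$, and where some component $D\neq D_0,D_1$ with $N(D)\subsetneq N(D_0)$ lives on the $t_1$-side. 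Then $\widetilde{S}_0$ carves away only $D_0^\pmc$ and $\widetilde{S}_1$ carves away only $D_1$; neither prevents $D_0$ and $D$ from lying in the same component $\widetilde{D}$ of $G-\widetilde{C}$. Such a $\widetilde{D}$ intersects $D_0$, hence it is not clarified with respect to $N(D_0)$, and it intersects $D_0\subseteq D_1^\pmc$, hence it is not clarified with respect to $N(D_1)$ either. So Proposition~\ref{prop:betterMixedCarver} does nothing for it, and you have produced a set $C$ that fails condition~\textit{(ii)}.

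The paper handles exactly this scenario (its Case~4.2) by a mechanism absent from your proposal: it walks along the path of degree-two ``special'' nodes to the first non-special node $t_1'$ on the $t_1$-side, and \emph{includes a carver $C_1$ for the bag $\beta(t_1')$} (obtained from the other cases, from Theorem~\ref{thm:not-two-sided}, or from Lemma~\ref{lem:leafyPMC}) in the construction. This bootstrapping step is what confines components like $D$ above to the $t_1$-subtree. In other cases (the paper's Case~2 and Case~4.1) additional guessed vertices beyond $\widetilde{S}_1\cup\widetilde{S}_2$ are also required to force enough of $\pmc$ into the precarver. Your outline captures the skeleton of the argument but omits the case analysis and, in particular, the recursion to carvers of neighbouring bags, without which the construction is incomplete.
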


\begin{proof}
    Let $d$, $G$, $\T$, $F$ be as in the lemma statement. Let $(T,\beta)$ be a clique tree of $G+F$ which satisfies property~$(\spadesuit)$, its existence is guaranteed by Lemma~\ref{lem:findingCliqueTree}. We orient some of the edges of $T$ as in the proof of Lemma~\ref{lem:findingCliqueTree}. That is, for each edge $tu$ of $T$ so that $\sigma(tu)$ is a mixed minimal separator, we orient $tu$ as $(t,u)$ if the full component of $\sigma(tu)$ on the $u$-side is non-mesh, and as $(u,t)$ otherwise.
In this language, property~$(\spadesuit)$ becomes the following.

    \smallskip
    \begin{enumerate}
        \item[$(\clubsuit)$] There do not exist distinct edges $st, tu \in E(T)$ such that $\sigma(st) \subseteq \sigma(tu)$ and $tu$ is oriented towards $u$.
    \end{enumerate}
    \smallskip

    Now fix $t \in V(T)$ such that $\beta(t)$ is two-sided and $\T$-avoiding. We will argue how to construct a $(\T,(T,\beta))$-carver for $\beta(t)$ using guesswork with only polynomially-many options. To this end, set $\pmc \coloneqq \beta(t)$, and let $D_0$ and $D_1$ be the components of $G-\pmc$ which witness that $\Omega$ is two-sided. Throughout the rest of this proof we write indices on subscripts modulo~$2$.

    First of all, for each $v \in V(G)$, we add the set $N[v]$ to $\C_1$.
    Note that this takes care of all PMCs $\pmc$ which contain a vertex $v$ that does not have a neighbor outside $\pmc$.
    Indeed, by the characterization of PMCs in Proposition~\ref{prop:PMCsChar}, we would have $N[v] = \pmc$ and thus $\pmc \in \C_1$.

    Thus from now on we may assume that each vertex from $\pmc$  has a neighbor outside of $\pmc$.
    We now use the characterization of PMCs in Proposition~\ref{prop:PMCsChar} to infer the following claim.

    \begin{claim}\label{claim:two-sided-struct}
    The following properties hold:
        \begin{enumerate}
            \item  $N(D_0) \cup N(D_1) = \pmc$,
            \item the sets $N(D_0) \setminus N(D_1)$ and $N(D_1) \setminus N(D_0)$ are nonempty and complete to each other, and
            \item there exists $j \in \{0,1\}$ such that $D_j$ is complete to $N(D_j) \setminus N(D_{j+1})$.
        \end{enumerate}
    \end{claim}
    \begin{claimproof}
        By assumption, each vertex in $\pmc$ has a neighbor outside of $\pmc$. Since $\pmc$ is two-sided, we infer that $N(D_0) \cup N(D_1) = \pmc$. Since $N(D_0)$ and $N(D_1)$ are proper subsets of $\pmc$ (see Proposition~\ref{prop:PMCsChar}), we have that both $N(D_0) \setminus N(D_1)$ and $N(D_1) \setminus N(D_0)$ are nonempty. From~ Proposition~\ref{prop:PMCsChar}, we infer that $N(D_0) \setminus N(D_1)$ is complete to $N(D_1) \setminus N(D_0)$, as there is no connected component of $G-\pmc$ that is adjacent to some vertices in both those sets.

        Finally, suppose towards a contradiction that for every $i \in \{0,1\}$, there exists $v_i \in N(D_i) \setminus N(D_{i+1})$ that is not complete to $D_i$. Then there is a $P_6$ of the form $D_0D_0v_0v_1D_1D_1$. This contradiction completes the proof of Claim~\ref{claim:two-sided-struct}.
    \end{claimproof}

    By Lemma~\ref{lem:PMCComponents}, for $i \in \{0,1\}$, the set $N(D_i)$ is a minimal separator of $G$ which has a full side $D_i^\pmc \neq D_i$ that contains $\pmc \setminus N(D_i)$. Since $\pmc$ is two-sided and $N(D_0) \cup N(D_1) = \Omega$ by part~\textit{(i)} of Claim~\ref{claim:two-sided-struct}, it follows that $D_i^\pmc$ is precisely the union of $\pmc \setminus N(D_i)$ and the components of $G-\pmc$ which have a neighbor in $\Omega\setminus N(D_i)$. Thus, in particular, $D_0^\pmc \cap D_1^\pmc =\emptyset$ since $\pmc$ is two-sided.

    We now show how the adhesions relate to the components of $G-\pmc$. For each component $D$ of $G-\pmc$, we write $T_D$ for the component of $T-\{t\}$ so that $D \subseteq \bigcup_{s \in T_D} \beta(s)$; this component exists and is unique. We write $t_D$ for the node of $T_D$ which is a neighbor of $t$ in $T$. We also write $t_0$ and $t_1$ as shorthand for $t_{D_0}$ and $t_{D_1}$, respectively, and similarly for $T_0$ and $T_1$.

    We now attempt to ``capture'' the minimal separators $N(D_0)$ and $N(D_1)$. By Proposition~\ref{prop:sepCarving}, we can, in polynomial-time, obtain a collection $\mathcal{S} \subseteq 2^{V(G)}$ which contains a $\T$-carver for each $\T$-avoiding minimal separator. So in particular, $\mathcal{S}$ contains $\T$-carvers $S_0$ and $S_1$ for $N(D_0)$ and $N(D_1)$, respectively. We can guess these sets $S_0$ and $S_1$ since $\mathcal{S}$ also has polynomial size.

We will use the following observation twice.
\begin{claim}\label{cl:no-trash-in-pmc}
Let $k \in \{0,1\}$ be such that $tt_k$ is not oriented towards $t$.
Then no component of $G-S_k$ intersects both $N(D_k) \setminus N(D_{k+1})$ and $D_k^\pmc$.
\end{claim}
\begin{claimproof}
Let $D$ be a component of $G-S_k$ that intersects $D_k^\pmc$. Since $tt_k$ is not oriented towards $t$, by the properties of $S_k$ we have that $S_k$ carves away $D_k^\pmc$, hence $D \subseteq N(D_k) \cup D_k^\pmc$.

Assume there exists $v \in D \cap (N(D_k) \setminus N(D_{k+1}))$. Since $v \in N(D_k) \setminus S_k$ while $S_k \cap \T = N(D_k) \cap \T$, we have $v \notin \T$.
By Lemma~\ref{lem:PMCmaximality}, there exists $w \in \T \setminus \pmc$ that is a neighbor of $v$.
Since $w \in \T \setminus \pmc$ while $S_k \cap \T = N(D_k) \cap \T$, we have $w \notin S_k$, thus $w \in D \setminus \pmc$.
As $D \subseteq N(D_k) \cup D_k^\pmc$, every component $D'$ of $G-\pmc$ that intersects $D$ satisfies
$N(D') \subseteq N(D_{k+1})$. This is a contradiction with $v \in N(D_k) \setminus N(D_{k+1})$.
\end{claimproof}

A \emph{precarver} is
a set $\widetilde{S} \subseteq V(G)$ such that
$\widetilde{S} \cap \T = \pmc \cap \T$ and there exists $k \in \{0,1\}$ such that
for every component $\widetilde{D}$ of $G-\widetilde{S}$
at least one of the following conditions holds:
\begin{itemize}
\item there exists a component $T'$ of $T-\{t\}$ with $\widetilde{D} \subseteq \beta(t) \cup \bigcup_{t' \in V(T')} \beta(t')$, or
\item $tt_k$ is oriented and $\widetilde{D}$ is clarified with regards to the mixed separator $N(D_k)$ (i.e., $\widetilde{D}$ is disjoint with $D_k \cup D_k^\pmc$).
\end{itemize}
If we are able to guess a precarver $\widetilde{S}$, then
we apply Proposition~\ref{prop:betterMixedCarver} for the minimal separator $N(D_k)$ to guess
a superset $C$ of $\widetilde{S}$ with $C \cap \T = \widetilde{S} \cap \T$. Then the properties of the precarver together with Proposition~\ref{prop:betterMixedCarver} imply that $C$ will be
a $(\T,(T,\beta))$-carver for $\pmc$. Hence, in the remainder of the proof
we focus on guessing a precarver.

We observe that the first bullet of the definition of a precarver holds immediately
for a component $\widetilde{D}$
if $\widetilde{D} \subseteq \pmc$ or there exists a component $D$ of $G-\pmc$
such that $\widetilde{D} \subseteq D \cup \pmc$. The latter applies in to the case $\widetilde{D}\cap \pmc=\emptyset$.

We perform now case distinction on how the edges $tt_0$ and $tt_1$ are oriented in $(T,\beta)$,
   which is in fact a case distinction on the types of separators $N(D_0)$ and $N(D_1)$.
\medskip

\noindent\textbf{Case 1. There exists $k \in \{0,1\}$ such that $tt_k$ is undirected.}
We claim that then $\widetilde{S} = S_0 \cup S_1$ is a precarver.
To this end, let $\widetilde{D}$ be a component of $G-\widetilde{S}$.

If $\widetilde{D} \subseteq \pmc$, there is nothing to prove, so assume otherwise.
Let $D$ be a component of $G-\pmc$ that intersects $\widetilde{D}$.
If $N(D) \subseteq N(D_k)$, then $D$ is a component of $G-N(D_k)$ and thus, as $tt_k$ is undirected and $S_k$
is a carver for $tt_k$, we have $\widetilde{D} \subseteq D \cup N(D_k) \subseteq D \cup \pmc$.

If $tt_{k+1}$ is undirected too, then a symmetric argument resolves the case $N(D) \subseteq N(D_{k+1})$.
Since $\pmc$ is two-sided, this completes the proof in this case.

Otherwise, $tt_{k+1}$ is directed; without loss of generality assume $k=1$.
Recall that we are left with analysing a component $\widetilde{D}$ of $G-\widetilde{S}$ that satisfies the following: for every component $D$ of $G-\pmc$ that intersects $\widetilde{D}$,
we have $N(D) \not\subseteq N(D_1)$ (so $N(D) \subseteq N(D_0)$ and $N(D) \cap (N(D_0) \setminus N(D_1)) \neq \emptyset$, as $\pmc$ is two-sided).
This implies that $\widetilde{D} \subseteq \pmc \cup D_1^\pmc$.

\noindent\textbf{Case 1.1. $tt_0$ is oriented towards $t$.}
As $\widetilde{D}$ intersects $D_1^\pmc$,
from Claim~\ref{cl:no-trash-in-pmc} for $k=1$ we infer that $\widetilde{D}$ is disjoint with $N(D_1) \setminus N(D_0)$.
Recall that $\widetilde{D}$ is also disjoint with every component $D$ of $G-\pmc$ with $N(D) \subseteq N(D_1)$.
Thus, $\widetilde{D}$ is disjoint with $D_0^\pmc$,
as $D_0^\pmc$ consists of $\pmc \setminus N(D_0) = N(D_1) \setminus N(D_0)$ and every component $D$ of $G-\pmc$
with $N(D) \subseteq N(D_1)$ and $N(D) \cap (N(D_1) \setminus N(D_0)) \neq \emptyset$.

If $\widetilde{D}$ intersects $D_0$, then, by the properties of the carver $S_0$, $\widetilde{D} \subseteq D_0 \cup N(D_0)$
and we are done.
Otherwise, $\widetilde{D}$ is clarified with regards to the mixed separator $N(D_0)$, because it is disjoint with both full sides: $D_0$ and $D_0^\pmc$.
Hence, $\widetilde{S}$ is a precarver.

\noindent\textbf{Case 1.2. $tt_0$ is oriented towards $t_0$.}
Since $N(D) \not\subseteq N(D_1)$, we have $t_D = t_0$, as otherwise the edge $tt_D$ is an undirected edge
with $\sigma(tt_D) \subseteq \sigma(tt_0)$, violating property~$(\clubsuit)$.
As the above holds for every component $D$ of $G-\pmc$ that intersects $\widetilde{D}$,
   we have $\widetilde{D} \subseteq \pmc \cup \bigcup_{t' \in V(T_0)} \beta(t')$ and we are done.

\medskip

\noindent\textbf{Case 2. Both $tt_0$ and $tt_1$ are oriented towards $t$.}
 Then both $D_0$ and $D_1$ are mesh. Note that $N(D_0)\cap N(D_1)$ is a minimal separator with full sides $D_0$ and $D_1$ in a induced subgraph of $G$. So by Lemma~\ref{lem:neighborCover} applied to this induced subgraph, we can pick at most three elements of $D_0$ and at most three elements of $D_1$ so that every vertex in $N(D_0)\cap N(D_1)$ is a neighbor of one of these six (or fewer) vertices. By adding at most one more vertex from a different component of $\overline{D_0}$, and similarly for $\overline{D_1}$, we obtain sets $D_0' \subseteq D_0$ and $D_1' \subseteq D_1$ so that $|D_0'| \leq 4$, $|D_1'| \leq 4$, and every vertex in $D_0$, $D_1$, and $N(D_0)\cap N(D_1)$ is in $N[D_0' \cup D_1']$. Guess these sets $D_0'$ and $D_1'$.

    For every $i \in \{0,1\}$, recall that there are at most $d$ components of $\overline{D_i}$ which intersect $\T$; let $M_i\subseteq V(G)$ denote the union of these components.
    Since Lemma~\ref{lem:mixedFuzzy} allows us to guess a fuzzy version of $D_i$, we can guess $M_i$, as every component of $\overline{D_i}$ is a component of the complement of a fuzzy version of $D_i$.
    We set
    \[ \widetilde{S} := S_0 \cup S_1 \cup (N[D_0' \cup D_1'] \setminus (M_0 \cup M_1)). \]
    We claim that $\widetilde{S}$ is a precarver.
    As $N(D_0) \cup N(D_1) = \pmc$, it is immediate that $\widetilde{S} \cap \T = \pmc \cap \T$.

    Recall from part~\textit{(iii)} of Claim~\ref{claim:two-sided-struct} that there exists $j \in \{0,1\}$ such that $D_j$ is complete to $N(D_j) \setminus N(D_{j+1})$.
    By symmetry, we can assume that $D_1$ is complete to $N(D_1) \setminus N(D_0)$. Hence, $N(D_1) \setminus N(D_0) \subseteq \widetilde{S}$.
    Since also $N(D_0) \cap N(D_1) \subseteq \widetilde{S}$ due to the inclusion of $N[D_0' \cup D_1'] \setminus (M_0 \cup M_1)$, we have $N(D_1) \subseteq \widetilde{S}$.

    Consider now a component $\widetilde{D}$ of $G-\widetilde{S}$. We claim that either $\widetilde{D}$
    is contained in $D \cup \pmc$ for a single component $D$ of $G-\pmc$ or
    $\widetilde{D}$ is clarified with regards to the minimal separator $N(D_0)$ (whose full sides are $D_0$ and $D_0^\pmc$).
    The claim is trivial if $\widetilde{D} \subseteq \pmc$.
    If there exists $i \in \{0,1\}$ such that $\widetilde{D}$ intersects $D_i$,
    then $\widetilde{D} \subseteq \pmc \cup D_i$ due to the inclusion of the carvers $S_0$ and $S_1$ in $\widetilde{S}$.
    If $\widetilde{D}$ intersects a component $D\notin\{ D_0,D_1\}$ of $G-\pmc$ such that $N(D) \subseteq N(D_1)$, then
    $\widetilde{D} \subseteq D$ as $N(D_1) \subseteq \widetilde{S}$.
    In the remaining case, $\widetilde{D}$ intersects a component $D \notin \{D_0,D_1\}$ with $N(D) \subseteq N(D_0)$,
    $N(D) \cap (N(D_0) \setminus N(D_1)) \neq \emptyset$.
    Furthermore, due to the exclusion of the previous cases, $\widetilde{D}$ is disjoint both with $D_0$ and with $D_0^\pmc$,
    as the latter consists of $D_1$, $N(D_1) \setminus N(D_0)$ (which is a subset of $\widetilde{S}$)
    and all components $D'$ of $G-\pmc$ with $N(D') \subseteq N(D_1)$, $N(D') \cap (N(D_1) \setminus N(D_0)) \neq \emptyset$.
    Hence, $\widetilde{D}$ is clarified with regards to $N(D_0)$. This finishes the proof that $\widetilde{S}$ is a precarver.

\medskip

\noindent\textbf{Case 3. Both $tt_0$ and $tt_1$ are oriented away from $t$.}
By Lemma~\ref{lem:adh-to-comp}, for every $s \in N_T(t)$ we have $\sigma(st) \subseteq N(D_0)$
or $\sigma(st) \subseteq N(D_1)$.
Hence, by property~$(\clubsuit)$, $t_0$ and $t_1$ are the only two neighbors of $t$ in $G$.

We claim that $\widetilde{S} = S_0 \cup S_1$ is a precarver in this case.
Consider a component $\widetilde{D}$ of $G-\widetilde{S}$.

If there exists $k \in \{0,1\}$ such that $\widetilde{D}$ intersects $D_k^\pmc$,
then, by the properties of the carver $S_k$, we have $\widetilde{D} \subseteq D_k^\pmc \cup N(D_k)$.
Consequently, for every component $D$ of $G-\pmc$ that intersects $\widetilde{D}$,
it holds that $N(D) \subseteq N(D_{k+1})$, $N(D) \cap (N(D_{k+1}) \setminus N(D_k)) \neq \emptyset$.
We infer $t_D = t_{k+1}$ for every such component $D$. Hence, $\widetilde{D} \subseteq \pmc \cup \bigcup_{t' \in T_{k+1}} \beta(t')$.

If $\widetilde{D}$ is disjoint with $D_0^\pmc \cup D_1^\pmc$, then it is disjoint also with
$D_0 \cup D_1$ as $D_{k+1} \subseteq D_k^\pmc$ for every $k \in \{0,1\}$.
Hence, $\widetilde{D}$ is clarified with regards to both $N(D_0)$ and $N(D_1)$.
This finishes the proof that $\widetilde{S}$ is a precarver.

\medskip

\noindent\textbf{Case 4. One of the edges $tt_0$ and $tt_1$
  is oriented towards $t$ and one is oriented away from $t$.}
Without loss of generality, assume $tt_0$ is oriented towards $t_0$ and $tt_1$ is oriented towards $t$.

We distinguish the following two subcases.

\noindent\textbf{Case 4.1. There exists a component $D$ of $G-\pmc$, $D \neq D_1$, with $N(D) \cap (N(D_1) \setminus N(D_0)) \neq \emptyset$.}
Let $D$ be such a component and let $v \in N(D) \cap (N(D_1) \setminus N(D_0))$. We argue that
\begin{align}\label{eq:case4.1}
&\mathrm{For\ every\ }u \in (N(D_0) \cap N(D_1)) \setminus N(D),\mathrm{\ there\ is\ no\ }P_4\mathrm{\ of\ the\ form\ }\\\nonumber &uD_0D_0D_0,
\mathrm{and\ if\ additionally\ }uv \notin E(G),\mathrm{\ then\ }u\mathrm{\ is\ complete\ to\ }D_0.
\end{align}
Let $u \in (N(D_0) \cap N(D_1)) \setminus N(D)$.
Let $Q$ be an induced path consisting of a shortest path from $u$ to $v$ possibly via $D_1$ if $uv \notin E(G)$
and then a neighbor of $v$ in $D$. Observe that $Q$ has three vertices if $uv \in E(G)$ and at least four vertices
if $uv \notin E(G)$.

If there exists an induced $P_4$ of the form $uD_0D_0D_0$, then the concatenation of this $P_4$ with $Q$ yields a $P_6$,
a contradiction.
Similarly, if there exists an induced $P_3$ of the form $uD_0D_0$ (which is equivalent to $u$ not being complete to $D_0$),
then the concatenation of this $P_3$ with $Q$ yields a $P_6$ if $uv \notin E(G)$.
This proves~\eqref{eq:case4.1}.

For every $k \in \{0,1\}$, apply Lemma~\ref{lem:findingPQ} to the separator $N(D_k)$ with full component $D_k$,
obtaining a vertex $p_k \in D_k \cap \T$ with $A_k := \T \cap D_k \cap N(p_k)$ of size at most $d-1$
and, if $|D_k|>1$, a vertex $q_k \in D_k \cap N(p_k)$ in a different maximal strong module of $D_k$ than $p_k$. We set $q_k = p_k$ if $|D_k|=1$.

Let
\[ \widetilde{S} = S_0 \cup S_1 \cup \left(\bigcup_{k \in \{0,1\}} (N(p_k) \setminus A_k)\right) \cup (N(q_0) \cap N(\{v,q_1\})) \cup N(D). \]
Note that $\widetilde{S}$ can be guessed with polynomial number of options, as $N(D)$ is a subordinate separator and hence
can be guessed using Corollary~\ref{cor:subordinate}.

We claim that $\widetilde{S}$ is a precarver.
Since $N(q_0) \cap N(\{v,q_1\}) \subseteq N(D_0) \subseteq \pmc$, we have
$\widetilde{S} \cap \T = \pmc \cap \T$.

We now show that
\begin{equation}\label{eq:eat-intersection}
N(D_0) \cap N(D_1) \subseteq \widetilde{S}.
\end{equation}
Let $u \in N(D_0) \cap N(D_1)$. If $u \in N(D)$ or $u \in N(p_0)$, then $u \in \widetilde{S}$.
Otherwise, $u$ is not complete to $D_0$, so by~\eqref{eq:case4.1} we have $u \in N(v)$ and there is no $P_4$ of the form
$uD_0D_0D_0$. Lemma~\ref{lem:neighborDecomp} implies that $u \in N(q_0)$. Hence, $u \in \widetilde{S}$. This proves~\eqref{eq:eat-intersection}.

Consider now a component $\widetilde{D}$ of $G-\widetilde{S}$.
We distinguish two cases, depending on whether $\widetilde{D}$ intersects $D_0^\pmc$.

If $\widetilde{D}$ intersects $D_0^\pmc$, then by the properties of the carver $S_0$ we have
$\widetilde{D} \subseteq N(D_0) \cup D_0^\pmc$.
By Claim~\ref{cl:no-trash-in-pmc} for $k=0$, $\widetilde{D}$ is disjoint with $N(D_0) \setminus N(D_1)$.
By~\eqref{eq:eat-intersection}, $\widetilde{D}$ is disjoint with $N(D_0)$,
that is, $\widetilde{D} \subseteq D_0^\pmc$. In particular, $\widetilde{D}$ is disjoint with $D_1^\pmc$.

If $\widetilde{D}$ intersects $D_1$ then, by the properties of the carver $S_1$, we have $\widetilde{D} \subseteq D_1 \cup N(D_1)$.
Otherwise, $\widetilde{D}$ is disjoint with both $D_1$ and $D_1^\pmc$ and thus is clarified with regards to the separator $N(D_1)$.

In the other case, the component $\widetilde{D}$ is disjoint with $D_0^\pmc$. So $N(D)\subseteq N(D_0)$ for every component $D$ of $G-\pmc$ that intersects $\widetilde{D}$.
If there exists a component $D$ of $G-\pmc$ with $N(D) \subseteq N(D_0) \cap N(D_1)$ that intersects
$\widetilde{D}$, then $\widetilde{D} \subseteq D$ thanks to~\eqref{eq:eat-intersection}.
Otherwise, for every component $D$ of $G-\pmc$ that intersects $\widetilde{D}$ we have $N(D) \not\subseteq N(D_1)$.
By Lemma~\ref{lem:adh-to-comp} and property~$(\clubsuit)$, for every such component we have $t_D = t_0$.
Thus, $\widetilde{D} \subseteq \beta(t) \cup \bigcup_{t' \in V(T_0)} \beta(t')$.

This finishes the proof that $\widetilde{S}$ is a precarver in this case.

\noindent\textbf{Case 4.2. For every component $D$ of $G-\pmc$, either $D=D_1$ or $N(D) \subseteq N(D_0)$.}
Lemma~\ref{lem:adh-to-comp} and property~$(\clubsuit)$ imply that $t$ is of degree $2$ in $T$,
that is, $t_0$ and $t_1$ are the only two neighbors of $t$ in $T$.

For every $k \in \{0,1\}$, proceed as follows. Call a node $t$ of $T$ considered in this case {\em{special}}; since this is the last case, we may assume that for all non-special nodes of $T$, we already have constructed carvers for their bags.
Let $t_k'$ be the closest to $t$ node of $T_k$ that is not special.
Note that $t_k'$ exists and is unique, as every node of $T$ that is special has degree $2$ in $T$. (It may happen that $t_k' = t_k$).
Let $Q_k$ be the path in $T$ between $t$ and $t_k'$.

As this is the last case, we can guess a $(\T,(T,\beta))$-carver $C_1$ for $\beta(t_1')$.
(Note that this guesswork may involve Lemma~\ref{lem:leafyPMC} if $\beta(t_1')$ is not $\T$-avoiding or
 Theorem~\ref{thm:not-two-sided} if $\beta(t_1')$ is $\T$-avoiding but not two-sided.)
Let $A_1 = \T \cap (C_1 \setminus \pmc) = \T \cap (\beta(t_1') \setminus \pmc)$; as $|A_1| \leq d$, we can guess $A_1$.

We now perform an analysis of components of $G-\pmc$.
\begin{claim}\label{cl:case4.2}
Let $D$ be a component of $G-\pmc$ distinct from $D_0$ and $D_1$ and let $k^D \in \{0,1\}$ be such that $t_D = t_{k^D}$.
Then there exists an edge $t_A^D t_B^D$ of $T$ such that:
\begin{itemize}
    \item $\sigma(t_A^D t_B^D) = N(D)$.
    \item If $T_A^D$ is the component of $T-\{t_A^D t_B^D\}$ that contains $t_A^D$, then $D \subseteq \bigcup_{t' \in V(T_A^D)} \beta(t')$.
    \item The nodes $t_A^D$, $t_B^D$, $t_{k^D}'$, $t_{k^D}$, and $t$ lie on the unique path between $t_A^D$ and $t$ in $T$ in this order,
    with possibly $t_B^D = t_{k^D}'$ and/or $t_{k^D}' = t_{k^D}$.
\end{itemize}
In particular, if $T^D$ is the unique component of $T-\{t_{k^D}'\}$ that contains $t_A^D$, then $t \notin V(T^D)$ and
$D \subseteq \bigcup_{t' \in V(T^D)} \beta(t')$.
\end{claim}
\begin{claimproof}
Let $D$ be as in the statement.
By Lemma~\ref{lem:PMCComponents}, $N(D)$ is a minimal separator with full sides $D$ and $D^\pmc$, where $D^\pmc$ contains $\pmc \setminus N(D)$.
By the assumptions of the current case, $N(D) \subseteq N(D_0)$.
Furthermore, as $N(D_0)$ is mixed, $N(D_0)$ has only two full sides: $D_0$ and $D_0^\pmc$ that contains $\pmc \setminus N(D_0)$. As both of them are disjoint with $D$, it follows that $D$ is not a full component of $G-N(D_0)$, that is,
$N(D)$ is a proper subset of $N(D_0)$.
Hence, $D^\pmc$ contains not only $\pmc \setminus N(D)$, but also both $D_0$ and $D_0^\pmc$, which in turn contains $D_1$.

Since $N(D) \subseteq \pmc$, $N(D)$ is a clique in $G+F$.
We apply Lemma~\ref{lem:comp-to-adh} for $S = N(D)$, $A = D$, and $B = D^\pmc$, obtaining the edge $t_A^{D}t_B^{D}$.
The first two promised properties are immediate by Lemma~\ref{lem:comp-to-adh}.

For the third property, since $N(D)$ is a subordinate separator, $t_A^{D}t_B^{D}$ is an undirected edge of $T$.
Thus $t_A^{D}t_B^{D}$ lies in the component of $T-E(Q_0 \cup Q_1)$
that contains $t_{k^{D}}'$ and, furthermore, as $\beta(t) \cap D^\pmc \neq \emptyset$,
both $t_{k^{D}}'$ and $t_B^{D}$ lie on the unique path from $t_A^{D}$ to $t$ in $T$.
The claim follows.
\end{claimproof}

With the above claim in hand, we now prove that
\begin{equation}\label{eq:A1there}
A_1 \subseteq D_1.
\end{equation}
By contradiction, assume that $A_1$ intersects a component $D \neq D_1$ of $G-\pmc$.
As $A_1 \subseteq \beta(t_1') \subseteq \bigcup_{t' \in V(T_1)} \beta(t')$, we have $N(D) \subseteq N(D_1)$, $t_D = t_1$, and thus $D \neq D_0$ and $k^{D} = 1$.
By Claim~\ref{cl:case4.2}, $t_1'$ lies on the unique path from $t_B^{D}$ to $t$ in $T$ (possibly $t_1' = t_B^{D}$).
Hence, $A_1\cap D\subseteq \beta(t_1') \cap D \subseteq \beta(t_A^{D}) \cap \beta(t_B^{D}) = N(D) \subseteq \pmc$, a contradiction. This proves~\eqref{eq:A1there}.

Define
\[ C' := S_0 \cup S_1 \cup C_1\quad\mathrm{and}\quad C := C' \setminus A_1. \]
We claim that $C$ is a $(\T,(T,\beta))$-carver for $\pmc$. Clearly, $C \cap \T = \pmc \cap \T$.
(We would like to use $C'$ as the carver, but unfortunately $C'$ may contain vertices of $\T$ in $\beta(t_1') \setminus \pmc$, that is, $A_1$.
 Therefore we need to exclude them manually.)
Let $\widetilde{D}$ be a component of $G-C$; we want to show that there exists $k \in \{0,1\}$ such that $\widetilde{D} \subseteq \beta(t) \cup \bigcup_{t' \in V(T_k)} \beta(t')$.

If $\widetilde{D}$ intersects $D_1$, then by the properties of the carver $S_1$ we have $\widetilde{D} \subseteq N(D_1) \cup D_1$ and we are done with $k=1$.
Otherwise, $\widetilde{D} \cap A_1 = \emptyset$ by~\eqref{eq:A1there}.
Hence, $\widetilde{D}$ is also a component of $G-C'$.

Assume now that $\widetilde{D}$ intersects a component $D$ of $G-\pmc$ such that $D \notin \{D_0,D_1\}$ and $k^D = 1$.
Then, as $\widetilde{D}$ is a component of $G-C'$ and $C_1 \subseteq C'$,
   by the properties of the carver $C_1$ and Claim~\ref{cl:case4.2} we have
   \[ \widetilde{D} \subseteq \beta(t_1') \cup \bigcup_{t' \in V(T^{D})} \beta(t') \subseteq \pmc \cup \bigcup_{t' \in V(T_1)} \beta(t').\]

In the remaining case, for every component $D$ of $G-\pmc$ that intersects $\widetilde{D}$ we have $t_D = t_0$. Hence,
$\widetilde{D} \subseteq \pmc \cup \bigcup_{t' \in V(T_0)} \beta(t')$. This finishes the proof in this case.

\medskip

This completes the case analysis and thus the proof of Proposition~\ref{prop:two-sided}.
\end{proof}

\section{Wrap up}\label{sec:wrapUp}

We are now ready to conclude the construction of a treedepth-$d$ carver family for $P_6$-free graphs.

\begin{theorem}\label{thm:carvers}
For each positive integer $d$, there exists a polynomial-time algorithm that takes in a $P_6$-free graph
$G$ and outputs a family $\mathcal{F} \subseteq 2^{V(G)}$
that is a treedepth-$d$ carver family for $G$.
\end{theorem}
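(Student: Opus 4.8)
The plan is to assemble the carver family by taking the union of the finitely many polynomial-size families produced by the preceding sections, and then to verify that this union satisfies Definition~\ref{def:carvers}. Concretely, given a $P_6$-free graph $G$ and a fixed $d$, I would run: the algorithm of Lemma~\ref{lem:leafyPMC} producing $\mathcal{L}$ (containers for PMCs containing a depth-$d$ vertex); the algorithm of Theorem~\ref{thm:not-two-sided} producing a family of containers for $\T$-avoiding PMCs that are not two-sided; and the algorithm of Proposition~\ref{prop:two-sided} producing $\C_1$, carvers for two-sided $\T$-avoiding PMCs with respect to a suitably chosen clique tree. I would also add the trivial sets $N[v]$ for all $v \in V(G)$ to handle the degenerate case where $\pmc=N[v]$. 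All of these run in polynomial time and output families of polynomial size, so $\mathcal{F} := \mathcal{L} \cup (\text{not-two-sided family}) \cup \C_1 \cup \{N[v] : v\in V(G)\}$ is computable in polynomial time and has polynomial size; this settles the algorithmic part.

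The substance is checking condition of Definition~\ref{def:carvers}. Fix a treedepth-$d$ structure $\T$ in $G$. First I would reduce to a \emph{maximal} treedepth-$d$ structure: by the discussion after Definition~\ref{def:carvers} and the maximality remarks in Section~\ref{sec:prelims}, it suffices to extend $\T$ to a maximal $\T'$ with the same roots (the carver-family condition for $\T'$ implies it for $\T$, since every bag's intersection with $\T$ is contained in its intersection with $\T'$, and every component of $G-C$ is unchanged). So assume $\T$ is maximal. Next, invoke Lemma~\ref{lem:alignedChordalCompl} to pick a $\T$-aligned minimal chordal completion $G+F$, and then invoke Proposition~\ref{prop:two-sided}'s guarantee (which itself uses Lemma~\ref{lem:findingCliqueTree}) to obtain a clique tree $(T,\beta)$ of $G+F$ such that for every node $t$ whose bag $\beta(t)$ is two-sided and $\T$-avoiding, $\C_1$ contains a $(\T,(T,\beta))$-carver for $\beta(t)$. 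This $(G+F, (T,\beta))$ is the pair we use to witness the carver-family property.

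It then remains to produce, for \emph{every} node $t \in V(T)$, a set $C \in \mathcal{F}$ satisfying conditions of Definition~\ref{def:carvers}. I would split on the nature of $\beta(t)$, which is a maximal clique of the $\T$-aligned minimal chordal completion $G+F$, hence a PMC of $G$. If $\beta(t)$ contains a depth-$d$ vertex of $\T$, then Lemma~\ref{lem:leafyPMC} gives a container $\widetilde\Omega \in \mathcal{L}$ with $\beta(t) \subseteq \widetilde\Omega$ and $\widetilde\Omega \cap \T = \beta(t)\cap \T$; a container is in particular a carver of defect $k$ (one needs $|\beta(t)\cap\T|\le k$, which holds since $\beta(t)\cap\T$ lies on a vertical path of $\T$, so has size at most $d$ — so we take $k$ of order $d$), and the component condition holds because every component of $G-\widetilde\Omega$ lies inside a component of $G-\beta(t)$, which by the tree-decomposition property lies in the bags of one component of $T-\{t\}$. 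Otherwise $\beta(t)$ is $\T$-avoiding. If moreover $\beta(t)$ is not two-sided, Theorem~\ref{thm:not-two-sided} supplies a container, handled exactly as above. If $\beta(t)$ is two-sided, Proposition~\ref{prop:two-sided} directly supplies a $(\T,(T,\beta))$-carver in $\C_1$, which is what we need. The degenerate case $\beta(t)=N[v]$ is covered by the added singleton sets.

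The main obstacle I anticipate is bookkeeping around the defect bound $k$ and making sure the ``container implies carver'' reductions genuinely match Definition~\ref{def:carvers} — in particular checking that a container $\widetilde\Omega$ of a PMC, when it is not $\T$-avoiding, still has $\widetilde\Omega\cap\T$ of size at most the chosen $k$ (this is where the vertical-path structure of $\beta(t)\cap\T$, of size $\le d$, is used) and that the second condition of Definition~\ref{def:carvers} is literally the statement that components of $G-\widetilde\Omega$ refine components of $G-\beta(t)$, combined with the standard fact that a component of $G-\beta(t)$ lies in $\beta(t)\cup\bigcup_{s\in T'}\beta(s)$ for a single component $T'$ of $T-\{t\}$. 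A secondary point requiring care is that all the families were built so as to work for \emph{maximal} $\T$, so the initial reduction to maximal $\T$ must be done cleanly; but this is exactly the content of the remarks following Definition~\ref{def:carvers}, so it should go through without trouble. With these checks in place, setting $k$ to be the appropriate function of $d$ completes the proof that $\mathcal{F}$ is a treedepth-$d$ carver family for $G$.
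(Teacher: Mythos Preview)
Your proposal is correct and follows essentially the same approach as the paper: assemble $\mathcal{F}$ as the union of the families from Lemma~\ref{lem:leafyPMC}, Theorem~\ref{thm:not-two-sided}, and Proposition~\ref{prop:two-sided}, observe that containers are automatically $(\T,(T,\beta))$-carvers for any clique tree (as discussed right after Definition~\ref{def:carvers}), and use Proposition~\ref{prop:two-sided} to pin down the particular clique tree needed for the two-sided bags. You are actually slightly more careful than the paper in two places---explicitly reducing an arbitrary treedepth-$d$ structure $\T$ to a maximal one (the paper's proof tacitly assumes maximality), and explicitly noting that $\beta(t)\cap\T$ lies on a vertical path of $\T$ so the defect can be taken to be~$d$---but these are exactly the minor bookkeeping points you flagged, and your arguments for them are correct.
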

\begin{proof}
Fix any maximal treedepth-$d$ structure $\T$ in $G$, any $\T$-aligned minimal chordal completion $G+F$ of $G$,
and any maximal clique $\pmc$ of $G+F$.
The crucial observation is that any container for $\pmc$
is a $(\T,(T,\beta))$-carver for $\pmc$ regardless of the clique tree $(T,\beta)$ of $G+F$.
Hence, Proposition~\ref{prop:two-sided} gives a family of carvers handling two-sided maximal cliques of $G+F$
for a particular choice of the clique tree, while Theorem~\ref{thm:not-two-sided} and Lemma~\ref{lem:leafyPMC} handle the remaining maximal cliques of $G+F$ regardless of the choice of the clique tree.
\end{proof}
Theorem~\ref{thm:main-algo} follows by a direct combination of Theorem~ \ref{thm:carvers}, Theorem~\ref{thm:dp}, and Theorem~\ref{thm:deg2td}.

\section{Conclusions}
\label{sec:conclusions}

In this paper, we introduced the notion of carvers, a relaxation of the notion of containers, and showed its applicability
by proving that any $(\dege \leq k,\phi)$-\textsc{MWIS} problem is solvable in polynomial time on $P_6$-free graphs.

While in Definition~\ref{def:carvers} and Theorem~\ref{thm:dp} we only require that there exists
a tree decomposition $(T,\beta)$ that is represented in a carver family,
our proof in fact provides a carver family that works for some clique tree of every $\T$-aligned
chordal completion $G+F$, where $\T$ is any maximal treedepth-$d$ structure containing the solution.
(Note that in the context of \textsc{MWIS}, $d=1$ and $\T$ is just the sought solution, since it is a maximal independent set.)
We now present an example showing that if one aims for the ultimate goal of
proving the tractability of $(\dege \leq k,\phi)$-\textsc{MWIS} in $P_t$-free graphs for any fixed $t$,
in particular for $t=7$, one needs to either really use the flexibility of the choice of $(T,\beta)$, or  further adjust the notion of a carver. See Figure~\ref{fig:P7example} for a depiction of the example.

\begin{figure}
\centering
\begin{tikzpicture}[scale = .85, every node/.style={MyNode}]

    \def \r {1.3}
    \def \start {90}

    \foreach \i in {1, ..., 6}{
        \pgfmathparse{int(\i-1)}\edef\last{\pgfmathresult}
        \node (X\last) at ([xshift=-2.5cm]\start+\i*60-60:\r) {};
    }

    \foreach \i in {0, ..., 4}{
        \pgfmathparse{int(\i+1)}\edef\next{\pgfmathresult}
        \draw[thick] (X\i) -- (X\next);
    }
    \draw[thick] (X5) -- (X0);

    \foreach \i in {1, ..., 6}{
        \pgfmathparse{int(\i-1)}\edef\last{\pgfmathresult}
        \node (Y\last) at ([xshift=2.5cm]\start+\i*60-60:\r) {};
    }

    \foreach \i in {0, ..., 4}{
        \pgfmathparse{int(\i+1)}\edef\next{\pgfmathresult}
        \draw[thick] (Y\i) -- (Y\next);
    }
    \draw[thick] (Y5) -- (Y0);

    \node[label=above:{$a$}] (A) at (0,2.25) {};
    \node[label=below:{$b$}] (B) at (0,-2.25) {};

    \draw[thick] (A) -- (X0);
    \draw[thick] (A) -- (X2);
    \draw[thick] (A) -- (X4);
    \draw[thick] (A) -- (Y0);
    \draw[thick] (A) -- (Y2);
    \draw[thick] (A) -- (Y4);

    \draw[thick] (B) -- (X1);
    \draw[thick] (B) -- (X3);
    \draw[thick] (B) -- (X5);
    \draw[thick] (B) -- (Y1);
    \draw[thick] (B) -- (Y3);
    \draw[thick] (B) -- (Y5);

    \node[MyBoldNode] at (X0) {};
    \node[MyBoldNode] at (X3) {};
    \node[MyBoldNode] at (Y0) {};
    \node[MyBoldNode] at (Y3) {};

    \def \height {1}
    \def \width {4}
    \draw (-\width,\height) --++ (2*\width, 0) --++ (0,-2*\height) --++ (-2*\width, 0) -- cycle node [draw=none, fill=none, midway, left] {$S_f$};
\end{tikzpicture}
\caption{The graph $G_2$ with an independent set $I_f$ depicted as large red vertices and the corresponding separator $S_f$ boxed. (The definition of $f$ is not needed due to automorphisms of the graph.)}
\label{fig:P7example}
\end{figure}
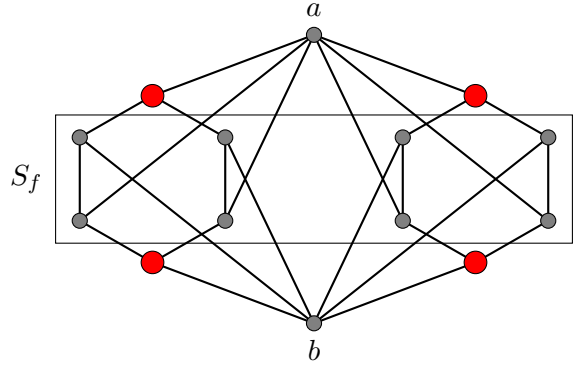

For an integer $n \geq 1$, construct a graph $G_n$ as follows; take $n$ copies of the $6$-vertex cycle, let
the vertices of the $i$-th cycle be $v_{i,0},\ldots,v_{i,5}$, $1 \leq i \leq n$, and add two vertices $a$ and $b$;
$a$ is adjacent to all vertices $v_{i,0},v_{i,2},v_{i,4}$ and $b$ is adjacent to all vertices $v_{i,1},v_{i,3},v_{i,5}$,
$1 \leq i \leq n$.
The graph $G_n$ is $P_7$-free.
For every $f : \{1,\ldots,n\} \to \{0,2,4\}$, the graph $G_n$ contains a maximal independent set
\[I_f = \{v_{i,f(i)}, v_{i,f(i)+3}~|~1 \leq i \leq n\} \]
and a minimal separator
\[ S_f = \{v_{i,f(i)+1},v_{i,f(i)+2},v_{i,f(i)+4},v_{i,f(i)+5}~|~1\leq i \leq n\} \]
with full mesh sides
\begin{align*}
 A_f &= \{a\} \cup \{v_{i,f(i)}~|~1 \leq i \leq n\}, \\
 B_f &= \{b\} \cup \{v_{i,f(i)+3}~|~1 \leq i \leq n\}.
\end{align*}
Here, the addition in the second index is performed modulo $6$.
In this example, if one wants to provide for every $f$
a carver for (an $I_f$-aligned PMC containing) the separator $S_f$ that separates $I_f \cap A_f$ from
$I_f \cap B_f$, one needs an exponential number of carvers.
However, the minimal separator $\{a,b\}$ instead of $S_f$ seems like a much better choice for the algorithm.

\bibliographystyle{amsplain}
\bibliography{p6-free}

\providecommand{\bysame}{\leavevmode\hbox to3em{\hrulefill}\thinspace}
\providecommand{\MR}{\relax\ifhmode\unskip\space\fi MR }
\providecommand{\MRhref}[2]{%
  \href{http://www.ams.org/mathscinet-getitem?mr=#1}{#2}
}
\providecommand{\href}[2]{#2}
\begin{thebibliography}{10}

\bibitem{containerMethod21}
Tara Abrishami, Maria Chudnovsky, Marcin Pilipczuk, Pawe\l{} Rz\k{a}\.{z}ewski, and Paul Seymour, \emph{Induced subgraphs of bounded treewidth and the container method}, 32nd Annual ACM-SIAM Symposium on Discrete Algorithms, SODA 2021, SIAM, 2021, pp.~1948--1964.

\bibitem{alekseev1982effect}
Vladimir~E Alekseev, \emph{The effect of local constraints on the complexity of determination of the graph independence number}, Combinatorial-algebraic methods in applied mathematics (1982), 3--13.

\bibitem{Alekseev03}
Vladimir~E. Alekseev, \emph{On easy and hard hereditary classes of graphs with respect to the independent set problem}, Discret. Appl. Math. \textbf{132} (2003), no.~1-3, 17--26.

\bibitem{BONAMY2022353}
Marthe Bonamy, Nicolas Bousquet, Michał Pilipczuk, Paweł Rzążewski, Stéphan Thomassé, and Bartosz Walczak, \emph{Degeneracy of ${P}_t$-free and ${C}_{>t}$-free graphs with no large complete bipartite subgraphs}, Journal of Combinatorial Theory, Series B \textbf{152} (2022), 353--378.

\bibitem{DBLP:journals/algorithmica/BonamyDFJP19}
Marthe Bonamy, Konrad~K. Dabrowski, Carl Feghali, Matthew Johnson, and Dani{\"{e}}l Paulusma, \emph{{I}ndependent {F}eedback {V}ertex {S}et for ${P}_5$-free graphs}, Algorithmica \textbf{81} (2019), no.~4, 1342--1369.

\bibitem{BouchitteT01}
Vincent Bouchitt{\'{e}} and Ioan Todinca, \emph{Treewidth and minimum fill-in: Grouping the minimal separators}, {SIAM} J. Comput. \textbf{31} (2001), no.~1, 212--232.

\bibitem{Courcelle90}
Bruno Courcelle, \emph{The {M}onadic {S}econd-{O}rder logic of graphs. {I}. {R}ecognizable sets of finite graphs}, Inf. Comput. \textbf{85} (1990), no.~1, 12--75.

\bibitem{CourcelleE12}
Bruno Courcelle and Joost Engelfriet, \emph{Graph structure and monadic second-order logic - {A} language-theoretic approach}, Encyclopedia of mathematics and its applications, vol. 138, Cambridge University Press, 2012.

\bibitem{DBLP:books/sp/CyganFKLMPPS15}
Marek Cygan, Fedor~V. Fomin, Lukasz Kowalik, Daniel Lokshtanov, D{\'{a}}niel Marx, Marcin Pilipczuk, Michal Pilipczuk, and Saket Saurabh, \emph{Parameterized algorithms}, Springer, 2015.

\bibitem{FominTV15}
Fedor~V. Fomin, Ioan Todinca, and Yngve Villanger, \emph{Large induced subgraphs via triangulations and {CMSO}}, {SIAM} J. Comput. \textbf{44} (2015), no.~1, 54--87.

\bibitem{DBLP:conf/stacs/FominV10}
Fedor~V. Fomin and Yngve Villanger, \emph{Finding induced subgraphs via minimal triangulations}, 27th International Symposium on Theoretical Aspects of Computer Science, {STACS} 2010, March 4-6, 2010, Nancy, France (Jean{-}Yves Marion and Thomas Schwentick, eds.), LIPIcs, vol.~5, Schloss Dagstuhl - Leibniz-Zentrum f{\"{u}}r Informatik, 2010, pp.~383--394.

\bibitem{DBLP:conf/focs/GartlandL20}
Peter Gartland and Daniel Lokshtanov, \emph{Independent set on ${P}_k$-free graphs in quasi-polynomial time}, 61st {IEEE} Annual Symposium on Foundations of Computer Science, {FOCS} 2020, {IEEE}, 2020, pp.~613--624.

\bibitem{GartlandLPPR21}
Peter Gartland, Daniel Lokshtanov, Marcin Pilipczuk, Micha\l{} Pilipczuk, and Pawe\l{} Rz\k{a}\.zewski, \emph{Finding large induced sparse subgraphs in ${C}_{>t}$-free graphs in quasipolynomial time}, 53rd Annual {ACM} {SIGACT} Symposium on Theory of Computing, STOC 2021, {ACM}, 2021, pp.~330--341.

\bibitem{GrzesikKPP21}
Andrzej Grzesik, Tereza Klimo\v{s}ov{\'{a}}, Marcin Pilipczuk, and Michal Pilipczuk, \emph{Covering minimal separators and potential maximal cliques in {$P_t$}-free graphs}, Electron. J. Comb. \textbf{28} (2021), no.~1.

\bibitem{p6FreeMaxInd22}
Andrzej Grzesik, Tereza Klimo\v{s}ov\'{a}, Marcin Pilipczuk, and Micha\l{} Pilipczuk, \emph{Polynomial-time algorithm for {Maximum Weight Independent Set} on ${P}_6$-free graphs}, ACM Trans. Algorithms \textbf{18} (2022), no.~1, 4:1--4:57.

\bibitem{surveyModularDecomp10}
Michel Habib and Christophe Paul, \emph{A survey of the algorithmic aspects of modular decomposition}, Computer Science Review \textbf{4} (2010), no.~1, 41--59.

\bibitem{DBLP:conf/soda/JansenLS14}
Bart M.~P. Jansen, Daniel Lokshtanov, and Saket Saurabh, \emph{A near-optimal planarization algorithm}, 25th Annual {ACM-SIAM} Symposium on Discrete Algorithms, {SODA} 2014, {SIAM}, 2014, pp.~1802--1811.

\bibitem{Libkin04}
Leonid Libkin, \emph{Elements of finite model theory}, Texts in Theoretical Computer Science. An {EATCS} Series, Springer, 2004.

\bibitem{LokshantovVV14}
Daniel Lokshantov, Martin Vatshelle, and Yngve Villanger, \emph{Independent set in {$P_5$}-free graphs in polynomial time}, 25th Annual {ACM-SIAM} Symposium on Discrete Algorithms, {SODA} 2014, {SIAM}, 2014, pp.~570--581.

\bibitem{DBLP:conf/wg/PaesaniPR22}
Giacomo Paesani, Dani{\"{e}}l Paulusma, and Pawe\l{} Rz\k{a}\.zewski, \emph{Classifying {Subset Feedback Vertex Set} for {$H$}-free graphs}, 48th International Workshop on Graph-Theoretic Concepts in Computer Science, {WG} 2022, Lecture Notes in Computer Science, vol. 13453, Springer, 2022, pp.~412--424.

\bibitem{DBLP:journals/siamdm/PaesaniPR22}
\bysame, \emph{{Feedback Vertex Set} and {Even Cycle Transversal} for {$H$}-free graphs: Finding large block graphs}, {SIAM} J. Discret. Math. \textbf{36} (2022), no.~4, 2453--2472.

\bibitem{DBLP:journals/dam/Pilipczuk17}
Marcin Pilipczuk, \emph{A tight lower bound for {V}ertex {P}lanarization on graphs of bounded treewidth}, Discret. Appl. Math. \textbf{231} (2017), 211--216.

\bibitem{DBLP:conf/sosa/PilipczukPR21}
Marcin Pilipczuk, Micha\l{} Pilipczuk, and Pawe\l{} Rz\k{a}\.zewski, \emph{Quasi-polynomial-time algorithm for {I}ndependent {S}et in ${P}_t$-free graphs via shrinking the space of induced paths}, 4th Symposium on Simplicity in Algorithms, {SOSA} 2021, {SIAM}, 2021, pp.~204--209.

\bibitem{graphMinors5}
Neil Robertson and P.~D. Seymour, \emph{Graph minors. {V}. {E}xcluding a planar graph}, Journal of Combinatorial Theory, Series B \textbf{41} (1986), no.~1, 92--114.

\bibitem{graphMinors20}
\bysame, \emph{Graph minors. {XX}. {W}agner's conjecture}, Journal of Combinatorial Theory, Series B \textbf{92} (2004), no.~2, 325--357, Special Issue Dedicated to Professor W.T. Tutte.

\end{thebibliography}
\end{document}